\documentclass[11pt]{article}
\usepackage{natbib}
\makeatother
\usepackage{lipsum}
\usepackage{setspace}
\usepackage{apalike}
\usepackage{makecell}
\usepackage{amsmath}
\usepackage{algorithm,algcompatible,amsmath}
\usepackage{bbm}

\usepackage{amsmath}
\usepackage{booktabs}
\usepackage{enumerate}
\usepackage{scalerel,stackengine}
\DeclareMathOperator*{\argmax}{arg\,max}
\DeclareMathOperator*{\argmin}{arg\,min}
\stackMath
\allowdisplaybreaks
\usepackage{comment}
\usepackage{amssymb,stackengine,graphicx}

\usepackage{graphicx}

\usepackage{amsfonts}
\usepackage{graphicx}
\usepackage[colorlinks=true, allcolors=blue]{hyperref}
\usepackage{amsthm}
 % If English
\usepackage[toc,page]{appendix}
\usepackage[letterpaper,top=2cm,bottom=2cm,left=2.5cm,right=2.5cm]{geometry}
\usepackage{cite}
\usepackage{longtable}
\usepackage{booktabs}
\usepackage{float}
\usepackage{amsmath}
\usepackage{scalerel,stackengine}
\usepackage{xcolor}
\newcommand{\tcr}{\textcolor{black}}
\stackMath
\newcommand\reallywidehat[1]{%
\savestack{\tmpbox}{\stretchto{%
  \scaleto{%
    \scalerel*[\widthof{\ensuremath{#1}}]{\kern-.6pt\bigwedge\kern-.6pt}%
    {\rule[-\textheight/2]{1ex}{\textheight}}%WIDTH-LIMITED BIG WEDGE
  }{\textheight}% 
}{0.5ex}}%
\stackon[1pt]{#1}{\tmpbox}%
}
\usepackage{graphicx}
\usepackage{fnpct}
\usepackage{amsfonts}
\usepackage{graphicx}
\usepackage{amsthm}
\usepackage{amsfonts}
\usepackage{graphicx}
\usepackage{nccmath}
\usepackage[flushleft]{threeparttable}
\newtheorem{theorem}{Theorem}[section]
\newtheorem{corollary}{Corollary}[section]
\newtheorem{lemma}{Lemma}[section]
\newtheorem{assumption}{Assumption}[section]
\newtheorem{remark}{Remark}

\usepackage{amsfonts}
\usepackage{longtable}
\usepackage{newtxtext}
\usepackage{graphicx}
\usepackage{booktabs}
\usepackage{xr}
\usepackage[colorlinks=true, allcolors=blue]{hyperref}
\counterwithout{figure}{section} % Don't reset figure counter on new section
\counterwithout{table}{section}   % Don't reset table counter on new section

\setcounter{figure}{0} % Start figure counter at 1 (resets to 0 so first one is 1)
\setcounter{table}{0}  % Start table counter at 1 (resets to 0 so first one is 1)

\title{High-dimensional censored MIDAS logistic regression for corporate survival forecasting}

\begin{document}
\date{}
\author{\vspace*{7pt}  Wei Miao$^\dag$, Jad Beyhum$^\ddag$, Jonas Striaukas$^*$ and Ingrid Van Keilegom$^\dag$ \\ 
$^\dag$ORSTAT, KU Leuven\\
$^\ddag$Department of Economics, KU Leuven\\ $^*$Department of Finance, Copenhagen Business School\\ 
}
\onehalfspacing

\maketitle
%\begin{center}\textbf{PRELIMINARY - PLEASE DO NOT CITE}\end{center}

\begin{abstract}
This paper addresses the challenge of forecasting corporate distress, a problem marked by three
key statistical hurdles: (i) right censoring, (ii) high-dimensional
predictors, and (iii) mixed-frequency data. To overcome these complexities, we introduce a novel high-dimensional censored MIDAS (Mixed Data Sampling) logistic regression. Our approach handles censoring through inverse probability weighting and achieves accurate estimation with numerous mixed-frequency predictors by employing a sparse-group penalty. We establish finite-sample bounds for the estimation error, accounting for censoring, MIDAS approximation error, and heavy tails. 
For statistical inference, we develop a de-sparsified version of the proposed penalized estimator and establish its asymptotic theory, which enables valid statistical inference in high-dimensional settings with censoring. We show that censoring induces a nonstandard variance structure for the de-sparsified estimator, a feature that, to the best of our knowledge, has not been studied in the existing literature.
The superior performance of the method is demonstrated through Monte Carlo simulations. Finally, we present an extensive application of our methodology to predict the financial distress of Chinese-listed firms \tcr{and to identify covariates that are statistically significant for predicting distress.} Our novel procedure is implemented in the R package \texttt{Survivalml}.

\end{abstract}
{\it Keywords:}  Corporate survival analysis; high-dimensional censored data; mixed-frequency data; logistic regression; sparse-group LASSO
\vfill

\section{Introduction}

Regulators, lenders, and investors are increasingly focused on identifying vulnerable firms and developing accurate models to predict firm failures well in advance, as the ability to correctly predict such failures could result in a more resilient financial stability policy and better financial outcomes for market participants. As a result, an extensive body of literature is dedicated to understanding the determinants of firm failures. Traditional statistical models, such as discriminant analysis \citep{almon1965distributed}, logistic regression \citep{ohlson1980financial}, and hazards models \citep{shumway2001forecasting}, along with other time-sensitive approaches \citep{duffie2007multi}, have historically been the main focus of study. However, with the advent of more extensive datasets in recent years, the focus has increasingly shifted toward machine learning methods, which are better equipped to handle high-dimensional data. Such models have shown superior accuracy in predicting firm failures \citep{barboza2017machine} due to their efficient handling of rich data sources. Over time, the task of forecasting corporate survival has gained significant attention due to its critical economic implications and its close connection to other challenges, such as predicting household loan defaults.

In this paper, we focus on the task of predicting the probability that a firm will fail within the first $t$ years after its initial listing, conditional on its survival for the first $s$ years, where $s < t$. This problem presents three significant statistical challenges. First, data are often right-censored.
% , meaning that for some firms, we only know that they have survived up to a time $s'$ where $s < s' < t$.
{\color{black} To illustrate how censoring arises in practice, we consider our empirical dataset on Chinese listed manufacturing firms (see Section~\ref{sec real} for details). Figure~\ref{fig1} depicts firms with different censoring statuses over the observation period from January$1^{\text{st}}$, 1985 to December $31^{\text{st}}$, 2020. For each firm, the survival time $T$ is defined as the number of years from its initial public offering (IPO) date to the first occurrence of financial distress. If a firm is never classified as distressed during the observation period, its survival time $T$ is right-censored. In this case, $T$ exceeds the censoring time $C$, and only $C$, defined as the number of years from the IPO date to the end of the observation period, is observed. In Figure \ref{fig1}, Firms $1$ and $2$ are uncensored, so their survival times are fully observed, whereas Firms $3$ and $4$ are censored and only their censoring times $C$ are observed.
%To further illustrate how censoring arises in practice, we provide an example using our empirical dataset on Chinese listed manufacturing firms. More details on the dataset are provided in Section \ref{sec real}. Figure \ref{fig1} displays firms with different censoring statuses over the observation period from $1985$, January $1^{\text{st}}$ to $2020$, December $31^{\text{st}}$. For each firm, the survival time $T$ is defined as the number of years between its initial public offering (IPO) date and the first time the firm is classified as financially distressed. If a firm is never classified as distressed during the observation period, we only observe its censoring time $C$, defined as the number of years between its IPO date and the end of the observation period. In the figure, Firms $1$ and $2$ are uncensored and their survival times are fully observed, while Firms $3$ and $4$ are censored, so only their censoring times $C$ are observed.
\begin{figure}[hbtp]
\includegraphics[scale=0.55]{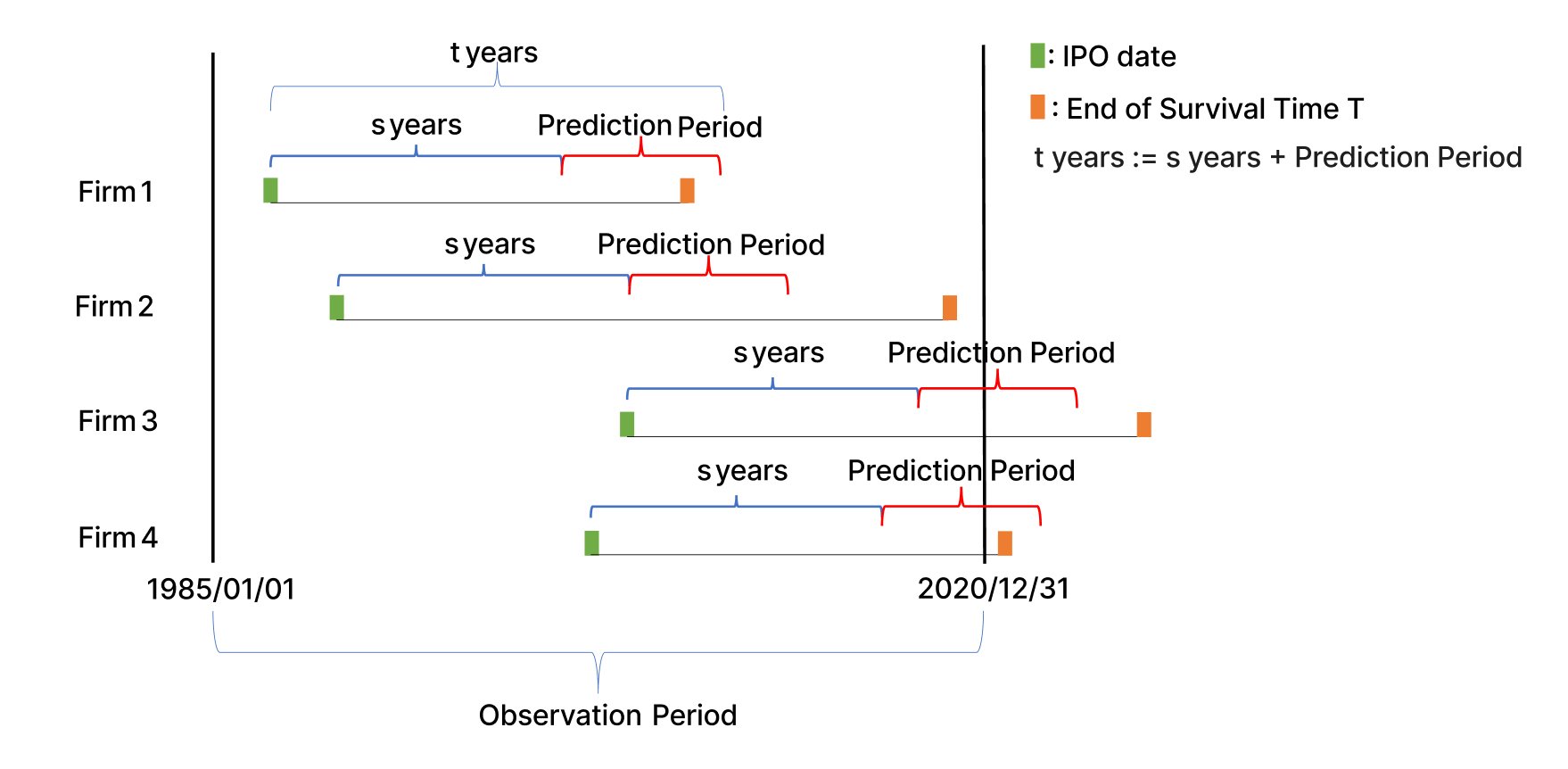}
\caption{Firms with different censoring statuses in the empirical dataset.}
\label{fig1}
\end{figure}
}
Second, the high dimensionality of the predictors adds complexity. Modern data sets provide a wealth of variables for each listed firm, increasing the analytical burden. Third, the mixed-frequency nature of the data compounds the difficulty. For each potential predictor, we observe numerous lags, exacerbating the challenge of managing the proliferation of parameters.

As highlighted in our review of the literature below, in our view, the existing methods for this prediction task do not adequately address all three challenges. To bridge this gap, we propose a novel high-dimensional censored MIDAS logistic regression method that addresses these complexities. Our approach is based on a high-dimensional logistic regression framework to estimate survival probabilities. To address right-censoring, we make use of a tool from the survival analysis literature called outcome-weighted inverse probability of censoring weighting, as described in \citet{blanche2023logistic}. The mixed-frequency nature of the data is managed using mixed data sampling (MIDAS), an approach developed and popularized by \citet{ghysels2007midas}. This method approximates the coefficients of the lags of each variable using a finite-dimensional series basis, known as the dictionary. Finally, to handle the high dimensionality of the predictors, we apply a sparse-group LASSO penalty. This penalty not only manages the dimensionality of the regressors but also accounts for the group structure of the predictors, which corresponds to the lags of the original variables, as discussed in \citet{babii2022machine}.

We derive finite-sample bounds on the estimation error of our penalized estimator. Notably, these bounds allow for heavy-tailed variables and account for both the approximation error and right-censoring, which are novel contributions to the literature on high-dimensional logistic regression models. \tcr{Additionally, we propose a de-sparsified procedure for the penalized estimator, which leverages the nodewise regression \citep{van_de_geer_debias}. Under suitable regularity conditions, the de-sparsified estimator is asymptotically normal, enabling valid inference on individual components of the parameter vector while properly accounting for right-censoring, MIDAS approximation error, and heavy-tailed covariates. Notably, the presence of censoring introduces a nonstandard variance structure for the de-sparsified estimator, which has not yet been examined in the existing literature \citep{van_de_geer_debias,caner2023generalized,babii2023machine}.}

The finite-sample performance of our method is evaluated through simulations, demonstrating its robustness against natural alternatives.  Furthermore, we showcase the practical advantages of our approach through an application to forecasting the financial distress of Chinese listed firms. In this context, our method significantly outperforms the standard logistic regression benchmark and other competing methods over several horizons, underscoring its empirical effectiveness. Several practical augmented prediction methods, including oversampling and incorporating macro data into the model, are utilized. To further demonstrate the effectiveness of the proposed method that includes censoring information, a comparison is conducted with a method that excludes censored firms. \tcr{The de-sparsified estimator is also employed to identify statistically significant financial predictors of firm distress. These identified predictors provide practitioners with new empirical insights and help direct attention to specific covariates that are particularly informative for distress prediction.}
Finally,  our novel approach is implemented in the R package \texttt{Survivalml} to make it readily accessible for practitioners.\footnote{The package is publicly available at \url{https://github.com/Wei-M-Wei/Survivalml}.}

\paragraph{Literature review.} Let us first review how the existing methods address the three challenges we described, which are inherent in corporate survival forecasting. This review will stress the advantages of our methodology over popular alternatives. Given the extensive literature on this topic, we do not aim to provide an exhaustive review. Instead, we focus on surveying key approaches to the problems at hand. We also cite papers on the related problem of forecasting loan default.

To address the right-censoring of data, many studies restrict their analysis to firms that were first listed more than $t$ years before the end of the follow-up period \citep[see, for instance, ][]{audrino2019predicting,petropoulos2020predicting}. Under the classical assumption of independent censoring, this approach avoids selection bias. However, it discards data on firms listed less than $t$ years ago, leading to a loss of efficiency. Another common strategy is to directly model the hazard rate of firm failure, using methods such as Cox models or single-index models \citep[e.g.,][]{ding2012class,lee2014business,kim2016survival,zhou2022recurrence,li2023corporate}. Although effective in some contexts, this approach has limitations. Typically, the primary interest lies in estimating the probability of failure, not the hazard rate, and, therefore, modeling the survival probability as we do is more natural to solve the problem at hand. Furthermore, none of the aforementioned hazard-based approaches explicitly account for the challenges posed by high-dimensional mixed-frequency data. Instead, they applied their methods to pre-selected low-dimensional sets of predictors and lags, bypassing the complexity of high-dimensional data structures.

Let us now address the challenge of parameter proliferation, which arises from both the high-dimensionality and the mixed-frequency nature of the data. Several studies have used LASSO as a selection tool to predict corporate bankruptcy; see, for example, \citet{petropoulos2020predicting, barbaglia2023forecasting}. However, these studies do not address censoring, lack theoretical results, and do not utilize the MIDAS framework. The application of MIDAS in a logistic regression framework for corporate bankruptcy prediction was explored by \citet{audrino2019predicting}. While their work incorporates the MIDAS approach, it is limited to a low-dimensional set of predictors and does not consider censoring. More closely related to our study, \citet{jiang2021predicting} examined a penalized logistic regression with the norm $\ell_1$. However, their approach does not account for censoring, lacks theoretical underpinnings, and employs what is referred to as unrestricted MIDAS. Unlike our approach, which uses a restricted MIDAS procedure, unrestricted MIDAS includes all lags as predictors, effectively bypassing the dimension reduction benefits of the MIDAS framework.
\footnote{It is worth noting that the term ``unrestricted MIDAS" is somewhat misleading, as this approach directly incorporates all lags as independent variables. Consequently, it does not take advantage of the dimension-reduction capabilities inherent in the MIDAS methodology.}

Finally, we compare our theoretical results to the existing literature \tcr{for both the penalized estimator and its de-sparsified version. The theory of penalized estimators of the high-dimensional logistic regression model has been extensively studied under various situations. As already mentioned, no existing study allows for censoring or approximation error. \citet{meier2008group, 10.1214/009053607000000929, buhlmann2011statistics, van_de_geer_estimation_2016} analyzed the logistic regression model using fixed design or isotropic conditions of sub-Gaussian covariates, which are often unsuitable for financial data. \citet{van_de_geer_debias} developed a de-sparsified inferential procedure for logistic regression with LASSO under uniformly bounded covariates. More recently, \citet{caner2023generalized} relaxed these assumptions, allowing for random covariate designs with non-normal covariates in the context of penalized Generalized Linear Models (GLM), and established asymptotic theory for the associated de-sparsified estimator. However, compared to the present paper, this work imposes additional assumptions on the shape of the second-order partial derivatives of the loss function. Similarly, \citet{han2023high} developed the theory for GLM with LASSO by establishing local restricted strong convexity of the loss function, which is related to the quadratic margin condition in the present paper; see Online Appendix \ref{one p m}. However, they did not propose an inferential procedure.}
In the context of mixed-frequency data, \citet{babii2022machine, babii2023machine} developed the theoretical foundation for high-dimensional time series and panel data linear regression models while accounting for the MIDAS approximation error. However, the theory for penalized logistic regression models incorporating such approximation errors remains unexplored, and none of the aforementioned studies have addressed the challenges posed by censored data. 

%Finally, we compare our theoretical results to the existing literature. The theory of penalized logistic regression models has been extensively studied under various situations. \citet{10.1214/009053607000000929, meier2008group, 10.1214/009053607000000929, buhlmann2011statistics, van_de_geer_estimation_2016} analyzed these models using fixed design or isotropy conditions, which are often unsuitable for financial econometric data. More recently, \citet{caner2023generalized} relaxed these assumptions, allowing for random covariate designs with in the context of penalized Generalized Linear Models (GLM). However, their work differs from the present paper as it imposes additional assumptions on the shape of the second-order partial derivatives of the loss function and compatibility condition. Similarly, \citet{han2023high} utilized the restricted eigenvalue condition, leading to the compatibility condition, to develop the theory for GLM with LASSO.
%In the context of mixed-frequency data, \citet{babii2022machine, babii2023machine} developed the theoretical foundation for high-dimensional time series and panel data linear regression models while accounting for the MIDAS approximation error. However, the theory for logistic regression models incorporating such approximation errors remains unexplored. Moreover, none of the aforementioned studies have addressed the challenges posed by censored data.

% 

\paragraph{Outline.} The paper is organized as follows. In Section \ref{sec est}, we first present the model, followed by a discussion on employing the MIDAS weighting technique and incorporating group structure information among variables. \tcr{Section~\ref{sec the} analyzes the estimation properties of the proposed estimator, introduces the associated inferential procedure, and establishes the asymptotic properties of the de-sparsified estimator. Section~\ref{sec sim} presents simulation studies for both prediction and inference.}
%Section \ref{sec the} is dedicated to the analysis of the estimation properties of the proposed estimator, its corresponding inferential procedure, and the asymptotic properties of the de-sparsified estimator. Section \ref{sec sim} presents the simulation studies for both prediction and inference.
 \tcr{In Section \ref{sec real}, we build a dataset on Chinese firm distress to assess the predictive performance of the proposed method, compare it with other approaches, and test each financial covariate for significance.}

\paragraph{Notation.}
For $\ell \in \mathbb{N}$, we define $[\ell]=\{1,2, \ldots, \ell\}$. For a vector $\boldsymbol{b} \in \mathbb{R}^p$, its $\ell_q$ norm is denoted as $|\boldsymbol{b}|_q=\left(\sum_{j \in[p]}\left|b_j\right|^q\right)^{1 / q}$ if $q \in[1, \infty)$, \tcr{its $\ell_{\infty}$-norm is denoted  $|\boldsymbol{b}|_{\infty}=\max _{j \in[p]}\left|b_j\right|$, and its empirical $\ell_2$-norm is denoted  $\|\boldsymbol{b}\|_N = \sqrt{\left|\boldsymbol{b}\right|_2^2/N}$. Let $diag(b_1, \ldots, b_p)$ be the $p\times p$ diagonal matrix, with diagonal entries equal to $b_1, \ldots, b_p$.} For a matrix $\boldsymbol{A}$, let $\boldsymbol{A}^{\top}$ be its transpose and $\lambda_{\min }(\boldsymbol{A})$ be its smallest eigenvalue. The cardinality of a set $S$ is $|S|$, and we use $S^c$ as its complement. For a vector $\Delta \in \mathbb{R}^p$ and a subset $J \subset[p]$, let $\Delta_J$ be a vector in $\mathbb{R}^p$ with the same coordinates as $\Delta$ on $J$ and zero coordinates on $J^c$, where $J^c$ is the complement of the subset $J$.
\tcr{Let $\Delta_{(J)} := (\Delta_j)_{j \in J} \in \mathbb{R}^{|J|}$ denote the subvector of $\Delta$ corresponding to indices in $J$. For a matrix $\boldsymbol{M} \in \mathbb{R}^{p \times p}$, we define $\boldsymbol{M}_J \in \mathbb{R}^{|J| \times p}$, which is the submatrix consisting of the rows of $\boldsymbol{M}$ corresponding to indices in $J$.} For $a, b \in \mathbb{R}$, we put $a \vee b=\max \{a, b\}$ and $a \wedge b=\min \{a, b\}$. We write $a_N \lesssim b_N$ if there exists a (sufficiently large) absolute constant $v$ such that $a_N \leq v b_N$ for all $N \geq 1$ and $a_N \sim b_N$ if $a_N \lesssim b_N$ and $b_N \lesssim a_N$. The indicator function is denoted by $\mathbbm{1}\{\cdot\}$.

\section{High-dimensional censored MIDAS logistic regression} \label{sec est}

\subsection{Logistic regression model}\label{initial model}

In corporate survival analysis, we focus on the survival time $T$ of a firm. The random variable $T$ represents the duration from the firm's Initial Public Offering (IPO) date to the occurrence of financial distress. Specifically, the IPO date refers to the first day the firm’s stock is publicly traded. Since companies are not listed immediately after their creation, the survival time $T$ in our context differs slightly from the typical survival time considered in traditional survival analysis \citep{li2023corporate}.

Our main objective is to predict the probability that a firm will survive up to $t$ years, given that it has already been publicly listed for $s$ years. In practice, the survival time $T$ is right-censored by the censoring time $C$, which denotes the duration between the IPO date and the censoring event, occurring at the end of the follow-up period.\footnote{\tcr{In the empirical application, all the randomness of $C$ comes from the fact that the IPO dates are random and the end date of the follow-up period is fixed.}} Hence, we do not directly observe $T$, but rather the censored value $\widetilde{T} = T \wedge C$, with the indicator $\delta = \mathbbm{1}\{T \leq C\}$.

The financial distress status, indicated by $\mathbbm{1}\{T \leq t\}$, is influenced by covariates $\boldsymbol{Z} \in \mathbb{R}^{K_z}$. We assume, for the moment, that $\boldsymbol{Z}$ has finite variance to ensure well-defined expectations and model the survival indicator $\mathbbm{1}\{T \leq t\}$ using a logistic regression model:

\begin{equation}\label{distributin} P(T \leq t \mid \boldsymbol{Z}, T \geq s) = \frac{\exp \left(\boldsymbol{Z}^{\top}\boldsymbol{\theta}_0(t,s) \right)}{1 + \exp \left(\boldsymbol{Z}^{\top}\boldsymbol{\theta}_0(t,s)\right)}, \end{equation}
where $\boldsymbol{Z}$ is the covariate vector, and $\boldsymbol{\theta}_0(t,s) \in \mathbb{R}^{K_z}$ is the vector of true parameters specific to $t$ and $s$. For convenience, we use $\boldsymbol{\theta}_0$ as shorthand for $\boldsymbol{\theta}_0(t,s)$. Note that in practice, the model includes an intercept term, which enters the variable $\boldsymbol{Z}$.

Model \eqref{distributin} is typically estimated via maximum likelihood estimation. This method is based on the characterization of $\boldsymbol{\theta}_0$ as the solution to the population conditional maximum likelihood problem:

\begin{equation}\label{initial estimator} \boldsymbol{\theta}_0 = \argmax_{\boldsymbol{\theta} \in \mathbb{R}^{K_z}} \mathbbm{E}\left[\left.\mathbbm{1}\{T \leq t\} \boldsymbol{Z}^{\top} \boldsymbol{\theta} - \log\left(1 + \exp(\boldsymbol{Z}^{\top} \boldsymbol{\theta})\right)\right| T \geq s \right]. \end{equation}
In \eqref{initial estimator}, we have rewritten the classical logistic model's likelihood in a simplified form; see Lemma \ref{lm.cens1} for a proof.\footnote{The characterization \eqref{initial estimator} is valid under a full-rank condition stated in Lemma \ref{lm.cens1}.}

However, equation \eqref{initial estimator} cannot be directly used for estimation due to the fact that the survival time $T$ is not always observed. To address this issue, we apply the outcome-weighted inverse probability of censoring weighting (OIPCW) method, as outlined by \citet{blanche2023logistic}.\footnote{An alternative approach for addressing censoring is Inverse Probability Weighting (IPW) \citep{bbbe4e97-7833-3faa-820c-4c61f82fe965, zheng2006application, beyhum2024instrumental, beyhum2024dynamic}. Further details on both OIPCW and IPW can be found in \citet{blanche2023logistic}.} This method relies on two standard assumptions about the censoring mechanism, which we describe below. The first assumption is the assumption of independent censoring:

\begin{assumption}\label{as1} $C$ is independent of $T$ and $\boldsymbol{Z}$. \end{assumption}

This is a standard assumption in survival analysis. We argue that Assumption \ref{as1} is reasonable in corporate survival analysis because the censoring time for a firm is solely determined by the observation period, with no firms censored before. The second assumption concerns sufficient follow-up:

\begin{assumption}\label{as2} $P\left(\widetilde{T} \geq t\right) \geq C_r$, where $C_r > 0$. \end{assumption}

This assumption implies that some firms have been observed for more than $t$ years without experiencing financial distress, which is necessary for model estimation.

Under Assumptions \ref{as1} and \ref{as2}, we obtain an alternative characterization of $\boldsymbol{\theta}_0$, relying only on observed or estimable quantities:

\begin{equation}\label{equivalent censor} \boldsymbol{\theta}_0 = \argmax_{\boldsymbol{\theta} \in \mathbb{R}^{K_z}} \mathbbm{E}\left[\left. \frac{\delta(t) \mathbbm{1}\{\widetilde{T} \leq t\}}{H(t \wedge \widetilde{T})} \boldsymbol{Z}^{\top} \boldsymbol{\theta} - \log \left(1 + \exp(\boldsymbol{Z}^{\top} \boldsymbol{\theta})\right)\right|\widetilde{T} \geq s\right], \end{equation}
where $H(u) = P(C \geq u|C\ge s)$ is the survival probability of $C$ at time $u$ conditional on $C\ge s$ and $\delta(t) = \mathbbm{1}\{C \geq t \wedge T\}$ is the observation indicator. Equation \eqref{equivalent censor} is proven in Online Appendix \ref{estimator}, see Lemma \ref{lm.cens2}. Essentially, the expectation in \eqref{equivalent censor} weighs the uncensored observations that fail between $s$ and $t$ by the weights $1/H(t \wedge \widetilde{T})$ to ensure they are representative of firms with survival times between $s$ and $t$.\footnote{The expectation in \eqref{equivalent censor} is well-defined since $H(t \wedge \widetilde{T})\ge H(t) =  P(C \geq t \mid C\ge s) =  P\left(C \geq t\right)/P(C\ge s) \ge P(\widetilde{T} \geq t)/P(C\ge s)  > 0$ almost surely by Assumption \ref{as2}.}

The function $H$ is not directly observed but can be estimated under Assumption \ref{as1} using the classical Kaplan-Meier estimator \citep{kaplan1958nonparametric}, as described below.

\subsection{Estimation with mixed-frequency data}\label{sec exp sg}

Consider an i.i.d. sample of firms \(\left(\widetilde{T}_i, \delta_i, \boldsymbol{Z}_i\right), i \in [N]\), such that for all \(i \in [N]\), \(\widetilde{T}_i \geq s\), i.e., all firms in the sample are observed for at least \(s\) years.\footnote{As shown in the previous section, considering firms with at least \(s\) years of observation does not introduce selection bias under the independent censoring assumption.}

For prediction, suppose we have $K$ covariates and for each covariate $k\in [K]$ we have \(s\) years of lagged covariates \(\left\{z_{i, s - \frac{j-1}{m}, k}: j \in [d]\right\}\), where \(d = s \times m\) represents the total number of lags of each covariate and \(m\) is the annual sampling frequency of the covariate observations. The covariates can be observed at varying frequencies, and although not all lags may enter the regression, we omit such cases for simplicity. The \(k\)-th covariate and its lags are represented as 
$$
\tcr{\widetilde{\boldsymbol{Z}}_{i,k} = \left(z_{i, s, k}, z_{i, s - \frac{1}{m}, k}, \ldots, z_{i, s - \frac{d-1}{m}, k}\right)^{\top} \in \mathbb{R}^{d}, k \in [K].}
$$
Then the complete vector of lagged covariates and intercept is denoted as $$
\boldsymbol{Z}_i = \left(1,\widetilde{\boldsymbol{Z}}_{i,1}^\top, \widetilde{\boldsymbol{Z}}_{i,2}^\top, \dots, \widetilde{\boldsymbol{Z}}_{i,K}^\top\right)^\top\in\mathbb{R}^{K_z}
$$
where \(K_z = K \times d + 1\).
The function \(H\) in \eqref{equivalent censor} can be estimated using the Kaplan-Meier estimator:
$$
\widehat{H}(u) = \prod_{j \leq u} \left(1 - \frac{dN(j)}{\widetilde{T}(j)}\right),
$$
where \(N(j) = \sum_{i=1}^N \mathbbm{1}\{\widetilde{T}_i \leq j, \delta_i = 0\}\) is the counting process for censoring events up to time $j$, and \(dN(j) = N(j) - \lim_{j' \to j, j' < j} N(j')\) denotes the jump of the process \(N\) at time \(j\). Additionally, \(\widetilde{T}(j) = \sum_{i=1}^N \mathbbm{1}\{\widetilde{T}_i \geq j\}\) is the number of units who are under observation just before time \(j\).

We consider datasets that are high-dimensional. For instance, in our empirical application, as summarized in Table~\ref{summary}, if we consider firms that have survived \(s = 6\) years, with \(K = 95\) covariates measured \(m = 4\) times per year, the total number of parameters to estimate is \(6 \times 4 \times 95 + 1 = 2,281\), including the intercept. When the sample size is not much larger than the number of parameters, the curse of dimensionality arises, complicating computations and reducing estimation precision.

To address this, dimension-reduction techniques are necessary. A common approach is to directly apply the LASSO \citep{tibshirani1996regression} to the original predictors. For an i.i.d. sample \(\{(\widetilde{T}_i, \delta_i, \boldsymbol{Z}_i), i \in [N]\}\), the $\ell_1$-norm penalized estimator solves the empirical version of \eqref{equivalent censor} with the LASSO penalty. 
% \footnote{In our practical implementation, we never penalize the intercept coefficient.} 
% minimizes:
% \begin{equation}\label{sur}
%     \widehat{\boldsymbol{\theta}} \in \argmin\limits_{\boldsymbol{\theta}\in\mathbb{R}^{K_z}} \frac{1}{N} \sum_{i=1}^N \left(-\frac{\delta_i(t) \mathbbm{1}\{\widetilde{T}_i \leq t\}}{\widehat{H}(t \wedge \widetilde{T}_i)} \boldsymbol{Z}_i^\top \boldsymbol{\theta} + \log \left(1 + \exp(\boldsymbol{Z}_i^\top \boldsymbol{\theta})\right)\right) + \mu |\boldsymbol{\theta}|_1,
% \end{equation}
% where \(\mu\) is a regularization parameter.\footnote{In our practical implementation, we never penalize the intercept coefficient.} %The approach is referred to as \textbf{LASSO-UMIDAS}.

Here, we follow a different approach \citep{babii2022machine} to reduce the dimension based on Mixed-Data Sampling \citep[MIDAS]{ghysels2006predicting}, which is designed to address parameter proliferation in mixed-frequency data. MIDAS approximates the coefficients of high-frequency lag polynomials using a finite dictionary of functions. Specifically, let us write 
\begin{equation}\label{specify weight}
\boldsymbol{Z}_i^\top \boldsymbol{\theta}_0 =\theta_{0,1}+\frac{1}{d}\sum_{k=1}^K \sum_{j=1}^d \omega_k\left(\frac{j-1}{d}\right) z_{i, s - \frac{j-1}{m}, k}, \quad i \in [N],
\end{equation}
where \(\omega_k:[0,1]\mapsto \mathbb{R}\), \(k \in [K]\), are weight functions for the lag polynomials such that $ \omega_k\left(\frac{j-1}{d}\right)/d=\theta_{0,1+d (k-1) +j}$. Let \(\{w_l: l = 1, \ldots, L\}\) be the dictionary of functions \tcr{and $L$ is the size of the dictionary.} For each \(k \in [K]\), we assume there exist coefficients \(\boldsymbol{\beta}_{0,k}^* = (\beta_{0,k,1}^*, \beta_{0,k,2}^*,\ldots, \beta_{0,k,L}^*)^\top \in \mathbb{R}^L\) such that:
$$
\omega_k(u) \approx \sum_{l=1}^L \beta_{0,k,l}^* w_l(u), \quad u \in [0, 1].
$$
This reduces the number of parameters from \(K \times d+1 \) to \(K \times L+1 \). The simplest dictionary consists of algebraic power polynomials (e.g., Almon polynomials \citep{almon1965distributed}), but other orthogonal bases of \(L_2[0,1]\) can be used to improve performance with correlated covariates.\footnote{\(L_2[0,1]\) denotes the space of square-integrable functions \(f: [0,1] \to \mathbb{R}\).} {\color{black}With the MIDAS approximation, we define the following MIDAS-weighted covariate vector
	\[
	\boldsymbol{X}_i = \left(1,\widetilde{\boldsymbol{Z}}_{i,1}^\top W, \widetilde{\boldsymbol{Z}}_{i,2}^\top W, \ldots, \widetilde{\boldsymbol{Z}}_{i,K}^\top W\right)^\top \in \mathbb{R}^{p},
	\]
	where $p = K \times L + 1$ is the dimension, \(W = \left(w_l\bigl(\frac{j-1}{d}\bigr)/d\right)_{j \in [d], l \in [L]} \in \mathbb{R}^{d \times L}\) is the weighting matrix, and we refer to $W^{\top}\widetilde{\boldsymbol{Z}}_{i,k}$ as the $k$-th group of MIDAS-weighted covariates. Although the number of parameters to be estimated is reduced by taking a small MIDAS dictionary size $L$, a large $K$ can still pose a high-dimensional challenge.
}
\begin{table}[htbp]
	\centering
	\caption{Collection of notations.}
	\medskip
	\begin{tabular}{ll}
		\toprule
		$N$ & Sample size. \\
		$K$ & Number of original covariates, each observed with $d$ lags. \\
		$L$ & Size of MIDAS dictionary $W$. \\
		$d$ & Total number of lags for each covariate. \\
		$m$ & Annual sampling frequency of lagged covariates. \\
		$s$ & Minimum length of time (in years) that all firms have survived. \\
		$p$ & Dimension of MIDAS-weighted covariate vector $\boldsymbol{X}_i$.\\
		\bottomrule
	\end{tabular}
	\label{notationt}
\end{table}

{\color{black}
	To further reduce dimensionality and exploit the natural grouping of covariates, we employ the sparse-group LASSO \citep{simon2013sparse}. While standard LASSO induces sparsity at the individual-variable level and group LASSO enforces sparsity at the group level, sparse-group LASSO provides a more flexible framework by encouraging sparsity both within and across groups. This is achieved through a convex combination of the LASSO and group LASSO penalties. As a result, sparse-group LASSO can select relevant groups while simultaneously performing variable selection within those groups, making it particularly well suited to high-dimensional settings with hierarchical covariate structures, such as in high-dimensional MIDAS regressions. %To further enhance dimension reduction, we apply sparse-group LASSO \citep{simon2013sparse}, which leverages the grouping structure among covariates. While standard LASSO \citep{tibshirani1996regression} achieves sparsity at the individual variable level and group LASSO \citep{yuan2006model} enforces sparsity at the group level, sparse-group LASSO provides a more flexible framework by encouraging sparsity both within and between groups. Specifically, the sparse-group LASSO penalty uses a convex combination of the LASSO and group LASSO penalties. The key advantage of sparse-group LASSO is its ability to select relevant groups while simultaneously performing variable selection within selected groups, making it particularly suitable for high-dimensional settings where covariates exhibit a hierarchical structure. By tuning the penalty parameter, we can optimally balance between-group and within-group sparsity, usually ensuring performance at least as good as either LASSO or group LASSO alone.  
	For \(\boldsymbol{\beta}\in\mathbb{R}^{p}\), define the sparse-group LASSO penalty as
	$$
	\Omega(\boldsymbol{\beta}) = \alpha |\boldsymbol{\beta}|_1 + (1-\alpha) \|\boldsymbol{\beta}\|_{2,1}, \quad \|\boldsymbol{\beta}\|_{2,1} = \sum_{G \in \mathcal{G}} |\boldsymbol{\beta}_G|_{2},
	$$
	where $0 \leq \alpha \leq 1$ is the weight parameter that balances the LASSO and group LASSO penalties, and $\mathcal{G}$ denotes a pre-specified collection of groups, with each group $G \in \mathcal{G}$ being a subset of $[p]$.} {\color{black} 
	In our analysis, the specified group structure is \(\mathcal{G} = \{G_k : k \in [K+1]\}\), where $G_1=\{1\}$ corresponds to the intercept and for $k = 2, \ldots, K+1$, \(G_k = \{(1+(k-2)L)+ 1, \ldots, 1+(k-1)L\}\) corresponds to the $(k-1)$-th group of MIDAS-weighted covariates $W^{\top}\widetilde{\boldsymbol{Z}}_{i,(k-1)}$. This grouping arises by treating each covariate together with all of its lags as a single group, and then applying the MIDAS weighting matrix $W$ to each such group of lags. As shown by \citet{babii2022machine}, this simple and natural group structure enhances prediction performance. Alternative group structures are also possible. For example, forming groups by pairing conceptually related covariates, such as Return on Assets (ROA) and Return on Equity (ROE). We explore the robustness of our results to such alternative groupings in the empirical application.}

\tcr{For ease of reference, the notation introduced so far is summarized in Table \ref{notationt}.} 

Finally, we propose the sparse-group LASSO estimator, which minimizes
\begin{equation}\label{midas}
\widehat{\boldsymbol{\beta}} = \argmin\limits_{\boldsymbol{\beta} \in \mathbb{R}^{p}} R_N(\boldsymbol{\beta}) + \lambda \Omega(\boldsymbol{\beta}),
\end{equation}
where
$$
R_N(\boldsymbol{\beta}) = \frac{1}{N} \sum_{i=1}^N -\frac{\delta_i(t) \mathbbm{1}\{\widetilde{T}_i \leq t\}}{\widehat{H}(t \wedge \widetilde{T}_i)} \boldsymbol{X}_i^\top \boldsymbol{\beta} + \log \left(1 + \exp(\boldsymbol{X}_i^\top \boldsymbol{\beta})\right).
$$
Here, \(\lambda \geq 0\) controls the regularization, and \(\alpha \in [0,1]\) balances the sparsity and the group structure. Choosing \(\alpha = 1\) recovers LASSO, while \(\alpha = 0\) corresponds to group LASSO.\footnote{In our practical implementation, we do not penalize the intercept coefficient. For simplicity, we do not write this in the equations.} \tcr{In practice, $\alpha$ can be fixed or selected jointly with $\lambda$ by using cross-validation.}
We call \textbf{sg-LASSO-MIDAS} the approach embodied by \eqref{midas}.

% In the empirical sections, we also examine two alternative methods as benchmarks. The first method employs unrestricted lag polynomials combined with LASSO. In this approach, the regressor matrix is defined as  
% $$
% \boldsymbol{X}_i = \begin{pmatrix} 1, \widetilde{\boldsymbol{Z}}_{i,1}, \widetilde{\boldsymbol{Z}}_{i,2}, \ldots, \widetilde{\boldsymbol{Z}}_{i,K} \end{pmatrix}^\top \in \mathbb{R}^{K d+1},
% $$
% and the sg-LASSO mixing parameter is set to $\alpha = 1$. We refer to this approach as \textbf{LASSO-UMIDAS}. Unlike the MIDAS approach, LASSO-UMIDAS does not impose restrictions on the polynomials, resulting in the need to estimate a significantly larger number of parameters. Moreover, no structural constraints are applied to the coefficients of the lags. As a second alternative, we consider \textbf{LASSO-MIDAS}, which adopts the MIDAS approximation employed in the sg-LASSO-MIDAS method but enforces a fixed mixing parameter of $\alpha = 1$.

In the simulation and empirical sections, we also examine two alternative methods as benchmarks. The first method \textbf{LASSO-UMIDAS} employs unrestricted lag polynomials combined with LASSO. Unlike the MIDAS approach, LASSO-UMIDAS does not impose restrictions on the polynomials, resulting in the need to estimate a significantly larger number of parameters. Moreover, no structural constraints are applied to the coefficients of the lags. As a second alternative, we consider \textbf{LASSO-MIDAS}, which adopts the MIDAS approximation employed in the sg-LASSO-MIDAS method but avoids the group penalty. It corresponds to \eqref{midas} with a fixed mixing parameter of $\alpha = 1$.

\section{Theoretical results} \label{sec the}

\tcr{In this section, we first outline the main assumptions for the proposed penalized estimator \eqref{midas} and analyze the finite sample properties of the sparse-group LASSO estimator. Then we introduce the de-sparsified procedure and discuss the asymptotic properties of the de-sparsified sparse-group LASSO estimator. Both the (de-sparsified) LASSO and (de-sparsified) group LASSO estimators are covered as special cases.\footnote{We treat $\alpha$ as constant for the theory but optimize it through cross-validation in practice.}}  Recall that we have an i.i.d. sample $\{(\widetilde{T}_i,\delta_i,\boldsymbol{Z}_i),\ i\in[N]\}$ such that $\widetilde{T}_i\ge s$ for all $i\in[N].$ We consider an asymptotic regime where $N$ goes to infinity and $p$ goes to infinity as a function of $N$. High-dimensional $\ell_1$-norm penalized logistic regression has been studied in the literature; see, for instance, \citet{10.1214/009053607000000929,van_de_geer_estimation_2016,caner2023generalized} and \citet{han2023high}.  However, none of these studies account for censoring or allow for approximation errors.
Instead, we explicitly take into account the approximation error stemming from the MIDAS approximation defined as
$$
    E_i = \boldsymbol{Z}_{i}^{\top}\boldsymbol{\theta}_0 - \boldsymbol{X}_{i}^{\top}\boldsymbol{\beta}_0, \quad i \in [N],
$$ 
where $\boldsymbol{\beta}_0 = \left(\theta_{0,1}, (\boldsymbol{\beta}_{0,1}^*)^{\top},(\boldsymbol{\beta}_{0,2}^*)^{\top},\ldots,(\boldsymbol{\beta}_{0,K}^*)^{\top}\right)^{\top}\in \mathbb{R}^p$ is the true parameter of interest \tcr{and we assume it is exactly sparse. We refer to $\theta_{0,1}$ as the intercept term and $\boldsymbol{\beta}_{0,k}^* \in \mathbb{R}^{L}$ as the $k$-th group of MIDAS-weighted parameters. Let $\boldsymbol{E} = (E_1, E_2, \ldots, E_N)^{\top}$ collect all approximation errors.}

\tcr{If one is instead interested in allowing for approximate sparsity of $\boldsymbol{\beta}_{0}$ \citep{van_de_geer_estimation_2016,belloni2018high,babii2022machine}, a straightforward modification is to replace $E_i$ by the following error term $\widetilde{E}_i$ defined by
\[
\widetilde{E}_i
= \underbrace{\boldsymbol{Z}_i^{\top}\boldsymbol{\theta}_0 - \boldsymbol{X}_i^{\top}\boldsymbol{\beta}_0}_{\text{MIDAS approximation error}}
+\;
\underbrace{\boldsymbol{X}_i^{\top}\boldsymbol{\beta}_0 - \boldsymbol{X}_i^{\top}\boldsymbol{\beta}}_{\text{approximate sparsity error}},
\] 
where $\boldsymbol{\beta}$ is a candidate oracle that is exactly sparse, and the assumptions \ref{aseign} and \ref{aseff} below should therefore be imposed on $\boldsymbol{\beta}$ rather than on $\boldsymbol{\beta}_0$.}\footnote{\tcr{Alternatively, by adapting the argument underlying Corollary~12.7 of \citet{van_de_geer_estimation_2016}, one can also derive a finite-sample bound that yields a separate treatment of the MIDAS approximation error and the approximate sparsity error.}} \tcr{In this paper, we stick to exact sparse $\boldsymbol{\beta}_0$ to simplify the theoretical analysis.}
%Despite the presence of the approximation error, note that we have the following characterization of the target parameter:
%$$\boldsymbol{\beta}_0=\argmin_{\boldsymbol{\beta}\in\mathbb{R}^{p}} R(\boldsymbol{\beta}),$$
%where we use $p=KL+1$ to denote the dimension of $\boldsymbol{X}_i$
%and $$R(\boldsymbol{\beta})= \mathbbm{E}\left[\left(-\frac{\delta_{i}(t)\mathbbm{1}{\{\widetilde{T}_{i} \leq t\}}}{H(t\wedge \widetilde{T}_{i})} \left(\boldsymbol{X}_{i}^\top \boldsymbol{\beta} + E_{i}\right)+ \log (1+\exp(\boldsymbol{X}_{i}^\top \boldsymbol{\beta} + E_{i})) \right)\right],$$
%is the loss function.\footnote{See Appendix for a proof of this characterization.}
\subsection{Estimation theory}
We start by introducing the following assumptions.
\begin{assumption}\label{asX}
(Data) We have i.i.d. data $\{(\widetilde{T}_i,\delta_i,\boldsymbol{Z}_i),\ i\in[N]\}$, and 
 there exists $q\ge 4$ and $K_0>0$ such that $\max\limits_{|\boldsymbol{u}|_2 = 1}\mathbbm{E}\left(\left|\boldsymbol{X}_{i}^{\top}\boldsymbol{u}\right|^{q}\right) \leq K_0$. 
\end{assumption}
This condition just requires that the variables have more than $4$ finite moments, allowing for polynomial tails commonly observed with financial variables. \tcr{It is worth noting that Assumption~\ref{asX} also implicitly controls correlations among covariates, as extremely high correlations would lead some linear combination to have an excessively large $q$-th moment, thereby violating the bound.}
% \tcr{We also assume $\mathbbm{E}\left(\left|\boldsymbol{X}_i^{\top}\boldsymbol{\beta}_0\right|^q\right) \leq K_{1}$, which is necessary in the penalized logistic regression model. Stronger assumption like bounded $\left|\boldsymbol{\beta}_0\right|_1$ (see Assumption of Lemma 3.1 in \citet{han2023high}) or even bounded $\left|\boldsymbol{X}_i^{\top}\boldsymbol{\beta}_0\right|$ (see Assumption (iv) of Theorem 3.3 in \citet{van_de_geer_debias}) can be imposed.}
\begin{comment}
\begin{remark}
    The $q$-th moment of a linear combination of random variables will control the moments of the individual variables, including their pairwise products. It is easy to see that we both have $\max\limits_{1 \leq j \leq p}\mathbbm{E}\left(\left|X_{i,j}\right|^{q}\right) \leq K_0 $ and $\max\limits_{1 \leq j,l \leq p}\mathbbm{E}\left(\left|X_{i,j}X_{i,l}\right|^{\frac{q}{2}}\right) \leq \max\limits_{1 \leq j,l \leq p}\mathbbm{E}\left(\left|\frac{X_{i,j}+X_{i,l}}{\sqrt{2}}\right|^{q}\right) \leq K_0 $.
\end{remark}
%Compared to \citet{van_de_geer_estimation_2016}, they only allowed for the data with $\max\limits_{1 \leq j \leq p}\mathbbm{E}\left(\exp\left(\left|X_{i,j}\right|^2\right)\right) < \infty$, which excludes heavy-tailed distributions. Assumption \ref{asX} may allow for heavy-tailed distributions commonly encountered in financial and economic time series, for example, asset returns and volatilities.
\end{comment}
\begin{assumption}\label{aseign}
    There exists a constant $\gamma_{\mathrm{H}}>0$ such that the minimum eigenvalue
      $$
     \lambda_{\min }\left(\mathbbm{E}\left[\frac{\exp(\boldsymbol{X}_i^{\top}\boldsymbol{\beta}_0 + E_i)}{ \big(1+\exp(\boldsymbol{X}_i^{\top}\boldsymbol{\beta}_0 + E_i)\big)^2 }\boldsymbol{X}_{i} \boldsymbol{X}_{i}^{\top}\right]\right) \geq \gamma_{\mathrm{H}}.
   $$
  %     $$
  %   \lambda_{\min }\left(\mathbbm{E}\left[\frac{\exp(\boldsymbol{X}_i^{\top}\boldsymbol{\beta}_0 + E_i)}{ \big(1+\exp(\boldsymbol{X}_i^{\top}\boldsymbol{\beta}_0 + E_i)\big)^2 }\boldsymbol{X}_{i} \boldsymbol{X}_{i}^{\top}\right]\right) \geq \gamma_{\mathrm{H}}.
  % $$
\end{assumption}

\noindent Assumption \ref{aseign} is similar to the compatibility condition discussed in \citet{10.1214/009053607000000929,van_de_geer_estimation_2016,caner2023generalized}, as well as the restricted Fisher-information matrix eigenvalue condition described in \citet{han2023high}. This is a high-dimensional version of the full-rank condition guaranteeing the asymptotic properties of the maximum likelihood estimator in the (low-dimensional) logistic regression with misspecification error.

%In linear regression, the restricted eigenvalue condition implies the restricted strong convexity (RSC) of the loss function which underpins the statistical guarantee of the M-estimator \citep{10.1214/12-STS400}. However, in the logistic regression model, the Hessian matrix of the loss function depends on $\boldsymbol{\beta}$, creating some technical difficulty verifying the RSC. In the context of LASSO, \citet{han2023high} developed a local RSC (LRSC) within a neighborhood of $\boldsymbol{\beta}_0$, which is closely related to the quadratic margin condition of $R(\boldsymbol{\beta})$ in \citet{10.1214/009053607000000929}. In this paper, we offer a new perspective by examining the margin condition of the conditional risk function $R(\boldsymbol{\beta}|\cdot)$; further details can be found in Appendix \ref{one p m}.
\begin{comment}
\begin{remark}
  Since $\boldsymbol{Z}_i^{\top}\boldsymbol{\theta}_0 = \boldsymbol{X}_i^{\top}\boldsymbol{\beta}_0 + E_i$ and $\boldsymbol{X}_i = W^{\top}\boldsymbol{Z}_i$, Assumption \ref{aseign} is equivalent to
          $$
    \lambda_{\min }\left(\mathbbm{E}\left[\frac{\exp(\boldsymbol{Z}_i^{\top}\boldsymbol{\theta}_0)}{ \big(1+\exp(\boldsymbol{Z}_i^{\top}\boldsymbol{\theta}_{0})\big)^2 }W^T\boldsymbol{Z}_{i} \boldsymbol{Z}_{i}^{\top}W\right]\right) \geq \gamma_{\mathrm{H}}.
    $$
\end{remark}
\end{comment}

Next, we need to introduce additional definitions. Let  $S_{\boldsymbol{\beta}_0}=\{j\in[p]:\ \boldsymbol{\beta}_{0,j}\ne 0\}$ and $ \mathcal{G}_{\boldsymbol{\beta}_0}=\{G \in \mathcal{G}:\ \left(\boldsymbol{\beta}_{0}\right)_{G}\ne \boldsymbol{0}\}$ be the support and 
the group support of the target parameter $\boldsymbol{\beta}_0$. Let $\sqrt{s_{\boldsymbol{\beta}_0}} = \alpha \sqrt{|S_{\boldsymbol{\beta}_0}|} + (1 - \alpha)\sqrt{|\mathcal{G}_{\boldsymbol{\beta}_0}|}$ be the sparsity level and $G^*=\max _{G \in \mathcal{G}_{\boldsymbol{\beta}_0}}|G|$ be the size of the largest group in $\mathcal{G}_{\boldsymbol{\beta}_0}$. For simplicity, we suppose that $s_{\boldsymbol{\beta}_0}\ge 1$ (otherwise, it suffices to replace $s_{\boldsymbol{\beta}_0}$ by $s_{\boldsymbol{\beta}_0}\vee 1$ in all assumptions and bounds involving $s_{\boldsymbol{\beta}_0} $). We impose the following assumption on $s_{\boldsymbol{\beta}_0} $.
\begin{assumption}\label{aseff}
   It holds that 
   $$s_{\boldsymbol{\beta}_0}G^*\left(\frac{p^{\frac{2}{q}}  \log p}{N^{1-\frac{2}{q}}} \vee \frac{p^{\frac{2}{q}}  \sqrt{\log p}}{\sqrt{N}}\right) =o(1),$$ and 
   $$s_{\boldsymbol{\beta}_0} (G^*)^{\frac{3}{2}}\left(\frac{\lambda s_{\boldsymbol{\beta}_0}}{\gamma_{\mathrm{H}}} + \frac{\lambda^{-1}}{N} |\boldsymbol{E}|_1\right)(Np\log p)^{\frac{1}{q}}=o_P(1).$$
   % with probability going to 1,  
   %  $$18\frac{s_{\boldsymbol{\beta}_0}}{\gamma_{\mathrm{H}}} (G^*)^{\frac{3}{2}}(K_0)^{\frac{q}{3}}\left(\frac{\lambda s_{\boldsymbol{\beta}_0}}{\gamma_{\mathrm{H}}} + \frac{\lambda^{-1}}{N} |\boldsymbol{E}|_1\right)(Np\log p)^{\frac{1}{q}} < \frac{1}{2}.$$
\end{assumption}
This is a condition on the degree of sparsity $s_{\boldsymbol{\beta}_0} $, the size of the largest group $G^*$, the $\ell_1$-norm of the approximation error $|\boldsymbol{E}|_1 = \sum_{i = 1}^{N} |E_i|$, and the relative growth rate of $N$ and $p$. The condition is more likely to hold when 
$s_{\boldsymbol{\beta}_0} $, $G^*$, $|\boldsymbol{E}|_1$ or $1/q$ are smaller and $p$ does not grow too quickly with $N$. Note also that if $\lambda$ is too low or too large, the condition might fail to hold. This condition allows establishing a connection between empirical and population effective sparsity, enabling the extension of the quadratic margin condition to its sampled version. \citet{10.1214/09-EJS506} briefly discussed it specifically for data with Gaussian tails in the case of the LASSO for the linear model. We extend this framework to accommodate heavy-tailed data and approximation error in the logistic regression model. When there is no such approximation error, that is $E_i = 0, $ for all $i \in [N]$, a similar assumption imposed on $\lambda$ and $s_{\boldsymbol{\beta}_0}$ is used in \citet{van_de_geer_estimation_2016} and \citet{han2023high}. 
% \tcr{Notice that the second statement of Assumption \ref{aseff} can be relaxed to the one in Assumption \ref{aseffextra}, see Appendix \ref{appendixextra}, where we use a different definition of effective sparsity in Appendix \ref{definition}, hence a different proof strategy.}

We now establish bounds on the estimation error, presenting two distinct types. The first type pertains to the parameter estimation error $\Omega\left(\widehat{\boldsymbol{\beta}} - \boldsymbol{\beta}_0\right)$, while the second focuses on prediction accuracy. Consider a scenario where, for some $\boldsymbol{z} = \left(1, \tilde{\boldsymbol{z}}_1, \ldots, \tilde{\boldsymbol{z}}_K \right)^{\top} \in \mathbb{R}^{K_z}$, where $\tilde{\boldsymbol{z}}_k \in \mathbb{R}^{d}, k \in [K]$, we aim to estimate $P(\boldsymbol{z}) = P(T \leq t \mid \boldsymbol{Z} = \boldsymbol{z}, T \geq s)$, representing the probability that a firm with covariates $\boldsymbol{z}$, having survived at least $s$ years, fails before $t$. We estimate $P(\boldsymbol{z})$ using $\widehat{P}(\boldsymbol{z})  =  \frac{\exp \left( \boldsymbol{x}^{\top} \widehat{\boldsymbol{\beta}} \right)}{1 + \exp \left(\boldsymbol{x}^{\top}\widehat{\boldsymbol{\beta}}\right)}$ where $\boldsymbol{x} = \left(1,  \tilde{\boldsymbol{z}}_1^{\top}W, \ldots, \tilde{\boldsymbol{z}}_K^{\top}W\right)^{\top}$. Our goal is to provide a bound for the error $\widehat{P}(\boldsymbol{z}) - P(\boldsymbol{z})$. This bound will depend on the term $e = \boldsymbol{z}^\top \boldsymbol{\theta}_0 - \boldsymbol{x}^\top \boldsymbol{\beta}_0$, which represents the MIDAS approximation error at the covariate $\boldsymbol{z}$. The following theorem formally states this result.
\begin{theorem}
\label{the}
Let Assumptions \ref{as1}, \ref{as2}, \ref{asX}, \ref{aseign} and \ref{aseff} hold. If there exists a sufficiently large constant $\mathcal{K}$ such that $\lambda \geq \mathcal{K}p^{\frac{1}{q}}\sqrt{\log p}/N^{\frac{1}{2} - \frac{1}{q}}$, then, with probability going to $1$, we have
\begin{align*}
    \Omega\left(\widehat{\boldsymbol{\beta}} - \boldsymbol{\beta}_0\right) &\lesssim \frac{\lambda s_{\boldsymbol{\beta}_0}}{\gamma_{\mathrm{H}}} +  \lambda^{-1}\frac{1}{N} |\boldsymbol{E}|_1,
\end{align*}
and 
\begin{align*}
    \widehat{P}(\boldsymbol{z}) -  P(\boldsymbol{z}) &\lesssim \frac{\lambda s_{\boldsymbol{\beta}_0}\left|\boldsymbol{x}\right|_{\infty}}{\gamma_{\mathrm{H}}} +  \lambda^{-1}\frac{1}{N} |\boldsymbol{E}|_1 \left|\boldsymbol{x}\right|_{\infty} + |e|.
\end{align*}
\end{theorem}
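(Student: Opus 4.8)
The plan is to follow the standard route for penalized $M$-estimators --- a basic inequality, control of the effective noise through the dual norm of $\Omega$, a restricted-eigenvalue/margin (restricted strong convexity) step, and a final combination --- while isolating the two nonstandard error sources, namely the OIPCW censoring weights and the MIDAS approximation error $\boldsymbol{E}$. Writing $\sigma(u)=\exp(u)/(1+\exp(u))$ and $Y_i^w=\delta_i(t)\mathbbm{1}\{\widetilde{T}_i\le t\}/H(t\wedge\widetilde{T}_i)$ for the weighted pseudo-response, optimality of $\widehat{\boldsymbol{\beta}}$ gives the basic inequality $R_N(\widehat{\boldsymbol{\beta}})-R_N(\boldsymbol{\beta}_0)\le \lambda\Omega(\boldsymbol{\beta}_0)-\lambda\Omega(\widehat{\boldsymbol{\beta}})$, with $\Delta=\widehat{\boldsymbol{\beta}}-\boldsymbol{\beta}_0$. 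The gradient at $\boldsymbol{\beta}_0$ is $\nabla R_N(\boldsymbol{\beta}_0)=\frac1N\sum_i(\sigma(\boldsymbol{X}_i^\top\boldsymbol{\beta}_0)-Y_i^w)\boldsymbol{X}_i$, and the key algebraic step is to split it into a mean-zero part and an approximation-bias part. Using the OIPCW identity of Lemma \ref{lm.cens2} (valid under Assumptions \ref{as1}--\ref{as2}), $\mathbbm{E}[Y_i^w\mid \boldsymbol{Z}_i]=\sigma(\boldsymbol{Z}_i^\top\boldsymbol{\theta}_0)=\sigma(\boldsymbol{X}_i^\top\boldsymbol{\beta}_0+E_i)$, so I would write $\sigma(\boldsymbol{X}_i^\top\boldsymbol{\beta}_0)-Y_i^w=[\sigma(\boldsymbol{X}_i^\top\boldsymbol{\beta}_0)-\sigma(\boldsymbol{X}_i^\top\boldsymbol{\beta}_0+E_i)]-\xi_i$ with $\mathbbm{E}[\xi_i\mid\boldsymbol{Z}_i]=0$, yielding $\nabla R_N(\boldsymbol{\beta}_0)=\boldsymbol{b}_N-\boldsymbol{\nu}$ where $\boldsymbol{\nu}=\frac1N\sum_i\xi_i\boldsymbol{X}_i$ is centered noise and $\boldsymbol{b}_N$ is the empirical MIDAS bias.

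Next I would control the noise through the dual norm $\Omega^*$: on the event $\{2\Omega^*(\boldsymbol{\nu})\le\lambda\}$ one has $|\langle\boldsymbol{\nu},\Delta\rangle|\le\frac\lambda2\Omega(\Delta)$. Because the weights $1/H(t\wedge\widetilde{T}_i)$ are bounded by $1/H(t)<\infty$ under Assumption \ref{as2}, and the covariates obey the $q\ge4$ moment bound of Assumption \ref{asX}, this event holds with probability tending to one precisely when $\lambda$ dominates $p^{1/q}\sqrt{\log p}/N^{1/2-1/q}$ --- the stated rate condition --- which I would verify with a Fuk--Nagaev/Nemirovski-type moment inequality applied coordinate- and group-wise to the heavy-tailed sum $\boldsymbol{\nu}$. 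Combining this with the basic inequality and the decomposability of $\Omega$ across $S_{\boldsymbol{\beta}_0}$ and its group support produces a cone condition $\Omega(\Delta_{S^c})\lesssim\Omega(\Delta_{S})$ (plus a residual bias term), placing $\Delta$ in the restricted set where a compatibility constant --- itself a population-to-empirical transfer governed by the first display of Assumption \ref{aseff} --- is available.

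The crux, and the step I expect to be hardest, is the restricted strong convexity of the logistic loss under heavy tails. Self-concordance of $u\mapsto\log(1+e^u)$ delivers only a \emph{local} quadratic lower bound $R_N(\widehat{\boldsymbol{\beta}})-R_N(\boldsymbol{\beta}_0)-\langle\nabla R_N(\boldsymbol{\beta}_0),\Delta\rangle\gtrsim\gamma_{\mathrm{H}}\|\boldsymbol{X}^\top\Delta\|_N^2$ (with $\|\boldsymbol{X}^\top\Delta\|_N^2=\frac1N\sum_i(\boldsymbol{X}_i^\top\Delta)^2$), because the curvature $\sigma'$ decays once $\max_i|\boldsymbol{X}_i^\top\Delta|$ is large. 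To make this operational I would pass from the population curvature of Assumption \ref{aseign} to its empirical analogue while controlling $\max_i|\boldsymbol{X}_i^\top\Delta|\le(\max_i|\boldsymbol{X}_i|_\infty)|\Delta|_1\lesssim(Np)^{1/q}\Omega(\Delta)$; this is exactly where the second display of Assumption \ref{aseff}, featuring $s_{\boldsymbol{\beta}_0}(G^*)^{3/2}(\tfrac{\lambda s_{\boldsymbol{\beta}_0}}{\gamma_{\mathrm{H}}}+\tfrac{\lambda^{-1}}{N}|\boldsymbol{E}|_1)(Np\log p)^{1/q}$, enters --- it is the self-referential localization guaranteeing that the candidate bound keeps the perturbation inside the region where the quartic margin collapses to a quadratic one with constant $\gamma_{\mathrm{H}}$. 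The bias is then absorbed via $|\langle\boldsymbol{b}_N,\Delta\rangle|\le\frac14\frac1N\sum_i|E_i||\boldsymbol{X}_i^\top\Delta|\lesssim(\tfrac1N|\boldsymbol{E}|_1)^{1/2}\|\boldsymbol{X}^\top\Delta\|_N$ (Cauchy--Schwarz with boundedness of the per-observation error) and a Young split sends part into the curvature term. Solving the resulting quadratic in $\|\boldsymbol{X}^\top\Delta\|_N$ and invoking compatibility gives $\Omega(\Delta)\lesssim\frac{\lambda s_{\boldsymbol{\beta}_0}}{\gamma_{\mathrm{H}}}+\sqrt{\tfrac{s_{\boldsymbol{\beta}_0}}{\gamma_{\mathrm{H}}}\tfrac1N|\boldsymbol{E}|_1}$; a single application of $2\sqrt{ab}\le a+b$ with $a=\frac{\lambda s_{\boldsymbol{\beta}_0}}{\gamma_{\mathrm{H}}}$ and $b=\frac{\lambda^{-1}}{N}|\boldsymbol{E}|_1$ rewrites it in the balanced form of the theorem.

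Finally, the prediction bound follows directly from the estimation bound: since $P(\boldsymbol{z})=\sigma(\boldsymbol{z}^\top\boldsymbol{\theta}_0)=\sigma(\boldsymbol{x}^\top\boldsymbol{\beta}_0+e)$ and $\widehat{P}(\boldsymbol{z})=\sigma(\boldsymbol{x}^\top\widehat{\boldsymbol{\beta}})$, the $\tfrac14$-Lipschitz property of $\sigma$ yields $|\widehat{P}(\boldsymbol{z})-P(\boldsymbol{z})|\le\frac14|\boldsymbol{x}^\top\Delta|+\frac14|e|\le\frac14|\boldsymbol{x}|_\infty|\Delta|_1+\frac14|e|$, and bounding $|\Delta|_1$ by $\Omega(\Delta)$ and inserting the estimation bound produces the three stated terms, the first two carrying the factor $|\boldsymbol{x}|_\infty$ and the last being the pointwise approximation error $|e|$.
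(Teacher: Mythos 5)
You follow the same overall skeleton as the paper's Appendix \ref{proof of the}: basic inequality, Fuk--Nagaev control of the dual norm of the centered score under Assumption \ref{asX} (which is exactly where the condition $p^{1/q}\sqrt{\log p}/N^{1/2-1/q}=o(\lambda)$ enters), population-to-sample transfer of effective sparsity under Assumption \ref{aseff}, a localized quadratic margin, and the Lipschitz-plus-dual-norm step for the prediction bound. The bookkeeping differences (splitting the score into noise plus a bias vector and absorbing the bias by Cauchy--Schwarz/Young, versus the paper's centering at the conditional risk $R(\cdot|\boldsymbol{X})$ so that the approximation error enters purely additively as $2|\boldsymbol{E}|_1/N$ via the $1$-Lipschitz property of $u\mapsto\log(1+e^u)$) are largely cosmetic. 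However, two ingredients the proof genuinely needs are missing. First, your localization is asserted rather than established. The curvature bound you rely on holds only on an $\Omega$-ball of radius $M \asymp \lambda s_{\boldsymbol{\beta}_0}/\gamma_{\mathrm{H}} + \lambda^{-1}|\boldsymbol{E}|_1/N$ around $\boldsymbol{\beta}_0$, yet you apply it to $\Delta=\widehat{\boldsymbol{\beta}}-\boldsymbol{\beta}_0$ without ever showing that $\widehat{\boldsymbol{\beta}}$ lies in that ball. The second display of Assumption \ref{aseff} does not do this for you: it guarantees the radius is small enough that the curvature does not degrade there, but it does not place the estimator inside the ball, so invoking it as ``self-referential localization'' leaves the argument circular. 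The paper closes the circle with the interpolation device: it analyzes $\tilde{\boldsymbol{\beta}}=d\widehat{\boldsymbol{\beta}}+(1-d)\boldsymbol{\beta}_0$ with $d=M/(M+\Omega(\widehat{\boldsymbol{\beta}}-\boldsymbol{\beta}_0))$, which lies in the ball by construction and still satisfies the basic inequality by convexity of $R_N+\lambda\Omega$, proves $\Omega(\tilde{\boldsymbol{\beta}}-\boldsymbol{\beta}_0)\le M/2$, and only then deduces $\Omega(\widehat{\boldsymbol{\beta}}-\boldsymbol{\beta}_0)\le M$. You need this step, or an equivalent convexity argument. In the same vein, your cone condition carries a ``residual bias term'' that you never dispose of: when $\lambda^{-1}|\boldsymbol{E}|_1/N$ dominates $\Omega^{+}(\Delta)$, $\Delta$ is not in the cone and compatibility is unavailable; the paper resolves this with its explicit Case 1 / Case 2 split, and your sketch has no substitute. (Also, your Cauchy--Schwarz step $\frac1N\sum_i|E_i||\boldsymbol{X}_i^\top\Delta| \lesssim (|\boldsymbol{E}|_1/N)^{1/2}\bigl(\frac1N\sum_i(\boldsymbol{X}_i^\top\Delta)^2\bigr)^{1/2}$ tacitly assumes $\max_i|E_i|\lesssim 1$, which is not an assumption of the theorem.)

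Second, you treat the censoring weights as known: your pseudo-response $Y_i^w$ is built from the true $H$. The estimator the paper actually analyzes replaces $H$ by the Kaplan--Meier estimator $\widehat{H}$, and its empirical-process bound (Theorem \ref{the3}) contains a dedicated term controlling $\frac1N\sum_i \delta_i(t)\mathbbm{1}\{\widetilde{T}_i\le t\}\bigl(1/\widehat{H}(t\wedge\widetilde{T}_i)-1/H(t\wedge\widetilde{T}_i)\bigr)\boldsymbol{X}_i$, handled via the uniform $\sqrt{N}$-consistency of the Kaplan--Meier estimator combined with the heavy-tail bound $\max_{i,j}|X_{i,j}|\lesssim (Np\log p)^{1/q}$; it contributes the extra noise level $p^{1/q}(\log p)^{1/q}/N^{1/2-1/q}$, which is still $o(\lambda)$ under the stated condition. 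Your proposal never addresses the estimation of $H$, so at best it proves the result for the infeasible known-$H$ estimator; since accommodating censoring is one of the paper's advertised novelties, this cannot be waved away by noting that the weights are bounded below by $H(t)>0$.
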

Let us now discuss the theorem. First, we require that $\lambda$ at least has the same order with respect to
$
p^{\frac{1}{q}} \sqrt{\log p}/N^{\frac{1}{2} - \frac{1}{q}}.
$
For bounds on the LASSO under sub-Gaussian errors, it suffices that \(\lambda\) is of the order of \(\sqrt{\log p / N}\). Our condition is stricter due to the presence of heavy-tailed variables. However, as \( q \to \infty \), the variables are no longer heavy-tailed, and we recover the order \(\sqrt{\log p / N}\) for \(\lambda\). As is standard in the literature, in practice, we select \(\lambda\) in practice via cross-validation (see Sections \ref{sec sim} and \ref{sec real}). The dependence of our rates on \(\lambda\) aligns with those for the standard LASSO estimator in high-dimensional regression. 
% Specifically, the estimation error of \(\widehat{\boldsymbol{\beta}}\) is of the order \(\lambda s_{\boldsymbol{\beta}_0} + \lambda^{-1}\frac{1}{N} |\boldsymbol{E}|_1\). 

Our bounds also depend on the \(\ell_1\)-norm \(|\boldsymbol{E}|_1\) of the approximation error. To the best of our knowledge, this work is the first to establish results for the high-dimensional logistic regression model with an approximation error. Such results, however, are well-established for the LASSO in linear regression \citep[see][]{bickel2009simultaneous}. To achieve this result, we bound the difference between the empirical and population loss functions not only by terms related to the empirical process but also by a term dependent on the approximation error \(\boldsymbol{E}\). Addressing this challenge is particularly difficult due to the nonlinearity of the problem. Interested readers are referred to Online Appendix \ref{proof of the} for the detailed proof of Theorem \ref{the}. Regarding the estimated prediction probability, its error bound matches the order of the parameter estimation error, with an additional term which is a function of the MIDAS approximation error \( e \).

As a concluding remark, we note that the Kaplan–Meier estimator converges at the $1/\sqrt{N}$ rate. Hence, estimating the weights $1/H(t \wedge \widetilde{T}_i)$ does not affect the convergence rate of our penalized estimator.

{\color{black} 
We then consider a regime where the MIDAS approximation error vanishes at a specific rate, as described in the following assumption.
\begin{assumption}\label{asapprerror}
     $\left|\boldsymbol{E}\right|_1/N = O_P(\lambda^2 s_{\boldsymbol{\beta}_0})$.
\end{assumption}
The following corollary on parameter estimation follows immediately.
\begin{corollary}\label{cor rate}
Let Assumptions \ref{as1}, \ref{as2}, \ref{asX}, \ref{aseign}, \ref{aseff}, and \ref{asapprerror} hold. If there exist sufficiently large constants $\mathcal{K}_1$ and $\mathcal{K}_2$ such that $\mathcal{K}_1p^{\frac{1}{q}}\sqrt{\log p}/N^{\frac{1}{2} - \frac{1}{q}} \leq \lambda \leq \mathcal{K}_2p^{\frac{1}{q}}\sqrt{\log p}/N^{\frac{1}{2} - \frac{1}{q}}$, then, with probability going to $1$, we have
\begin{align*}
    \Omega\left(\widehat{\boldsymbol{\beta}} - \boldsymbol{\beta}_0\right) &\lesssim \frac{p^{\frac{1}{q}}\sqrt{\log p}}{N^{\frac{1}{2} - \frac{1}{q}}} s_{\boldsymbol{\beta}_0}.
\end{align*}
\end{corollary}
Corollary \ref{cor rate} characterizes the impact of censoring and heavy-tailed covariates on the estimation accuracy of $\widehat{\boldsymbol{\beta}}$. Unlike Corollary~3.1 of \citet{babii2022machine}, which explicitly incorporates persistence in the dependent variable for a linear time-series model, our setting involves i.i.d. censored data in a logistic regression framework with lagged covariates treated as separate variables. In this context, the $q$-th moment bound in Assumption \ref{asX} implicitly controls the effect of correlations among lagged covariates. Hence, Corollary~\ref{cor rate} captures the impact of covariate correlation strength indirectly. For literature that explicitly incorporates covariate correlations in error bounds for i.i.d. data, we refer to \citet{hebiri2012correlations,lassoprediction}. While these works focus on linear models, their arguments could potentially be extended to nonlinear settings, which we leave for future research.
}
% \begin{corollary}
% Let Assumptions \ref{as1}, \ref{as2}, \ref{asX}, \ref{aseign}, the first statement of \ref{aseff}, and \ref{asapprerror} hold. Assume $E_i = O_P(1)$, then, with probability going to $1$, we have
% \begin{align*}
%     \Omega\left(\widehat{\boldsymbol{\beta}} - \boldsymbol{\beta}_0\right) &\lesssim \frac{p^{\frac{2}{q}}\sqrt{\log p}}{N^{\frac{1}{2} - \frac{2}{q}}} s_{\boldsymbol{\beta}_0}
% \end{align*}
% \end{corollary}
{\color{black}
\subsection{De-sparsified procedure and inference theory} \label{inference theory}
Beyond estimation with the proposed sparse-group LASSO estimator, our second goal is to conduct statistical inference for the underlying parameters. 
% Recall that the true parameter vector is
% $$
% \boldsymbol{\beta}_0 = \left(\theta_{0,1}, (\boldsymbol{\beta}_{0,1}^*)^{\top},(\boldsymbol{\beta}_{0,2}^*)^{\top},\ldots,(\boldsymbol{\beta}_{0,K}^*)^{\top}\right)^{\top} \in \mathbb{R}^p,
% $$
% where $\theta_{0,1}$ is interpreted as the intercept,  $\boldsymbol{\beta}_{0,k}^* \in \mathbb{R}^{L}, k \in [K]$ is the $k$-th group of MIDAS-weighted parameters. 
% Although one may be specifically interested in inference for a specific group of MIDAS-weighted parameters $\boldsymbol{\beta}_{0,k}^*$, we do not restrict ourselves to this particular group structure introduced by MIDAS. 
Let us use the notation
$
\boldsymbol{\beta}_0 := \left(\beta_{0,1}, \beta_{0,2}, \ldots, \beta_{0,p}\right)^{\top} \in \mathbb{R}^p.
$
We focus on inference on the subvector
$$
\boldsymbol{\beta}_{0,(\mathcal{J})} := (\beta_{0,j})_{j \in \mathcal{J}} \in \mathbb{R}^{|\mathcal{J}|},
$$
where $\mathcal{J} \subseteq [p]$ denotes a fixed index set. As an example, one may wish to test the hypothesis $\mathbb{H}_0: \boldsymbol{\beta}_{0,(\mathcal{J})} = \boldsymbol{0}$. 
% A Wald-type test is feasible provided that valid inference for $\boldsymbol{\beta}_{0,(\mathcal{J})}$ is available. 
However, it is well known that the penalized estimator $\widehat{\boldsymbol{\beta}}$ is asymptotically biased. Hence, a debiasing (or desparsification) procedure is required.

Our de-sparsified methodology builds on the approach of \citet{van_de_geer_debias}, which approximates the inverse of the population Gram matrix $\boldsymbol{\Sigma}_{\boldsymbol{\beta}_0}$ via nodewise regressions and employs this approximation to correct the bias of the penalized estimator, yielding de-sparsified estimates with asymptotically normal components.
Unlike in the linear model, the empirical Gram matrix in the logistic model depends on the penalized estimator $\widehat{\boldsymbol{\beta}}$. In our setting, the sparse-group LASSO estimator $\widehat{\boldsymbol{\beta}}$ in \eqref{midas} is further complicated by censoring, MIDAS approximation errors, and the use of a sparse-group LASSO penalty, distinguishing our approach from the existing literature.

We now describe the procedure to estimate $\boldsymbol{\Sigma}_{\boldsymbol{\beta}_0}^{-1}$. To this end, we define the population Gram matrix in the presence of the MIDAS approximation error $\boldsymbol{E}$:
\begin{equation}\label{populationgram}
\boldsymbol{\Sigma}_{\boldsymbol{\beta}_0}
:= \mathbbm{E}\!\left[ 
\frac{\exp(\boldsymbol{X}_i^{\top}\boldsymbol{\beta}_0 + E_i)}
{\big(1+\exp(\boldsymbol{X}_i^{\top}\boldsymbol{\beta}_0 + E_i)\big)^2}
\boldsymbol{X}_i \boldsymbol{X}_i^{\top}\right]
\in \mathbb{R}^{p \times p}.
\end{equation}
Let $\boldsymbol{\Theta} = (\boldsymbol{\Theta}_1,\ldots,\boldsymbol{\Theta}_p)^{\top}
= \boldsymbol{\Sigma}_{\boldsymbol{\beta}_0}^{-1}$, and let $\widehat{\boldsymbol{\Theta}}
= \left(\widehat{\boldsymbol{\Theta}}_1, \ldots, \widehat{\boldsymbol{\Theta}}_p\right)^{\top}$ denote its estimator.
Each $\widehat{\boldsymbol{\Theta}}_j \in \mathbb{R}^p$, $j \in [p]$, can be obtained by using a weighted nodewise regression. The weighting arises naturally from the logistic model, as the Gram matrix in logistic regression incorporates observation-specific weights. Accordingly, we define the weights used in the nodewise regressions at both the population and sample levels.
$$
\begin{aligned}
W_{\boldsymbol{\beta}_0} & :=\operatorname{diag}\left(w_{\boldsymbol{\beta}_0, 1}, \ldots, w_{\boldsymbol{\beta}_0, i}, \ldots, w_{\boldsymbol{\beta}_0, N}\right) \in \mathbb{R}^{N \times N},\\
W_{\widehat{\boldsymbol{\beta}}}&:=\operatorname{diag}\left(w_{\widehat{\boldsymbol{\beta}}, 1}, \ldots, w_{\widehat{\boldsymbol{\beta}}, i}, \ldots, w_{\widehat{\boldsymbol{\beta}}, N}\right) \in \mathbb{R}^{N \times N},
\end{aligned}
$$ 
where both are $N \times N$ diagonal matrices, with 
\begin{equation}\label{samllwandwhat}
\begin{aligned}
w_{\boldsymbol{\beta}_0, i}&:=\sqrt{\frac{\exp(\boldsymbol{X}_i^{\top}\boldsymbol{\beta}_0+E_i)}{ \big(1+\exp(\boldsymbol{X}_i^{\top}\boldsymbol{\beta}_0+E_i)\big)^2 }}, \quad
w_{\widehat{\boldsymbol{\beta}}, i}:=\sqrt{\frac{\exp(\boldsymbol{X}_i^{\top}\widehat{\boldsymbol{\beta}})}{ \big(1+\exp(\boldsymbol{X}_i^{\top}\widehat{\boldsymbol{\beta}})\big)^2 }}, \quad i \in [N].
\end{aligned}
\end{equation}
For $\boldsymbol{\beta} \in \mathbb{R}^{p}$, which may represent either $\boldsymbol{\beta}_0$ or $\widehat{\boldsymbol{\beta}}$, let $\boldsymbol{X}_{\boldsymbol{\beta}}:=W_{\boldsymbol{\beta}} \boldsymbol{X} \in \mathbb{R}^{N \times p}$, and denote the $j$-th column of that matrix as $\boldsymbol{X}_{\boldsymbol{\beta}, j} \in \mathbb{R}^{N}$, and the matrix containing all the columns of $\boldsymbol{X}_{\boldsymbol{\beta}}$ except the $j$-th one as $\boldsymbol{X}_{\boldsymbol{\beta},-j} \in \mathbb{R}^{N \times (p-1)}$.

To implement the weighted nodewise regressions, for each $j \in [p]$ we solve
\begin{equation}\label{nodewiseregression}
    \begin{aligned}
    \widehat{\boldsymbol{\gamma}}_{\widehat{\boldsymbol{\beta}}, j}=\underset{\boldsymbol{\gamma}_j \in \mathbb{R}^{p-1}}{\operatorname{argmin}} \left\|\boldsymbol{X}_{\widehat{\boldsymbol{\beta}}, j}-\boldsymbol{X}_{\widehat{\boldsymbol{\beta}}, -j} \boldsymbol{\gamma}_j\right\|_N^2+2 \lambda_j\left|\boldsymbol{\gamma}_j\right|_1,
    \end{aligned}
\end{equation}
where $ \widehat{\boldsymbol{\gamma}}_{\widehat{\boldsymbol{\beta}}, j} = ( \widehat{\gamma}_{\widehat{\boldsymbol{\beta}}, j,1}, \ldots, \widehat{\gamma}_{\widehat{\boldsymbol{\beta}}, j,j-1},\widehat{\gamma}_{\widehat{\boldsymbol{\beta}}, j,j+1},\ldots, \widehat{\gamma}_{\widehat{\boldsymbol{\beta}}, j,p})^{\top} \in \mathbb{R}^{p-1}$ and $\lambda_j>0$ is the penalty parameter for each regression, which can be selected using cross-validation in practice. Then the estimator of $\boldsymbol{\Theta}$ is
$$
\widehat{\boldsymbol{\Theta}} = \left(\widehat{\boldsymbol{\Theta}}_1, \ldots, \widehat{\boldsymbol{\Theta}}_p\right)^{\top} = \widehat{\boldsymbol{B}}^{-1}\widehat{\boldsymbol{C}},
$$ 
where
\begin{equation}\label{hatBC}
\begin{aligned}
\widehat{\boldsymbol{B}} &= diag\left[\frac{\boldsymbol{X}_{\widehat{\boldsymbol{\beta}}, 1}^{\top}\left(\boldsymbol{X}_{\widehat{\boldsymbol{\beta}}, 1}-\boldsymbol{X}_{\widehat{\boldsymbol{\beta}}, -1}\widehat{\boldsymbol{\gamma}}_{\widehat{\boldsymbol{\beta}}, 1}\right)}{N}, \ldots, \frac{\boldsymbol{X}_{\widehat{\boldsymbol{\beta}}, p}^{\top}\left(\boldsymbol{X}_{\widehat{\boldsymbol{\beta}}, p}-\boldsymbol{X}_{\widehat{\boldsymbol{\beta}}, -p} \widehat{\boldsymbol{\gamma}}_{\widehat{\boldsymbol{\beta}}, p}\right)}{N}\right],\\
\widehat{\boldsymbol{C}} & = \left(\widehat{\boldsymbol{C}}_1, \ldots, \widehat{\boldsymbol{C}}_p\right)^{\top} 
= \left(\begin{array}{cccc}
1 & -\widehat{\gamma}_{\widehat{\boldsymbol{\beta}}, 1,2} & \ldots & -\widehat{\gamma}_{\widehat{\boldsymbol{\beta}}, 1, p} \\
-\widehat{\gamma}_{\widehat{\boldsymbol{\beta}}, 2,1} & 1 & \ldots &  -\widehat{\gamma}_{\widehat{\boldsymbol{\beta}}, 2, p} \\
\vdots & \vdots & \ddots & \vdots \\
-\widehat{\gamma}_{\widehat{\boldsymbol{\beta}}, p,1} & \ldots & -\widehat{\gamma}_{\widehat{\boldsymbol{\beta}}, p,p-1} & 1
\end{array}\right).
\end{aligned}
\end{equation}
Each $\widehat{\boldsymbol{\Theta}}_j$ can also be written as 
\begin{equation}
\widehat{\boldsymbol{\Theta}}_j
= \widehat{\boldsymbol{C}}_j / \hat{\tau}_j^2,
\qquad \text{where} \quad
\hat{\tau}_j^2
= \frac{\boldsymbol{X}_{\widehat{\boldsymbol{\beta}}, j}^{\top}
(\boldsymbol{X}_{\widehat{\boldsymbol{\beta}}, j}
- \boldsymbol{X}_{\widehat{\boldsymbol{\beta}},-j}
\widehat{\boldsymbol{\gamma}}_{\widehat{\boldsymbol{\beta}}, j})}{N}.
\end{equation}

To establish the convergence rate of $\widehat{\boldsymbol{\Theta}}$, we introduce the following notation and assumptions. For each $j \in [p]$, consider the population nodewise regressions
\begin{equation}\label{populationnodewise1}
    \begin{aligned}
   \boldsymbol{\gamma}_{\boldsymbol{\beta}_0, j} = \underset{\boldsymbol{\gamma}_j \in \mathbb{R}^{p-1}}{\operatorname{argmin}} \quad \mathbbm{E}\left(\left\|\boldsymbol{X}_{\boldsymbol{\beta}_0, j}-\boldsymbol{X}_{\boldsymbol{\beta}_0, -j} \boldsymbol{\gamma}_j\right\|_N^2\right),
    \end{aligned}
\end{equation}
where $ \boldsymbol{\gamma}_{\boldsymbol{\beta}_0, j} = ( \boldsymbol{\gamma}_{\boldsymbol{\beta}_0, j, 1}, \ldots,\boldsymbol{\gamma}_{\boldsymbol{\beta}_0, j, j-1},\boldsymbol{\gamma}_{\boldsymbol{\beta}_0, j, j+1},\ldots  \boldsymbol{\gamma}_{\boldsymbol{\beta}_0, j, p-1})^{\top} \in \mathbb{R}^{p-1}$ and define the error as
\begin{equation}\label{populationnodewise}
\boldsymbol{\eta}_{\boldsymbol{\beta}_0,j} = \boldsymbol{X}_{\boldsymbol{\beta}_0, j}-\boldsymbol{X}_{\boldsymbol{\beta}_0, -j} \boldsymbol{\gamma}_{\boldsymbol{\beta}_0,j} := \left(\eta_{\boldsymbol{\beta}_0,j,1}, \ldots, \eta_{\boldsymbol{\beta}_0,j,N}\right)^{\top} \in \mathbb{R}^{N}.
\end{equation}

\begin{assumption}\label{aseign_2}
Let $q > 12$. (i) $\|\boldsymbol{E}\|_{N}=o_{p}(N^{-1/2-3/(2q)})$; 
(ii) $\max\limits_{|\boldsymbol{u}|_2 = 1}\mathbbm{E}\left(\left|\boldsymbol{X}_{i}^{\top}\boldsymbol{u}\right|^{q}\right) \leq K_0$;
(iii) $\max_{j\in[p]}\mathbbm{E}(|\eta_{\boldsymbol{\beta}_0,j,i}|^{q})\le C_{\eta}$; 
(iv) $\max_{j\in[p]}\mathbbm{E}(|\boldsymbol{\Theta}_{j}^{\top}\boldsymbol{X}_i|^{q})\leq K_{\Theta}$.
\end{assumption}
Assumption \ref{aseign_2} (i) requires that the MIDAS approximation errors vanish sufficiently fast asymptotically.\footnote{\tcr{If each row of $\boldsymbol{\Theta}$ has a bounded $\ell_1$-norm, we only require $\|\boldsymbol{E}\|_{N}=o_{p}(N^{-1/2})$ in Assumption \ref{aseign_2} (i).}}  
% Assumption \ref{aseign_2} (ii) is a bounded moment condition allowing relaxing the ``lower bound” assumption corresponding to Condition (D4) in \citet{van_de_geer_debias} and as Assumption $3$ in \citet{caner2023generalized}. This assumption of previous literature requires $w_{\boldsymbol{\beta}_0, i} > c, i \in [N]$ for some constant $c>0$, an unrealistic restriction in the presence of heavy-tailed covariates. This "lower bound" assumption was directly imposed as Condition (D4) in \citet{van_de_geer_debias} and as Assumption $3$ in \citet{caner2023generalized}. In fact, under Assumption \ref{aseign_2} (i) and (ii), the lower bound $w_{\boldsymbol{\beta}_0, i} > cN^{4/(3q)} , i \in [N]$ for some constant $c$ holds with probability $1 - o(1)$. We formally establish this in Lemma \ref{lneedextra} of Online Appendix \ref{appnodewise}.
Assumption~\ref{aseign_2} (ii) imposes a stronger moment condition on the covariates than Assumption~\ref{asX}, requiring $q>12$. This requirement differs from much of the existing literature. For example, Condition~(D4) in \citet{van_de_geer_debias} and Assumption~3 in \citet{caner2023generalized} either assume uniformly bounded covariates or impose a uniform lower bound on $w^2_{\boldsymbol{\beta}_0,i} \ge c$ for all $i \in [N]$, where $c$ is a strictly positive constant. Such assumptions may be restrictive or unrealistic when covariates have heavy tails. In contrast, our framework allows for heavy-tailed covariates without requiring boundedness or uniform positivity of $w^2_{\boldsymbol{\beta}_0,i}$. This added generality comes at the cost of a more restrictive sparsity condition, as formalized in Assumption~\ref{asrate}.
Assumption \ref{aseign_2} (iii) can be interpreted as a weaker version of the second statement of Condition (D1) in \citet{van_de_geer_debias}, where the latter needed $|\boldsymbol{\eta}_{\boldsymbol{\beta}_0,j}|_{\infty}$ to be bounded. Assumption~\ref{aseign_2} (iv) weakens Assumption~4.1 (ii) in \citet{babii2023machine}, which imposes that every row of $\boldsymbol{\Theta}$ has a uniformly bounded $\ell_1$-norm. Together with Assumption~\ref{aseign_2} (ii), Assumption~4.1 (ii) in \citet{babii2023machine} implies our Assumption~\ref{aseign_2} (iv), so our requirement is less restrictive.\footnote{\tcr{If each row of $\boldsymbol{\Theta}$ has a bounded $\ell_1$-norm, we have $\mathbbm{E}(|\boldsymbol{\Theta}_{j}^{\top}\boldsymbol{X}_i|^{q}) = |\boldsymbol{\Theta}_{j}|^q_2\mathbbm{E}(|\boldsymbol{\Theta}_{j}^{\top}\boldsymbol{X}_i/|\boldsymbol{\Theta}_{j}|_2|^{q}) \leq |\boldsymbol{\Theta}_{j}|^q_1K_0$ where the inequality follows from Assumption~\ref{aseign_2} (ii) and the fact that $|\boldsymbol{\Theta}_{j}|_2 \leq |\boldsymbol{\Theta}_{j}|_1$.}}
% \footnote{\tcr{If each row of $\boldsymbol{\Theta}_\mathcal{J}$ has bounded $\ell_1$ norms, we only require $1/\sqrt{N}$ rate in Assumption \ref{aseign_2} (i).}} 
% The relationship between Assumption \ref{aseign_2} (iii) and Assumption 4.1 (ii) in \citet{babii2023machine} is further clarified in Lemma \ref{xtheta} of Online Appendix \ref{appinf}.
%Finally, Assumptions \ref{aseign_2} (ii), (iii), and (iv) are imposed at the population level. As such, they are weaker and more realistic than the corresponding sample-level assumptions in the logistic model studied by \citet{van_de_geer_debias}.

Define $\bar{s} = \max_{j \in \mathcal{J}} |S_j|$, where $S_j$ denotes the index set of nonzero elements of $\boldsymbol{\Theta}_j$, and set $s^{*} = \bar{s} \vee s_{\boldsymbol{\beta}_0}$. We have the following assumption for the sparsity level $s^*$.
\begin{assumption}\label{asrate}
    (Rate of sparsity level for inference)
     $s^* =  o\left(N^{\frac{1}{4}-\frac{3}{q}}/\left(p^{\frac{2}{q}}(\log p)^{\frac{2}{3} + \frac{2}{q}}\right)\right)$.
\end{assumption}
When the covariates are bounded, and there is no MIDAS approximation error, the admissible sparsity level for valid inference is typically of the order $o\big(\sqrt{N}/\log p\big)$ \citep{van_de_geer_debias}. The sparsity condition imposed in this paper is more restrictive, reflecting the presence of MIDAS approximation error and heavy-tailed covariates.

\begin{lemma}\label{nodewise}
Suppose that Assumptions \ref{asapprerror}, \ref{aseign_2}, \ref{asrate}, and all conditions of Theorem \ref{the} hold. If there exist sufficiently large constants $\mathcal{K}_1$, $\tilde{\mathcal{K}}_1$, $\mathcal{K}_2$, and $\tilde{\mathcal{K}}_2$ such that
\[
\mathcal{K}_1\frac{p^{1/q} \sqrt{\log p}}{N^{1/2 - 1/q}} \leq \lambda \leq \tilde{\mathcal{K}}_1\frac{p^{1/q} \sqrt{\log p}}{N^{1/2 - 1/q}} ,
\]
and
\[
\mathcal{K}_2\frac{(p-1)^{2/q} \sqrt{\log p}}{N^{1/2 - 2/q}} \leq \lambda_j \leq \tilde{\mathcal{K}}_2\frac{(p-1)^{2/q} \sqrt{\log p}}{N^{1/2 - 2/q}} \quad \text{for all } j \in \mathcal{J},
\] 
then the following holds:
    \begin{equation}
    \begin{aligned}
        \max _{j \in \mathcal{J}}\left|\widehat{\boldsymbol{\Theta}}_j-\boldsymbol{\Theta}_j\right|_1  = o_P(1).
        \end{aligned}
    \end{equation}
\end{lemma}
Although lemmas of this type for nodewise regression are not new in the literature, this paper makes several distinct contributions. \citet{van_de_geer_debias} established a related result for nodewise regression with the LASSO. However, their analysis relies on strong assumptions, including uniformly bounded covariates and uniformly bounded products of nodewise regression coefficients and covariates. In addition, their framework does not accommodate censored data or the MIDAS approximation error. Similarly, \citet{caner2023generalized} derived a general result for weighted nodewise regression using a structured sparsity estimator. Nevertheless, their analysis neither considers censored data nor accounts for the MIDAS approximation error, and it assumes that $w^2_{\boldsymbol{\beta}_0,i}$ is uniformly bounded away from zero. Such an assumption can be restrictive and is generally violated in the presence of unbounded covariates.

Given the estimator $\widehat{\boldsymbol{\Theta}}$, the de-sparsified sparse-group LASSO estimator 
$\widehat{\boldsymbol{b}} = (\widehat{b}_1, \ldots, \widehat{b}_p)^{\top}$ is constructed componentwise as
\begin{equation}\label{debiased estimator}
\widehat{b}_j =  \widehat{\beta}_j - \widehat{\boldsymbol{\Theta}}_j^{\top}  \frac{1}{N} \sum_{i=1}^N \bigg( - \frac{\delta_i(t)\mathbbm{1}{\{\widetilde{T}_i \leq t\}}}{\widehat{H}\left(t \wedge \widetilde{T}_i\right)} +  \frac{\exp(\boldsymbol{X}_{i}^{\top}\widehat{\boldsymbol{\beta}})}{1+\exp(\boldsymbol{X}_{i}^{\top}\widehat{\boldsymbol{\beta}}) }\bigg) \boldsymbol{X}_{i},
\end{equation}
where $\widehat{\beta}_j$ is the $j$-th coordinate of the sparse-group LASSO estimator $\widehat{\boldsymbol{\beta}}$. We call \textbf{de-sparsified sg-LASSO-MIDAS} the approach embodied by \eqref{debiased estimator}.
To derive the limiting distribution of the de-sparsified estimator, one more assumption is required.
\begin{assumption}\label{asinfeign}
    There exists a constant $\gamma_{\mathrm{L}}>0$ such that the minimum eigenvalue
      $$
      \begin{aligned}
     \lambda_{\min }\left(\mathbbm{E}\left[\left(- \frac{\delta_i(t)}{ H\left(t \wedge \widetilde{T}_i\right)} \mathbbm{1}{\{\widetilde{T}_i \leq t\}} +  \frac{\exp(\boldsymbol{X}_{i}^{\top}\boldsymbol{\beta}_0+E_i)}{1+\exp(\boldsymbol{X}_{i}^{\top}\boldsymbol{\beta}_0+E_i) }\right)^2\boldsymbol{X}_{i} \boldsymbol{X}_{i}^{\top}\right]\right)  \geq \gamma_{\mathrm{L}}.
      \end{aligned}
   $$
\end{assumption}
Assumption \ref{asinfeign} is a typical assumption for high-dimensional logistic regression, see a similar condition (vi) in Theorem $3.3$ of \citet{van_de_geer_debias} or Assumption $8$ of \citet{caner2023generalized}, which is used to show Lyapunov’s condition for the limiting distribution of the de-sparsified sparse-group LASSO estimator. We extend it to the case of censored data. 

The following theorem establishes the asymptotic distribution of the proposed de-sparsified estimator, which is applicable to high-dimensional censored data. The detailed proof is provided in Online Appendix \ref{appinf}. In order to state the theorem, we introduce
$$
\xi_{KM, i} = \mathbbm{E}\left(\boldsymbol{X}_{(2)}\frac{\delta_{(2)}(t) \mathbbm{1}{\{\widetilde{T}_{(2)} \leq t\}}}{H^2\left(t \wedge \widetilde{T}_{(2)}\right)}v\left(\widetilde{T}_i, \delta_i,t \wedge \widetilde{T}_{(2)}\right)\bigg| \boldsymbol{X}_i, \widetilde{T}_i, \delta_i\right),
$$
where $\left(\widetilde{T}_{(2)}, \delta_{(2)}, \boldsymbol{X}_{(2)}\right)$ denotes an independent copy drawn from the same distribution as the sample $\left\{\left(\widetilde T_i, \delta_{i}, \boldsymbol{X}_i\right), i \in [N]\right\}$. Here, $v\left(\widetilde{T}_i, \delta_i, \cdot \right)$ is the influence function of the Kaplan-Meier estimator of $H$ for observation $\left(\widetilde T_i, \delta_i\right)$ \citep[e.g.,][]{lo1986product,gill2006lectures}. Specifically, for any positive value $z$, we have 
\begin{equation}\label{IFKM}
\begin{aligned}
v\left(\widetilde{T}_k, \delta_k, z\right)=H(z)\left[\frac{ \mathbbm{1}\left\{\widetilde{T}_k \leq z, \delta_k = 0\right\}}{\bar{P}\left(\widetilde{T}_k\right)}-\int_0^{z \wedge \widetilde{T}_k} \frac{P_0(du)}{\bar{P}^2(u)}\right],
\end{aligned}
\end{equation}
where $\bar{P}(u)=P\left(\widetilde{T}>u\right)$ and $P_0(du)=\frac{\partial P\left(\widetilde{T} \leq u, \delta = 0\right)}{\partial u}du$. 
\begin{theorem}
\label{inf the}
Suppose Assumptions \ref{asX}, \ref{asapprerror}, \ref{aseign_2}, \ref{asrate}, \ref{asinfeign} are satisfied. Let also all conditions of Theorem \ref{the} and Lemma \ref{nodewise} hold. We have
\begin{equation}
\sqrt{N} \left(\widehat{\boldsymbol{b}}_{(\mathcal{J})}-\boldsymbol{\beta}_{0,(\mathcal{J})}\right)  \xrightarrow[]{d} \mathcal{N}(\boldsymbol{0},\boldsymbol{V}_\mathcal{J}), 
\end{equation}
where $\widehat{\boldsymbol{b}}_{(\mathcal{J})} \in \mathbb{R}^{|\mathcal{J}|}$ is $\widehat{\boldsymbol{b}}$ indexed by $\mathcal{J}$, and  
$$
\boldsymbol{V}_\mathcal{J} = \boldsymbol{\Theta}_\mathcal{J} Var\left(\left(- \frac{\delta_i(t)}{H\left(t \wedge \widetilde{T}_i\right)} \mathbbm{1}{\{ \widetilde{T}_i \leq t\}} 
  +  \frac{\exp(\boldsymbol{X}_{i}^{\top}\boldsymbol{\beta}_0+E_i)}
{1+\exp(\boldsymbol{X}_{i}^{\top}\boldsymbol{\beta}_0+E_i)}\right)\boldsymbol{X}_i - \xi_{KM, i}\right)\boldsymbol{\Theta}_\mathcal{J}^{\top}.
$$
% where $\boldsymbol{\Theta}_\mathcal{J} = \left(\boldsymbol{\Theta}_j^{\top}\right)_{j \in \mathcal{J}} \in \mathbb{R}^{|\mathcal{J}| \times p}$.
\end{theorem}
Theorem \ref{inf the} establishes the limiting distribution of the de-sparsified sparse-group LASSO estimator $\widehat{\boldsymbol{b}}_{(\mathcal{J})}$. In contrast to traditional de-sparsified estimators studied in \citet{van_de_geer_debias}, \citet{caner2023generalized}, and \citet{babii2023machine}, the variance of $\widehat{\boldsymbol{b}}_{(\mathcal{J})}$ is influenced by censored data, a phenomenon that, to the best of our knowledge, has not been previously studied. 
%Notably, while the censoring issue does not affect the convergence rate of $\widehat{\boldsymbol{\beta}}$, it does influence the variance of the de-sparsified estimator $\widehat{\boldsymbol{b}}_{(\mathcal{J})}$.

Finally, we provide a plug-in estimator of the variance $\boldsymbol{V}_\mathcal{J}$:
\[
\widehat{\boldsymbol{V}}_\mathcal{J} = \widehat{\boldsymbol{\Theta}}_\mathcal{J}\left(\frac{1}{N}\sum_{i=1}^{N}\widehat{\boldsymbol{\sigma}}_i\widehat{\boldsymbol{\sigma}}_i^{\top}\right)\widehat{\boldsymbol{\Theta}}_\mathcal{J}^{\top},
\]
where
\[
\widehat{\boldsymbol{\sigma}}_i = \left(-\frac{\delta_i(t) \mathbbm{1}\{\widetilde{T}_i \leq t\}}{\widehat{H}(t \wedge \widetilde{T}_i)}+\frac{\exp(\boldsymbol{X}_{i}^{\top}\widehat{\boldsymbol{\beta}})}{1+\exp(\boldsymbol{X}_{i}^{\top}\widehat{\boldsymbol{\beta}}) }\right)\boldsymbol{X}_{i} - \frac{1}{N}\sum_{k=1}^{N}\boldsymbol{X}_{k}\frac{\delta_k(t) \mathbbm{1}{\{\widetilde{T}_k \leq t\}}}{\widehat{H}^2\left(t \wedge \widetilde{T}_k\right)}\hat{v}\left(\widetilde{T}_i, \delta_i,t \wedge \widetilde{T}_k\right),
\]
and $\hat{v}\left(\widetilde{T}_i, \delta_i, \cdot\right)$ is the plug-in estimator of $v\left(\widetilde{T}_i, \delta_i, \cdot\right)$. Specifically, for any positive value $z$,
\begin{equation}\label{IFestimate}
\hat v\left(\widetilde{T}_i, \delta_i, z\right)
=\widehat{H}(z)\left[\frac{ \mathbbm{1}\left\{\widetilde{T}_i \leq z, \delta_i = 0\right\}}{\bar{P}_N\left(\widetilde{T}_i\right)}- \frac{1}{N}\sum_{j=1}^N\frac{\mathbbm{1}\{\widetilde{T}_j \leq z \wedge \widetilde{T}_i, \delta_j = 0\}}{\bar{P}_N^2(\widetilde{T}_j)}\right],
\end{equation}
where $\bar{P}_N(u) = \frac{1}{N}\sum_{l=1}^N\mathbbm{1}\{\widetilde{T}_l > u\}$. \footnote{\tcr{For $\hat{v}$, its implementation can be found in the R package 'survival'; see the argument 'influence' of the function 'survfit' in \citet{therneau2015package}.}} The term $\frac{1}{N}\sum_{j=1}^N\frac{\mathbbm{1}\{\widetilde{T}_j \leq z \wedge \widetilde{T}_i, \delta_j = 0\}}{\bar{P}_N^2(\widetilde{T}_j)}$ in \eqref{IFestimate} is an estimate of the term $\int_0^{z \wedge \widetilde{T}_i} \frac{P_0(du)}{\bar{P}^2(u)}$ in \eqref{IFKM} because we have
$
\int_0^{z \wedge \widetilde{T}_i} \frac{1}{\bar{P}^2(u)}P_0(du) = \mathbbm{E}\left(\frac{1}{\bar{P}^2(\widetilde{T})}\mathbbm{1}\{\widetilde{T} \leq z \wedge \widetilde{T}_i\}\mathbbm{1}\{\delta=0\}\right),
$ since $P_0(du)$ is the probability mass of $\widetilde{T}$ in an infinitesimal interval around $u$, computed only over the event $\delta=0$.
}

\section{Simulations} \label{sec sim}

We first evaluate the predictive performance of three methods through simulations: i) LASSO-UMIDAS, an unstructured LASSO estimator \tcr{with unrestricted MIDAS weights}, ii) LASSO-MIDAS, an unstructured LASSO estimator using MIDAS weights, and iii) sg-LASSO-MIDAS, a structured sparse-group LASSO estimator with MIDAS weights. The sg-LASSO-MIDAS approach, specifically, highlights the advantages of leveraging group structures and dictionaries within a high-dimensional framework, offering a compelling comparison to LASSO-MIDAS and LASSO-UMIDAS \citep[see, e.g.,][]{babii2022machine}. The prediction simulation results showcase the method's strengths in achieving superior prediction accuracy with finite sample data. \tcr{In addition to predictive performance, we evaluate the inference performance of de-sparsified sg-LASSO-MIDAS. The inference simulation results indicate that it provides accurate finite-sample inference, delivering empirical rejection rates close to the nominal level across various scenarios.}

\subsection{Simulation design}\label{est simulation}

Let us describe the data-generating process. There are $K=50$ high-frequency covariates, but only the first two enter the model. All the observations in the simulated dataset have survived at least $s$ years, and we are interested in a yearly/quarterly frequency $ m = 4$. We consider $s=6$ years of lagged data. 

For the generation of $z_{i, \frac{j}{m}, k}, j \in [d]$, we first initiate the processes by letting $\left(z_{i,\frac{1}{m},1}, \ldots, z_{i,\frac{1}{m},K}\right)^{\top}$ \tcr{follow a $\mathcal{N}\left(\boldsymbol{0}, \Sigma \right)$} distribution with $\Sigma_{u, v}=\rho_{0}^{|u-v|}, u,v \in [K]$. Then, the high-frequency covariates $z_{i, \frac{j}{m},k}, j \in [d], k \in [K]$ are generated according to the following scenarios:

\newcounter{scenario}  % define a counter
\medskip
\refstepcounter{scenario} 
\noindent\textit{Scenario \thescenario}:\label{s1} $z_{i, \frac{j}{m},k}=\rho z_{i,\frac{j-1}{m},k}+\nu_{i,k}$, $k \in [K], j \in \{2,3,\ldots,d\}$, and $(\nu_{i,1}, \ldots, \nu_{i,K})^{\top}  \sim_{\text {i.i.d }} \mathcal{N}\left(\boldsymbol{0}, \Sigma(1-\rho^2)\right)$ with $\rho=0.1$ and $\rho_0 = 0.1$.

\medskip

\refstepcounter{scenario} 
\noindent\textit{Scenario \thescenario}:\label{s2} $z_{i, \frac{j}{m},k}=\rho z_{i,\frac{j-1}{m},k}+\nu_{i,k}$, $k \in [K], j \in \{2,3,\ldots,d\}$, and $(\nu_{i,1}, \ldots, \nu_{i,K})^{\top}  \sim_{\text {i.i.d }} \mathcal{N}\left(\boldsymbol{0}, \Sigma(1-\rho^2)\right)$ with $\rho=0.9$ and $\rho_0 = 0.1$.

\medskip
It is clear that $\rho$ regulates the degree of series dependence among original lagged covariates, while $\rho_0$ represents the level of cross-covariate dependence across all $K$ covariates. \tcr{ We refer to $\rho$ as the autocorrelation strength of original lags and $\rho_0$ as cross-covariate correlation strength. Inspired by the empirical application, we consider two more scenarios that allow the covariates to have heavy tails. In these cases, we first initiate the processes with $\left(z_{i,\frac{1}{m},1}, \ldots, z_{i,\frac{1}{m},K}\right)^{\top} \sim$ student-$t(2)$ with the covariance matrix $\Sigma_{u, v}=\rho_{0}^{|u-v|}, u,v \in [K]$. Then, the third and fourth scenarios are as follows.}

\medskip
\refstepcounter{scenario} 
\noindent\textit{Scenario \thescenario}:\label{s3} $z_{i, \frac{j}{m},k}=\rho z_{i,\frac{j-1}{m},k}+\nu_{i,k}$, $k \in [K], j \in \{2,3,\ldots,d\}$, and $(\nu_{i,1}, \ldots, \nu_{i,K})^{\top}  \sim_{\text {i.i.d}}$ student-$t$ with degree $2$ and its covariance matrix $\Sigma(1-\rho^2)$, with $\rho=0.1$ and $\rho_0 = 0.1$.

{\color{black} 
\medskip
\refstepcounter{scenario} 
\noindent\textit{Scenario \thescenario}:\label{s4} $z_{i, \frac{j}{m},k}=\rho z_{i,\frac{j-1}{m},k}+\nu_{i,k}$, $k \in [K], j \in \{2,3,\ldots,d\}$, and $(\nu_{i,1}, \ldots, \nu_{i,K})^{\top}  \sim_{\text {i.i.d}}$ student-$t$ with degree $2$ and its covariance matrix $\Sigma(1-\rho^2)$, with $\rho=0.9$ and $\rho_0 = 0.1$.
\medskip

Furthermore, we examine how strong cross-covariate correlation affects the prediction and introduce the following scenario with an increased $\rho_0$.

\medskip
\refstepcounter{scenario} % increments counter and makes label work
\noindent
\textit{Scenario \thescenario}:\label{s5}  
$z_{i, \frac{j}{m},k}=\rho z_{i,\frac{j-1}{m},k}+\nu_{i,k}$, $k \in [K], j \in \{2,3,\ldots,d\}$, and $(\nu_{i,1}, \ldots, \nu_{i,K})^{\top}  \sim_{\text {i.i.d }} \mathcal{N}\left(\boldsymbol{0}, \Sigma(1-\rho^2)\right)$ with $\rho=0.9$ and $\rho_0 = 0.9$.
}
\medskip

\tcr{It is important to note that these scenarios generate raw lagged covariates that enter directly into LASSO-UMIDAS, whereas LASSO-MIDAS and sg-LASSO-MIDAS require a MIDAS weighting transformation. Section \ref{dis midas} discusses how this transformation affects these scenarios and provides a detailed comparison across them.}

To generate $T_i$, we first transform the covariates to their absolute values, which ensures that the distribution functions will be increasing in $t$ for all $\boldsymbol{Z}$. Then we let \begin{equation*}
T_{i} = s + \exp \left(\frac{\log(\frac{\mathbf{\zeta}}{1-\mathbf{\zeta}}) -\left(1 + \sum\limits_{j\in[d]} \widetilde{\omega}_1\left(\frac{j-1}{d}\right) \left|z_{i, s - \frac{j-1}{m}, 1}\right| -\sum\limits_{j\in[d]}
 \widetilde{\omega}_2\left(\frac{j-1}{d}\right) \left|z_{i, s - \frac{j-1}{m}, 2}\right|\right)}{1 + \sum_{k=1}^{2} \sum\limits_{j\in[d]}\widetilde{\omega}_k\left(\frac{j-1}{d}\right) \left|z_{i, s - \frac{j-1}{m}, k}\right|}\right),
\end{equation*}
for all $i\in[N]$, 
where $\mathbf{\zeta} \sim \text{Uniform}(0,1)$.
The weighting schemes $\widetilde{\omega}_{k}(u), u \in [0,1]$ for $k =1, 2$ correspond to beta densities, respectively, equal to $\boldsymbol{\operatorname{Beta}}(1,3)$, $\boldsymbol{\operatorname{Beta}}(2,3)$, see \citet{ghysels2007midas,ghysels2019estimating,babii2022machine}, for further details. This generation scheme guarantees that the survival function of $T$ satisfies \eqref{distributin}, where $ \boldsymbol{\theta}_0(t,s) $ is such that $ \boldsymbol{\theta}_{0,1}(t,s) =1 + \log(t-s) $, $\boldsymbol{\theta}_{0,1+j}(t,s) =(1+\log(t-s))\widetilde{\omega}_1\left(\frac{j-1}{d}\right),\ j\in[d]$, $\boldsymbol{\theta}_{0,1+d+j}(t,s) =(\log(t-s)-1)\widetilde{\omega}_2\left(\frac{j-1}{d}\right),\ j\in[d]$ and $\boldsymbol{\theta}_{0,k}(t,s) =0$ for all $k\in\{2d+2,\dots,K_z\}$. Remark that only the first two high-frequency covariates are relevant. 
The censoring time $C_i$, for $i \in [N]$, is generated from a shifted exponential distribution. Specifically, we let $C_i = s + \text{Exp}_i$, where $\text{Exp}_i \sim \mathrm{Exp}(\gamma)$ is an exponential random variable with rate parameter $\gamma > 0$. The parameter $\gamma$ is chosen so that the resulting censoring rate, $(1/N)\sum_{i=1}^N \mathbbm{1}\{T_i > C_i\}$, is approximately $81\%$ in the simulated dataset, matching the censoring rate observed in the real dataset (see Section~\ref{real data}).

{\color{black}
For the choice of the MIDAS weight function $W$ in the LASSO-MIDAS and sg-LASSO-MIDAS models, we use a dictionary of orthogonal polynomials shifted to the interval $[0,1]$, with size $L=3$, given by $W=\{w_1(u), w_2(u), w_3(u)\}$ for $u\in[0,1]$. The dictionary is constructed using a Gram–Schmidt orthogonalization of the power polynomials $\{1,x,x^2\}$ with respect to the measure $\mathrm{d}\mu(x)=(1-x)^{\alpha_{\text{poly}}}(1+x)^{\beta_{\text{poly}}}\mathrm{d}x$ on $[-1,1]$ where  $\alpha_{\text{poly}} = \beta_{\text{poly}} =-1/2$.
To obtain the basis on the unit interval, we apply the transformation $x=2u-1$, which yields the shifted polynomials $\{w_1(u), w_2(u), w_3(u)\}$. This construction is commonly referred to as a Gegenbauer polynomial dictionary \citep{babii2022machine} in the MIDAS literature.\footnote{The function \texttt{gb} in the \texttt{midasml} R package computes an orthonormal polynomial basis for a given dictionary size and parameters $\alpha_{\text{poly}}$ and $\beta_{\text{poly}}$; see Online Appendix~A of \citet{babii2022machine}.}
The use of orthogonal polynomial dictionaries helps reduce multicollinearity and improve numerical stability \citep{babii2022machine}, though it may also change the variance of the MIDAS-weighted covariates, as discussed in Section~\ref{dis midas}.

}

Regarding $t$, we set it to the following percentiles $t = \{t_1 = 10\%, t_2 = 30\%, t_3 = 50\%\}$ of the distribution of $\{T_i: T_i \text{ is uncensored}, i \in [N]\}$.

Concerning the evaluation of classification performance, Receiver Operating Characteristic (ROC) curves are widely used in the literature. However, traditional ROC curves are not fully suitable in this context due to censoring, where the status of firms is only partially observed. To address this limitation, we use the ROC curve estimator developed by \citet{heagerty2000time}, which was specifically designed to evaluate classification performance effectively in the presence of censoring.

\subsection{Evaluation metric: ROC curves with censoring}\label{{time depedent auc}}

Recalling the definitions of sensitivity and specificity in the ROC curves, we see that in our model, both sensitivity, or the “true positive rate” (TPR), and specificity, or the “true negative rate” (TNR), are also functions that depend on $t$:
\begin{equation}\label{metric}
\begin{aligned}
Se(c, t)=P[ \Upsilon_i > c \mid T_i \leq t], \quad Sp(c, t)=P[ \Upsilon_i \leq c \mid T_i > t],
\end{aligned}
\end{equation}
where $\Upsilon_i:= p\left(\widehat{\boldsymbol{\beta}}, \boldsymbol{X}_i\right) = \frac{\exp \left(\boldsymbol{X}_i^{\top}\widehat{\boldsymbol{\beta}}\right)}{1+\exp \left(\boldsymbol{X}_i^{\top}\widehat{\boldsymbol{\beta}}\right)}$ is the estimated probability.\footnote{When defining $\Upsilon_i$, we treat $\widehat{\beta}$ as fixed because it is estimated on the training set. The probabilities in $Se(c, t)$ and $Sp(c, t)$ are over the distribution of the test set.} The threshold $c$ is used to classify a firm as distressed if $\Upsilon_i > c $, or as non-distressed if $\Upsilon_i \leq c$, with $\mathbbm{1}\{T \leq t\}$ indicating whether the firm has failed by time $t$.

A ROC curve illustrates the full range of True Positive Rates (TPR) and False Positive Rates (FPR) across all possible threshold values $c$. A larger area under the ROC curve (AUC) signifies better performance in distinguishing between firms that have failed and those that have not. In practice, the status $\mathbbm{1}\{T \leq t\}$ in \eqref{metric} cannot be fully observed due to censoring. To address this issue, various ROC curve estimators have been proposed in \citet{heagerty2000time,cai2006sensitivity,heagerty2005survival,10.1093/biomet/asaa080}. 

Here, we employ the Nearest Neighbor estimator \citep{heagerty2000time} to account for the censored data and evaluate the ROC curves. Let
$$
\widehat{S}_{\kappa_N}(c, t)=\frac{1}{N} \sum_{i=1}^N \widehat{S}_{\kappa_N}\left(t \mid \Upsilon_i\right) \mathbbm{1}\{\Upsilon_i>c\},
$$ where $\widehat{S}_{\kappa_N}\left(t \mid \Upsilon_i\right)$ is a suitable estimator of the  conditional survival function characterized by a parameter $\kappa_N$:
$$
\widehat{S}_{\kappa_N}\left(t \mid \Upsilon_i\right)=\prod_{a \in \mathcal{T}_N, a \leq t}\left\{1-\frac{\sum_j 
 \Psi_{\kappa_N}\left(\Upsilon_j, \Upsilon_i\right) \mathbbm{1}\{\widetilde{T}_j=a\} \delta_j}{\sum_j \Psi_{\kappa_N}\left(\Upsilon_j, \Upsilon_i\right) \mathbbm{1}\{\widetilde{T}_j \geq a\}}\right\},
$$
where $\mathcal{T}_N$ is a set of the unique values of $\widetilde{T}_i$ for observed events, $\delta_i = \mathbbm{1}\{T_i \leq C_i\}$ and $\Psi_{\kappa_N}\left(\Upsilon_j, \Upsilon_i\right)$ is a kernel function that depends on a smoothing parameter $\kappa_N$. Following the approach in \citet{heagerty2000time}, we used a $0 / 1$ nearest neighbor kernel \citep{akritas1994nearest}, $\Psi_{\kappa_N}\left(\Upsilon_j, \Upsilon_i\right)= \mathbbm{1}\{-\kappa_N<\widehat{F}_{\Upsilon}\left(\Upsilon_i\right)-\widehat{F}_{\Upsilon}\left(\Upsilon_j\right)<\kappa_N\}$ where $\widehat{F}_{\Upsilon}(\cdot)$ is the empirical distribution function of ${\Upsilon}$ and $2 \kappa_N \in(0,1)$ represents the percentage of individuals that are included in each neighborhood (boundaries). The resulting sensitivity and specificity are defined by:
$$
\begin{aligned}
\widehat{S e}(c, t)=\frac{\left(1-\widehat{F}_{\Upsilon}(c)\right)-\widehat{S}_{\kappa_N}(c, t)}{1-\widehat{S}_{\kappa_N}(t)}, \quad \widehat{S p}(c, t)=1-\frac{\widehat{S}_{\kappa_N}(c, t)}{\widehat{S}_{\kappa_N}(t)},
\end{aligned}
$$
where $\widehat{S}_{\kappa_N}(t)=\widehat{S}_{\kappa_N}(-\infty, t)$. Both sensitivity and specificity above are monotone and bounded in $[0,1]$. 

\citet{heagerty2000time} used bootstrap resampling to estimate the confidence intervals for this ROC curve estimator. Motivated by the results of \citet{akritas1994nearest} and \citet{cai2011robust}, \citet{hung2010optimal} discussed the asymptotic properties of the estimator and concluded that bootstrap resampling techniques can be used to estimate the variances of the proposed ROC curve. In practice, \citet{heagerty2000time} suggested that the value for $\kappa_N$ is chosen to be $O\left(N^{-\frac{1}{3}}\right)$. In the present paper, we use the default value of the $\kappa_N$ produced in the documentation of the R package 'SurvivalROC', which is consistent with the choice found in \citet{blanche2013estimating}. For further details on other ROC curve estimators in the survival analysis, we refer to \citet{kamarudin2017time}.

{\color{black}
\subsection{Discussion on the MIDAS-weighted covariates}\label{dis midas}
In the simulation scenarios, increasing the autocorrelation strength $\rho$ of the original lags does not necessarily amplify the correlation within each group of MIDAS-weighted covariates. For example, although Scenario~\ref{s2} features stronger correlation among the original lagged covariates, the correlation matrix of the corresponding MIDAS--weighted covariates is nearly identical to that in Scenario~\ref{s1} (see Figure~\ref{correlation}). This is because the orthogonal weighting functions in the MIDAS dictionary $W$ redistribute the dependence structure along orthogonal directions, so higher autocorrelation in the raw lags does not translate into stronger correlation among the transformed covariates.

Instead, the transformation could inflate the variance of these MIDAS-weighted covariates.
 To illustrate this, consider the $k$-th covariate with its $d$ high-frequency lags \[\widetilde{\boldsymbol{Z}}_{i,k} = \left(z_{i, s, k}, z_{i, s - \frac{1}{m}, k}, \ldots, z_{i, s - \frac{d-1}{m}, k}\right)^{\top},\]
 where each component has variance $1$ in both Scenarios \ref{s1} and \ref{s2}. After multiplying by the MIDAS weighting matrix $W = \left(w_1, \ldots, w_L\right) \in \mathbb{R}^{d \times L}$, with $w_l \in \mathbb{R}^{d}$, we obtain the $k$-th group of MIDAS-weighted covariates $$W^{\top}\widetilde{\boldsymbol{Z}}_{i,k} = \left(\widetilde{\boldsymbol{Z}}_{i,k}^{\top}w_1, \ldots,  \widetilde{\boldsymbol{Z}}_{i,k}^{\top}w_L \right)^{\top},$$
which is used in estimation. Unlike the original lags, the components of this transformed vector generally do not have variance equal to $1$. For example,
$$Var\left(\widetilde{\boldsymbol{Z}}_{i,k}^{\top}w_1\right) = w_1^{\top}Cov\left(\widetilde{\boldsymbol{Z}}_{i,k} \right)w_1,$$ 
which depends on $W$ and the covariance structure of $\left(z_{i, s, k}, z_{i, s - \frac{1}{m}, k}, \ldots, z_{i, s - \frac{d-1}{m}, k}\right)^{\top}$, and hence varies with $\rho$ in the simulation design. In Scenarios~\ref{s1} and~\ref{s2}, the stronger dependence among the original lagged covariates leads to larger variance in the corresponding MIDAS–weighted covariates.\footnote{\tcr{This pattern depends jointly on the covariance structure of $\widetilde{\boldsymbol{Z}}_{i,k}$ and the choice of the weighting matrix $W$. For instance, under a Toeplitz covariance structure combined with Chebyshev-based weights, specific patterns arise. In empirical applications, the true covariance structure of covariates is always unknown, so whether MIDAS necessarily increases their variance remains an open question.}} Table \ref{ill} reports the average variance ratio of the $L$ elements in the first group ($k=1$) of MIDAS-weighted covariates between Scenario \ref{s2} and Scenario \ref{s1} based on the simulation, with similar patterns observed for other groups. Thus, a strong correlation among the raw lagged covariates increases the variance of the MIDAS-weighted covariates. This increase in the variance strengthens the signal relative to noise, thereby enhancing the predictive performance of MIDAS-based methods.
\begin{figure}[!htbp]
\color{black}
    \centering
    \includegraphics[width=0.8\linewidth]{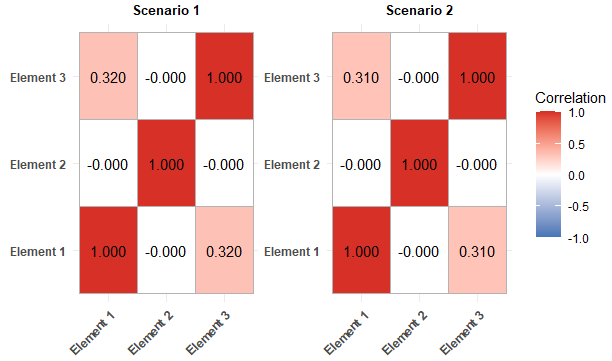}
    \caption{Correlation matrix of the $L = 3$ elements in the first group ($k=1$) of MIDAS-weighted covariates in Scenarios \ref{s1} and \ref{s2} for $N = 800$. Results are based on $100$ simulation repetitions, and the MIDAS weighting matrix $W$ is described in Section~\ref{est simulation}.}
    \label{correlation}
\end{figure}
% Table generated by Excel2LaTeX from sheet 'Sheet1'
\begin{table}[!htbp]
\color{black}
\centering
  \caption{Averaged variance ratio of the $L = 3$ elements in the first group ($k=1$) of MIDAS-weighted covariates between Scenario \ref{s2} and Scenario \ref{s1} for $N = 800$. Results are based on $100$ simulation repetitions, and the MIDAS weighting matrix $W$ is described in Section~\ref{est simulation}.
}
  \medskip
    \begin{tabular}{lccc}
    \toprule
          & \multicolumn{1}{l}{Element $1$} & \multicolumn{1}{l}{Element $2$} & \multicolumn{1}{l}{Element $3$} \\
\cmidrule{2-4}    Variance ratio & 7.07  & 4.44  & 2.63 \\
    \bottomrule
    \end{tabular}%
  \label{ill}%
\end{table}%

Notably, the MIDAS transformation is applied within each group, meaning to each original covariate and its lags, and does not change the correlation structure across different groups of covariates. Hence, we can evaluate how strong cross-covariate correlation affects the prediction of MIDAS-based methods by comparing Scenario \ref{s2} and Scenario \ref{s5}.
}

\subsection{Prediction simulation results} \label{sec sim res}
We compute results for the three different LASSO-type regression methods. In the structured approach, sg-LASSO-MIDAS, each covariate and its high-frequency lags share the same group. Therefore, we have $K+1$ groups (one group corresponding to the intercept). Table \ref{number est} presents the number of parameters (including the intercept) to be estimated in each of the three methods. It is evident that the two methods using MIDAS weights help mitigate the high-dimensional problem when $s \times m$ exceeds $L$.
\begin{table}[htbp]
  \centering
  \caption{Number of parameters to be estimated in different methods.}
  \medskip
  \scalebox{1}{
    \begin{tabular}{cc}
    \toprule
    Methods & Number of estimated parameters \\
    \cmidrule{1-2}
    LASSO-UMIDAS & $1 + K \times s \times m$ \\
    LASSO-MIDAS & $1 + K \times L$ \\
    sg-LASSO-MIDAS & $1 + K \times L$ \\
    \bottomrule
    \end{tabular}%
    }
  \label{number est}%
\end{table}%
We start by comparing the prediction results for sample sizes $N \in \{800, 1200\}$ across three simulation scenarios, followed by examining the recovery of the MIDAS weight function. To assess the prediction performance, we randomly split the simulated dataset into a training dataset ($80\%$) and a test dataset ($20\%$), ensuring that both sets maintain the same proportion of the event indicator $\delta_i(t) \mathbbm{1}{\{\widetilde{T}_i \leq t\}}$. We then calculate the estimated AUC in the test dataset, with the average estimated AUC obtained from $100$ simulated datasets for each sample size. The tuning parameters in the sg-LASSO-MIDAS and LASSO-MIDAS models are selected using $5$-fold stratified cross-validation to maximize the AUC on the validation fold, and the same procedure is applied to the LASSO-UMIDAS model. Specifically, we perform a grid search over the regularization parameter $\alpha$ in the sparse-group LASSO penalty, with values in the set $\{0, 0.1, 0.3, 0.5, 0.7, 0.9, 1\}$ and, as standard, $\lambda$ is chosen in a grid which follows \citet{liang_sparsegl}.
\begin{table}[htp!]
\color{black}
  \centering
  \caption{Estimated average AUCs (standard deviation) in the test dataset of the three different methods: LASSO-UMIDAS (LASSO-U), LASSO-MIDAS (LASSO-M), sg-LASSO-MIDAS (sg-LASSO-M). We set $s = 6$ and $t = \{t_1 = 10\%, t_2 = 30\%, t_3 = 50\%\}$ percentile of the set $\{T_i: T_i \text{ is uncensored }, i \in [N]\}$. We use $\rho$ to measure the autocorrelation strength of the original lags, while $\rho_0$ measures the cross-covariate correlation. }
  \medskip
    \scalebox{0.9}{\begin{tabular}{cccccccc}
    \toprule
          & \multicolumn{7}{c}{Scenario \ref{s1}: $\rho = 0.1$, $\rho_0 = 0.1$} \\
\cmidrule{2-8}          
          & \multicolumn{3}{c}{$N = 800$} &       & \multicolumn{3}{c}{$N = 1200$} \\
\cmidrule{2-4}\cmidrule{6-8}          
          & $t = t_1$ & $t = t_2$ & $t = t_3$ &       & $t = t_1$ & $t = t_2$ & $t = t_3$ \\
\cmidrule{2-4}\cmidrule{6-8}    
Oracle      & 0.974 (0.026) & 0.913 (0.038) & 0.853 (0.043) 
            &       & 0.965 (0.028) & 0.927 (0.033) & 0.886 (0.036) \\
LASSO-U     & 0.575 (0.181) & 0.578 (0.129) & 0.576 (0.102) 
            &       & 0.604 (0.141) & 0.648 (0.097) & 0.635 (0.090) \\
LASSO-M     & 0.829 (0.139) & 0.864 (0.076) & 0.813 (0.071) 
            &       & 0.886 (0.096) & 0.891 (0.051) & 0.866 (0.058) \\
sg-LASSO-M  & 0.867 (0.116) & 0.888 (0.068) & 0.846 (0.075)
            &       & 0.917 (0.061) & 0.914 (0.043) & 0.889 (0.051) \\
\cmidrule{2-8}

          & \multicolumn{7}{c}{Scenario \ref{s2}: $\rho = 0.9$, $\rho_0 = 0.1$} \\
\cmidrule{2-8}          
          & \multicolumn{3}{c}{$N = 800$} &       & \multicolumn{3}{c}{$N = 1200$} \\
\cmidrule{2-4}\cmidrule{6-8}          
          & $t = t_1$ & $t = t_2$ & $t = t_3$ &       & $t = t_1$ & $t = t_2$ & $t = t_3$ \\
\cmidrule{2-4}\cmidrule{6-8}    
Oracle      & 0.986 (0.016) & 0.955 (0.020) & 0.916 (0.029) 
            &       & 0.988 (0.011) & 0.968 (0.015) & 0.941 (0.019) \\
LASSO-U     & 0.824 (0.142) & 0.870 (0.067) & 0.845 (0.075) 
            &       & 0.872 (0.095) & 0.908 (0.043) & 0.899 (0.042) \\
LASSO-M     & 0.903 (0.081) & 0.921 (0.043) & 0.905 (0.056) 
            &       & 0.920 (0.058) & 0.943 (0.027) & 0.932 (0.031) \\
sg-LASSO-M  & 0.905 (0.081) & 0.937 (0.041) & 0.922 (0.046) 
            &       & 0.930 (0.048) & 0.947 (0.027) & 0.944 (0.029) \\
\cmidrule{2-8}

          & \multicolumn{7}{c}{Scenario \ref{s3}: $\rho = 0.1$, $\rho_0 = 0.1$, heavy-tailed covariates} \\
\cmidrule{2-8}          
          & \multicolumn{3}{c}{$N = 800$} &       & \multicolumn{3}{c}{$N = 1200$} \\
\cmidrule{2-4}\cmidrule{6-8}          
          & $t = t_1$ & $t = t_2$ & $t = t_3$ &       & $t = t_1$ & $t = t_2$ & $t = t_3$ \\
\cmidrule{2-4}\cmidrule{6-8}    
Oracle      & 0.985 (0.011) & 0.943 (0.025) & 0.890 (0.034) 
            &       & 0.989 (0.009) & 0.961 (0.018) & 0.917 (0.028) \\
LASSO-U     & 0.600 (0.191) & 0.588 (0.123) & 0.564 (0.099) 
            &       & 0.622 (0.150) & 0.627 (0.097) & 0.572 (0.089) \\
LASSO-M     & 0.770 (0.188) & 0.804 (0.105) & 0.760 (0.101) 
            &       & 0.835 (0.121) & 0.860 (0.077) & 0.828 (0.082) \\
sg-LASSO-M  & 0.782 (0.184) & 0.825 (0.096) & 0.793 (0.105)
            &       & 0.845 (0.123) & 0.881 (0.088) & 0.851 (0.093) \\
\cmidrule{2-8}    

          & \multicolumn{7}{c}{Scenario \ref{s4}: $\rho = 0.9$, $\rho_0 = 0.1$, heavy-tailed covariates} \\
\cmidrule{2-8}          
          & \multicolumn{3}{c}{$N = 800$} &       & \multicolumn{3}{c}{$N = 1200$} \\
\cmidrule{2-4}\cmidrule{6-8}          
          & $t = t_1$ & $t = t_2$ & $t = t_3$ &       & $t = t_1$ & $t = t_2$ & $t = t_3$ \\
\cmidrule{2-4}\cmidrule{6-8}    
Oracle      & 0.991 (0.008) & 0.972 (0.018) & 0.945 (0.025) 
            &       & 0.993 (0.007) & 0.984 (0.010) & 0.964 (0.013) \\
LASSO-U     & 0.777 (0.166) & 0.815 (0.096) & 0.787 (0.102) 
            &       & 0.811 (0.122) & 0.870 (0.057) & 0.850 (0.060) \\
LASSO-M     & 0.799 (0.190) & 0.854 (0.097) & 0.841 (0.097) 
            &       & 0.845 (0.117) & 0.901 (0.059) & 0.866 (0.070) \\
sg-LASSO-M  & 0.818 (0.176) & 0.864 (0.088) & 0.842 (0.098)
            &       & 0.848 (0.121) & 0.910 (0.052) & 0.882 (0.075) \\
\cmidrule{2-8} 

          & \multicolumn{7}{c}{Scenario \ref{s5}: $\rho = 0.9$, $\rho_0 = 0.9$} \\
\cmidrule{2-8}          
          & \multicolumn{3}{c}{$N = 800$} &       & \multicolumn{3}{c}{$N = 1200$} \\
\cmidrule{2-4}\cmidrule{6-8}          
          & $t = t_1$ & $t = t_2$ & $t = t_3$ &       & $t = t_1$ & $t = t_2$ & $t = t_3$ \\
\cmidrule{2-4}\cmidrule{6-8}    
Oracle      & 0.977 (0.022) & 0.933 (0.031) & 0.870 (0.042) 
            &       & 0.974 (0.022) & 0.943 (0.025) & 0.903 (0.033) \\
LASSO-U     & 0.686 (0.161) & 0.745 (0.096) & 0.719 (0.093) 
            &       & 0.749 (0.123) & 0.796 (0.088) & 0.778 (0.072) \\
LASSO-M     & 0.751 (0.143) & 0.805 (0.101) & 0.757 (0.087) 
            &       & 0.829 (0.105) & 0.850 (0.056) & 0.820 (0.065) \\
sg-LASSO-M  & 0.756 (0.146) & 0.824 (0.102) & 0.765 (0.094)
            &       & 0.843 (0.098) & 0.869 (0.052) & 0.847 (0.059) \\
    \bottomrule
    \end{tabular}}%
  \label{AUC6}%
\end{table}

Table \ref{AUC6} reports the estimated average AUCs in the test dataset. 
{\color{black}
To understand how different simulation scenarios affect the performance of a given method, we highlight two key factors. First, different simulation scenarios correspond to distinct models, as their true coefficients differ due to the prediction horizons $t$ being selected based on the generated data. For this reason, we report oracle AUCs for each scenario, computed by evaluating the AUC at the true data-generating parameters. The corresponding results have been added to Table~\ref{AUC6}. Second, differences in covariate correlation strength across scenarios may affect the accuracy with which the true parameters are estimated. In what follows, we separately discuss the impact of covariate correlation on MIDAS-based methods (such as LASSO-MIDAS and sg-LASSO-MIDAS), drawing on the analysis in Section~\ref{dis midas}, and on LASSO-UMIDAS.

We first focus on LASSO-MIDAS and sg-LASSO-MIDAS. As explained in Section \ref{dis midas}, it is expected that LASSO-MIDAS and sg-LASSO-MIDAS perform better in Scenario \ref{s2} than in Scenario \ref{s1}.
% although Scenario~\ref{s2} features stronger within-group correlations in the raw lagged covariates, these correlations are largely reduced by the MIDAS weighting matrix $W$. The transformation also increases the signal-to-noise ratio, giving LASSO-M and sg-LASSO-M a clear advantage in Scenario~\ref{s2} over Scenario~\ref{s1}. 
The oracle AUCs in Scenario \ref{s2} are even higher than those in Scenario \ref{s1} across different prediction horizons, which further explains the superior performance of LASSO-MIDAS and sg-LASSO-MIDAS. Similar patterns arise when comparing Scenario \ref{s3} and Scenario \ref{s4}, with the only difference being that the covariates follow heavy-tailed distributions. As expected, Scenario~\ref{s5} features stronger correlations across groups of MIDAS-weighted covariates, leading to noticeably weaker performance of LASSO-MIDAS and sg-LASSO-MIDAS compared with Scenario~\ref{s2}. The lower oracle AUCs in Scenario~\ref{s5} further reinforce this pattern. Although Scenarios~\ref{s3} and~\ref{s4} exhibit the highest oracle AUCs overall, LASSO-MIDAS and sg-LASSO-MIDAS perform worse there than in Scenarios~\ref{s1} and~\ref{s2} due to the heavy-tailed covariates. 

As for LASSO-UMIDAS, it exhibits performance patterns similar to those of LASSO-MIDAS and sg-LASSO-MIDAS across the scenarios. It attains a higher AUC in Scenario~\ref{s2} than in Scenario~\ref{s1}, partly reflecting the higher oracle AUCs. This is intuitive: Scenario~\ref{s2} features stronger within-group correlations among the active covariates but similarly low cross-group correlations, which allows LASSO-UMIDAS to approximate the predictive signal by selecting a few representative covariates in the active groups and therefore yields more stable predictions. In contrast, under the weaker within-group correlation of Scenario~\ref{s1}, the signal is more diffusely distributed across the active covariates, leading LASSO-UMIDAS to select too few predictors and produce less stable predictions with lower AUC. To quantify differences in signal strength across scenarios, Table~\ref{tab:snr_cross_t} reports the relative signal-to-noise ratios (SNRs) for LASSO-UMIDAS (shown for \(N = 800\), with similar results for \(N = 1200\)). We measure the SNR as $3\operatorname{Var}\!\left(\boldsymbol{Z}^{\top}\boldsymbol{\theta}_0\right)/\pi^2
$ using the standard normalization that the logistic error has variance $\pi^2/3$.
As noted by \citet{chardon2024finite}, stronger signal strength facilitates classification. The substantially higher SNRs in Scenario~\ref{s2} explain why LASSO-UMIDAS achieves more accurate predictions there compared with Scenario~\ref{s1}.
% Table~\ref{tab:snr_cross_t} reports the relative signal-to-noise ratios (SNR) for LASSO-UMIDAS in different scenarios (reported only for $N=800$, but similarly for $N=1200$), where we measure the SNR as
% $3\operatorname{Var}\!\left(\boldsymbol{Z}^{\top}\boldsymbol{\theta}_0\right)/\pi^2
% $, using the standard normalization that the logistic error has variance $\pi^2/3$. As noted by \citet{chardon2024finite}, stronger signal strength facilitates classification. The SNR in Table~\ref{tab:snr_cross_t} (reported only for $N=800$, but similarly for $N=1200$) supports the finding, being substantially higher in Scenario~\ref{s2} than in Scenario~\ref{s1} across all prediction horizons.
% \footnote{\tcr{We measure the SNR for LASSO-UMIDAS in the logistic model as
% $\frac{3}{\pi^2}\operatorname{Var}\!\left(\boldsymbol{Z}^{\top}\boldsymbol{\theta}_0\right)$,
% using the standard normalization that the logistic error has variance $\pi^2/3$. As noted by \citet{chardon2024finite}, larger signal strength tends to improve classification. }} 
Although we do not report SNRs for Scenarios~\ref{s3} and~\ref{s4}, the same qualitative pattern is observed: these scenarios differ from Scenarios~\ref{s1} and~\ref{s2} only in using heavy-tailed covariates, and the resulting predictions follow a similar trend. Table~\ref{tab:snr_cross_t} further supports that LASSO-UMIDAS performs better in Scenario~\ref{s5} than in Scenario~\ref{s1}, as the SNRs are higher in Scenario~\ref{s5}, where the strong correlation among active lags appears to be particularly beneficial. Although high within-group correlation among active covariates tends to improve prediction, strong cross-group correlation between active and inactive groups can diminish predictive performance. The decrease in AUCs from Scenario~\ref{s2} to Scenario~\ref{s5} reflects the effect of stronger cross-group dependence. When comparing Scenarios~\ref{s1} and~\ref{s2} with Scenarios~\ref{s3} and~\ref{s4}, the degraded performance in the latter can be attributed to the heavy-tailed covariates. For $N = 800$, LASSO-UMIDAS performs similarly in Scenarios~\ref{s1} and~\ref{s3}, but the AUCs across all three prediction horizons are consistently slightly closer to the oracle in Scenario~\ref{s1}, indicating a modest advantage in this scenario. LASSO-UMIDAS also consistently outperforms in Scenario~\ref{s2} compared with Scenario~\ref{s4}, as expected. When the sample size increases to $N = 1200$, the same pattern persists: LASSO-UMIDAS shows better performance in Scenarios~\ref{s1} and~\ref{s2}, while Scenarios~\ref{s3} and~\ref{s4} remain affected by heavy-tailed covariates.

% Table generated in Excel2LaTeX style
\begin{table}[htbp]
  \color{black}
  \centering
  \caption{Relative signal-to-noise ratios for LASSO-UMIDAS with Gaussian covariates ($N=800$). Scenario~\ref{s1} values are set to $1$ as benchmarks, and other scenarios are normalized by their respective horizons. Results are averaged over $100$ simulation repetitions.}
  \medskip
  \scalebox{1}{
    \begin{tabular}{cccc}
    \toprule
    Prediction horizon & Scenario \ref{s1} & Scenario \ref{s2} & Scenario \ref{s5} \\
    \midrule
    $t_1$  & 1.00 & 7.48 & 2.59 \\
    $t_2$  & 1.00 & 6.68 & 2.00 \\
    $t_3$  & 1.00 & 6.24 & 1.78 \\
    \bottomrule
    \end{tabular}%
    }
  \label{tab:snr_cross_t}%
\end{table}%

% \footnote{In an unreported simulation, when decreasing the censoring rate to $50\%$ with other settings remaining the same, we observe that LASSO-U performs much worse in Scenario \ref{s5} compared with Scenario \ref{s1}.}
For general context, as discussed in \citet{hebiri2012correlations,lassoprediction}, the predictive performance of Lasso-type methods can remain stable under various levels of correlation provided that the penalty parameter is appropriately tuned, for example, through cross-validation. Moreover, since the five scenarios correspond to different underlying data-generating models, our primary objective is to compare how the MIDAS-based methods perform relative to LASSO-UMIDAS within each scenario. 
% it is not entirely reliable to draw direct conclusions about the isolated effect of correlation across scenarios in the scope of this paper. 
% Our primary objective is to compare how the MIDAS-based methods perform relative to LASSO-U within each scenario. 

We therefore focus on method comparisons within the same scenario. As shown, sg-LASSO-MIDAS achieves the highest AUCs across different simulation scenarios. Both sg-LASSO-MIDAS and LASSO-MIDAS, using weight function approximations, outperform LASSO-UMIDAS. LASSO-UMIDAS generally demonstrates the poorest predictive performance across all scenarios. As expected, the predictive performance improves with an increase in sample size $N$. These results remain robust as the autocorrelation of original lagged covariates $\rho$ increases from $0.1$ to $0.9$ or the cross-covariate correlation $\rho_0$ increases from $0.1$ to $0.9$. Although all three methods perform less effectively with heavy-tailed covariates, sg-LASSO-MIDAS continues to outperform the others. 
In Tables \ref{recovery1}, \ref{recovery2}, \ref{recovery3}, \ref{recovery4}, and \ref{recovery5} of Online Appendix \ref{additional}, we report additional results for the estimation accuracy of the true parameters. }
It is worth noting that the increase of the parameter estimation accuracy with the sample size is not particularly large, as the high censoring rate in the simulated datasets limits the average increase in the number of uncensored firms ($\mathbbm{1}\{T_i \leq C_i\} = 1$) to only about $76$ as the sample size $N$ grows from $800$ to $1200$.\footnote{In simulation results not shown for brevity, when the dataset's censoring rate is approximately $30\%$, notable improvements both in parameter estimation and estimated AUC are observed as the sample size $N$ increases from $800$ to $1200$.}

{\color{black}
In addition to the prediction performance, we examine the variable selection performance of sg-LASSO-MIDAS across different scenarios. 
\begin{table}[htbp]
\color{black}
  \centering
  \caption{Averaged True Positive Rate for sg-LASSO-MIDAS. We set $s = 6$ and $t = \{t_1 = 10\%, t_2 = 30\%, t_3 = 50\%\}$ percentile of the set $\{T_i: T_i \text{ is uncensored }, i \in [N]\}$. We use $\rho$ to measure the autocorrelation strength of the original lags and $\rho_0$ to measure the cross-covariate correlation.}
  \medskip
    \scalebox{1}{\begin{tabular}{cccccccc}
    \toprule       
          & \multicolumn{3}{c}{$N = 800$} &       & \multicolumn{3}{c}{$N = 1200$} \\
\cmidrule{2-4}\cmidrule{6-8}          
          & $t = t_1$ & $t = t_2$ & $t = t_3$ &       & $t = t_1$ & $t = t_2$ & $t = t_3$ \\
\cmidrule{2-4}\cmidrule{6-8} 
          & \multicolumn{7}{c}{Scenario \ref{s1}: $\rho = 0.1$, $\rho_0 = 0.1$} \\
\cmidrule{2-8}          
sg-LASSO-M   & 0.895 & 0.985 & 0.980 &       & 0.995 & 1.000 & 0.995 \\
\cmidrule{2-8}          
          & \multicolumn{7}{c}{Scenario \ref{s2}: $\rho = 0.9$, $\rho_0 = 0.1$} \\
\cmidrule{2-8}          
sg-LASSO-M     & 0.925 & 1.000 & 1.000 &       & 0.985 & 1.000 & 1.000 \\
\cmidrule{2-8}          
          & \multicolumn{7}{c}{Scenario \ref{s3}: $\rho = 0.1$, $\rho_0 = 0.1$, heavy-tailed covariates} \\
\cmidrule{2-8}          
sg-LASSO-M    & 0.720 & 0.935 & 0.920 &       & 0.850 & 0.965 & 0.955 \\
\cmidrule{2-8}      
          & \multicolumn{7}{c}{Scenario \ref{s4}: $\rho = 0.9$, $\rho_0 = 0.1$, heavy-tailed covariates} \\
\cmidrule{2-8}          
sg-LASSO-M    & 0.850 & 0.970 & 0.960 &       & 0.955 & 0.995 & 0.990 \\
\cmidrule{2-8} 
          & \multicolumn{7}{c}{Scenario \ref{s5}: $\rho = 0.9$, $\rho_0 = 0.9$} \\
\cmidrule{2-8}          
sg-LASSO-M    & 0.895 & 0.970 & 0.930 &       & 0.960 & 0.995 & 1.000 \\
    \bottomrule
    \end{tabular}}%
  \label{tab:variable_selection_metrics}%
\end{table}
Recall that we have $K = 50$ covariate groups, with the first two groups being active and the remaining $48$ groups being inactive. We define a group as selected if any estimated coefficient in the group is nonzero. 
That is, for group $k$:
\[
A_k = 
\begin{cases}
1, & \text{if any estimated coefficient in group $k$ is nonzero} \\
0, & \text{otherwise}
\end{cases}.
\] 
The performance metrics for variable selection are defined as follows:
\begin{align*}
& \text{True Positive Rate (TPR)} = \frac{A_1 + A_2}{2}.
\end{align*}

Table~\ref{tab:variable_selection_metrics} reports the TPR for sg-LASSO-MIDAS across different scenarios. The pattern of variable selection is similar to the trends observed in the estimated AUCs in Table~\ref{AUC6}. As expected, sg-LASSO-MIDAS achieves its best TPR in Scenario~\ref{s2}, where high within-group correlation is effectively handled, and the SNR is higher as a result of the MIDAS transformation relative to the other scenarios. In contrast, variable selection deteriorates in Scenarios~\ref{s3}, ~\ref{s4}, and~\ref{s5}, due to the presence of heavy-tailed covariates and high cross-covariate correlations. Across all scenarios, performance improves with larger sample sizes, as increased data stabilizes estimation and variable selection.
Overall, the simulation evidence strongly supports the advantage of using MIDAS weighting and incorporating the internal structure of covariates in high-dimensional settings. 
}

{\color{black}
\subsection{Inference simulation results}\label{inf sim}
To assess the inference performance, we keep the data generation process in Section \ref{est simulation}, set the first group of MIDAS-weighted parameters as of interest, and test 
\begin{equation}\label{test}
\mathbbm{H}_0: \boldsymbol{\beta}_{0,1}^* = \boldsymbol{0}, \quad \mathbbm{H}_1: \boldsymbol{\beta}_{0,1}^* \neq \boldsymbol{0}.
\end{equation}
We implement a Wald test, using a nominal significance level of $5\%$ in this simulation. We scale the Beta density function $\widetilde{\omega}_{1}(u)$ by multiplying it with a constant $a \in \{ 0, 0.1 \}$. The tuning parameters for both sg-LASSO-MIDAS and the nodewise LASSO are selected using $5$-fold cross-validation to maximize the likelihood score on the validation folds. For the regularization parameter $\alpha$ in the sparse-group LASSO, we use a relatively small grid set $\{0, 0.5, 1\}$ to alleviate computational burden. The sample size is $N = 1200$ and the simulation is run $500$ times. Unlike for the prediction exercise, we do not split the data here and use the full dataset to estimate. All other settings follow those in Section \ref{est simulation}. 

As shown in Table \ref{inf1}, the proposed de-sparsified sg-LASSO-MIDAS attains an empirical test size close to the nominal $5\%$ level in both Scenario \ref{s1} and Scenario \ref{s2}. The empirical power in Scenario \ref{s2} is higher than in Scenario \ref{s1}, which is consistent with the larger signal-to-noise ratio discussed in Section \ref{dis midas}. In Scenarios~\ref{s3} and~\ref{s4}, which involve heavy-tailed covariates, a similar pattern holds: the method performs better in terms of both size and power in Scenario~\ref{s4}, which has a larger signal-to-noise ratio. Empirical power drops markedly when cross-covariate correlation increases, as in the comparison of Scenario~\ref{s2} and Scenario~\ref{s5}. Comparing Scenario \ref{s1} with Scenario \ref{s3}, the latter shows higher empirical power but also a substantial deviation of empirical size from the nominal $5\%$ level, indicating notable size inflation. Both Scenario \ref{s2} and Scenario \ref{s4} maintain empirical sizes close to the nominal level, with Scenario \ref{s2} achieving higher empirical power.
\begin{table}[htbp]
\color{black}
  \centering
  \caption{Inference simulation results: $N=1200$, $s = 6$, and $t = 30\%$ percentile of the set $\{T_i: T_i \text{ is uncensored }, i \in [N]\}$. The results are based on $500$ simulation repetitions.}
  \medskip
  \scalebox{1}{
    \begin{tabular}{lcc}
    \toprule
    Simulation scenario & $a=0$ (empirical size) & $a=0.1$ (empirical power) \\
    \midrule
    Scenario \ref{s1}: $\rho = 0.1$, $\rho_0 = 0.1$ & 0.042 & 0.728 \\
    \midrule
    Scenario \ref{s2}: $\rho = 0.9$, $\rho_0 = 0.1$ & 0.044 & 0.998 \\
    \midrule
    Scenario \ref{s3}: heavy-tailed covariates, $\rho = 0.1$, $\rho_0 = 0.1$ & 0.074 & 0.942 \\
    \midrule
    Scenario \ref{s4}: heavy-tailed covariates, $\rho = 0.9$, $\rho_0 = 0.1$ & 0.046 & 0.976 \\
    \midrule
    Scenario \ref{s5}: $\rho = 0.9$, $\rho_0 = 0.9$ & 0.044 & 0.834 \\
    \bottomrule
    \end{tabular}%
    }
  \label{inf1}%
\end{table}%

% % Table generated by Excel2LaTeX from sheet 'Sheet3'
% \begin{table}[htbp]
%   \centering
%   \caption{Debiased inference simulation results when the null hypothesis of test \eqref{test} is true: $s = 6$ and $t = 30\%$ percentile of the set $\{T_i: T_i \text{ is uncensored }, i \in [N]\}$. $N = 1200$ and $500$ simulation repetitions.}
%     \begin{tabular}{ccc}
%     \toprule
%           & \multicolumn{2}{c}{Scenario 1, $\rho = 0.1$, $\rho_0 = 0.1$} \\
% \cmidrule{2-3}          & SE/SD & Coverage rate \\
% \cmidrule{2-3}    Parameter 1 & 1.014  & 0.954 \\
%     Parameter 2 & 0.984  & 0.952 \\
%     Parameter 3 & 1.042  & 0.960 \\
%     \midrule
%               & \multicolumn{2}{c}{Scenario 2} \\
% \cmidrule{2-3}          & SE/SD & Coverage rate \\
% \cmidrule{2-3}    Parameter 1 & 0.980  & 0.960 \\
%     Parameter 2 & 1.042  & 0.958 \\
%     Parameter 3 & 1.044  & 0.960 \\
%     \midrule
%                   & \multicolumn{2}{c}{Scenario 3} \\
% \cmidrule{2-3}          & SE/SD & Coverage rate \\
% \cmidrule{2-3}    Parameter 1 & 0.838  & 0.926 \\
%     Parameter 2 & 0.941  & 0.950 \\
%     Parameter 3 & 0.901  & 0.946 \\
%     \midrule
%                       & \multicolumn{2}{c}{Scenario 4} \\
% \cmidrule{2-3}          & SE/SD & Coverage rate \\
% \cmidrule{2-3}    Parameter 1 & 1.133  & 0.968 \\
%     Parameter 2 & 1.016  & 0.958 \\
%     Parameter 3 & 1.024  & 0.958 \\
%     \bottomrule
%     \end{tabular}%
%   \label{inf2}%
% \end{table}%
}

\section{Empirical application}\label{sec real}

\subsection{Data}\label{real data}

We construct a dataset of all publicly traded Chinese manufacturing firms listed on the Shanghai and Shenzhen Stock Exchanges. These firms' financial statuses are classified as either Special Treatment (ST) or No-ST.\footnote{The initial public offering (IPO) dates of these firms fall between $1985$, January $1^{\text{st}}$ and $2015$, December $31^{\text{st}}$.} A firm is designated as an ST firm if it meets any of the following criteria: i) two consecutive years of earnings are negative; ii) one recent year of earnings is negative and the most recent year of equity is negative; iii) the most recent year’s audited financial statements conclude with substantial doubt; and iv) other situations identified by the stock exchange as abnormal activities or a high risk of delisting. According to \citet{li2021chinese}, ST status is a reliable indicator of financial distress in China. Therefore, we use the ST indicator as a proxy for a firm's financial distress.

The dataset is sourced from the IFIND database \url{https://www.hithink.com/ifind.html}, one of China's leading financial data providers. The database contains mostly manually
extracted data, covering financial data such as stocks, bonds, funds, futures, and indexes. \tcr{Detailed information about the dataset can be found in Online Appendix \ref{data details}.} Additionally, we have developed an R package, \texttt{Survivalml}, which is publicly available at \url{https://github.com/Wei-M-Wei/Survivalml}, and its instructions are provided as well.

The raw dataset consists of $1614$ companies, of which $299$ were classified as ST and $1315$ as No-ST. The data cover the period from January $1^{\text{st}}, 1985$ to December $31^{\text{st}}, 2020$, with firms entering the sample from their respective IPO dates within this window.\footnote{There are no mergers in the dataset.} The dataset exhibits a censoring rate of approximately $81\%$. We collect $57$ quarterly measured financial variables, categorized into $8$ types (number of covariates in each type), as follows: Operation-Related $(6)$, Debt-Related $(10)$, Profit-Related $(16)$, Potential-Related $(6)$, Z-score Related $(5)$ \citep{altman1968financial}, Capital-Related $(6)$, Stock-Related $(5)$, and Cash-Related $(3)$. Table \ref{variable} provides detailed information on these financial variables; see Online Appendix \ref{data details} for further details. Figure \ref{summarypic} presents the distribution of IPO, first-time-to-be ST, and censored firms across different years of the raw dataset. Many of them were listed in $2010$ and $2011$, several firms were publicly listed in $2013$, and the financial distress firms seem to be distributed evenly between $1999$ and $2020$.\footnote{Since China put froze IPOs in $2013$, there were only a limited number of IPO firms in this year.}

\begin{figure}[htbp]
    \centering
    \includegraphics[width=0.8\linewidth]{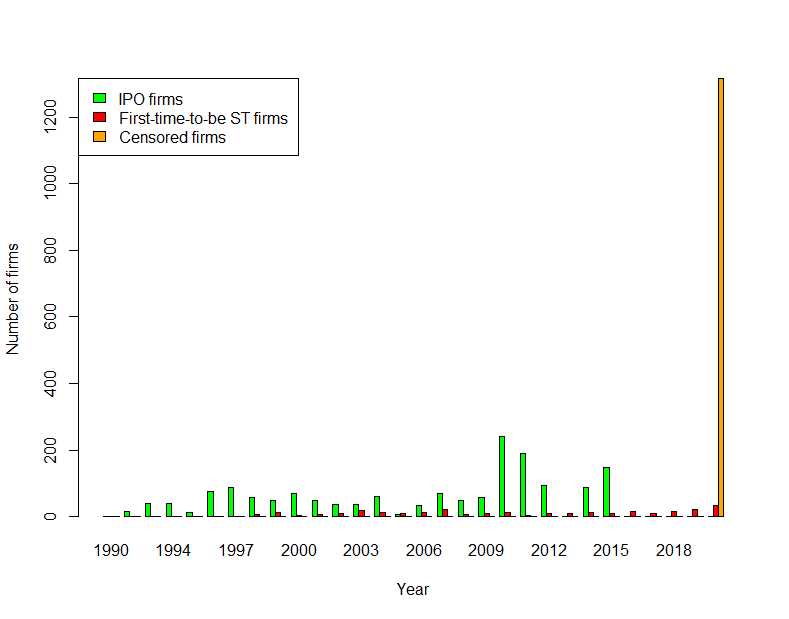}
    \caption{Number of IPO, first-time-to-be ST, and censored firms across different years in the raw dataset.}
    \label{summarypic}
\end{figure}

We construct the sub-dataset in which all firms have survived for at least $s$ years. The goal is to use these $s$ years of information to predict whether a firm will fail within $t$ years. 
% Figure \ref{fig1} shows the prediction procedure applied to the real dataset. The observation period spans from $1985$, January $1^{\text{st}}$, to $2020$, December $31^{\text{st}}$. The survival time $T$ of each firm $i$ is defined as the interval between the firm's IPO date and the first instance when the company was classified as ST. If a firm was never classified as ST, we only observe the censoring time $C$, which is the interval from the IPO date to the end of the observation period. Both $T$ and $C$ are measured in years. Firms $1$ and $2$ represent uncensored firms, so their survival time can be fully observed within the observation period. Firms $3$ and $4$ are censored, and we can only observe their censoring time $C$. Thus, for all firms, only $\widetilde{T} = T \wedge C$ is observable.

% \begin{figure}[hbtp]
% \includegraphics[scale=0.5]{pic/prediction_procedure.pdf}
% \caption{Prediction procedure in the empirical application.}
% \label{fig1}
% \end{figure}

\subsection{Estimation procedure}\label{emp set}

We now describe the estimation procedure in the empirical application. First, we note that all public firms report their financial information with a one-quarter delay. Consequently, if a firm has survived for $s$ years, only $s \times 4 - 1$ quarters' worth of financial information will be available for analysis.

Let $z_{i, s - \frac{j-1}{m}, k}$ represent the $k$-th financial covariate of firm $i$, measured at time $s - \frac{j-1}{m}$, where $j = 2, \ldots, d$, and $d = s \times m$. We organize all the lags of the covariate into a group vector $\widetilde{\boldsymbol{Z}}_{i,k}$:
$$
\widetilde{\boldsymbol{Z}}_{i,k} = \left(z_{i, s - \frac{1}{m}, k}, z_{i, s - \frac{2}{m}, k}, \ldots, z_{i, \frac{1}{m} ,k}\right)^{\top}, \quad i \in [N], \quad k \in [K],
$$
where $z_{i,\frac{1}{m}, k}$ refers to the $k$-th covariate measured in the next quarter following the firm's IPO date, and $m = 4$ denotes the quarterly frequency of the financial covariates.

Next, we aggregate the lagged covariate vector $\widetilde{\boldsymbol{Z}}_{i,k}$ using a dictionary $W$, which consists of Gegenbauer polynomials shifted to the interval $[0,1]$ with parameter $\alpha_{\text{poly}} = \beta_{\text{poly}} = -\frac{1}{2}$ and size $L = 3$.\footnote{These polynomials are also known as Chebyshev polynomials on the interval $[0,1]$.} This choice of $W$ coincides with the specification used in Section~\ref{est simulation}.

Finally, we construct the covariate matrix $\boldsymbol{X}$ as follows:
$$\boldsymbol{X} =\left(\boldsymbol{X}_{1} , \boldsymbol{X}_{2}, \ldots, \boldsymbol{X}_{N}\right)^{\top},$$ 
where each $\boldsymbol{X}_i$ $=\left(1, \widetilde{\boldsymbol{Z}}_{i,1}^{\top} W, \widetilde{\boldsymbol{Z}}_{i,2}^{\top} W, \ldots, \widetilde{\boldsymbol{Z}}_{i,K}^{\top} W\right)^{\top}, i \in [N]$. This matrix $\boldsymbol{X}$ is then used in sg-LASSO-MIDAS and LASSO-MIDAS. Notice that we include the intercept term but do not penalize it in the estimation procedure.

We compare the performance of firm distress predictions using the following methods.

\medskip 

\noindent\textbf{Logistic regression.} 
% As a benchmark, we consider a simple unpenalized logistic regression with the latest lag of all financial covariates. We solve the empirical version of \eqref{equivalent censor}, in which the function $H$ is estimated using the Kaplan-Meier estimator. This is considered a reasonable starting point for distress predictions. Only the latest lag for each covariate is used, and the total number of parameters we need to estimate is $1 + K$, where $K$ is the number of covariates without their lags.
\tcr{As a benchmark, we consider a simple unpenalized logistic regression using only the lags of the first Z-score covariate. We solve the empirical version of \eqref{equivalent censor}, in which the function $H$ is estimated using the Kaplan-Meier estimator. This is considered a reasonable starting point for distress predictions. The total number of parameters, including the intercept, to estimate is $1 + (d - 1)$, where $d-1$ is the number of available lags for each covariate.}

\medskip

\noindent\textbf{LASSO-U (LASSO-UMIDAS).} We estimate $ d - 1 = s \times m -1$ coefficients per group covariate $\widetilde{\boldsymbol{Z}}_{i,k} \in \mathbb{R}^{d-1}, k \in [K]$, using the unstructured LASSO estimator. The total number of parameters to estimate is $1 + K \times (d - 1)$, where $s$ is the number of years survived by the firm, and $m$ represents the annual sampling frequency of each covariate.

\medskip

\noindent\textbf{LASSO-M (LASSO-MIDAS).} Each high-frequency covariate and its $d-1$ lags are grouped into $\widetilde{\boldsymbol{Z}}_{i,k} \in \mathbb{R}^{d-1}, k \in [K]$. We aggregate the group covariate $\widetilde{\boldsymbol{Z}}_{i,k}$ using Gegenbauer polynomials $W \in \mathbb{R}^{(d-1) \times L}$. We apply a Lasso penalty to induce sparsity. The total number of parameters to estimate is $1 + K \times L$, where $L$ is the size of the Gegenbauer polynomial dictionary.

\medskip

\noindent\textbf{sg-LASSO-M (sg-LASSO-MIDAS).} Similarly to LASSO-MIDAS, each high-frequency covariate and its $d-1$ lags form a group $\widetilde{\boldsymbol{Z}}_{i,k} \in \mathbb{R}^{d-1}, k \in [K]$, which is aggregated using Gegenbauer polynomials $W \in \mathbb{R}^{(d-1) \times L}$. Instead of using a Lasso penalty, we use the sparse-group Lasso penalty to induce sparsity in the group covariates. \tcr{The group structure matches that used in the simulation, where each group corresponds to the MIDAS-weighted covariates. Alternative group structures are also examined, and the corresponding results are reported in Online Appendix \ref{additional}.}
The total number of parameters to estimate is $1 + K \times L$.

\medskip

The choice of $s$ dictates the historical information captured in the covariate matrix $\boldsymbol{X}$, while $t$ denotes the prediction horizon. The existing literature on firm distress prediction, particularly in the United States, often examines prediction period $t-s$ ranging from $1$ quarter to $2$ years \citep{cole2012deja}. In practical applications, such as for bank regulators, models need to identify potential failures well in advance. For example, \citet{audrino2019predicting} developed a MIDAS-type method with prediction periods $t-s$ equal to $1$ and $2$ years.

For our empirical application, we select a reference period of $s = 6$ years. Using the firm classification criteria for Special Treatment outlined in Section \ref{real data}, we establish prediction horizons of $t = 8, 8.5, 9$ years to forecast firm distress within these intervals. To investigate longer forecast periods, we also consider an additional case with $s = 10$ years and prediction horizons of $t = 13, 13.5, 14$ years. Longer prediction horizons provide information on the risks of long-term financial distress. As highlighted by \citet{li2021chinese}, these prediction horizons are critical for accurately forecasting firm financial distress in China. They also offer meaningful and practical benchmarks for evaluating firm failure prediction models.

In practice, missing data in financial variables can arise due to various factors, including inconsistent reporting practices between firms, differing regulatory requirements, incomplete disclosures, and delays in data availability after IPOs. Given the substantial amount of missing data in the raw dataset, we construct a complete sub-dataset for each $s$ by selecting firms with consistent $s$-year observations. While common approaches for handling missing data, such as removing variables or firms with missing values, are widely used, these methods often result in retaining too few firms or variables for meaningful analysis. Furthermore, the firms with observable status in the sub-dataset play a critical role in the predictive modeling process. To address these challenges, we propose an algorithm that balances dimensionality and the number of uncensored firms in the selected sub-dataset, as outlined in Algorithm \ref{algorithm 1} of Online Appendix \ref{data process appendix}.

To evaluate the prediction performance of the different methods, we randomly split the dataset into in-sample ($80\%$) and out-of-sample ($20\%$) sets, ensuring that both sets maintain the same proportion of the event indicator $\delta_i(t) \mathbbm{1}\{\widetilde{T}_i \leq t\}$. The tuning parameters for sg-LASSO-MIDAS, LASSO-MIDAS, and LASSO-UMIDAS are selected using stratified $5$-fold cross-validation, where the optimal parameters are those that maximize the AUC on the validation fold.\footnote{In this paper, unless specified otherwise, the default choice for cross-validation is to maximize the AUC on the validation fold.} Additionally, as an alternative, cross-validation to maximize the likelihood score is also investigated. The AUC estimator employed in this procedure follows the method described in Section \ref{{time depedent auc}}. Specifically, we perform a grid search over the regularization parameter $\alpha$ in the sparse-group LASSO penalty, with values in the set $\{0, 0.1, 0.3, 0.5, 0.7, 0.9, 1\}$ and, as standard, $\lambda$ is chosen in a grid which follows \citet{liang_sparsegl}. This process is repeated $10$ times as a robustness check, each time using a different random split of the data. All models are trained on the same training set and evaluated on the same test set. 

For each split, the AUC is computed on the out-of-sample data, and the out-of-sample data is then bootstrapped $1000$ times to calculate the AUC for each bootstrap sample. The AUC values for each bootstrap sample are subsequently averaged across the $10$ different splits, resulting in $1000$ averaged AUC values. The final performance is reported as the overall average AUC, along with a two-side $95\%$ confidence interval, which is calculated based on these $1000$ bootstrapped averages. This approach ensures a robust performance evaluation by accounting for variability in the data and model performance.\footnote{We note that the bootstrap approach is not theoretically validated for the regularized estimators we consider in this paper.}

On top of the simple logistic regression and the LASSO-UMIDAS, LASSO-MIDAS, and sg-LASSO-MIDAS with cross-validation for the AUC or the likelihood score, we consider other alternative approaches. 

\paragraph{Macro data augmented prediction.} We first assess whether incorporating macroeconomic data can enhance the accuracy of distress prediction models. The macroeconomic dataset for China is sourced from the Federal Reserve Bank of Atlanta’s China Macroeconomy Project (\url{https://www.atlantafed.org/cqer/research/china-macroeconomy#Tab2}), which provides a comprehensive set of macroeconomic variables relevant to the Chinese economy. The dataset includes $98$ macroeconomic variables, measured quarterly, and spans the same time period as the financial data collected for the firms in our study.

To merge the macroeconomic data with the financial dataset, we select only those macroeconomic variables that do not have missing values across all firms within each financial sub-dataset. Since the sub-datasets differ based on the value of $s$, the set of macroeconomic variables selected will vary accordingly for each sub-dataset. Furthermore, we use the same MIDAS dictionary $W$ for the macroeconomic covariates as for the financial covariates, ensuring consistency in the aggregation of high-frequency data over time.

Table \ref{summary} summarizes the details of the two sub-datasets categorized by different values of $s$. For the sub-dataset with $s=6$ years, we use all available information across each firm’s entire survival period, allowing us to leverage the maximum historical data available for firms with $6$ years of survival. In contrast, for the sub-dataset with $s=10$ years, we restrict the covariates to those from the last $4$ years of each firm’s survival period. This adjustment is necessary because firms that have survived for $10$ years were generally listed in the $1990$s, and significant missing data is often observed in the early years following their IPOs. By focusing on the most recent $4$ years, we ensure better data quality and a more robust analysis.
\begin{table}[htbp]
  \centering
  \caption{Summary information of the dataset with $s = 6$ and $s = 10$ years.}
  \medskip
  \scalebox{1}{
    \begin{tabular}{lll}
    \toprule
          & $s= 6$ years & $s = 10$ years \\
          \cmidrule{2-3}    Number of firms $N$   & 901  & 784  \\
    Number of uncensored firms & 67    &  80 \\
        Number of financial covariates $K_{\text{financial}}$ (including lags)   &  32 (736) & 36 (540)   \\
    Number of macro covariates $K_{\text{macro}}$ (including lags)  &  63 (1449) &  63 (945) \\
 \cmidrule{2-3}  $30\%$ percentile of $\widetilde{T}$ (years) & 9.512 &  13.369 \\
    $50\%$ percentile of $\widetilde{T}$ (years)& 10.285 &  15.411 \\
 \cmidrule{2-3}  $30\%$ percentile of $T$ among uncensored firms (years)& 7.789  &  11.032 \\
    $50\%$ percentile of $T$ among uncensored firms (years)& 8.934  &  13.844 \\
    \bottomrule
    \end{tabular}%
    }
  \label{summary}%
\end{table}%

\paragraph{Oversampling.} In addition, we apply an oversampling technique to address the imbalance in the dataset caused by the high censoring rate, which results in an unequal proportion of firms experiencing distress versus those that are not. This imbalance could adversely affect the performance of distress prediction models, as the minority class (distressed firms) may be underrepresented. Since the empirical dataset has a high censoring rate, we face a class imbalance between those firms that eventually experience distress $ \mathbbm{1}\{T_i \leq C_i\}\mathbbm{1}\{\widetilde{T}_i \leq t\} = 1 $ and those that do not or we do not observe $\mathbbm{1}\{T_i \leq C_i\}\mathbbm{1}\{\widetilde{T}_i \leq t\} = 0$. To balance this, for the training dataset, we randomly duplicate the observations from the minority class (firms that experience distress) until the proportion of distressed firms reaches $15\%$ of the training dataset. This step helps mitigate the imbalance and ensures that the model is exposed to a sufficient number of distressed firms during training. Tuning parameters are selected using $5$-fold stratified cross-validation, where the optimal parameters maximize the likelihood score.

\paragraph{Does censoring matter for prediction?} We compare with an approach that applies LASSO-UMIDAS, LASSO-MIDAS, and sg-LASSO-MIDAS to the sub-dataset where censored firms with censoring time smaller than $t$ ($\mathbbm{1}\{ C_i < T_i\}\mathbbm{1}\{ C_i < t\}=1$) have been removed.\footnote{Given the prediction horizon $t$, the distress status of censored firms cannot be observed if their censoring time is shorter than $t$ and, clearly, censored firms with $\mathbbm{1}\{C_i \geq t\}$ are not distressed.} This is the approach usually taken in the literature, since it allows for ignoring censoring, see the discussion in the literature review of the introduction. The limitation of this procedure is that it does not use all observations, resulting in a loss of precision. 

\subsection{Application results}\label{app res}

The results are presented in Tables \ref{empirical AUC Augment6} and \ref{empirical AUC Augment10}. The LASSO-MIDAS and sg-LASSO-MIDAS consistently outperform the LASSO-UMIDAS, which aligns with our expectations and the logistic regression benchmark. For $s=6$ years, sg-LASSO-MIDAS and LASSO-MIDAS perform essentially equally, with only minor differences across prediction horizons. When $s=10$ years, however, sg-LASSO-MIDAS exhibits a slight but consistent performance advantage over LASSO-MIDAS, indicating that the sparse-group Lasso regularization is particularly beneficial when incorporating a larger historical window of data.

When we compare the performance of models based on cross-validation using different metrics, we observe that cross-validation based on the AUC generally yields better results than cross-validation based on likelihood scores. This is logical since our target measure is the AUC itself.

Additionally, while integrating macroeconomic data does not improve prediction performance over the purely financial model when $s = 6$ years, it enhances performance when $s = 10$ years. This suggests that macroeconomic variables become more relevant with a larger historical window, offering supplementary information that helps improve prediction accuracy, especially for firms with longer survival periods. However, oversampling does not seem to provide any additional benefit in improving prediction performance.

When we remove censored firms with $C_i < t$, the performance of our methodologies deteriorates across all scenarios, emphasizing the importance of properly accounting for censoring in predictive modeling.
\begin{table}[htbp]
  \centering
  \caption{(Distress prediction performance) Estimated average AUCs ($95\%$ confidence interval) in the out-of-sample set with $s = 6$ years and prediction horizons $t = 8, 8.5, 9$ years.}
  \medskip
  \scalebox{0.7}{
    \begin{tabular}{cccc}
    \toprule
          & \multicolumn{3}{c}{$s = 6$ years} \\
\cmidrule{2-4}          & $t = 8$ years & $t = 8.5$ years & $t = 9$ years \\
\cmidrule{2-4}          & \multicolumn{3}{c}{\tcr{Benchmark} } \\
\cmidrule{2-4}    Logistic reg. &  \tcr{0.623 [0.563, 0.673]}   &  \tcr{0.587 [0.528, 0.647]}   &  \tcr{0.606 [0.576, 0.641]}\\
% \cmidrule{2-4}          & \multicolumn{3}{c}{Benchmark} \\
% \cmidrule{2-4}    Logistic reg. &  0.714 [0.666, 0.760]   &  0.698 [0.664, 0.736]   &  0.782 [0.754, 0.816]\\
    \cmidrule{2-4}          & \multicolumn{3}{c}{Cross-validation for the AUC} \\
\cmidrule{2-4}    LASSO-U &  0.797 [0.755, 0.844]   &  0.756 [0.717, 0.801]   & 0.765 [0.734, 0.802]\\
    LASSO-M &  0.838 [0.793, 0.872]   &   0.817 [0.774, 0.864]    & 0.811 [0.778, 0.843] \\
    sg-LASSO-M &  0.823 [0.789, 0.865]    &   0.821 [0.778, 0.861]  &  0.806 [0.776, 0.840] \\
    \cmidrule{2-4}          & \multicolumn{3}{c}{Cross-Validation for the likelihood score} \\
\cmidrule{2-4}    LASSO-U &  0.710 [0.671, 0.753]  &  0.644 [0.587, 0.708]   & 0.761 [0.718, 0.804] \\
    LASSO-M &  0.701 [0.665, 0.738]   &   0.851 [0.817, 0.894]   & 0.808 [0.774, 0.843]\\
    sg-LASSO-M &  0.782 [0.747, 0.820]    &   0.813 [0.773, 0.862]   &  0.795 [0.760, 0.835] \\
\cmidrule{2-4}          & \multicolumn{3}{c}{Macro Data Augmented} \\
\cmidrule{2-4}    LASSO-U    &  0.790 [0.767, 0.819]      & 0.740 [0.718, 0.772] &  0.721 [0.698, 0.748]\\
    LASSO-M     & 0.823 [0.797, 0.848]     & 0.810 [0.786, 0.836]  &  0.782 [0.761, 0.804]\\
    sg-LASSO-M    &   0.820 [0.800, 0.846]   & 0.806 [0.783, 0.830]  &  0.798 [0.778, 0.822]\\
    \cmidrule{2-4}          & \multicolumn{3}{c}{Oversampling with Financial Data} \\
\cmidrule{2-4}    LASSO-U     &  0.833 [0.800, 0.863]   & 0.760 [0.716, 0.800] &  0.773 [0.746, 0.808]\\
    LASSO-M     &  0.801 [0.760, 0.838]    & 0.832 [0.806, 0.864] & 0.825 [0.799, 0.855] \\
    sg-LASSO-M    & 0.810 [0.768, 0.851]     & 0.834 [0.808, 0.861]  &  0.822 [0.800, 0.851]\\
            \cmidrule{2-4}          & \multicolumn{3}{c}{Data without censored firms satisfying $C_i < t$} \\
\cmidrule{2-4}    LASSO-U &  0.707 [0.659, 0.768]   &  0.792 [0.759, 0.829]    &  0.731 [0.695, 0.776]\\
    LASSO-M &  0.787 [0.741, 0.829]     &   0.817 [0.777, 0.854]   &  0.744 [0.701, 0.791]\\
    sg-LASSO-M &   0.753 [0.702, 0.816]  &  0.820 [0.778, 0.856]   &  0.776 [0.733, 0.821] \\
    \bottomrule
    \end{tabular}%
    }
  \label{empirical AUC Augment6}%
\end{table}%

\begin{table}[htbp]
  \centering
  \caption{(Distress prediction performance) Estimated average AUCs ($95\%$ confidence interval) in the out-of-sample set with $s = 10$ years and prediction horizons $t = 13, 13.5, 14$ years.}
  \medskip
  \scalebox{0.7}{
    \begin{tabular}{cccc}
    \toprule
          & \multicolumn{3}{c}{$s = 10$ years} \\
\cmidrule{2-4}          & $t = 13$ years & $t = 13.5$ years & $t = 14$ years \\
\cmidrule{2-4}          & \multicolumn{3}{c}{\tcr{Benchmark} } \\
\cmidrule{2-4}    Logistic reg. &  \tcr{0.549 [0.504, 0.592]}   &  \tcr{0.573 [0.523, 0.627]}   &  \tcr{0.617 [0.576, 0.661]}\\
% \cmidrule{2-4}          & \multicolumn{3}{c}{Benchmark} \\
% \cmidrule{2-4}    Logistic reg. &  0.621 [0.566, 0.682]     &  0.598 [0.555, 0.647]    &  0.635 [0.600, 0.675]\\
    \cmidrule{2-4}          & \multicolumn{3}{c}{Cross-Validation for the AUC} \\
\cmidrule{2-4}    LASSO-U &  0.566 [0.514, 0.633]   &  0.628 [0.591, 0.674]   & 0.669 [0.637, 0.709] \\
    LASSO-M &  0.773 [0.738, 0.818]    &   0.653 [0.615, 0.700]    & 0.688 [0.654, 0.726]\\
    sg-LASSO-M &   0.818 [0.781, 0.847]    &   0.671 [0.635, 0.718]   &  0.702 [0.667, 0.737] \\
            \cmidrule{2-4}          & \multicolumn{3}{c}{Cross-Validation for the likelihood score} \\
\cmidrule{2-4}    LASSO-U &  0.572 [0.537, 0.625]  &  0.555 [0.520, 0.603]    & 0.622 [0.593, 0.663]\\
    LASSO-M &  0.787 [0.754, 0.831]   &   0.609 [0.579, 0.668]    & 0.701 [0.671, 0.739]\\
    sg-LASSO-M &  0.815 [0.779, 0.853]     &    0.657 [0.630, 0.710]    &  0.701 [0.677, 0.739] \\
\cmidrule{2-4}          & \multicolumn{3}{c}{Macro Data Augmented} \\
\cmidrule{2-4}    LASSO-U     &  0.573 [0.542, 0.611]     & 0.702 [0.677, 0.727]  & 0.678 [0.659, 0.701]\\
    LASSO-M     & 0.747 [0.728, 0.779]    & 0.670 [0.647, 0.703]   & 0.691 [0.671, 0.716]\\
    sg-LASSO-M    &   0.773 [0.750, 0.795]     & 0.707 [0.687, 0.738]   & 0.725 [0.706, 0.749]\\
    \cmidrule{2-4}          & \multicolumn{3}{c}{Oversampling with Financial Data} \\
\cmidrule{2-4}    LASSO-U     & 0.655 [0.602, 0.711]     & 0.636 [0.590, 0.675]  &  0.697 [0.668, 0.731]\\
    LASSO-M     & 0.704 [0.649, 0.753]     & 0.627 [0.590, 0.677] &  0.679 [0.640, 0.714]\\
    sg-LASSO-M    & 0.685 [0.637, 0.738]     & 0.615 [0.578, 0.664]  & 0.669 [0.634, 0.706]\\
            \cmidrule{2-4}          & \multicolumn{3}{c}{Data without censored firms satisfying $C_i < t$} \\
\cmidrule{2-4}    LASSO-U &  0.764 [0.726, 0.809]   &  0.619 [0.577, 0.665]   &  0.695 [0.667, 0.733]\\
    LASSO-M &  0.759 [0.726, 0.815]     &   0.624 [0.588, 0.669]   &  0.616 [0.575, 0.658]\\
    sg-LASSO-M &   0.802 [0.764, 0.845] &  0.625 [0.592, 0.676]    &  0.642 [0.610, 0.680] \\
    \bottomrule
    \end{tabular}%
    }
  \label{empirical AUC Augment10}%
\end{table}%

\tcr{We also evaluate how different covariate group structures influence the performance of sg-LASSO-MIDAS, with results reported in Tables \ref{alt group structures s=6} and \ref{alt group structures s=10} of Online Appendix \ref{additional}. The performance of sg-LASSO-MIDAS is generally robust to variations in group structure, consistently outperforming LASSO-UMIDAS and, under certain group specifications, surpassing LASSO-MIDAS more frequently than under the baseline grouping; see Online Appendix~\ref{additional} for a description of the group structures and further discussion.}

To further assess the performance difference, we conduct a pairwise comparison test, a widely used method for comparing two AUCs. Slightly modifying the approach in \citet{han, robin_proc_2011}, we specifically test whether the estimated AUC of sg-LASSO-MIDAS is no larger than \tcr{those of LASSO-MIDAS and LASSO-UMIDAS.}\footnote{We have $1000$ bootstrapped average AUCs for each method as described before. The p-value is calculated as the proportion of sg-LASSO-MIDAS's AUC values that are smaller than those of another method.} Table~\ref{pair-wise test} shows that the improvement of the sg-LASSO-MIDAS over the LASSO-UMIDAS is statistically significant at least at the $10\%$ significance level across all scenarios, with the largest gap observed when $s = 10$ and $t = 13$ years. \tcr{As for the comparison between sg-LASSO-MIDAS and LASSO-MIDAS, no significant difference is found when $s = 6$ years. However, when $s =10$ years, sg-LASSO-MIDAS significantly outperforms LASSO-MIDAS at the $10\%$ significance level for prediction horizons $ t = 13,13.5$ years.} We also conduct a pairwise comparison between the sg-LASSO-MIDAS applied to the dataset with and without censored firms satisfying $C_i < t$. For the scenarios where $s = 6$, $t = 8.5$ years, and $s = 10$, $t = 13$ years, sg-LASSO-MIDAS performs better on the full dataset than on the dataset without censored firms satisfying $C_i < t$, though the difference is not statistically significant. However, in other scenarios, including censoring significantly improves model performance, with results being statistically significant, at least at the $5\%$ significance level. These findings strongly support the advantages of using MIDAS weights, considering the group structure of covariates, and incorporating the censoring information in practice. Overall, the empirical results highlight the superiority of the sg-LASSO-MIDAS across different scenarios. 
\begin{table}[htbp]
  \centering
  \caption{Pairwise difference test: Null hypothesis $H_0$: estimated AUC of the first prediction approach is no larger than that of the second. We use $*$ and $**$ to indicate $10\%$ and $5\%$ significance, respectively.}
  \medskip
  \scalebox{1}{\begin{tabular}{ccccccc}
\toprule  \multicolumn{3}{c}{$s = 6$ years} &       & \multicolumn{3}{c}{$s = 10$ years} \\
\cmidrule{1-3}\cmidrule{5-7}    $t = 8$ years & $t = 8.5$ years & $t = 9$ years &       & $t = 13$ years & $t = 13.5$ years & $t = 14$ years \\
\cmidrule{1-3}\cmidrule{5-7}    \multicolumn{3}{c}{sg-LASSO-M $vs.$ LASSO-U} &       & \multicolumn{3}{c}{sg-LASSO-M $vs.$ LASSO-U} \\
    $0.098^{*}$ & $0.000^{**}$ & $0.000^{**}$     &   &    $0.000^{**}$ & $0.010^{**}$ & $0.065^{*}$\\
\cmidrule{1-3}\cmidrule{5-7}    \multicolumn{3}{c}{\tcr{sg-LASSO-M $vs.$ LASSO-M}} &       & \multicolumn{3}{c}{\tcr{sg-LASSO-M $vs.$ LASSO-M}} \\
    \tcr{$0.757$} & \tcr{$0.462$} & \tcr{$0.588$}     &   &    \tcr{$0.000^{**}$} & \tcr{$0.093^{*}$} & \tcr{$0.111$}\\
\cmidrule{1-3}\cmidrule{5-7}    \multicolumn{3}{c}{\makecell{sg-LASSO-M with $vs.$ without \\ censored firms satisfying $C_i < t$}} &       & \multicolumn{3}{c}{\makecell{sg-LASSO-M with $vs.$ without \\ censored firms satisfying $C_i < t$}} \\
     $0.001^{**}$ & $0.495$ & $0.021^{**}$     &   &    $0.238$ & $0.022^{**}$ & $0.001^{**}$ \\
    \bottomrule
    \end{tabular}}%
  \label{pair-wise test}%
\end{table}%

To better understand which covariates are useful for prediction, we examine the financial types selected by the sg-LASSO-MIDAS, as illustrated in Figure \ref{selected6} of Online Appendix \ref{additional}. Financial variables related to the $Z$-score appear to play a pivotal role across all prediction horizons in forecasting firm distress. This observation aligns with prior research \citep{altman1968financial}, as the $Z$-score model has been widely employed in both academic studies and industry to predict corporate defaults \citep{altman2017financial}. Further details on the selected financial covariates are presented in Figures \ref{sel 6} and \ref{sel 10} of Online Appendix \ref{additional}.

{\color{black}
Furthermore, following Section~\ref{inference theory}, we separately test each financial covariate for statistical significance at the $5\%$ level in distress prediction.
Specifically, for each covariate, we conduct a Wald test on its associated vector of MIDAS coefficients, estimated using the de-sparsified sg-LASSO-MIDAS on the full empirical dataset, as in Section \ref{inf sim}. All the MIDAS settings remain the same as those in Section \ref{emp set}. For the regularization parameter $\alpha$ in the sparse-group LASSO, we consider the set $\{0, 0.1, 0.3, 0.5, 0.7, 0.9, 1\}$. Tuning parameters for sg-LASSO-MIDAS and the nodewise LASSO are chosen via $5$-fold stratified cross-validation to maximize the validation likelihood. As shown in Figures~\ref{sig 6} and~\ref{sig 10} of Online Appendix~\ref{additional}, statistically significant covariates are marked in blue for each prediction horizon. Consistent with the literature \citep{altman1968financial, altman2017financial}, we find that the five Z-score covariates (denoted by $X_1,\ldots,X_5$) play prominent roles, while several other financial covariates also have a notable influence. When $s = 6$ years, $X_2$ (\emph{Retained Earnings\,/\,Total Assets}) and $X_4$ (\emph{Market Value of Equity\,/\,Total Liabilities}) are statistically significant across all prediction horizons, whereas $X_1$ (\emph{Working Capital\,/\,Total Assets}) is significant at only one prediction horizon. Beyond the Z-score covariates, several other covariates are also significant. For example, \emph{Earnings Before Interest, Taxes, Depreciation, and Amortization\,/\,Total Liabilities} is consistently significant across all horizons. For $s = 10$ years, $X_4$ remains significant across all horizons, while $X_2$ and $X_3$ (\emph{Earnings Before Interest and Taxes\,/\,Total Assets}) are significant at only one prediction horizon. Other financial covariates, such as \emph{Net Profit on Assets} and the \emph{Interest-Bearing Debt Ratio}, are also influential at most prediction horizons.
}

\section{Conclusion}

This paper presents a novel approach to corporate survival analysis, addressing the challenges of high-dimensional censored data sampled at both consistent and mixed frequencies. 

The first major contribution is the introduction of the sparse-group LASSO estimator for high-dimensional censored MIDAS logistic regressions, \tcr{along with a corresponding de-sparsified estimator for inference.} The proposed framework effectively accommodates hierarchical data structures and facilitates variable selection both within and across groups, unifying classical LASSO and group LASSO.

Secondly, we develop the theory for logistic regression with high-dimensional censored data sampled at different frequencies. To extend the existing literature with assumptions on fixed design or isotropic conditions of the sub-Gaussian covariates, we develop the non-asymptotic properties of the proposed sparse-group LASSO estimator for censored, heavy-tailed data. 
%Specifically, we derive probability inequalities for the empirical process under these conditions and establish a connection between population and sample effective sparsity. This connection plays a critical role in the quadratic margin condition, which is key to proving the convergence rate of LASSO-type estimators.
The framework is readily extendable to generalized linear models with structured sparsity estimators. We also explicitly consider the approximation error, which, to the best of our knowledge, is a novel contribution in the context of logistic regression. This error may arise from various sources, including approximations in the MIDAS weight function and/or deviations from exact sparsity.  \tcr{In addition to the penalized estimator, we analyze the corresponding de-sparsified sparse-group LASSO estimator, showing that it is asymptotically unbiased while properly accounting for censoring, MIDAS approximation error, and heavy-tailed covariates, and that its asymptotic variance is affected by censoring, which is not previously studied in the literature.}

A key practical contribution is an application to a comprehensive dataset of publicly traded Chinese manufacturing firms, integrating survival and censoring time information alongside numerous high-frequency financial covariates. Empirical findings show that sg-LASSO-MIDAS consistently outperforms unstructured LASSO methods across various scenarios. Notably, the inclusion of censoring information significantly enhances prediction performance, providing valuable insights for predicting firm distress under real-world conditions. \tcr{Beyond prediction, the de-sparsified sg-LASSO-MIDAS further enables valid statistical inference, allowing us to identify financial covariates that are statistically significant predictors of firm distress and thereby providing practical guidance for decision makers.}

Overall, the methodologies developed in this paper have broad applicability beyond corporate distress prediction. The integration of logistic models, MIDAS, and regularized machine learning techniques holds promise for applications in areas such as disease diagnosis, solvency evaluation, fraud detection, customer churn analysis, and labor market studies.

\section*{Acknowledgments} The authors thank the Associate Editor and two anonymous referees, Christophe Croux, Geert Dhaene, Daniel Gutknecht, Onno Kleen and Yoshimasa Uematsu for helpful comments as well as seminar participants at Tilburg University and  Copenhagen Business School, and conference attendants at the International Association for Applied Econometrics 2025, the Leuven Statistics Days 2025, the International Conference on Econometrics and Statistics 2025, and Financial Econometrics Meets Machine Learning 2025. The authors are listed in order of contribution.

		{
		\section*{Funding sources}
Wei Miao gratefully acknowledges financial support from the China Scholarship Council
through the grant 202306130036. Jad Beyhum gratefully acknowledges financial support from the Research Fund KU Leuven through the grant STG/23/014. Jonas Striaukas gratefully acknowledges the financial support from the European Commission, MSCA-2022-PF Individual Fellowship, Project 101103508. Project Acronym: MACROML. Ingrid Van Keilegom acknowledges support from the FWO (Research Foundation Flanders) through the projects G0I3422N and G047524N.
	}

    \section*{Supplementary material} \label{supp m}
    
    \paragraph{Online Appendix:} \tcr{Proof of Theorem \ref{the}, Lemma \ref{nodewise}, and Theorem \ref{inf the}, empirical dataset pre-processing algorithm, additional empirical and simulation results, and details on the empirical dataset (.pdf file).}
    \paragraph{R package and Replication files:} \tcr{R package ‘Survivalml’ that implements our method, and all simulation and application code are available on \url{https://github.com/Wei-M-Wei/Survivalml}.}

\bibliographystyle{apalike}
\bibliography{ref_new}

\appendix
\begin{appendices}

\section{On Section \ref{sec est}}  \label{estimator}

\begin{lemma}\label{lm.cens1} If 
$$\mathbbm{E}\left[\left. \frac{\exp(\boldsymbol{Z}^{\top} \boldsymbol{\theta}_0)}{(1+ \exp(\boldsymbol{Z}^{\top} \boldsymbol{\theta}_0))^2} \boldsymbol{Z}\boldsymbol{Z}^{\top}\right| T \geq s \right]$$ is positive definite,
    we have $$\boldsymbol{\theta}_0 = \argmax_{\boldsymbol{\theta} \in \mathbb{R}^{K_z}} \mathbbm{E}\left[\left.\mathbbm{1}\{T \leq t\} \boldsymbol{Z}^{\top} \boldsymbol{\theta} - \log\left(1 + \exp(\boldsymbol{Z}^{\top} \boldsymbol{\theta})\right)\right| T \geq s \right].$$
\end{lemma}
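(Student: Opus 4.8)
The plan is to characterize $\boldsymbol{\theta}_0$ as the unique stationary point of a strictly concave population objective. Set
$$M(\boldsymbol{\theta}) = \mathbbm{E}\Big[\mathbbm{1}\{T\le t\}\,\boldsymbol{Z}^\top\boldsymbol{\theta} - \log\big(1+\exp(\boldsymbol{Z}^\top\boldsymbol{\theta})\big) \,\Big|\, T \ge s\Big].$$
For each realization of $(T,\boldsymbol{Z})$ the integrand is concave in $\boldsymbol{\theta}$, since $\boldsymbol{\theta}\mapsto\log(1+\exp(\boldsymbol{Z}^\top\boldsymbol{\theta}))$ is convex and the remaining term is linear; taking the conditional expectation preserves concavity, so $M$ is concave. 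It therefore suffices to show that $\nabla M$ vanishes at $\boldsymbol{\theta}_0$ and that $M$ is strictly concave there.

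First I would compute the gradient. Assuming we may differentiate under the expectation (justified below),
$$\nabla M(\boldsymbol{\theta}) = \mathbbm{E}\Big[\Big(\mathbbm{1}\{T\le t\} - \tfrac{\exp(\boldsymbol{Z}^\top\boldsymbol{\theta})}{1+\exp(\boldsymbol{Z}^\top\boldsymbol{\theta})}\Big)\boldsymbol{Z} \,\Big|\, T\ge s\Big].$$
Evaluating at $\boldsymbol{\theta}_0$ and conditioning further on $\boldsymbol{Z}$ via the tower property, the logistic model \eqref{distributin} gives $\mathbbm{E}[\mathbbm{1}\{T\le t\}\mid\boldsymbol{Z},T\ge s] = \exp(\boldsymbol{Z}^\top\boldsymbol{\theta}_0)/(1+\exp(\boldsymbol{Z}^\top\boldsymbol{\theta}_0))$, so the inner conditional expectation of the bracket is identically zero and hence $\nabla M(\boldsymbol{\theta}_0)=\boldsymbol{0}$.

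Next I would address uniqueness through the Hessian. Differentiating a second time yields
$$\nabla^2 M(\boldsymbol{\theta}) = -\,\mathbbm{E}\Big[\tfrac{\exp(\boldsymbol{Z}^\top\boldsymbol{\theta})}{(1+\exp(\boldsymbol{Z}^\top\boldsymbol{\theta}))^2}\boldsymbol{Z}\boldsymbol{Z}^\top\,\Big|\,T\ge s\Big],$$
which at $\boldsymbol{\theta}_0$ is exactly the negative of the matrix assumed positive definite in the statement; thus $M$ is strictly concave at $\boldsymbol{\theta}_0$. A concave function whose gradient vanishes at a point attains its global maximum there, and the strict local concavity supplied by the positive-definiteness hypothesis upgrades this to a \emph{unique} maximizer: two distinct maximizers of a concave function would force $M$ to be constant along the segment joining them, contradicting the strict local maximum at $\boldsymbol{\theta}_0$. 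Combining these facts identifies $\boldsymbol{\theta}_0 = \argmax_{\boldsymbol{\theta}} M(\boldsymbol{\theta})$.

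The main obstacle I anticipate is technical rather than conceptual: justifying the interchange of differentiation and expectation. This requires integrable dominating functions for the first and second derivatives of the integrand. Because the logistic link and its derivatives are uniformly bounded (the factors $\exp(u)/(1+\exp(u))$ and $\exp(u)/(1+\exp(u))^2$ lie in $[0,1]$), the gradient is dominated entrywise by $|\boldsymbol{Z}|$ and the Hessian by $|Z_jZ_l|$, so the finite-variance assumption on $\boldsymbol{Z}$ stated in Section \ref{initial model} supplies the needed domination and legitimizes differentiating under the integral sign. The only other delicate point is the uniqueness argument, where one must invoke concavity globally while using positive definiteness only locally at $\boldsymbol{\theta}_0$; I would handle this with the segment/constancy argument sketched above rather than assuming global strict concavity.
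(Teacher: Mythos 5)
Your proposal is correct and follows essentially the same route as the paper's proof: both show that the gradient of the conditional population objective vanishes at $\boldsymbol{\theta}_0$ via the law of iterated expectations and the logistic model, and both invoke the (assumed positive definite) Fisher-information-type matrix to make the Hessian negative definite and conclude that $\boldsymbol{\theta}_0$ is the global maximizer. Your added care in justifying differentiation under the expectation (using the boundedness of the logistic derivatives and the finite variance of $\boldsymbol{Z}$) and in making the uniqueness argument explicit are refinements the paper leaves implicit, not a different method.
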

\begin{proof} Let us use the notation
\begin{align*}L(\boldsymbol{\theta}) &= \mathbbm{E}\left[\left.\mathbbm{1}\{T \leq t\} \boldsymbol{Z}^{\top} \boldsymbol{\theta} - \log\left(1 + \exp(\boldsymbol{Z}^{\top} \boldsymbol{\theta})\right)\right| T \geq s \right].
\end{align*}
By the law of iterated expectations, it holds that 
\begin{align*}L(\boldsymbol{\theta}) &= \mathbbm{E}\left[\left.\mathbbm{E}\left[ \mathbbm{1}\{T \leq t\}|\boldsymbol{Z},T \geq s \right] \boldsymbol{Z}^{\top} \boldsymbol{\theta} - \log\left(1 + \exp(\boldsymbol{Z}^{\top} \boldsymbol{\theta})\right)\right| T \geq s \right]\\
&= \mathbbm{E}\left[\left.\frac{\exp(\boldsymbol{Z}^{\top} \boldsymbol{\theta}_0)}{1+ \exp(\boldsymbol{Z}^{\top} \boldsymbol{\theta}_0)} \boldsymbol{Z}^{\top} \boldsymbol{\theta} - \log\left(1 + \exp(\boldsymbol{Z}^{\top} \boldsymbol{\theta})\right)\right| T \geq s \right].
\end{align*}
Notice that the derivative of $L(\cdot)$ is $$\dot{L}(\boldsymbol{\theta})=\mathbbm{E}\left[\left.\frac{\exp(\boldsymbol{Z}^{\top} \boldsymbol{\theta}_0)}{1+ \exp(\boldsymbol{Z}^{\top} \boldsymbol{\theta}_0)} \boldsymbol{Z}^{\top}- \frac{\exp(\boldsymbol{Z}^{\top} \boldsymbol{\theta})}{1+ \exp(\boldsymbol{Z}^{\top} \boldsymbol{\theta})} \boldsymbol{Z}^{\top}\right| T \geq s \right].$$
Therefore, $\dot{L}(\boldsymbol{\theta}_0)=0.$ Moreover, the second derivative of $L(\cdot)$ at $\boldsymbol{\theta}_0$ is $$\ddot{L}(\boldsymbol{\theta}_0)=-\mathbbm{E}\left[\left. \frac{\exp(\boldsymbol{Z}^{\top} \boldsymbol{\theta}_0)}{(1+ \exp(\boldsymbol{Z}^{\top} \boldsymbol{\theta}_0))^2} \boldsymbol{Z}\boldsymbol{Z}^{\top}\right| T \geq s \right]$$ is negative definite.
Hence, $\boldsymbol{\theta}_0$  is indeed the global maximum of $L(\cdot).$
\end{proof}

\begin{lemma}\label{lm.cens2}
    Under Assumptions \ref{as1} and \ref{as2}, 
 if 
$$\mathbbm{E}\left[\left. \frac{\exp(\boldsymbol{Z}^{\top} \boldsymbol{\theta}_0)}{(1+ \exp(\boldsymbol{Z}^{\top} \boldsymbol{\theta}_0))^2} \boldsymbol{Z}\boldsymbol{Z}^{\top}\right| T \geq s \right]$$ is positive definite, we have

\begin{equation} \boldsymbol{\theta}_0 = \argmax_{\boldsymbol{\theta} \in \mathbb{R}^{K_z}} \mathbbm{E}\left[\left. \frac{\delta(t) \mathbbm{1}\{\widetilde{T} \leq t\}}{H(t \wedge \widetilde{T})} \boldsymbol{Z}^{\top} \boldsymbol{\theta} - \log \left(1 + \exp(\boldsymbol{Z}^{\top} \boldsymbol{\theta})\right)\right|\widetilde{T} \geq s\right].\end{equation}
\end{lemma}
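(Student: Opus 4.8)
The plan is to show that the objective function in the censored characterization is \emph{identical} (as a function of $\boldsymbol{\theta}$) to the one appearing in Lemma~\ref{lm.cens1}; since the latter is maximized at $\boldsymbol{\theta}_0$ under the stated full-rank condition, the same maximizer will then work for the former. I would split the objective into its linear part, carrying the weight $\delta(t)\mathbbm{1}\{\widetilde{T}\le t\}/H(t\wedge\widetilde{T})$, and the concave part $-\log(1+\exp(\boldsymbol{Z}^{\top}\boldsymbol{\theta}))$, and handle the two separately.

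The log-term is the easy half. Since $\widetilde{T}=T\wedge C$, the conditioning event factorizes as $\{\widetilde{T}\ge s\}=\{T\ge s\}\cap\{C\ge s\}$, and by Assumption~\ref{as1} the censoring time $C$ is independent of $(T,\boldsymbol{Z})$. Hence for any integrable function of $\boldsymbol{Z}$ alone, conditioning on $\{\widetilde{T}\ge s\}$ coincides with conditioning on $\{T\ge s\}$; in particular $\mathbbm{E}[\log(1+\exp(\boldsymbol{Z}^{\top}\boldsymbol{\theta}))\mid\widetilde{T}\ge s]=\mathbbm{E}[\log(1+\exp(\boldsymbol{Z}^{\top}\boldsymbol{\theta}))\mid T\ge s]$, so the two log-terms agree for every $\boldsymbol{\theta}$.

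The linear term is the crux. First I would simplify the weighted outcome: a short case analysis on the sign of $T-C$ shows $\delta(t)\mathbbm{1}\{\widetilde{T}\le t\}=\mathbbm{1}\{T\le t\}\,\delta=\mathbbm{1}\{T\le t,\,T\le C\}$, and on this event $\widetilde{T}=T$, so $t\wedge\widetilde{T}=T$ and the weight reduces to $\mathbbm{1}\{T\le t,\,T\le C\}/H(T)$. Writing the conditional expectation as a ratio over $P(\widetilde{T}\ge s)=P(T\ge s)P(C\ge s)$ and noting $\{T\ge s\}\cap\{C\ge T\}\subseteq\{C\ge s\}$, I would condition on $(T,\boldsymbol{Z})$ and integrate out $C$ using independence: $\mathbbm{E}[\mathbbm{1}\{C\ge T\}\mid T,\boldsymbol{Z}]=P(C\ge T)=H(T)\,P(C\ge s)$ for $T\ge s$, where the last equality is exactly the definition $H(u)=P(C\ge u)/P(C\ge s)$. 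The factor $H(T)$ then cancels the denominator of the weight, and the two factors $P(C\ge s)$ cancel between numerator and ratio-denominator, leaving $\mathbbm{E}[\mathbbm{1}\{s\le T\le t\}\,\boldsymbol{Z}^{\top}\boldsymbol{\theta}\mid T\ge s]=\mathbbm{E}[\mathbbm{1}\{T\le t\}\,\boldsymbol{Z}^{\top}\boldsymbol{\theta}\mid T\ge s]$, which is precisely the linear part of the objective in Lemma~\ref{lm.cens1}.

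Combining the two halves, the censored objective equals the objective of Lemma~\ref{lm.cens1} for every $\boldsymbol{\theta}$, so they share the maximizer $\boldsymbol{\theta}_0$, and the full-rank hypothesis delivers the conclusion via Lemma~\ref{lm.cens1}. Assumption~\ref{as2} enters only to ensure the weights $1/H(t\wedge\widetilde{T})$ are finite (as recorded in the footnote following \eqref{equivalent censor}), guaranteeing that all expectations manipulated above are well defined. The main obstacle I anticipate is bookkeeping rather than conceptual depth: carefully verifying the weight identity $\delta(t)\mathbbm{1}\{\widetilde{T}\le t\}=\mathbbm{1}\{T\le t\}\delta$ (minding strict versus non-strict inequalities at ties) and tracking exactly which conditioning events and normalizing constants cancel, since it is the cancellation of the inverse-probability factor $H(T)$ against $P(C\ge T)$ that makes the reweighting unbiased.
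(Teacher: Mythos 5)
Your proposal is correct and follows essentially the same route as the paper's own proof: both reduce the censored objective to the uncensored one of Lemma \ref{lm.cens1} by splitting it into the linear and log parts, disposing of the log part via independence of $C$ from $(T,\boldsymbol{Z})$, and handling the linear part via iterated expectations, the identity $\mathbbm{E}\left[\mathbbm{1}\{C \ge T\}\mid T,\boldsymbol{Z}\right] = H(T)\,P(C\ge s)$ for $T\ge s$, and the resulting cancellation of the inverse-probability weight. The only caveat—the null event $\{C = t,\ T > t\}$ in the weight identity $\delta(t)\mathbbm{1}\{\widetilde{T}\le t\} = \delta\,\mathbbm{1}\{T\le t\}$—is shared by the paper's argument and is one you already flag explicitly.
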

\begin{proof} Remark that 
\begin{equation}\label{2112241}
\begin{aligned}
    &\mathbbm{E}\left[\left. \frac{\delta(t) \mathbbm{1}\{\widetilde{T} \leq t\}}{H(t \wedge \widetilde{T})} \boldsymbol{Z}^{\top} \boldsymbol{\theta} - \log \left(1 + \exp(\boldsymbol{Z}^{\top} \boldsymbol{\theta})\right)\right|\widetilde{T} \geq s\right]\\
    &= \mathbbm{E}\left[\mathbbm{1}\{\widetilde{T} \ge s\}\left\{ \frac{\delta(t) \mathbbm{1}\{\widetilde{T} \leq t\}}{H(t \wedge \widetilde{T})} \boldsymbol{Z}^{\top} \boldsymbol{\theta} - \log \left(1 + \exp(\boldsymbol{Z}^{\top} \boldsymbol{\theta})\right)\right\}\right]/P(\widetilde{T} \geq s)\\
    &= \frac{1}{P(\widetilde{T} \geq s)}\mathbbm{E}\left[\mathbbm{1}\{\widetilde{T} \ge s\}\frac{\delta(t) \mathbbm{1}\{\widetilde{T} \leq t\}}{H(t \wedge \widetilde{T})} \boldsymbol{Z}^{\top} \boldsymbol{\theta} \right]\\
    &\quad - \frac{1}{P(\widetilde{T} \geq s)}\mathbbm{E}\left[ \mathbbm{1}\{\widetilde{T} \ge s\}\log \left(1 + \exp(\boldsymbol{Z}^{\top} \boldsymbol{\theta})\right)\right].
\end{aligned}
\end{equation}
We have 
\begin{equation}\label{2112242}
\begin{aligned}
&\frac{1}{P(\widetilde{T} \geq s)}\mathbbm{E}\left[ \mathbbm{1}\{\widetilde{T} \ge s\}\log \left(1 + \exp(\boldsymbol{Z}^{\top} \boldsymbol{\theta})\right)\right]\\
&= \frac{1}{P(T \geq s,C\ge s)}\mathbbm{E}\left[ \mathbbm{1}\{T \ge s\}\mathbbm{1}\{C \ge s\}\log \left(1 + \exp(\boldsymbol{Z}^{\top} \boldsymbol{\theta})\right)\right]\\
&=  \frac{1}{P(T \geq s)P(C\ge s)}\mathbbm{E}[\mathbbm{1}\{C \ge s\}]\mathbbm{E}\left[ \mathbbm{1}\{T \ge s\}\log \left(1 + \exp(\boldsymbol{Z}^{\top} \boldsymbol{\theta})\right)\right]\\
&= \mathbbm{E}\left[ \left. \mathbbm{1}\{T \ge s\}\log \left(1 + \exp(\boldsymbol{Z}^{\top} \boldsymbol{\theta})\right)\right|T\ge s\right],
\end{aligned}
\end{equation}where, in the second equality, we used Assumption \ref{as1}.
Moreover, it holds that
\begin{equation}\label{2112243}
\begin{aligned}
     &\frac{1}{P(\widetilde{T} \geq s)}\mathbbm{E}\left[\mathbbm{1}\{\widetilde{T} \ge s\}\frac{\delta(t) \mathbbm{1}\{\widetilde{T} \leq t\}}{H(t \wedge \widetilde{T})} \boldsymbol{Z}^{\top} \boldsymbol{\theta} \right]\\
     &= \frac{1}{P(T \geq s,C\ge s)}\mathbbm{E}\left[\mathbbm{1}\{T \ge s\}\frac{\mathbbm{1}\{C \ge T\wedge t\}\mathbbm{1}\{T \leq t\}}{H(T )} \boldsymbol{Z}^{\top} \boldsymbol{\theta} \right]\\
     &= \frac{1}{P(T \geq s)P(C\ge s)}\mathbbm{E}\left[\mathbbm{E}\left[\left.\mathbbm{1}\{C \ge T\}\right|\boldsymbol{Z},T\right]\mathbbm{1}\{T \ge s\}\frac{\mathbbm{1}\{T \leq t\}}{H(T )}\boldsymbol{Z}^{\top} \boldsymbol{\theta} \right]\\
          &= \frac{1}{P(T \geq s)}\mathbbm{E}\left[\mathbbm{1}\{T \ge s\}\mathbbm{1}\{T \leq t\}\boldsymbol{Z}^{\top} \boldsymbol{\theta} \right]\\
          &= \mathbbm{E}\left[\mathbbm{1}\{T \leq t\}\boldsymbol{Z}^{\top} \boldsymbol{\theta}|T \ge s \right],
\end{aligned}
\end{equation}
where, in the first equality, we used that when $\widetilde{T}\le t$ and $C\ge T\wedge t$, then $\widetilde{T}=T\le t$, in the second equality, we leveraged the law of iterated expectations and Assumption \ref{as1} and, in the third equality, we used that for $u\ge s$, $H(u)=P(C\ge u|C\ge s)=P(C\ge u)/P(C\ge s)$ and Assumption \ref{as1} to obtain that $\mathbbm{E}\left[\left.\mathbbm{1}\{C \ge T\}\right|\boldsymbol{Z},T\right]/H(T)=P(C\ge s).$
Combining equations \eqref{2112241}, \eqref{2112242} and \eqref{2112243} and Lemma \ref{lm.cens1}, we obtain the result.
\end{proof}

\section{Estimation theory} \label{proof of the}

This section contains elements allowing us to prove Theorem \ref{the}. In Section \ref{norm}, we present a Lemma about the properties of the sparse-group LASSO norm. In Section \ref{definition}, we define different types of effective sparsity. Section \ref{one p m} introduces the one point margin condition for the conditional population risk function. Section \ref{fuk} presents technical Lemmas which are related to concentration inequalities.
The subsequent two sections \ref{process} and \ref{effective} are dedicated to addressing two key questions:
\begin{itemize}
    \item[(i)] In Section \ref{process}, we establish probability inequalities for the empirical process 
    $$\sup_{\boldsymbol{\beta}\in\mathbb{R}^p: \Omega\left(\boldsymbol{\beta}-\boldsymbol{\beta}_0\right) \leq M}\bigg|\left[R_N(\boldsymbol{\beta})-R(\boldsymbol{\beta}|\boldsymbol{X})\right]-\left[R_N(\boldsymbol{\beta}_0)-R\left(\boldsymbol{\beta}_0|\boldsymbol{X}\right)\right]\bigg|,$$
    where $M > 0 $ is a constant, $R_N(\cdot)$ is the empirical risk function and $R(\cdot|\boldsymbol{X})$ is the conditional population risk function. The presence of censoring and approximation errors further complicates this empirical process.
    \item[(ii)] Section \ref{effective}: We show that the sample effective sparsity can approach the population effective sparsity. Several foundational results were established by \citet{10.1214/15-EJS983}, and these findings were later extended to encompass more general sparsity-inducing norms, as discussed in \citet{van_de_geer_estimation_2016}. It is important to note that these works primarily consider the fixed design or isotropic condition of the covariates. In contrast, our study advances these results by accommodating scenarios involving heavy-tailed data.
\end{itemize}
Notably, all the results we obtain can be readily extended to generalized linear models with structured sparsity estimators. Finally, Section \ref{subsec.proof} contains the proof of Theorem \ref{the}, which builds on the previous sections. 

\paragraph{Notation.}
For a $N \times p$ matrix $\mathbf{A} = (a_{i,j})$, let $\operatorname{vec}(\mathbf{A}) \in \mathbb{R}^{N p}$ be its vectorization consisting of all elements and we denote its entry wise max norm $|\mathbf{A}|_{\infty}=|\operatorname{vec}(\mathbf{A})|_{\infty} = \max _{i, j}\left|a_{i ,j}\right|$.

\subsection{Sparse-group LASSO norm}\label{norm}
We consider the sparse-group LASSO penalty $\Omega(\cdot)$. Note that $\Omega(\cdot)$ can be decomposed as a sum of two seminorms $\Omega(\boldsymbol{b})=\Omega^+(\boldsymbol{b})+\Omega^-(\boldsymbol{b}),$ for all $ \boldsymbol{b} \in \mathbb{R}^{p}$ with
$$
\Omega^+(\boldsymbol{b})=\alpha\left|\boldsymbol{b}_{S_{\boldsymbol{\beta}_0}}\right|_1+(1-\alpha) \sum_{G \in \mathcal{G}_{\boldsymbol{\beta}_0}}\left|\boldsymbol{b}_G\right|_2 , \quad \Omega^-(\boldsymbol{b})=\alpha\left|\boldsymbol{b}_{S_{\boldsymbol{\beta}_0}^{c}}\right|_1 + (1-\alpha) \sum_{G \in \mathcal{G}_{\boldsymbol{\beta}_0}^{c}}\left|\boldsymbol{b}_{G}\right|_2.
$$

\begin{lemma}\label{cor1}
Denote by $\Omega_*(\cdot)$ the dual norm of $\Omega(\cdot)$, it satisfies
\begin{enumerate}
\item For any $x, y \in \mathbb{R}^p$, $\left|x^{\top} y\right| \leq \Omega_*(x) \Omega(y) $. 

\item For any $z \in \mathbb{R}^p$, we have $\Omega_*(z) \leq \alpha|z|_1^*+(1-\alpha)|z|_{2,1}^*,$ where $|\cdot|_1^*$ is the dual norm of $|\cdot|_1$ and $|\cdot|_{2,1}^*$ is the dual norm of $|\cdot|_{2,1}$. 
Furthermore, we also have $\Omega_*(z) \leq \sqrt{G^*}|z|_\infty.$ 

\item For any $ X \in \mathbb{R}^{N \times p}$ and $ z \in \mathbb{R}^p$, we have $\Omega_*(Xz) \leq G^* \left|X\right|_\infty \Omega(z)$.

\end{enumerate}
\end{lemma}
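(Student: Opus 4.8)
The plan is to prove the three items in order, using Part 1 as the abstract Hölder inequality, Part 2 as the technical heart (a dual-norm computation for the sparse-group penalty), and Part 3 as a routine chaining of Part 2 with per-group Cauchy--Schwarz. Part 1 is simply the defining property of a dual norm: recalling $\Omega_*(x)=\sup_{\Omega(y)\le 1}x^\top y$, I would apply this to $y/\Omega(y)$ when $y\neq 0$ (the case $y=0$ is trivial) and invoke positive homogeneity to get $|x^\top y|\le \Omega_*(x)\Omega(y)$, handling the sign by replacing $y$ with $-y$.

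The delicate step is the first inequality of Part 2. Write $a=|z|_1^*=|z|_\infty$ and $b=|z|_{2,1}^*=\max_{G}|z_G|_2$. The naive route---bounding $z^\top y$ separately by $a|y|_1$ and by $b\|y\|_{2,1}$ and then taking a convex combination---only yields $\Omega_*(z)\le\max(a,b)$, because it treats $|y|_1$ and $\|y\|_{2,1}$ as if they were independent. To recover the sharper weighted-sum bound one must instead dualize on the $z$-side: for \emph{any} split $z=u+v$ and any $y$ with $\Omega(y)\le 1$,
\[
z^\top y = u^\top y + v^\top y \le |u|_\infty\,|y|_1 + |v|_{2,1}^*\,\|y\|_{2,1}\le \max\!\left(\frac{|u|_\infty}{\alpha},\ \frac{|v|_{2,1}^*}{1-\alpha}\right)\Omega(y)\le \max\!\left(\frac{|u|_\infty}{\alpha},\ \frac{|v|_{2,1}^*}{1-\alpha}\right).
\]
Taking $u=\lambda z$, $v=(1-\lambda)z$ and choosing $\lambda\in[0,1]$ to balance the two ratios minimizes the right-hand side and gives $\Omega_*(z)\le \frac{ab}{\alpha b+(1-\alpha)a}$. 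This is even stronger than the claimed bound, since the elementary inequality $(\alpha a+(1-\alpha)b)(\alpha b+(1-\alpha)a)\ge ab$ (which follows from $a^2+b^2\ge 2ab$ after expanding) yields $\frac{ab}{\alpha b+(1-\alpha)a}\le \alpha a+(1-\alpha)b=\alpha|z|_1^*+(1-\alpha)|z|_{2,1}^*$. The second bound of Part 2 then follows from the first: each group term satisfies $|z_G|_2\le \sqrt{|G|}\,|z|_\infty\le \sqrt{G^*}\,|z|_\infty$, so $b\le \sqrt{G^*}\,|z|_\infty$, and since $\alpha+(1-\alpha)\sqrt{G^*}\le \sqrt{G^*}$ (because $\sqrt{G^*}\ge 1$), we obtain $\Omega_*(z)\le \sqrt{G^*}\,|z|_\infty$.

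For Part 3 I would chain the second bound of Part 2 (applied on the target space) with elementary per-group estimates. Applying it to $Xz$ gives $\Omega_*(Xz)\le \sqrt{G^*}\,|Xz|_\infty$. Next, for each coordinate $i$, I split the sum over groups and apply Cauchy--Schwarz within each group,
\[
|(Xz)_i| = \Big|\sum_{G\in\mathcal{G}} X_{i,G}^\top z_G\Big| \le \sum_{G\in\mathcal{G}} |X_{i,G}|_2\,|z_G|_2 \le \sqrt{G^*}\,|X|_\infty \sum_{G\in\mathcal{G}}|z_G|_2 = \sqrt{G^*}\,|X|_\infty\,\|z\|_{2,1},
\]
using $|X_{i,G}|_2\le\sqrt{|G|}\,|X|_\infty\le\sqrt{G^*}\,|X|_\infty$. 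Finally, since $|z_G|_2\le|z_G|_1$ forces $\|z\|_{2,1}\le|z|_1$ and hence $\Omega(z)=\alpha|z|_1+(1-\alpha)\|z\|_{2,1}\ge \|z\|_{2,1}$, combining gives $\Omega_*(Xz)\le \sqrt{G^*}\cdot\sqrt{G^*}\,|X|_\infty\,\|z\|_{2,1}=G^*|X|_\infty\,\|z\|_{2,1}\le G^*|X|_\infty\,\Omega(z)$.

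I expect the only genuinely non-routine point to be the first bound of Part 2: the obstacle is precisely that the obvious two-sided Hölder argument over-relaxes to $\max(a,b)$, and obtaining the tighter $\alpha$-weighted sum requires decomposing the \emph{dual} vector $z$ (equivalently, recognizing the dual unit ball of $\Omega$ as the Minkowski sum $\alpha B_{|\cdot|_\infty}+(1-\alpha)B_{\,|\cdot|_{2,1}^*}$). Everything else is bookkeeping with $|\cdot|_2\le\sqrt{G^*}\,|\cdot|_\infty$ and $\|\cdot\|_{2,1}\le|\cdot|_1$. The one detail worth stating explicitly in the write-up is that applying the Part 2 bound to $Xz$ uses the group structure on the target space with maximal group size controlled by $G^*$, so the inequalities line up dimensionally.
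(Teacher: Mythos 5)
Your proof is correct, but it reaches the key inequality (the first bound of Part 2) by a genuinely different route than the paper. The paper's argument is a one-line convexity trick: since $x \mapsto 1/x$ is convex on $(0,\infty)$, for $b \neq 0$ one has $\Omega(b)^{-1} = \left(\alpha|b|_1 + (1-\alpha)|b|_{2,1}\right)^{-1} \leq \alpha |b|_1^{-1} + (1-\alpha)|b|_{2,1}^{-1}$, hence $\Omega_*(z) = \sup_{b \neq 0} |\langle z,b\rangle|/\Omega(b) \leq \alpha \sup_{b\neq 0}|\langle z,b\rangle|/|b|_1 + (1-\alpha)\sup_{b\neq 0}|\langle z,b\rangle|/|b|_{2,1} = \alpha|z|_1^* + (1-\alpha)|z|_{2,1}^*$, valid for every $\alpha \in [0,1]$ with no optimization. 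You instead dualize on the $z$-side, recognizing the dual ball as a Minkowski sum, splitting $z = \lambda z + (1-\lambda)z$, and balancing $\lambda$; this yields the strictly sharper harmonic-type bound $ab/(\alpha b + (1-\alpha)a)$ with $a = |z|_1^*$, $b = |z|_{2,1}^*$, from which the stated arithmetic bound follows by your AM--GM step. Your route buys a stronger intermediate inequality and makes the inf-convolution structure of $\Omega_*$ explicit, at the cost of more computation and of needing a (trivial, but worth stating) separate treatment of the endpoints $\alpha \in \{0,1\}$, where the division by $\alpha$ or $1-\alpha$ in your max expression breaks down. In Part 3 the two proofs also diverge mildly: you bound $|Xz|_\infty \leq \sqrt{G^*}\,|X|_\infty \|z\|_{2,1}$ by per-group Cauchy--Schwarz and finish with $\|z\|_{2,1} \leq \Omega(z)$, whereas the paper bounds $|Xz|_\infty \leq |X|_\infty |z|_1$ and converts at the end via $|z|_1 \leq \sqrt{G^*}\,\|z\|_{2,1}$ and $G^* \geq 1$; both give the same constant $G^*$, yours through a marginally tighter intermediate quantity. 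Part 1 is identical in both proofs, and your explicit caveat that applying Part 2 to $Xz$ presupposes a group structure on the output space (a point the paper glosses over by writing an equality where an inequality is meant) is a welcome clarification rather than a gap.
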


 \begin{proof}
 The first statement is a direct consequence of the definition of the dual norm. For the second statement, $\Omega(\cdot)$ is a norm, and by the convexity of $x \mapsto x^{-1}$ on $(0, \infty)$, we have
$$
\begin{aligned}
\Omega_*(z) & =\sup _{b \neq 0} \frac{|\langle z, b\rangle|}{\Omega(b)} \leq \sup _{b \neq 0}\left\{\alpha \frac{|\langle z, b\rangle|}{|b|_1}+(1-\alpha) \frac{|\langle z, b\rangle|}{|b|_{2,1}}\right\} \\
& \leq \alpha \sup _{b \neq 0} \frac{|\langle z, b\rangle|}{|b|_1}+(1-\alpha) \sup _{b \neq 0} \frac{|\langle z, b\rangle|}{|b|_{2,1}} \\
& =\alpha|z|_1^*+(1-\alpha)|z|_{2,1}^*.
\end{aligned}
$$
We also know that $|z|_1^* = |z|_\infty$ and $|z|_{2,1}^* = \left(\sum_{G \in \mathcal{G}}\left|z_G\right|_2\right)^*=\max\limits_{G \in \mathcal{G}}\left|z_G\right|_2 \leq \sqrt{G^*}|z|_\infty$, see also the appendix of \citet{babii2023machine}. Then
$$
\Omega_*(z) \leq \alpha|z|_1^*+(1-\alpha)|z|_{2,1}^* \leq \alpha|z|_{\infty} + (1-\alpha)\sqrt{G^*}|z|_\infty \leq \sqrt{G^*}|z|_\infty.
$$
The third statement comes from
$$
\begin{aligned} 
\Omega_*(Xz) & \leq \alpha|Xz|_{\infty}+(1-\alpha) \max _{G \in \mathcal{G}}\left|[Xz]_G\right|_2 \\ 
& \leq \alpha|X|_{\infty}|z|_1+(1-\alpha)\sqrt{G^*} |X|_{\infty}\left|z\right|_1 \\ 
& \leq |X|_{\infty}\left(\alpha|z|_1+(1-\alpha) G^*\left|z\right|_{2,1}\right) \\ 
& \leq G^* |X|_{\infty} \Omega(z),
\end{aligned}
$$
since $\max _{G \in \mathcal{G}}|[Xz]_{G}|_2 \leq \sqrt{G^*} |Xz|_2 \leq \sqrt{G^*} |X|_{\infty}|z|_1$ and $|z|_1 \leq \sqrt{G^*} |z|_{2,1} = \sqrt{G^*} \sum_{G \in \mathcal{G}}\left|z_G\right|_2$.
 \end{proof}

 \subsection{Effective sparsity}\label{definition}

First, we present several definitions that are inspired by \citet{van_de_geer_estimation_2016}. For all $ \boldsymbol{\beta}, \Delta \in \mathbb{R}^p$, we define the pseudo-norm $\widehat{\tau}_{\boldsymbol{\beta}}(\cdot)$ and its population version $\tau_{\boldsymbol{\beta}}(\cdot)$
$$
\widehat{\tau}_{\boldsymbol{\beta}}^2(\Delta):= \frac{1}{N} \sum_{i=1}^N \frac{\exp(\boldsymbol{X}_i^{\top}\boldsymbol{\beta} + E_{i})}{ \big(1+\exp(\boldsymbol{X}_i^{\top}\boldsymbol{\beta}+ E_{i})\big)^2 } \left|\boldsymbol{X}_{i}^{\top}\Delta\right|_2^2,\quad
\tau_{\boldsymbol{\beta}}^2(\Delta):=\mathbbm{E} \left[\widehat{\tau}_{\boldsymbol{\beta}}^2(\Delta)\right].
$$
Furthermore, for $M\ge 0$, let 
$$
\frac{1}{\mathrm{C}_M^2(\boldsymbol{X}_i)}=\left(\frac{1}{1+\exp\left(\boldsymbol{X}_{i}^{\top}\boldsymbol{\beta}_0 + E_{i} + M\Omega_{*}(\boldsymbol{X}_i)\right)}\right)\left(1 - \frac{1}{1+\exp\left(\boldsymbol{X}_{i}^{\top}\boldsymbol{\beta}_0 + E_{i} - M\Omega_{*}(\boldsymbol{X}_i)\right)}\right).
$$
It follows that for all $\boldsymbol{\beta}\in\mathbb{R}^p$ that satisfies $\Omega\left(\boldsymbol{\beta}-\boldsymbol{\beta}_0\right) \leq M$
$$
\begin{aligned}
& \frac{\exp(\boldsymbol{X}_i^{\top}\boldsymbol{\beta} + E_{i})}{ \big(1+\exp(\boldsymbol{X}_i^{\top}\boldsymbol{\beta} + E_{i})\big)^2 } \\
& = \left(\frac{1}{1+\exp\left(\boldsymbol{X}_{i}^{\top}\left(\boldsymbol{\beta}- \boldsymbol{\beta}_0 \right)+ \boldsymbol{X}_{i}^{\top}\boldsymbol{\beta}_0 + E_{i}\right)}\right)\left(1 - \frac{1}{1+\exp\left(\boldsymbol{X}_{i}^{\top}\left(\boldsymbol{\beta} - \boldsymbol{\beta}_0 \right)+ \boldsymbol{X}_{i}^{\top}\boldsymbol{\beta}_0 + E_{i}\right)}\right)  \\
& \geq \left(\frac{1}{1+\exp\left(\left|\boldsymbol{X}_{i}^{\top}\left(\boldsymbol{\beta}- \boldsymbol{\beta}_0 \right)\right| + \boldsymbol{X}_{i}^{\top}\boldsymbol{\beta}_0 +E_{i}\right)}\right)\left(1 - \frac{1}{1+\exp\left(-\left|\boldsymbol{X}_{i}^{\top}\left(\boldsymbol{\beta} - \boldsymbol{\beta}_0 \right)\right|+ \boldsymbol{X}_{i}^{\top}\boldsymbol{\beta}_0 + E_{i}\right)}\right)\\
& \geq \frac{1}{\mathrm{C}_M^2(\boldsymbol{X}_i)},
\end{aligned}
$$
since $\left|\boldsymbol{X}_{i}^{\top}\left(\boldsymbol{\beta} - \boldsymbol{\beta}_0 \right)\right| \leq \Omega\left(\boldsymbol{\beta}-\boldsymbol{\beta}_0\right)\Omega_{*}(\boldsymbol{X}_i) \leq M\Omega_{*}(\boldsymbol{X}_i)$.
We also define 
$$
\widehat{\tau}_{M}^2(\Delta):= \frac{1}{N} \sum_{i=1}^N 
\frac{\left|\boldsymbol{X}_{i}^{\top}\Delta\right|_2^2}{\mathrm{C}_M^2(\boldsymbol{X}_i)}.
$$
Notice that, for all $\boldsymbol{\beta}$ that satisfies $\Omega\left(\boldsymbol{\beta}-\boldsymbol{\beta}_0\right) \leq M$, we have $\widehat{\tau}_{\boldsymbol{\beta}}^2(\Delta) \geq \widehat{\tau}_{M}^2(\Delta)$.
For a mapping $\tau:\Delta\in\mathbb{R}^p\mapsto \mathbb{R}$, let us define the effective sparsity
\begin{equation}\label{ccpg}
\Gamma^{2}\left(\tau(\cdot)\right):=\left(\min \left\{\tau^2(\Delta): \Delta \in \mathbb{R}^p, \Omega^{+}(\Delta)=1, \Omega^{-}(\Delta) \leq 2\right\}\right)^{-1}.
\end{equation}
For all $\boldsymbol{\beta}\in \mathbb{R}^p$ and $\Delta$ which satisfies $\Omega^{-}(\Delta) \leq 2 \Omega^{+}(\Delta)$, it holds that
$$
\Omega^{+}(\Delta) \leq \widehat{\tau}_{\boldsymbol{\beta}}\left(\Delta\right)\Gamma\left(\widehat{\tau}_{\boldsymbol{\beta}}\left(\cdot\right)\right),
$$
where we assume $\widehat{\tau}_{\boldsymbol{\beta}}\left(\Delta\right)$ and $\Gamma\left(\widehat{\tau}_{\boldsymbol{\beta}}\left(\cdot\right)\right)$ are positive.
The following is a useful lemma to tie
$\Gamma^2\left(\tau_{\boldsymbol{\beta}_0}(\cdot)\right)$ to Assumption \ref{aseign}.

\begin{lemma}\label{eigen}
    Suppose that Assumption \ref{aseign} holds, we have $\Gamma^2\left(\tau_{\boldsymbol{\beta}_0}(\cdot)\right) \leq \frac{s_{\boldsymbol{\beta}_0}}{\gamma_{\mathrm{H}}}.$
\end{lemma}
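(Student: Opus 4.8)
The plan is to recognize $\tau_{\boldsymbol{\beta}_0}^2(\cdot)$ as an expected quadratic form whose curvature is bounded below by Assumption \ref{aseign}, and then to convert the normalization $\Omega^+(\Delta)=1$ into a lower bound on $|\Delta|_2$ through a Cauchy--Schwarz comparison of norms.

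First I would collapse the empirical average defining $\widehat{\tau}_{\boldsymbol{\beta}_0}^2$ into a single population expectation. Since the sample is i.i.d.\ and $|\boldsymbol{X}_i^\top\Delta|_2^2=\Delta^\top\boldsymbol{X}_i\boldsymbol{X}_i^\top\Delta$, linearity of expectation gives
\[
\tau_{\boldsymbol{\beta}_0}^2(\Delta)=\Delta^\top\,\mathbbm{E}\!\left[\frac{\exp(\boldsymbol{X}_i^\top\boldsymbol{\beta}_0+E_i)}{\big(1+\exp(\boldsymbol{X}_i^\top\boldsymbol{\beta}_0+E_i)\big)^2}\,\boldsymbol{X}_i\boldsymbol{X}_i^\top\right]\Delta .
\]
The matrix in brackets is exactly the one in Assumption \ref{aseign}, whose smallest eigenvalue is at least $\gamma_{\mathrm{H}}$, so that $\tau_{\boldsymbol{\beta}_0}^2(\Delta)\ge\gamma_{\mathrm{H}}|\Delta|_2^2$ for every $\Delta\in\mathbb{R}^p$.

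Next I would control $\Omega^+$ by the Euclidean norm. Applying Cauchy--Schwarz on the support $S_{\boldsymbol{\beta}_0}$ and on the active groups $\mathcal{G}_{\boldsymbol{\beta}_0}$ yields $|\Delta_{S_{\boldsymbol{\beta}_0}}|_1\le\sqrt{|S_{\boldsymbol{\beta}_0}|}\,|\Delta|_2$ and $\sum_{G\in\mathcal{G}_{\boldsymbol{\beta}_0}}|\Delta_G|_2\le\sqrt{|\mathcal{G}_{\boldsymbol{\beta}_0}|}\,|\Delta|_2$, hence $\Omega^+(\Delta)\le\big(\alpha\sqrt{|S_{\boldsymbol{\beta}_0}|}+(1-\alpha)\sqrt{|\mathcal{G}_{\boldsymbol{\beta}_0}|}\big)|\Delta|_2$. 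Merging the two weighted terms by concavity of $x\mapsto\sqrt{x}$, the prefactor is at most $\sqrt{s_{\boldsymbol{\beta}_0}}$, so $\Omega^+(\Delta)\le\sqrt{s_{\boldsymbol{\beta}_0}}\,|\Delta|_2$.

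Combining the two steps closes the argument: any $\Delta$ with $\Omega^+(\Delta)=1$ satisfies $|\Delta|_2^2\ge 1/s_{\boldsymbol{\beta}_0}$ and therefore $\tau_{\boldsymbol{\beta}_0}^2(\Delta)\ge\gamma_{\mathrm{H}}/s_{\boldsymbol{\beta}_0}$; since this holds for all such $\Delta$, the side constraint $\Omega^-(\Delta)\le 2$ only shrinks the feasible set and is irrelevant for the lower bound. Minimizing over the feasible set in \eqref{ccpg} and taking reciprocals gives $\Gamma^2(\tau_{\boldsymbol{\beta}_0}(\cdot))\le s_{\boldsymbol{\beta}_0}/\gamma_{\mathrm{H}}$. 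The only genuinely delicate point is the norm-comparison step: one must track the correct power of the sparsity and combine the $\ell_1$ and group parts of $\Omega^+$ so that exactly $\sqrt{s_{\boldsymbol{\beta}_0}}$ emerges rather than a larger factor, which is precisely where the concavity of the square root and the definition of $s_{\boldsymbol{\beta}_0}$ enter.
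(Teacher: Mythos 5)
Your proposal follows the same route as the paper's own proof: lower-bound $\tau^2_{\boldsymbol{\beta}_0}(\Delta)$ by $\gamma_{\mathrm{H}}|\Delta|_2^2$ via the minimum-eigenvalue condition in Assumption \ref{aseign}, upper-bound $\Omega^+(\Delta)$ by a sparsity factor times $|\Delta|_2$ via Cauchy--Schwarz, combine, and minimize over the feasible set in \eqref{ccpg} (your observation that the constraint $\Omega^-(\Delta)\le 2$ is never used is also implicit in the paper). The one step you should justify differently is the merging of the two weighted terms: concavity of $x\mapsto\sqrt{x}$ gives $\alpha\sqrt{|S_{\boldsymbol{\beta}_0}|}+(1-\alpha)\sqrt{|\mathcal{G}_{\boldsymbol{\beta}_0}|}\le\sqrt{\alpha|S_{\boldsymbol{\beta}_0}|+(1-\alpha)|\mathcal{G}_{\boldsymbol{\beta}_0}|}$, but that right-hand side is not bounded by $\sqrt{s_{\boldsymbol{\beta}_0}}$ under either reading of the definition of $s_{\boldsymbol{\beta}_0}$; Jensen in fact gives the opposite comparison. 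What actually closes the step, and what the paper's proof uses, is the convention $\sqrt{s_{\boldsymbol{\beta}_0}}=\alpha\sqrt{|S_{\boldsymbol{\beta}_0}|}+(1-\alpha)\sqrt{|\mathcal{G}_{\boldsymbol{\beta}_0}|}$ (stated inside the appendix proof), under which your prefactor equals $\sqrt{s_{\boldsymbol{\beta}_0}}$ exactly and no inequality is needed at all. Had you instead taken the main-text definition $s_{\boldsymbol{\beta}_0}=\alpha\sqrt{|S_{\boldsymbol{\beta}_0}|}+(1-\alpha)\sqrt{|\mathcal{G}_{\boldsymbol{\beta}_0}|}$ literally, the claim ``prefactor $\le\sqrt{s_{\boldsymbol{\beta}_0}}$'' would fail whenever $s_{\boldsymbol{\beta}_0}>1$, and the argument (yours and the paper's alike) would only yield $\Gamma^2\left(\tau_{\boldsymbol{\beta}_0}(\cdot)\right)\le s_{\boldsymbol{\beta}_0}^2/\gamma_{\mathrm{H}}$. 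So: replace the appeal to concavity by the identity coming from that convention, and your proof is correct and coincides with the paper's.
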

\begin{proof}
Take $\Delta\in\mathbb{R}^p$ such that $\Omega^{+}(\Delta)=1$ and $\Omega^{-}(\Delta) \leq 2.$
By the property of the smallest eigenvalue of a symmetric matrix and Assumption \ref{aseign}, we have
    $$
    \gamma_{\mathrm{H}} \leq \min_{|u|_2 = 1} u^{\top}\mathbbm{E}\left[\frac{\exp(\boldsymbol{X}_i^{\top}\boldsymbol{\beta}_0 + E_{i})}{ \big(1+\exp(\boldsymbol{X}_i^{\top}\boldsymbol{\beta}_0+ E_{i})\big)^2 }\boldsymbol{X}_{i} \boldsymbol{X}_{i}^{\top}\right]u.
    $$
    By Jensen's inequality and the inequality of Cauchy-Schwarz, we then see $$
    \sqrt{|S_{\boldsymbol{\beta}_0}|\Delta^{\top}\Delta} \geq \sqrt{|S_{\boldsymbol{\beta}_0}|\Delta_{S_{\boldsymbol{\beta}_0}}^{\top}\Delta_{S_{\boldsymbol{\beta}_0}}} \geq |\Delta_{S_{\boldsymbol{\beta}_0}}|_1, $$ 
    and 
    $$
    \begin{aligned}    
    \sqrt{|\mathcal{G}_{\boldsymbol{\beta}_0}|\Delta^{\top}\Delta} 
    & \geq \sqrt{|\mathcal{G}_{\boldsymbol{\beta}_0}|\sum_{G \in \mathcal{G}_{\boldsymbol{\beta}_0}}\Delta_G^{\top}\Delta_G} \geq \sum_{G \in \mathcal{G}_{\boldsymbol{\beta}_0}}\left|\Delta_G\right|_2.
    \end{aligned}
    $$ 
    
%The last inequality comes from: Assume $\mathcal{G}_{\boldsymbol{\beta}_0} = \{G_1, G_2, \ldots, G_k\}$ and let $\Delta_{G_i}^{\top}\Delta_{G_i} = g_i^2 \in \mathbb{R}^{+}$, again by Jensen's inequality

%$$\sqrt{|\mathcal{G}_{\boldsymbol{\beta}}|\sum_{G \in \mathcal{G}_{\boldsymbol{\beta}}}\Delta_G^{\top}\Delta_G} = \sqrt{k\sum_{i=1}^{k}\Delta_{G_i}^{\top}\Delta_{G_i}} =\sqrt{k\sum_{i=1}^{k}g_i^2} \geq \sum_{i=1}^{k}|g_i| = \sum_{G \in \mathcal{G}_{\boldsymbol{\beta}}}\left|\Delta_G\right|_2.$$ 

\noindent Overall, since $\sqrt{s_{\boldsymbol{\beta}_0}} = \alpha\sqrt{|S_{\boldsymbol{\beta}_0}|} + (1-\alpha)\sqrt{|\mathcal{G}_{\boldsymbol{\beta}_0}|}$, it is easy to see that
    \begin{equation}\label{omega}
    \begin{aligned}
    \sqrt{s_{\boldsymbol{\beta}_0}\Delta^{\top}\Delta} & = \alpha\sqrt{|S_{\boldsymbol{\beta}_0}|}\sqrt{\Delta^{\top}\Delta} + (1-\alpha)\sqrt{|\mathcal{G}_{\boldsymbol{\beta}_0}|}\sqrt{\Delta^{\top}\Delta} \\
    & \geq \alpha |\Delta_{S_{\boldsymbol{\beta}_0}}|_1 + (1-\alpha)\sum_{G \in \mathcal{G}_{\boldsymbol{\beta}_0}}\left|\Delta_G\right|_2 = \Omega^{+}(\Delta) = 1.
    \end{aligned}
     \end{equation}
   Then, we have 
    $$
    s_{\boldsymbol{\beta}_0}\tau_{\boldsymbol{\beta}_0}^2(\Delta)= \sqrt{s_{\boldsymbol{\beta}_0}}\Delta^{\top}\mathbbm{E}\left[\frac{\exp(\boldsymbol{X}_i^{\top}\boldsymbol{\beta}_0+ E_{i})}{ \big(1+\exp(\boldsymbol{X}_i^{\top}\boldsymbol{\beta}_0+ E_{i})\big)^2 }\boldsymbol{X}_{i} \boldsymbol{X}_{i}^{\top}\right]\Delta\sqrt{s_{\boldsymbol{\beta}_0}} \geq  \gamma_{\mathrm{H}}
    $$
    since $\sqrt{s_{\boldsymbol{\beta}_0}\Delta^{\top}\Delta} \geq 1$, which yields the result.
\end{proof}
  
 \subsection{One point margin condition for conditional risk function}\label{one p m}
For $M>0$, let $\mathbbm{B}_{local} :=\left\{\boldsymbol{\beta}\in\mathbb{R}^p: \Omega\left(\boldsymbol{\beta}-\boldsymbol{\beta}_0\right) \leq M \right\}$, which is a convex neighborhood of the true parameter $\boldsymbol{\beta}_0$. We consider the relationship between the empirical risk function $R_N(\boldsymbol{\beta})$ and the population risk function $R\left(\boldsymbol{\beta}|\boldsymbol{X}\right)$.
We define 
 $$\widehat{f}_i(t) = \frac{\delta_i(t)}{\widehat{H}\left(t \wedge \widetilde{T}_i\right)} \mathbbm{1}{\{\widetilde{T}_i \leq t\}},\
 f_i(t) = \frac{\delta_i(t)}{H\left(t \wedge \widetilde{T}_i\right)} \mathbbm{1}{\{ \widetilde{T}_i \leq t\}}.
 $$ 
Then, we see that
$$
R_N(\boldsymbol{\beta}) = \frac{1}{N} \sum_{i=1}^N \bigg(- \widehat{f}_i(t) \boldsymbol{X}_{i}^{\top}\boldsymbol{\beta} +  \log \left(1+\exp(\boldsymbol{X}_{i}^{\top}\boldsymbol{\beta} )\right)\bigg),
$$
and we can define the conditional population risk as
$$
R(\boldsymbol{\beta}|\boldsymbol{X}) =\frac1N \sum_{i=1}^N   -\left\{\mathbbm{E}[f_i(t)|\boldsymbol{X}_i]\left(\boldsymbol{X}_{i}^{\top}\boldsymbol{\beta} +E_i\right) +  \log \left(1+\exp(\boldsymbol{X}_{i}^{\top}\boldsymbol{\beta} + E_{i})\right)\right\}.
$$
Their gradient and Hessian matrix functions are as follows
\begin{equation}\label{defdotR}
\begin{aligned}
 \dot{R}_{N}(\boldsymbol{\beta})& = \frac{1}{N} \sum_{i=1}^N \bigg( - \widehat{f}_i(t) +  \frac{\exp(\boldsymbol{X}_{i}^{\top}\boldsymbol{\beta})}{1+\exp(\boldsymbol{X}_{i}^{\top}\boldsymbol{\beta}) }\bigg) \boldsymbol{X}_{i}, \\
 \dot{R}(\boldsymbol{\beta}|\boldsymbol{X})& =\frac1N \sum_{i=1}^N   \bigg(-\mathbbm{E}[f_i(t)|\boldsymbol{X}_i] +  \frac{\exp(\boldsymbol{X}_{i}^{\top}\boldsymbol{\beta}+E_i)}{1+\exp(\boldsymbol{X}_{i}^{\top}\boldsymbol{\beta}+E_i)} \bigg) \boldsymbol{X}_i,\\
  \ddot{R}_{N}(\boldsymbol{\beta}) &= \frac{1}{N} \sum_{i=1}^N \frac{\exp(\boldsymbol{X}_i^{\top}\boldsymbol{\beta})}{ \big(1+\exp(\boldsymbol{X}_i^{\top}\boldsymbol{\beta})\big)^2 } \boldsymbol{X}_{i} \boldsymbol{X}_{i}^{\top},
 \\  \ddot{R}(\boldsymbol{\beta}|\boldsymbol{X})& =  \frac{1}{N} \sum_{i=1}^N \frac{\exp(\boldsymbol{X}_i^{\top}\boldsymbol{\beta}+E_i)}{ \big(1+\exp(\boldsymbol{X}_i^{\top}\boldsymbol{\beta}+E_i)\big)^2 } \boldsymbol{X}_{i} \boldsymbol{X}_{i}^{\top}.
\end{aligned}
\end{equation}
 
 \begin{lemma}
   \label{one point}
   (One point margin condition for the conditional risk)  
   We consider the conditional theoretical risk function $R(\boldsymbol{\beta}|\boldsymbol{X})$ here. For all $\tilde{\boldsymbol{\beta}} \in \mathbbm{B}_{local}$, one has the one point margin condition:
$$
R(\tilde{\boldsymbol{\beta}}|\boldsymbol{X})-R\left(\boldsymbol{\beta}_0|\boldsymbol{X}\right) \geq \frac{\widehat{\tau}^2_{M}\left(\tilde{\boldsymbol{\beta}} - \boldsymbol{\beta}_0\right)}{2}.
$$
\end{lemma}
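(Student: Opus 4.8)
The plan is to treat $R(\cdot\,|\boldsymbol{X})$ as a smooth convex function and deploy a second-order Taylor expansion around $\boldsymbol{\beta}_0$, then to control the quadratic remainder using the pointwise weight lower bound already assembled in Section~\ref{definition}. First I would record two structural facts. The map $R(\cdot\,|\boldsymbol{X})$ is twice differentiable, and by the very definition of the pseudo-norm $\widehat{\tau}_{\boldsymbol{\beta}}$ its Hessian acts as $(\tilde{\boldsymbol{\beta}}-\boldsymbol{\beta}_0)^{\top}\ddot{R}(\boldsymbol{\beta}\,|\boldsymbol{X})(\tilde{\boldsymbol{\beta}}-\boldsymbol{\beta}_0)=\widehat{\tau}_{\boldsymbol{\beta}}^2(\tilde{\boldsymbol{\beta}}-\boldsymbol{\beta}_0)$ for every $\boldsymbol{\beta}$. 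The second fact is that the gradient vanishes at the truth, $\dot{R}(\boldsymbol{\beta}_0\,|\boldsymbol{X})=\boldsymbol{0}$; this is where the model structure enters, since $\boldsymbol{Z}_i^{\top}\boldsymbol{\theta}_0=\boldsymbol{X}_i^{\top}\boldsymbol{\beta}_0+E_i$ together with the logistic specification \eqref{distributin} and the OIPCW identity of Lemma~\ref{lm.cens2} give $\mathbbm{E}[f_i(t)\,|\,\boldsymbol{Z}_i]=\exp(\boldsymbol{X}_i^{\top}\boldsymbol{\beta}_0+E_i)/(1+\exp(\boldsymbol{X}_i^{\top}\boldsymbol{\beta}_0+E_i))$, making each summand of the gradient cancel.

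Second I would write the exact integral form of the second-order Taylor expansion,
$$R(\tilde{\boldsymbol{\beta}}\,|\boldsymbol{X})-R(\boldsymbol{\beta}_0\,|\boldsymbol{X})-\dot{R}(\boldsymbol{\beta}_0\,|\boldsymbol{X})(\tilde{\boldsymbol{\beta}}-\boldsymbol{\beta}_0)=\int_0^1(1-\nu)\,\widehat{\tau}_{\boldsymbol{\beta}_\nu}^2(\tilde{\boldsymbol{\beta}}-\boldsymbol{\beta}_0)\,d\nu,\qquad \boldsymbol{\beta}_\nu:=\boldsymbol{\beta}_0+\nu(\tilde{\boldsymbol{\beta}}-\boldsymbol{\beta}_0).$$
The key observation is that the whole segment stays inside the local ball: for each $\nu\in[0,1]$, positive homogeneity of $\Omega$ gives $\Omega(\boldsymbol{\beta}_\nu-\boldsymbol{\beta}_0)=\nu\,\Omega(\tilde{\boldsymbol{\beta}}-\boldsymbol{\beta}_0)\le M$, so $\boldsymbol{\beta}_\nu\in\mathbbm{B}_{local}$. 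Consequently the pointwise bound $\exp(\boldsymbol{X}_i^{\top}\boldsymbol{\beta}_\nu+E_i)/(1+\exp(\boldsymbol{X}_i^{\top}\boldsymbol{\beta}_\nu+E_i))^2\ge 1/\mathrm{C}_M^2(\boldsymbol{X}_i)$ from Section~\ref{definition} applies for every $i$ and every $\nu$, which after averaging yields $\widehat{\tau}_{\boldsymbol{\beta}_\nu}^2(\tilde{\boldsymbol{\beta}}-\boldsymbol{\beta}_0)\ge\widehat{\tau}_{M}^2(\tilde{\boldsymbol{\beta}}-\boldsymbol{\beta}_0)$ uniformly in $\nu$. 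Combining the vanishing gradient with this uniform lower bound and $\int_0^1(1-\nu)\,d\nu=\tfrac12$ gives $R(\tilde{\boldsymbol{\beta}}\,|\boldsymbol{X})-R(\boldsymbol{\beta}_0\,|\boldsymbol{X})\ge\tfrac12\widehat{\tau}_{M}^2(\tilde{\boldsymbol{\beta}}-\boldsymbol{\beta}_0)$, which is the claim.

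I expect the gradient-vanishing step to be the only genuinely delicate point: it is the sole place where the censoring/OIPCW machinery and the exact logistic specification at $\boldsymbol{\theta}_0$ are used, and it requires matching the linear weight to the logistic mean evaluated at $\boldsymbol{X}_i^{\top}\boldsymbol{\beta}_0+E_i$ rather than at $\boldsymbol{X}_i^{\top}\boldsymbol{\beta}_0$, so one must keep track of the approximation error $E_i$ carefully in the conditioning. Everything downstream is routine: convexity makes the remainder nonnegative, the quadratic-form lower bound is already supplied by the $\mathrm{C}_M$ construction, and the argument collapses to the elementary fact that the segment from $\boldsymbol{\beta}_0$ to $\tilde{\boldsymbol{\beta}}$ never leaves $\mathbbm{B}_{local}$.
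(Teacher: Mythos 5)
Your proposal is correct and follows essentially the same route as the paper's own proof: a second-order Taylor expansion of $R(\cdot\,|\boldsymbol{X})$ around $\boldsymbol{\beta}_0$, the vanishing of $\dot{R}(\boldsymbol{\beta}_0|\boldsymbol{X})$ via the censoring identity $\mathbbm{E}[f_i(t)\,|\,\cdot\,]=\exp(\boldsymbol{X}_i^{\top}\boldsymbol{\beta}_0+E_i)/(1+\exp(\boldsymbol{X}_i^{\top}\boldsymbol{\beta}_0+E_i))$, and the lower bound on the Hessian quadratic form by $\widehat{\tau}^2_{M}$ using the $\mathrm{C}_M$ construction on $\mathbbm{B}_{local}$. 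The only cosmetic difference is that you use the integral (uniform-in-$\nu$) form of the Taylor remainder with homogeneity of $\Omega$ to keep the segment in the ball, whereas the paper uses the Lagrange form with a single intermediate point $\boldsymbol{\beta}'$ kept in the ball by convexity of $\mathbbm{B}_{local}$; these are interchangeable.
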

\begin{proof}
    By the Taylor expansion at $\boldsymbol{\beta}_0$, we obtain
    $$
    \begin{aligned}
    & R\left(\tilde{\boldsymbol{\beta}}|\boldsymbol{X}_i\right)-R\left(\boldsymbol{\beta}_0|\boldsymbol{X}\right) \\
    & = \dot{R}\left(\boldsymbol{\beta}_0|\boldsymbol{X}_i\right)^{\top}\left(\tilde{\boldsymbol{\beta}} - \boldsymbol{\beta}_0\right) + \frac{1}{2}\left(\tilde{\boldsymbol{\beta}} - \boldsymbol{\beta}_0\right)^{\top} \ddot{R}\left(\boldsymbol{\beta}^{\prime}|\boldsymbol{X}_i\right) \left(\tilde{\boldsymbol{\beta}} - \boldsymbol{\beta}_0\right) \\
    & = \dot{R}\left(\boldsymbol{\beta}_0|\boldsymbol{X}_i\right)^{\top}\left(\tilde{\boldsymbol{\beta}} - \boldsymbol{\beta}_0\right) \\
    &\quad  + \frac{1}{2}\left(\tilde{\boldsymbol{\beta}} - \boldsymbol{\beta}_0\right)^{\top}\left(\frac{1}{N}\sum_{i = 1}^{N} \frac{\exp\left(\boldsymbol{X}_i^{\top}\boldsymbol{\beta}^{\prime} + E_{i}\right)}{ \big(1+\exp\left(\boldsymbol{X}_i^{\top}\boldsymbol{\beta}^{\prime} + E_{i}\right)\big)^2 }\boldsymbol{X}_i\boldsymbol{X}_i^{\top}\right)\left(\tilde{\boldsymbol{\beta}} - \boldsymbol{\beta}_0\right) \\
    & \geq \dot{R}\left(\boldsymbol{\beta}_0|\boldsymbol{X}_i\right)^{\top}\left(\tilde{\boldsymbol{\beta}} - \boldsymbol{\beta}_0\right) +\frac{\widehat{\tau}^2_{M}\left(\tilde{\boldsymbol{\beta}} - \boldsymbol{\beta}_0\right)}{2},
    \end{aligned}
    $$
    where $\boldsymbol{\beta}^{\prime} \in  \mathbbm{B}_{local}$ since $ \mathbbm{B}_{local}$ is convex. To obtain the result, it suffices to show that 
    $\dot{R}\left(\boldsymbol{\beta}_0|\boldsymbol{X}_i\right) = 0 $. This is obtained through
    $$
    \begin{aligned}
    \dot{R}\left(\boldsymbol{\beta}_0|\boldsymbol{X}_i\right) 
    & = \frac{1}{N} \sum_{i=1}^{N} \left( -\frac{\exp(\boldsymbol{X}_{i}^{\top}\boldsymbol{\beta}_0+ E_{i})}{1+\exp(\boldsymbol{X}_{i}^{\top}\boldsymbol{\beta}_0+ E_{i}) }\boldsymbol{X}_{i} +  \frac{\exp(\boldsymbol{X}_{i}^{\top}\boldsymbol{\beta}_0+ E_{i})}{1+\exp(\boldsymbol{X}_{i}^{\top}\boldsymbol{\beta}_0+ E_{i}) }\boldsymbol{X}_{i} \right)  = 0,
    \end{aligned}
    $$
where we used that
\begin{equation}\label{expf}
\begin{aligned}
\mathbbm{E}[f_i(t)|\boldsymbol{X}_i] & = \mathbbm{E}\left[\left.\frac{\delta_i(t)}{H\left(t \wedge \widetilde{T}_i\right)} \mathbbm{1}{\{\widetilde{T}_i \leq t\}}\right|\boldsymbol{X}_i\right]  = \mathbbm{E}\left[\left.\frac{\delta_i(t)}{H\left(t \wedge T_i\right)} \mathbbm{1}\{T_i \leq t\}\right|\boldsymbol{X}_i\right] \\
& = \mathbbm{E}\left[\left.\frac{\mathbbm{1}{\{T_i \leq t\}}}{H\left(t \wedge T_i\right)} \mathbbm{E}\left[\mathbbm{1}{\{C_i \geq t \wedge T_i \}}|T_i,\boldsymbol{X}_i\right]\right|\boldsymbol{X}_i\right]  = \mathbbm{E}\left[\mathbbm{1}{\{T_i \leq t\}}|\boldsymbol{X}_i\right] \\
& = \frac{\exp \left(\boldsymbol{X}_i^{\top} \boldsymbol{\beta}_0+ E_{i}\right)}{1+\exp \left(\boldsymbol{X}_i^{\top} \boldsymbol{\beta}_0+ E_{i}\right)}
\end{aligned}
\end{equation}
since $\delta_i(t)\mathbbm{1}{\{\widetilde{T}_i \leq t\}} = \mathbbm{1}{\{C_i \geq t \wedge T_i\}}\mathbbm{1}{\{T_i \wedge C_i \leq t\}} =1$ together with $\mathbbm{1}{\{T_i \leq t\}} = 1$ mean $\widetilde{T}_i = T_i$, and $H\left(t \wedge T_i\right) = P(C \geq t \wedge T_i) = \mathbbm{E}\left[\mathbbm{1}{\{C_i \geq t \wedge T_i \}}|T_i, \boldsymbol{X}_i\right]$. 
    \end{proof}

\subsection{A Fuk-Nagaev concentration inequality and some technical lemmas}\label{fuk}
This section presents some useful lemmas for the proof of Theorem \ref{the}.
\begin{theorem}\label{key}
Let $\widehat{\mu}_j=\frac{1}{N} \sum_{i=1}^N Z_{i, j}$ and $\mu_j=\mathbbm{E} [Z_{i, j}],$ where $\boldsymbol{Z}_i = \left(Z_{i,1}, \ldots, Z_{i,p}\right)^{\top}, i\in[N]$ are i.i.d. $p$-dimensional random vectors. There exist universal constants $C_1,\tilde{K}_2>0$ such that 
\begin{equation*}
P\left(\max _{1 \leq j \leq p}\left|\widehat{\mu}_j-\mu_j\right|\right. \left.\geq K_2\frac{\sqrt{\log p}}{\sqrt{N}} + K_{U,1}\frac{\log p}{N^{1-\frac{2}{q}}} + K_{U,2}\frac{\sqrt{\log p}}{\sqrt{N}}\right) 
 \leq \frac{C_{1}}{\log p},
\end{equation*}
where $q \geq 4 $ is defined in Assumption \ref{asX} and 
\begin{align*}
K_2 &= 2\tilde{K}_2 \max\limits_{1 \leq j \leq p}\mathbbm{E}\left[\left|Z_{i,j}\right|^2\right]^{\frac{1}{2}};\\
K_{U,1} &= 2\tilde{K}_2 \mathbbm{E}\left[\max\limits_{1 \leq j \leq p}\left|Z_{i, j}\right|^{\frac{q}{2}}\right]^{\frac{2}{q}};\\
K_{U,2}& = \sqrt{\mathbbm{E}\left[\max\limits_{1 \leq j \leq p}\left|Z_{i, j}\right|^2\right]}.
\end{align*}
\end{theorem}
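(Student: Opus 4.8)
The plan is to prove this maximal concentration bound by a truncation argument that separates a bounded ``bulk'' part, controlled by Bernstein's inequality together with a union bound over the $p$ coordinates, from a heavy-tailed ``tail'' part, controlled by Markov's inequality applied to the coordinatewise maximum $M_i=\max_{1\le j\le p}|Z_{i,j}|$. The shape of the constants is the guide: up to the universal factor $\tilde{K}_2$, the quantities $K_{U,1}$ and $K_{U,2}$ are the $L_{q/2}$- and $L_2$-norms of $M_i$, which signals that every tail contribution should be routed through moments of $M_i$ rather than through sums over the individual coordinates (this is precisely what converts a coordinatewise bound into the stated $\mathbbm{E}[\max_j|Z_{i,j}|^{q/2}]^{2/q}$ and $\sqrt{\mathbbm{E}[\max_j|Z_{i,j}|^2]}$).

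First I would fix a single truncation level $\tau$ of the order $(N\,\mathbbm{E}[M_i^{q/2}])^{2/q}$, i.e.\ $\tau^{q/2}\asymp N\,\mathbbm{E}[M_i^{q/2}]$, and decompose each coordinate as $Z_{i,j}=Z_{i,j}\mathbbm{1}\{M_i\le\tau\}+Z_{i,j}\mathbbm{1}\{M_i>\tau\}$. This induces
$$\widehat{\mu}_j-\mu_j=\underbrace{\tfrac1N\textstyle\sum_i\big(Z_{i,j}\mathbbm{1}\{M_i\le\tau\}-\mathbbm{E}[Z_{i,j}\mathbbm{1}\{M_i\le\tau\}]\big)}_{A_j}+\underbrace{\big(\mathbbm{E}[Z_{i,j}\mathbbm{1}\{M_i\le\tau\}]-\mu_j\big)}_{B_j}+\underbrace{\tfrac1N\textstyle\sum_i Z_{i,j}\mathbbm{1}\{M_i>\tau\}}_{C_j}.$$
For $\max_j|A_j|$, the summands are bounded by $\tau$ and have variance at most $\max_j\mathbbm{E}[Z_{i,j}^2]$, so Bernstein's inequality and a union bound over $j\in[p]$ give, with probability at least $1-C/\log p$, a bound of order $K_2\sqrt{\log p/N}$ in the variance-dominated regime plus a term of order $\tau\log p/N\asymp K_{U,1}\log p/N^{1-2/q}$ in the boundedness-dominated regime; the chosen order of $\tau$ is exactly what turns $\tau\log p/N$ into the polynomial rate $N^{-(1-2/q)}$. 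For the deterministic bias $\max_j|B_j|$, I would use $|B_j|\le\mathbbm{E}[M_i\mathbbm{1}\{M_i>\tau\}]$ and Cauchy--Schwarz, $\mathbbm{E}[M_i\mathbbm{1}\{M_i>\tau\}]\le\sqrt{\mathbbm{E}[M_i^2]}\sqrt{P(M_i>\tau)}$; since Markov gives $P(M_i>\tau)\le\mathbbm{E}[M_i^{q/2}]/\tau^{q/2}\lesssim 1/N$ for the chosen $\tau$, this produces exactly the $K_{U,2}\sqrt{\log p/N}$ term. Finally, for the random tail $\max_j|C_j|$, I would use the deterministic bound $\max_j|C_j|\le\tfrac1N\sum_i M_i\mathbbm{1}\{M_i>\tau\}$ with Markov's inequality, noting $\mathbbm{E}[M_i\mathbbm{1}\{M_i>\tau\}]\le\tau^{1-q/2}\mathbbm{E}[M_i^{q/2}]$, which yields a contribution of order $K_{U,1}\log p/N^{1-2/q}$ with probability at least $1-C/\log p$.

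Combining the three pieces by the triangle inequality and a union bound over the two low-probability events (from $A_j$ and $C_j$) delivers the claimed deviation bound with probability at least $1-C_1/\log p$. The main obstacle is the heavy-tailed regime: one must select the single truncation level $\tau\asymp(N\,\mathbbm{E}[M_i^{q/2}])^{2/q}$ so that it simultaneously (i) turns the boundedness term of Bernstein's inequality into the polynomial rate $N^{-(1-2/q)}$, (ii) makes $P(M_i>\tau)\lesssim 1/N$ so that the Cauchy--Schwarz bias bound lands at the $\sqrt{\log p/N}$ rate, and (iii) makes the Markov bound on the random tail of order $1/\log p$ at the threshold $K_{U,1}\log p/N^{1-2/q}$ --- and one must route all tail estimates through the moments of $M_i=\max_j|Z_{i,j}|$ via $\max_j|\tfrac1N\sum_i Z_{i,j}\mathbbm{1}\{M_i>\tau\}|\le\tfrac1N\sum_i M_i\mathbbm{1}\{M_i>\tau\}$ and $\mathbbm{E}[M_i\mathbbm{1}\{M_i>\tau\}]\le\tau^{1-q/2}\mathbbm{E}[M_i^{q/2}]$, which is what yields the stated $K_{U,1}$ and $K_{U,2}$ instead of coordinatewise sums. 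The only remaining delicate point is to track the universal constant $\tilde{K}_2$ through Bernstein's inequality so that the final probability is genuinely of order $1/\log p$ rather than merely $o(1)$.
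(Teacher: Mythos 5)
Your proof is correct, but it follows a genuinely different route from the paper's. The paper does not prove the bound from scratch: it invokes two results from Chernozhukov et al.\ (2019) --- a Fuk--Nagaev-type maximal inequality (their Lemma D.2), which bounds the deviation of $\max_j|\widehat{\mu}_j-\mu_j|$ above twice its expectation by $\exp\bigl(-a^2/(3\sigma^2)\bigr)+K_1 N K_U/a^2$, and an expectation bound (their Lemma D.3) giving $\mathbbm{E}\bigl[\max_j|\widehat{\mu}_j-\mu_j|\bigr]\lesssim \sigma\sqrt{\log p}/N+\sqrt{\mathbbm{E}[\max_{i,j}|Z_{i,j}|^2]}\,\log p/N$ --- then bounds the envelope over $i$ crudely by $\mathbbm{E}[\max_{i,j}|Z_{i,j}|^2]\le N^{4/q}\,\mathbbm{E}[\max_j|Z_{i,j}|^{q/2}]^{4/q}$ (the source of the $N^{-(1-2/q)}$ rate) and sets $a=\sqrt{K_U N\log p}$ (the source of the $K_{U,2}\sqrt{\log p/N}$ term and of the $K_1/\log p$ probability). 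Your truncation argument reproves this from first principles: the bulk term handled by Bernstein plus a union bound plays the role of Lemma D.3's two terms, your Markov bound on $\tfrac1N\sum_i M_i\mathbbm{1}\{M_i>\tau\}$ plays the role of the polynomial term $K_1NK_U/a^2$, and --- notably --- your $K_{U,2}\sqrt{\log p/N}$ term arises as a \emph{deterministic truncation bias} (via Cauchy--Schwarz and $P(M_i>\tau)\le 1/N$), whereas in the paper the same term arises as the \emph{random fluctuation allowance} $a/N$. What the paper's route buys is brevity, at the cost of importing two nontrivial lemmas; what yours buys is a self-contained, elementary proof in which the origin of each of the three terms and of the $1/\log p$ failure probability is transparent, and which is essentially the standard way Fuk--Nagaev inequalities are derived anyway. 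Two cosmetic points: your Bernstein step yields the constant $\max_j\sqrt{\mathbbm{E}[Z_{i,j}^2]}$ rather than $\max_j\mathbbm{E}[Z_{i,j}^2]$ as written in the theorem statement, but this matches the paper's own proof (the statement and proof of the paper disagree on this square root, an internal typo, so you are consistent with the proof); and your bias step $K_{U,2}/\sqrt{N}\le K_{U,2}\sqrt{\log p}/\sqrt{N}$ implicitly needs $\log p\ge 1$, which is harmless since the claimed probability bound is vacuous otherwise.
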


\begin{proof}
First, let 
 $$\sigma^2 = \max\limits_{1 \leq j \leq p} \sum_{i=1}^{N} \mathbbm{E}[|Z_{i,j}|^2],K_U = \mathbbm{E}\left[\max\limits_{1 \leq j \leq p}\left|Z_{i, j}\right|^2\right],
\tilde{K}_U = \mathbbm{E}\left[\max\limits_{1 \leq j \leq p}\left|Z_{i, j}\right|^{\frac{q}{2}}\right].
 $$
Then we use the following maximal inequality, which follows from Lemma D.2 of \citet{chernozhukov2019inference}, for any $a > 0$,
\begin{equation}\label{maxi}
P\left[\max _{1 \leq j \leq p}\left|\widehat{\mu}_j-\mu_j\right| \geq 2 \mathbbm{E} \left(\max _{1 \leq j \leq p}\left|\widehat{\mu}_j-\mu_j\right|\right) + \frac{a}{N} \right] \leq \exp \left(-\frac{a^2 }{3 \sigma^2}\right)+K_1 \frac{NK_{U}}{a^2},
\end{equation}
where $K_1>0$ is a constant.

Lemma D.3 of \citet{chernozhukov2019inference} shows that with $\tilde{K}_2>0$ a universal positive constant, we have
$$
\begin{aligned}
\mathbbm{E} \left[\max _{1 \leq j \leq p}\left|\widehat{\mu}_j-\mu_j\right|\right] & \leq \tilde{K}_2\left[\frac{\sigma\sqrt{\log p}}{N}+\frac{\sqrt{\mathbbm{E}\left[\max\limits_{1 \leq i \leq N}\max\limits_{1 \leq j \leq p}\left|Z_{i,j}\right|^2]\right]} \log p}{N}\right]\\
& \leq \tilde{K}_2\left[\frac{\sigma\sqrt{\log p}}{N}+\frac{N^{\frac{2}{q}}(\tilde{K}_{U})^{\frac{2}{q}} \log p}{N}\right]
\end{aligned}
$$
since
$$
\begin{aligned}
\mathbbm{E}\left[\max _{1 \leq i \leq N}\max _{1 \leq j \leq p}\left|Z_{i,j}\right|^2\right] & = \mathbbm{E}\left[\max _{1 \leq i \leq N}\left(\max _{1 \leq j \leq p}\left|Z_{i,j}\right|^2\right)\right] \\
& \leq \left(\mathbbm{E}\left[\max _{1 \leq i \leq N}\left(\max _{1 \leq j \leq p}\left|Z_{i,j}\right|^2\right)^{\frac{q}{4}}\right]\right)^{\frac{4}{q}}\\
& \leq \left(\mathbbm{E}\left[N\left(\max _{1 \leq j \leq p}\left|Z_{i,j}\right|^2\right)^{\frac{q}{4}}\right]\right)^{\frac{4}{q}}\\
& = N^{\frac{4}{q}} \left(\mathbbm{E}\left[\left(\max _{1 \leq j \leq p}\left|Z_{i,j}\right|^{\frac{q}{2}}\right)\right]\right)^{\frac{4}{q}}.
\end{aligned}
$$
Let $a=\sqrt{K_U N \log p}$ in \eqref{maxi}, then we have
$$
\begin{aligned}
& P\left(\max _{1 \leq j \leq p}\left|\widehat{\mu}_j-\mu_j\right|\right. \left.\geq 2 \tilde{K}_2\left[\frac{\sigma \sqrt{\log p}}{N}+\frac{(\tilde{K}_{U})^{\frac{2}{q}} \log p}{N^{1-\frac{2}{q}}}\right]+\frac{\sqrt{K_U}\sqrt{\log p}}{\sqrt{N}}\right) \\
& \leq \exp \left(\frac{-K_U N\log p}{3 \sigma^2}\right)+ \frac{K_1}{\log p}.
\end{aligned}
$$
Since $\sigma^2 \le N \max\limits_{ 1 \leq j \leq p}\mathbbm{E}\left[|Z_{i, j}|^2\right] \leq  NK_U$,
we have $\frac{\sigma^2}{N} \leq \max\limits_{ 1 \leq j \leq p}\mathbbm{E}\left[|Z_{i, j}|^2\right] \leq K_U$, so that
$$
\begin{aligned}
&P\left(\max _{1 \leq j \leq p}\left|\widehat{\mu}_j-\mu_j\right|\right. \left.\geq K_2\frac{\sqrt{\log p}}{\sqrt{N}} + K_{U,1}\frac{\log p}{N^{1-\frac{2}{q}}} + K_{U,2}\frac{\sqrt{\log p}}{\sqrt{N}}\right) \\
& = P\left(\max _{1 \leq j \leq p}\left|\widehat{\mu}_j-\mu_j\right|\right. \left.\geq 2 \tilde{K}_2\left[\max\limits_{ 1 \leq j \leq p}\sqrt{\mathbbm{E}\left[|Z_{i, j}|^2\right]}\frac{\sqrt{\log p}}{\sqrt{N}}+\frac{(\tilde{K}_{U})^{\frac{2}{q}} \log p}{N^{1-\frac{2}{q}}}\right] + \frac{\sqrt{K_U}\sqrt{\log p}}{\sqrt{N}}\right) \\ 
& \leq \exp \left(\frac{-K_U N\log p}{3 \sigma^2}\right)+ \frac{K_1}{\log p} \leq \exp \left(-\frac{\log p}{3}\right)+ \frac{K_1}{\log p} \leq \frac{3}{\log p} + \frac{K_1}{\log p} = \frac{C_1}{\log p},
\end{aligned}
$$
where $K_1, \tilde{K}_2$ are universal constants, $C_1 = K_1 + 3$, $K_2 = 2\tilde{K}_2\max\limits_{ 1 \leq j \leq p}\sqrt{\mathbbm{E}\left[|Z_{i, j}|^2\right]}$, $K_{U,1} = 2\tilde{K}_2 (\widetilde{K}_U)^{\frac2q}$, and $K_{U,2} = \sqrt{\mathbbm{E}\left[\max\limits_{1 \leq j \leq p}\left|Z_{i, j}\right|^2\right]}=\sqrt{K_U}$ and we used that 
$$\exp \left(-\frac{\log p}{3}\right)=\frac{1}{\exp \left(\frac{\log p}{3}\right)}\le \frac{1}{\frac{\log p}{3}}=\frac{3}{\log p}.$$
\end{proof}

To simplify presentation, let $\boldsymbol{X}_i := \left(X_{i,1}, \ldots, X_{i,p}\right)^{\top} \in \mathbb{R}^p$. We use this definition in the following Lemmas.
\begin{lemma}
\label{lemmakey1}
Recall that $ f_i(t) = \frac{\delta_i(t)}{H\left(t \wedge \widetilde{T}_i\right)} \mathbbm{1}{\{\widetilde{T}_i \leq t\}}, i \in [N]$. Under Assumptions \ref{as1}, \ref{as2} and \ref{asX}, there exist constants $C_2,A_1,A_2,A_3>0$ such that the event
\begin{equation}
\left| \frac{1}{N} \sum_{i = 1}^{N} f_i(t)\boldsymbol{X}_i - \mathbbm{E}\left[f_i(t)\boldsymbol{X}_i\right]\right|_{\infty} \geq  A_1\frac{\sqrt{\log p}}{\sqrt{N}} + A_2\frac{p^{\frac{1}{q}} \log p}{N^{1-\frac{2}{q}}} + A_3\frac{p^{\frac{1}{q}} \sqrt{\log p}}{\sqrt{N}}
\end{equation}
holds with probability at most $\frac{C_2}{\log p}$. 
\end{lemma}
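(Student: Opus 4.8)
The plan is to invoke the Fuk--Nagaev inequality of Theorem \ref{key} with the coordinatewise summands $Z_{i,j}=f_i(t)X_{i,j}$, where $X_{i,j}$ denotes the $j$-th entry of $\boldsymbol{X}_i$. Since the data $\{(\widetilde{T}_i,\delta_i,\boldsymbol{Z}_i)\}$ are i.i.d.\ by Assumption \ref{asX} and both $f_i(t)$ and $\boldsymbol{X}_i$ are measurable functions of these (with $H$ a fixed population quantity), the vectors $\boldsymbol{Z}_i=(Z_{i,1},\dots,Z_{i,p})^{\top}$ are i.i.d.\ as Theorem \ref{key} requires. With this choice $\max_{1\le j\le p}|\widehat\mu_j-\mu_j|$ is exactly $\big|\tfrac1N\sum_{i=1}^N f_i(t)\boldsymbol{X}_i-\mathbbm{E}[f_i(t)\boldsymbol{X}_i]\big|_\infty$, so the theorem already yields a deviation bound of the announced shape with probability at most $C_1/\log p$. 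It then remains to show that the three leading constants $K_2$, $K_{U,1}$, $K_{U,2}$ of Theorem \ref{key} are controlled by an absolute constant, and by an absolute constant times $p^{1/q}$, respectively, so that enlarging the threshold to the one in the statement only shrinks the probability.

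The first ingredient is the almost-sure boundedness of the weight. Because $H$ is a conditional survival function it is non-increasing, and on the event $\{\widetilde{T}_i\le t\}$ we have $t\wedge\widetilde{T}_i=\widetilde{T}_i\le t$, hence $H(t\wedge\widetilde{T}_i)\ge H(t)$; on the complement the numerator of $f_i(t)$ vanishes. Together with the fact established under Assumptions \ref{as1} and \ref{as2} that $H(t)\ge P(\widetilde{T}\ge t)/P(C\ge s)>0$, this gives $0\le f_i(t)\le C_H:=1/H(t)<\infty$ almost surely.

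The second ingredient is a set of moment bounds from Assumption \ref{asX}. Taking $\boldsymbol{u}=e_j$ gives $\mathbbm{E}[|X_{i,j}|^q]\le K_0$ for every $j$, whence $\mathbbm{E}[X_{i,j}^2]\le K_0^{2/q}$ by Jensen's inequality and, summing over coordinates, $\mathbbm{E}[(\max_j|X_{i,j}|)^q]\le\sum_j\mathbbm{E}[|X_{i,j}|^q]\le pK_0$. Using $f_i(t)\le C_H$ I would then control the constants one by one. For $K_2=2\tilde K_2\max_j\mathbbm{E}[f_i(t)^2X_{i,j}^2]$ I pull out $C_H^2$ to obtain $K_2\le 2\tilde K_2 C_H^2K_0^{2/q}$. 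For $K_{U,2}=\sqrt{\mathbbm{E}[\max_j f_i(t)^2X_{i,j}^2]}$ and $K_{U,1}=2\tilde K_2\,\mathbbm{E}[\max_j|f_i(t)X_{i,j}|^{q/2}]^{2/q}$ I pull out $C_H$ and apply Jensen's inequality (with exponents $2/q\le 1$ and $1/2\le 1$) to pass from the $q$-th moment of $\max_j|X_{i,j}|$ to its second and $(q/2)$-th moments, giving $\mathbbm{E}[(\max_j|X_{i,j}|)^2]\le(pK_0)^{2/q}$ and $\mathbbm{E}[(\max_j|X_{i,j}|)^{q/2}]\le(pK_0)^{1/2}$. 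These produce $K_{U,2}\le C_H K_0^{1/q}p^{1/q}$ and $K_{U,1}\le 2\tilde K_2 C_H K_0^{1/q}p^{1/q}$.

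Substituting these three bounds into Theorem \ref{key} and setting $A_1=2\tilde K_2 C_H^2K_0^{2/q}$, $A_2=2\tilde K_2 C_H K_0^{1/q}$, $A_3=C_H K_0^{1/q}$, and $C_2=C_1$ yields the claim, since the threshold in the lemma dominates the one produced by the theorem term by term. The only genuinely delicate point is tracking the exact power of $p$: the factors $p^{1/q}$ in the last two terms enter solely through the maximum over the $p$ coordinates via the union-type bound $(\max_j|X_{i,j}|)^q\le\sum_j|X_{i,j}|^q$, and one must apply Jensen's inequality in the correct direction so that no spurious additional powers of $p$ appear.
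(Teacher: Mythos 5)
Your proof is correct, and its skeleton coincides with the paper's: apply Theorem \ref{key} to the i.i.d.\ vectors $\boldsymbol{Z}_i=f_i(t)\boldsymbol{X}_i$, then show the resulting constants are of order $1$, $p^{1/q}$ and $p^{1/q}$ respectively, so that enlarging the threshold only shrinks the exceedance probability below $C_1/\log p$. The only genuine difference is how the censoring weight is controlled. The paper does not use an almost-sure bound on $f_i(t)$: it conditions on $\boldsymbol{X}_i$ and exploits Assumption \ref{as1} to compute $\mathbbm{E}\left[f_i^{q/2}(t)\,\middle|\,\boldsymbol{X}_i\right]$, using that $\mathbbm{E}\left[\mathbbm{1}\{C_i\ge t\wedge T_i\}\,\middle|\,T_i,\boldsymbol{X}_i\right]=H(t\wedge T_i)$ so that one power of $H$ cancels, leaving a factor $C_r^{-(q/2-1)}$ times a logistic probability, with $C_r$ a lower bound on $H$. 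You instead use the deterministic bound $0\le f_i(t)\le 1/H(t)$, which is valid: on $\{\widetilde T_i\le t\}$ monotonicity gives $H(t\wedge\widetilde T_i)=H(\widetilde T_i)\ge H(t)$, and $H(t)>0$ under Assumption \ref{as2} (the paper records exactly this in a footnote). Both routes yield the same orders in $p$ and $N$; the paper's conditional computation buys marginally sharper constants ($C_r^{-(1-2/q)}$ rather than $C_r^{-1}$ in front of $p^{1/q}$) and is the same calculation reused elsewhere in the appendix (e.g.\ to identify $\mathbbm{E}[f_i(t)\,|\,\boldsymbol{X}_i]$ with the logistic probability in Lemma \ref{one point} and in the bound for term $(a3)$ of Theorem \ref{the3}), whereas your uniform bound is shorter and isolates more transparently why Assumption \ref{as2} is what makes the weights well behaved. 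Your moment bookkeeping (Jensen and the union bound $\mathbbm{E}[(\max_j|X_{i,j}|)^q]\le pK_0$, matching the paper's inequality \eqref{holder}) and the final domination argument for the probability are both sound.
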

\begin{proof}
    Let $\boldsymbol{Z}_i$ in Theorem \ref{key} be $f_i(t)\boldsymbol{X}_i$ and by Theorem \ref{key}, we know that 
$$
\left| \frac{1}{N} \sum_{i = 1}^{N} f_i(t)\boldsymbol{X}_i - \mathbbm{E}\left[f_i(t)\boldsymbol{X}_i\right]\right|_{\infty} \geq  K_{4}\frac{\sqrt{\log p}}{\sqrt{N}} + K_{U,3}\frac{ \log p}{N^{1-\frac{2}{q}}} + K_{U,4}\frac{\sqrt{\log p}}{\sqrt{N}}
$$
holds with probability at most $\frac{C_2}{\log p}$, where $C_2, \tilde{K}_4$ are universal constants, and 
\begin{align*}K_4 &= 2\tilde{K}_4 \max\limits_{1 \leq j \leq p}\mathbbm{E}\left[\left|f_i(t)X_{i,j}\right|^2\right]^{\frac{1}{2}} \\
K_{U,3}& = 2\tilde{K}_4 \mathbbm{E}\left[\max\limits_{1 \leq j \leq p}\left|f_i(t)X_{i, j}\right|^{\frac{q}{2}}\right]^{\frac{2}{q}}\\
K_{U,4} &= \sqrt{\mathbbm{E}\left[\max\limits_{1 \leq j \leq p}\left|f_i(t)X_{i, j}\right|^2\right]}.\end{align*}
Next, we see that
    $$
\begin{aligned}
\mathbbm{E}[f_i^{\frac{q}{2}}(t)|\boldsymbol{X}_i] & = \mathbbm{E}\left[\frac{\delta_i^{\frac{q}{2}}(t)}{H^{\frac{q}{2}}\left(t \wedge \widetilde{T}_i\right)} \mathbbm{1}^{\frac{q}{2}}{\{\widetilde{T}_i \leq t\}}\bigg|\boldsymbol{X}_i\right]  = \mathbbm{E}\left[\frac{\delta_i(t)}{H^{\frac{q}{2}}\left(t \wedge T_i\right)} \mathbbm{1}{\{T_i \leq t\}}\bigg|\boldsymbol{X}_i\right] \\
& = \mathbbm{E}\left[\frac{\mathbbm{1}{\{T_i \leq t\}}}{H^{\frac{q}{2}}\left(t \wedge T_i\right)} \mathbbm{E}\left[\mathbbm{1}{\{C_i \geq t \wedge T_i \}}|T,\boldsymbol{X}_i\right]|\boldsymbol{X}_i\right] = \mathbbm{E}\left[\frac{\mathbbm{1}{\{T_i \leq t\}}}{H^{{\frac{q}{2}}-1}\left(t \wedge T_i\right)}\bigg|\boldsymbol{X}_i\right] \\
& \leq \frac{1}{C_r^{{\frac{q}{2}}-1}}\frac{\exp \left(\boldsymbol{X}_i^{\top} \boldsymbol{\beta}_0+ E_{i}\right)}{1+\exp \left(\boldsymbol{X}_i^{\top} \boldsymbol{\beta}_0+ E_{i}\right)}
\end{aligned}
$$
since $\delta_i(t)\mathbbm{1}{\{\widetilde{T}_i \leq t\}} = \mathbbm{1}{\{C_i \geq t \wedge T_i\}}\mathbbm{1}{\{T_i \wedge C_i \leq t\}} =1$ together with $\mathbbm{1}{\{T_i \leq t\}} = 1$ mean $\widetilde{T}_i = T_i$, and $H\left(t \wedge T_i\right) = P(C \geq t \wedge T_i) = \mathbbm{E}\left[\mathbbm{1}{\{C_i \geq t \wedge T_i \}}|T_i, \boldsymbol{X}_i\right]\ge C_r$ by Assumption \ref{as2}.
Then, we have
    $$
    \begin{aligned}
     \mathbbm{E}\left[\max\limits_{1 \leq j \leq p}|f_i(t)X_{i,j}|^{\frac{q}{2}}\right] 
    & = \mathbbm{E}\left[f_i^{\frac{q}{2}}(t)\left(\max\limits_{1 \leq j \leq p}|X_{i,j}|^{\frac{q}{2}}\right)\right] \\
    & = \mathbbm{E}\left[\mathbbm{E}\left[f_i^{\frac{q}{2}}(t)\left(\max\limits_{1 \leq j \leq p}|X_{i,j}|^{\frac{q}{2}}\right)\bigg|\boldsymbol{X}_i\right]\right] \\
    & = \mathbbm{E}\left[\left(\max\limits_{1 \leq j \leq p}|X_{i,j}|^{\frac{q}{2}}\right)\mathbbm{E}\left[f_i^{\frac{q}{2}}(t)|\boldsymbol{X}_i\right]\right]  \\
    & \leq \mathbbm{E}\left[\left(\frac{1}{C_r^{\frac{q}{2}-1}} \frac{\exp \left(\boldsymbol{X}_i^{\top} \boldsymbol{\beta}_0+ E_{i}\right)}{1+\exp \left(\boldsymbol{X}_i^{\top} \boldsymbol{\beta}_0+ E_{i}\right)}\right)\left(\max\limits_{1 \leq j \leq p}|X_{i,j}|^{\frac{q}{2}}\right)\right] \\
    & \leq \frac{1}{{C_r^{\frac{q}{2}-1}}} \mathbbm{E}\left[\max\limits_{1 \leq j \leq p}|X_{i,j}|^{\frac{q}{2}}\right]
    \end{aligned}
    $$
since $\frac{\exp \left(\boldsymbol{X}_i^{\top} \boldsymbol{\beta}_0\right)}{1+\exp \left(\boldsymbol{X}_i^{\top} \boldsymbol{\beta}_0\right)} \leq 1$. 
By Holder’s inequality and Assumption \ref{asX}, we then see that, since $q \geq 4$,
\begin{equation}\label{holder}\begin{aligned}
\mathbbm{E}\left[\max\limits_{1 \leq j \leq p}\left|X_{i, j}\right|^2\right]^{\frac{1}{2}} &\leq \mathbbm{E}\left[\max\limits_{1 \leq j \leq p}\left|X_{i, j}\right|^{\frac{q}{2}}\right]^{\frac{2}{q}}\\
& \le \mathbbm{E}\left[\max\limits_{1 \leq j  \leq p}\left|X_{i, j}\right|^q\right]^{\frac{1}{q}}  \leq \left(p \max\limits_{1 \leq j \leq p} \mathbbm{E}\left[\left|X_{i, j}\right|^q\right]\right)^{\frac{1}{q}} \leq (pK_0)^{\frac{1}{q}},
\end{aligned}
\end{equation}
which means $\mathbbm{E}\left[\max\limits_{1 \leq j \leq p}|f_i(t)X_{i,j}|^{\frac{q}{2}}\right]^{\frac{2}{q}} \leq \frac{(pK_0)^{\frac{1}{q}}}{C_r^{1-\frac{2}{q}}}$ that is $K_{U,3} \leq 2\tilde{K}_4 \frac{(pK_0)^{\frac{1}{q}}}{C_r^{1-\frac{2}{q}}} \sim p^{\frac{1}{q}}$ and $K_{U,4} \leq \frac{(pK_0)^{\frac{1}{q}}}{\sqrt{C_r}} \sim p^{\frac{1}{q}}$. 

Moreover, for $\max\limits_{1 \leq j \leq p}\mathbbm{E}\left[\left|f_i(t)X_{i,j}\right|^2\right]$, we can similarly prove that 
$$
\max\limits_{1 \leq j \leq p}\mathbbm{E}\left[\left|f_i(t)X_{i,j}\right|^2\right] \leq \max\limits_{1 \leq j \leq p}\frac{\mathbbm{E}\left[\left|X_{i,j}\right|^2\right]}{C_r} \leq \frac{ \max\limits_{1 \leq j \leq p}\mathbbm{E}\left[\left|X_{i,j}\right|^q\right]^{\frac{2}{q}}}{C_r}, \quad q \geq 4,
$$
which means $K_4 \leq 2\tilde{K}_4\sqrt{\frac{K_0^{\frac{2}{q}}}{C_r}}$.
Notice that  $K_4$, $K_{U,3}$ and $K_{U,4}$ are all positively related to $K_0$ and negatively related to $C_r$. 

Overall, under Assumptions \ref{as1}, \ref{as2}, and \ref{asX}, there exist a universal constant 
$C_2$ and $A_1, A_2, A_3$ are finite constants depend on $K_0$, such that, we have 
$$
\begin{aligned}
& P\left(\left| \frac{1}{N} \sum_{i = 1}^{N} f_i(t)\boldsymbol{X}_i - \mathbbm{E}\left[f_i(t)\boldsymbol{X}_i\right]\right|_{\infty} \geq  A_1\frac{\sqrt{\log p}}{\sqrt{N}} + A_2\frac{p^{\frac{1}{q}} \log p}{N^{1-\frac{2}{q}}} + A_3\frac{p^{\frac{1}{q}}\sqrt{\log p}}{\sqrt{N}}\right) \\
& \leq P\left(\left| \frac{1}{N} \sum_{i = 1}^{N} f_i(t)\boldsymbol{X}_i - \mathbbm{E}\left[f_i(t)\boldsymbol{X}_i\right]\right|_{\infty} \geq  K_{4}\frac{\sqrt{\log p}}{\sqrt{N}} + K_{U,3}\frac{ \log p}{N^{1-\frac{2}{q}}} + K_{U,4}\frac{\sqrt{\log p}}{\sqrt{N}}\right)\\
& \leq \frac{C_2}{\log p}.
\end{aligned}
$$
\end{proof}

\begin{lemma}
\label{lemmakey2} Under Assumption \ref{asX}, there exist constants $C_3,A_4,A_5,A_6>0$ such that the event
\begin{equation}\notag
\begin{aligned}
&\left| \frac{1}{N} \sum_{i = 1}^{N} \frac{\exp \left(\boldsymbol{X}_i^{\top} \boldsymbol{\beta}+ E_{i}\right)}{1+\exp \left(\boldsymbol{X}_i^{\top} \boldsymbol{\beta}+ E_{i}\right)}\boldsymbol{X}_i - \mathbbm{E}\left[\frac{\exp \left(\boldsymbol{X}_i^{\top} \boldsymbol{\beta}+ E_{i}\right)}{1+\exp \left(\boldsymbol{X}_i^{\top} \boldsymbol{\beta}+ E_{i}\right)}\boldsymbol{X}_i\right]\right|_{\infty}\\
& \geq  A_4\frac{\sqrt{\log p}}{\sqrt{N}} + A_5\frac{p^{\frac{1}{q}} \log p}{N^{1-\frac{2}{q}}} + A_6\frac{p^{\frac{1}{q}} \sqrt{\log p}}{\sqrt{N}}
\end{aligned}
\end{equation}
holds with probability at most $\frac{C_3}{\log p}$.
\end{lemma}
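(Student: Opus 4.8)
The plan is to follow the Fuk--Nagaev route already used for Lemma \ref{lemmakey1}, adding the single structural observation that the logistic link takes values in $[0,1]$, so that the supremum over $\boldsymbol\beta$ does not inflate the rate. Fix $\boldsymbol\beta\in\mathbb R^p$ and write $\psi_i(\boldsymbol\beta)=\exp(\boldsymbol X_i^\top\boldsymbol\beta+E_i)/(1+\exp(\boldsymbol X_i^\top\boldsymbol\beta+E_i))\in[0,1]$. I would apply Theorem \ref{key} to the summands $\boldsymbol Z_i=\psi_i(\boldsymbol\beta)\boldsymbol X_i$. Because $0\le\psi_i(\boldsymbol\beta)\le 1$, the three moment quantities produced by Theorem \ref{key} are each dominated, uniformly in $\boldsymbol\beta$, by the corresponding moments of $\boldsymbol X_i$ alone: $\max_j\mathbb E[|\psi_i X_{i,j}|^2]\le\max_j\mathbb E[|X_{i,j}|^2]$, $\mathbb E[\max_j|\psi_i X_{i,j}|^{q/2}]^{2/q}\le\mathbb E[\max_j|X_{i,j}|^{q/2}]^{2/q}$, and $\mathbb E[\max_j|\psi_i X_{i,j}|^2]^{1/2}\le\mathbb E[\max_j|X_{i,j}|^2]^{1/2}$. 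The chain of H\"older bounds recorded in \eqref{holder}, together with Assumption \ref{asX}, controls the last two by a multiple of $(pK_0)^{1/q}\sim p^{1/q}$ and the first by a multiple of $K_0^{2/q}$; crucially, none of these depend on $\boldsymbol\beta$. This reproduces, for each fixed $\boldsymbol\beta$, the three-term bound $A_4\sqrt{\log p/N}+A_5 p^{1/q}\log p/N^{1-2/q}+A_6 p^{1/q}\sqrt{\log p}/\sqrt N$ with $\boldsymbol\beta$-free constants $A_4,A_5,A_6\sim K_0$, on an event whose complement has probability at most $C_3/\log p$.

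The genuine difficulty, and the step I expect to be the main obstacle, is to promote this pointwise-in-$\boldsymbol\beta$ estimate to the uniform statement over all $\boldsymbol\beta\in\mathbb R^p$ claimed by the lemma. Since Theorem \ref{key} only delivers the polynomially small exceptional probability $C/\log p$, a covering-plus-union-bound argument is unavailable: controlling the process even at two distinct parameter values by a union bound would already destroy the probability estimate. Hence the uniformity cannot be purchased through metric entropy and must instead be extracted from the monotonicity and boundedness of the logistic link, which is precisely what keeps the Fuk--Nagaev constants $\boldsymbol\beta$-free in the previous paragraph.

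To exploit that structure I would pass to a layer-cake representation $\psi_i(\boldsymbol\beta)=\int_0^1\mathbbm 1\{\boldsymbol X_i^\top\boldsymbol\beta+E_i>g(u)\}\,du$, where $g$ is the increasing inverse of $v\mapsto\exp(v)/(1+\exp(v))$. Splitting $X_{i,j}=X_{i,j}^+-X_{i,j}^-$ and interchanging the integral with the empirical and population averages recasts the centred coordinate process as a mixture over $u$ of threshold processes $\frac1N\sum_i\mathbbm 1\{\boldsymbol X_i^\top\boldsymbol\beta+E_i>c\}X_{i,j}^{\pm}-\mathbb E[\,\cdot\,]$. Monotonicity of the indicator in the scalar threshold $c$ collapses the one-dimensional part of the supremum, reducing the relevant configurations to the extreme weights $\psi_i\equiv 0$ and $\psi_i\equiv 1$ up to the oscillation of the population mean; both extremes are deterministic functionals of $\boldsymbol X_i$ and are therefore governed directly by Theorem \ref{key}, so that the remaining dependence on the direction of $\boldsymbol\beta$ is absorbed into the $\boldsymbol\beta$-free constants already obtained rather than costing an extra power of $p$. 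Taking the maximum over the $p$ coordinates inside the Fuk--Nagaev constants (which is exactly where the $p^{1/q}$ factors enter) and summing the positive and negative parts then yields the stated inequality on a single event of probability at most $C_3/\log p$. The rigorous justification of this directional collapse, without recourse to a union bound incompatible with the heavy-tailed concentration, is the crux of the argument.
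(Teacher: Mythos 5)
Your first paragraph is exactly the paper's proof of Lemma \ref{lemmakey2}: apply Theorem \ref{key} to $\boldsymbol{Z}_i=\psi_i(\boldsymbol{\beta})\boldsymbol{X}_i$ and use $0\le\psi_i(\boldsymbol{\beta})\le 1$ to dominate the three Fuk--Nagaev constants by the corresponding moments of $\boldsymbol{X}_i$ alone, which Assumption \ref{asX} controls exactly as in Lemma \ref{lemmakey1}. The paper stops there: it establishes the bound for each \emph{fixed} $\boldsymbol{\beta}$ with constants $A_4,A_5,A_6,C_3$ free of $\boldsymbol{\beta}$, and this pointwise version is all that is ever used downstream --- in the proof of Theorem \ref{the3} the lemma is invoked only at $\boldsymbol{\beta}=\boldsymbol{\beta}_0$ (to control the term $(a3)$, since $\mathbbm{E}[f_i(t)|\boldsymbol{X}_i]$ is the logistic function evaluated at $\boldsymbol{X}_i^{\top}\boldsymbol{\beta}_0+E_i$), while uniformity over the ball $\Omega(\boldsymbol{\beta}-\boldsymbol{\beta}_0)\le M$ is obtained separately through the dual-norm inequality of Lemma \ref{cor1}, not through this lemma. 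So the $\sup_{\boldsymbol{\beta}\in\mathbb{R}^p}$ in the statement is, in effect, an overstatement that the paper's own argument neither justifies nor needs.

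Your attempt to genuinely prove the uniform statement (paragraphs two and three) has a real gap. The layer-cake reduction to threshold processes is fine, but the claimed ``directional collapse'' is not: for a fixed direction, both $c\mapsto \frac{1}{N}\sum_i\mathbbm{1}\{\boldsymbol{X}_i^{\top}\boldsymbol{\beta}+E_i>c\}X_{i,j}^{\pm}$ and its population counterpart are monotone in $c$, yet the supremum over $c$ of their \emph{difference} is not controlled by the two extreme configurations $\psi_i\equiv 0$ and $\psi_i\equiv 1$ --- the difference of two monotone functions can be maximal strictly between the endpoints (this is precisely the Kolmogorov--Smirnov phenomenon: the empirical and true CDFs agree at $\pm\infty$ while their sup-distance is of order $N^{-1/2}$). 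Moreover, even granting a resolution of the scalar threshold, the supremum over directions $\boldsymbol{\beta}/|\boldsymbol{\beta}|_2$ runs over a class of half-spaces whose VC dimension is of order $p$, so any honest uniform bound carries a $\sqrt{p/N}$-type term or requires an entropy/union-bound argument that, as you yourself observe, is incompatible with the merely polynomial $C_3/\log p$ exceptional probability. The assertion that this dependence is ``absorbed into the $\boldsymbol{\beta}$-free constants'' is where the argument breaks; nothing absorbs it. The correct reading is that the lemma should be (and in the paper's usage effectively is) a pointwise-in-$\boldsymbol{\beta}$ statement with $\boldsymbol{\beta}$-uniform constants, which your first paragraph already proves completely.
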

\begin{proof}
    For all $i$, let $\boldsymbol{Z}_i$ in Theorem \ref{key} be $\frac{\exp \left(\boldsymbol{X}_i^{\top} \boldsymbol{\beta} + E_{i}\right)}{1+\exp \left(\boldsymbol{X}_i^{\top} \boldsymbol{\beta} + E_{i}\right)}\boldsymbol{X}_i$ and then we know that 
$$
\begin{aligned}
&\left| \frac{1}{N} \sum_{i = 1}^{N} \frac{\exp \left(\boldsymbol{X}_i^{\top} \boldsymbol{\beta} + E_{i}\right)}{1+\exp \left(\boldsymbol{X}_i^{\top} \boldsymbol{\beta} + E_{i}\right)}\boldsymbol{X}_i - \mathbbm{E}\left[\frac{\exp \left(\boldsymbol{X}_i^{\top} \boldsymbol{\beta} + E_{i}\right)}{1+\exp \left(\boldsymbol{X}_i^{\top} \boldsymbol{\beta} + E_{i}\right)}\boldsymbol{X}_i\right]\right|_{\infty} \\
&\geq  K_{6}\frac{\sqrt{\log p}}{\sqrt{N}} + K_{U,5}\frac{ \log p}{N^{1-\frac{2}{q}}} + K_{U,6}\frac{\sqrt{\log p}}{\sqrt{N}} \end{aligned}
$$
holds with probability at most $\frac{C_3}{\log p}$, where $C_3, \tilde{K}_6$ are universal constants. It follows from the proof of Lemma \ref{lemmakey1} that $$\medmath{
K_6 = 2\tilde{K}_6 \max\limits_{1 \leq j \leq p}\mathbbm{E}\left[\left|\frac{\exp \left(\boldsymbol{X}_i^{\top} \boldsymbol{\beta} + E_{i}\right)}{1+\exp \left(\boldsymbol{X}_i^{\top} \boldsymbol{\beta} + E_{i}\right)}X_{i,j}\right|^2\right]^{\frac{1}{2}} \leq 2\tilde{K}_6 \max\limits_{1 \leq j \leq p}\mathbbm{E}\left[\left|X_{i,j}\right|^2\right]^{\frac{1}{2}} \leq 2\tilde{K}_6 K_0^{\frac{1}{q}}};$$
$$
K_{U,5} = 2\tilde{K}_6 \mathbbm{E}\left[\max\limits_{1 \leq j \leq p}\left|\frac{\exp \left(\boldsymbol{X}_i^{\top} \boldsymbol{\beta} + E_{i}\right)}{1+\exp \left(\boldsymbol{X}_i^{\top} \boldsymbol{\beta} + E_{i}\right)}X_{i, j}\right|^{\frac{q}{2}}\right]^{\frac{2}{q}} \leq 2\tilde{K}_6 \mathbbm{E}\left[\max\limits_{1 \leq j \leq p}\left|X_{i, j}\right|^{\frac{q}{2}}\right]^{\frac{2}{q}} \leq  2\tilde{K}_6(pK_0)^{\frac{1}{q}},$$
and 
$$
K_{U,6} = \sqrt{\mathbbm{E}\left[\max\limits_{1 \leq j \leq p}\left|\frac{\exp \left(\boldsymbol{X}_i^{\top} \boldsymbol{\beta} + E_{i}\right)}{1+\exp \left(\boldsymbol{X}_i^{\top} \boldsymbol{\beta} + E_{i}\right)}X_{i, j}\right|^2\right]} \leq \sqrt{\mathbbm{E}\left[\max\limits_{1 \leq j \leq p}\left|X_{i, j}\right|^2\right]} \leq (pK_0)^{\frac{1}{q}}.$$
Therefore, there exist constants $C_3,A_4, A_5, A_6 >0$ such that \begin{align*}
&\left| \frac{1}{N} \sum_{i = 1}^{N} \frac{\exp \left(\boldsymbol{X}_i^{\top} \boldsymbol{\beta} + E_{i}\right)}{1+\exp \left(\boldsymbol{X}_i^{\top} \boldsymbol{\beta} + E_{i}\right)}\boldsymbol{X}_i - \mathbbm{E}\left[\frac{\exp \left(\boldsymbol{X}_i^{\top} \boldsymbol{\beta} + E_{i}\right)}{1+\exp \left(\boldsymbol{X}_i^{\top} \boldsymbol{\beta} + E_{i}\right)}\boldsymbol{X}_i\right]\right|_{\infty} \\
&\geq  A_4\frac{\sqrt{\log p}}{\sqrt{N}} + A_5\frac{p^{\frac{1}{q}} \log p}{N^{1-\frac{2}{q}}} + A_6\frac{p^{\frac{1}{q}} \sqrt{\log p}}{\sqrt{N}} 
\end{align*}
holds with probability at most $\frac{C_3}{\log p}$.
\end{proof}

\begin{lemma}\label{use sample1}
Under Assumption \ref{asX}, for all $\boldsymbol{\beta}\in\mathbb{R}^p$, there exist constants $C_4,B_1,B_2,B_3$ such that the event
\begin{equation}\label{U_H}
\begin{aligned}
& \left|\frac{1}{N} \sum_{i=1}^N  \frac{\exp(\boldsymbol{X}_i^{\top}\boldsymbol{\beta} + E_{i})}{ \big(1+\exp(\boldsymbol{X}_i^{\top}\boldsymbol{\beta} + E_{i})\big)^2 } \boldsymbol{X}_{i} \boldsymbol{X}_{i}^{\top}  
 - \mathbbm{E}\left[\frac{\exp(\boldsymbol{X}_i^{\top}\boldsymbol{\beta} + E_{i})}{ \big(1+\exp(\boldsymbol{X}_i^{\top}\boldsymbol{\beta} + E_{i})\big)^2 }   \boldsymbol{X}_{i} \boldsymbol{X}_{i}^{\top}  \right]\right|_{\infty} \\
 & \geq B_1\frac{\sqrt{\log p}}{\sqrt{N}} + B_2\frac{p^{\frac{2}{q}}  \log p}{N^{1-\frac{2}{q}}} + B_3\frac{p^{\frac{2}{q}}  \sqrt{\log p}}{\sqrt{N}} 
 \end{aligned}
\end{equation}
holds with probability at most $\frac{C_4}{\log p}$.
\end{lemma}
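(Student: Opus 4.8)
The plan is to reduce the matrix deviation measured in the entrywise max norm to a single maximal deviation of a vector of dimension $p^2$, and then invoke the Fuk--Nagaev inequality of Theorem \ref{key}, mirroring the proofs of Lemmas \ref{lemmakey1} and \ref{lemmakey2}. Write $\pi_i(\boldsymbol{\beta}) = \frac{\exp(\boldsymbol{X}_i^{\top}\boldsymbol{\beta}+E_i)}{(1+\exp(\boldsymbol{X}_i^{\top}\boldsymbol{\beta}+E_i))^2}$ and note the elementary uniform bound $0 \le \pi_i(\boldsymbol{\beta}) \le 1/4$, which holds for every $\boldsymbol{\beta}$; this is exactly what allows the constants $B_1,B_2,B_3,C_4$ to be chosen independently of $\boldsymbol{\beta}$. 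For fixed $\boldsymbol{\beta}$, set $\boldsymbol{Z}_i = \operatorname{vec}\left(\pi_i(\boldsymbol{\beta})\boldsymbol{X}_i\boldsymbol{X}_i^{\top}\right) \in \mathbb{R}^{p^2}$, which are i.i.d. since $\boldsymbol{X}_i$ and $E_i$ are deterministic transforms of $\boldsymbol{Z}_i$. With this identification the left-hand side of \eqref{U_H} is precisely $\max_{1\le j,l\le p}|\widehat{\mu}_{(j,l)}-\mu_{(j,l)}|$ in the notation of Theorem \ref{key}.

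Applying Theorem \ref{key} with ambient dimension $p^2$ in place of $p$, and using $\log(p^2)=2\log p$ so that the logarithmic factors are unchanged up to the absolute constant $2$, yields with probability at least $1-C_4/\log p$ a bound of the form $K_2\tfrac{\sqrt{\log p}}{\sqrt{N}} + K_{U,1}\tfrac{\log p}{N^{1-2/q}} + K_{U,2}\tfrac{\sqrt{\log p}}{\sqrt{N}}$, where $K_2,K_{U,1},K_{U,2}$ are the three constants of Theorem \ref{key} evaluated at $\boldsymbol{Z}_i$. It then remains to control these three quantities using $\pi_i(\boldsymbol{\beta})\le 1/4$ together with Assumption \ref{asX}. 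For $K_2$, by Cauchy--Schwarz and Jensen (valid since $q\ge 4$), $\mathbbm{E}[|X_{i,j}X_{i,l}|^2] \le \mathbbm{E}[|X_{i,j}|^q]^{2/q}\mathbbm{E}[|X_{i,l}|^q]^{2/q} \le K_0^{4/q}$, so $K_2 \lesssim K_0^{4/q}$ is a constant, producing the $B_1\sqrt{\log p}/\sqrt{N}$ term. For the two heavy-tail constants the key observation is the identity $\max_{1\le j,l\le p}|X_{i,j}X_{i,l}| = \left(\max_{1\le j\le p}|X_{i,j}|\right)^2$, combined with $\mathbbm{E}\left[\left(\max_j|X_{i,j}|\right)^q\right] \le p\max_j\mathbbm{E}[|X_{i,j}|^q] \le pK_0$, exactly as in \eqref{holder}. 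This gives $K_{U,1} \lesssim (pK_0)^{2/q} \sim p^{2/q}$ and $K_{U,2} \lesssim (pK_0)^{2/q} \sim p^{2/q}$, producing the $B_2\, p^{2/q}\log p/N^{1-2/q}$ and $B_3\, p^{2/q}\sqrt{\log p}/\sqrt{N}$ terms.

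Combining these three bounds produces the inequality of \eqref{U_H} with constants $B_1,B_2,B_3>0$ and $C_4>0$ that are uniform in $\boldsymbol{\beta}$, which completes the proof. The argument involves no genuine obstacle beyond careful bookkeeping; the only points requiring attention are (i) verifying that enlarging the index set from $p$ to $p^2$ leaves the $\log p$ rate intact, which holds because $\log(p^2)=2\log p$ is absorbed into the constants, and (ii) deriving moment bounds for the products $X_{i,j}X_{i,l}$ from Assumption \ref{asX}, which is phrased for linear forms $\boldsymbol{X}_i^{\top}\boldsymbol{u}$ rather than products; this is resolved by the maximum identity above, and by taking $\boldsymbol{u}=\boldsymbol{e}_j$ to obtain $\mathbbm{E}[|X_{i,j}|^q]\le K_0$.
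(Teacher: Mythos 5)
Your proposal is correct and takes essentially the same route as the paper's proof: apply Theorem \ref{key} to $\boldsymbol{Z}_i=\operatorname{vec}\bigl(\pi_i(\boldsymbol{\beta})\boldsymbol{X}_i\boldsymbol{X}_i^{\top}\bigr)$ in ambient dimension $p^2$ (so $\log(p^2)=2\log p$ is absorbed into constants), use the uniform bound on the logistic weight to get constants independent of $\boldsymbol{\beta}$, and bound the three moment constants by $K_0^{4/q}$ and $p^{2/q}K_0^{2/q}$ via Assumption \ref{asX}. If anything, your identity $\max_{j,l}|X_{i,j}X_{i,l}|=(\max_j|X_{i,j}|)^2$ gives a tighter justification of the product-moment bound than the paper's display \eqref{holder2}, whose intermediate step bounding $\mathbbm{E}\bigl[\sum_j\max_l|X_{i,j}X_{i,l}|^{q/2}\bigr]$ by $p\max_{j,l}\mathbbm{E}\bigl[|X_{i,j}X_{i,l}|^{q/2}\bigr]$ is loose as written and is most cleanly repaired by exactly your identity.
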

\begin{proof}
    Let $\boldsymbol{Z}_i$ in Theorem \ref{key} become $\frac{\exp(\boldsymbol{X}_i^{\top}\boldsymbol{\beta} + E_{i})}{ \big(1+\exp(\boldsymbol{X}_i^{\top}\boldsymbol{\beta} + E_{i})\big)^2 }\operatorname{vec}(\boldsymbol{X}_i\boldsymbol{X}_i^{\top})$. By Theorem \ref{key}, we have
    \begin{equation}
  \begin{aligned}  
& \left|\frac{1}{N} \sum_{i=1}^N  \frac{\exp(\boldsymbol{X}_i^{\top}\boldsymbol{\beta} + E_{i})}{ \big(1+\exp(\boldsymbol{X}_i^{\top}\boldsymbol{\beta} + E_{i})\big)^2 } \boldsymbol{X}_{i} \boldsymbol{X}_{i}^{\top}  
 - \mathbbm{E}\left[\frac{\exp(\boldsymbol{X}_i^{\top}\boldsymbol{\beta} + E_{i})}{ \big(1+\exp(\boldsymbol{X}_i^{\top}\boldsymbol{\beta} + E_{i})\big)^2 }   \boldsymbol{X}_{i} \boldsymbol{X}_{i}^{\top}  \right]\right|_{\infty} \\
 & \geq K_8\frac{\sqrt{2\log p}}{\sqrt{N}} + K_{U,7}\frac{2\log p}{N^{1-\frac{2}{q}}} + K_{U,8}\frac{\sqrt{2\log p}}{\sqrt{N}} 
 \end{aligned}
\end{equation}
holds with probability at most $\frac{C_4^{\prime}}{2\log p}$, where $C_4^{\prime}, \tilde{K}_8>0$ are universal constants and, by Assumption \ref{asX},
\begin{align*}K_8 &= 2\tilde{K}_8 \max\limits_{1 \leq j,l \leq p}\mathbbm{E}\left[\left(\frac{\exp(\boldsymbol{X}_i^{\top}\boldsymbol{\beta} + E_{i})}{ \big(1+\exp(\boldsymbol{X}_i^{\top}\boldsymbol{\beta} + E_{i})\big)^2 }\right)^2\left|X_{i,j}X_{i,l}\right|^2\right]^{\frac{1}{2}} \\
&\leq 2\tilde{K}_8 \max\limits_{1 \leq j,l \leq p}\mathbbm{E}\left[\left|X_{i,j}X_{i,l}\right|^2\right]^{\frac{1}{2}} \leq 2\tilde{K}_8K_0^{\frac{2}{q}};
\end{align*}
$$
\begin{aligned}
& K_{U,7} = 2\tilde{K}_8 \mathbbm{E}\left[\max\limits_{1 \leq j,l \leq p}\left(\frac{\exp(\boldsymbol{X}_i^{\top}\boldsymbol{\beta} + E_{i})}{ \big(1+\exp(\boldsymbol{X}_i^{\top}\boldsymbol{\beta} + E_{i})\big)^2 }\right)^{\frac{q}{2}}\left|X_{i, j}X_{i,l}\right|^{\frac{q}{2}}\right]^{\frac{2}{q}} \\
& \leq 2\tilde{K}_8 \mathbbm{E}\left[\max\limits_{1 \leq j,l \leq p}\left|X_{i, j}X_{i,l}\right|^{\frac{q}{2}}\right]^{\frac{2}{q}} \leq 2\tilde{K}_8p^{\frac{2}{q}}K_0^{\frac{2}{q}}
\end{aligned}
$$
and 
\begin{align*}K_{U,8} &= \sqrt{\mathbbm{E}\left[\max\limits_{1 \leq j,l \leq p}\left(\frac{\exp(\boldsymbol{X}_i^{\top}\boldsymbol{\beta} + E_{i})}{ \big(1+\exp(\boldsymbol{X}_i^{\top}\boldsymbol{\beta} + E_{i})\big)^2 }\right)^2\left|X_{i, j}X_{i,l}\right|^2\right]}\\
&\leq \sqrt{\mathbbm{E}\left[\max\limits_{1 \leq j,l \leq p}\left|X_{i, j}X_{i,l}\right|^2\right]}  \leq p^{\frac{2}{q}}K_0^{\frac{2}{q}},\end{align*}
since 
\begin{equation}\label{holder2}
\begin{aligned}
\mathbbm{E}\left[\max\limits_{1 \leq j,l \leq p}\left|X_{i, j}X_{i,l}\right|^2\right]^{\frac{1}{2}}  &\leq  \mathbbm{E}\left[\max\limits_{1 \leq j \leq p}\left|X_{i, j}X_{i,j}\right|^{\frac{q}{2}}\right]^{\frac{2}{q}}  \leq \mathbbm{E}\left[\sum_{j = 1}^{p} \left|X_{i, j}X_{i,j}\right|^{\frac{q}{2}}\right]^{\frac{2}{q}}  \\
& \leq \left(p \max\limits_{1 \leq j \leq p} \mathbbm{E}\left[\left|X_{i, j}X_{i,j}\right|^{\frac{q}{2}}\right]\right)^{\frac{2}{q}} \leq p^{\frac{2}{q}}K_0^{\frac{2}{q}}.
\end{aligned}
\end{equation}
\begin{comment}
Therefore, we obtain that there exist constants $C_4,B_1, B_2, B_3>0$ such that
\begin{equation*}
\begin{aligned}
& \left|\frac{1}{N} \sum_{i=1}^N  \frac{\exp(\boldsymbol{X}_i^{\top}\boldsymbol{\beta} + E_{i})}{ \big(1+\exp(\boldsymbol{X}_i^{\top}\boldsymbol{\beta} + E_{i})\big)^2 } \boldsymbol{X}_{i} \boldsymbol{X}_{i}^{\top}  
 - \mathbbm{E}\left[\frac{\exp(\boldsymbol{X}_i^{\top}\boldsymbol{\beta} + E_{i})}{ \big(1+\exp(\boldsymbol{X}_i^{\top}\boldsymbol{\beta} + E_{i})\big)^2 }   \boldsymbol{X}_{i} \boldsymbol{X}_{i}^{\top}  \right]\right|_{\infty} \\
 & \geq B_1\frac{\sqrt{\log p}}{\sqrt{N}} + B_2\frac{p^{\frac{2}{q}}  \log p}{N^{1-\frac{2}{q}}} + B_3\frac{p^{\frac{2}{q}}  \sqrt{\log p}}{\sqrt{N}} 
 \end{aligned}
\end{equation*}
holds with probability at most $\frac{C_4}{\log p}$.
\end{comment}
\end{proof}

\begin{lemma}\label{use sample2}
Under Assumption \ref{asX}, for all $\boldsymbol{\beta}\in\mathbb{R}^p$,  there exist constants $C_5,B_4, B_5, B_6>0 $ such that the event
\begin{equation}\label{U_H_2}
\left|\frac{1}{N} \sum_{i=1}^N   \boldsymbol{X}_{i} \boldsymbol{X}_{i}^{\top}  
 - \mathbbm{E}\left[ \boldsymbol{X}_{i} \boldsymbol{X}_{i}^{\top}  \right]\right|_{\infty} \geq B_4\frac{\sqrt{\log p}}{\sqrt{N}} + B_5\frac{p^{\frac{2}{q}}  \log p}{N^{1-\frac{2}{q}}} + B_6\frac{p^{\frac{2}{q}}  \sqrt{\log p}}{\sqrt{N}} 
\end{equation}
holds with probability at most $\frac{C_5}{\log p}$.
\end{lemma}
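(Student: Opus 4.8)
The plan is to apply the Fuk-Nagaev concentration inequality of Theorem \ref{key} to the vectorized outer product $\boldsymbol{Z}_i = \operatorname{vec}(\boldsymbol{X}_i\boldsymbol{X}_i^{\top})$, exactly as in the proof of Lemma \ref{use sample1} but with the logistic weight $\exp(\boldsymbol{X}_i^{\top}\boldsymbol{\beta}+E_i)/(1+\exp(\boldsymbol{X}_i^{\top}\boldsymbol{\beta}+E_i))^2$ set to one. Since that weight is bounded above by $1/4$, every moment bound derived in Lemma \ref{use sample1} also controls the corresponding weight-free moment, so the present statement is a strictly simpler special case and no genuinely new estimate is required.

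Concretely, I would first observe that the entrywise max-norm of $\frac{1}{N}\sum_{i=1}^N \boldsymbol{X}_i\boldsymbol{X}_i^{\top} - \mathbbm{E}[\boldsymbol{X}_i\boldsymbol{X}_i^{\top}]$ equals $\max_j|\widehat{\mu}_j-\mu_j|$ for the $p^2$ coordinates $Z_{i,j}=X_{i,a}X_{i,b}$. Applying Theorem \ref{key} to this vector, with effective dimension $p^2$, replaces $\log p$ by $\log(p^2)=2\log p$ throughout; this contributes only the harmless numerical factors $\sqrt{2}$ and $2$ to the three terms and yields a failure probability of at most $C_5'/(2\log p)$ for a universal constant $C_5'$.

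The second step is to bound the three Fuk-Nagaev constants under Assumption \ref{asX}. Writing them out for $Z_{i,j}=X_{i,a}X_{i,b}$, I would use $\max_{a,b}\mathbbm{E}[|X_{i,a}X_{i,b}|^2]\le K_0^{4/q}$ for the variance term, and the Hölder-type inequality already recorded in \eqref{holder2}, namely $\mathbbm{E}[\max_{a,b}|X_{i,a}X_{i,b}|^2]^{1/2}\le \mathbbm{E}[\max_{a,b}|X_{i,a}X_{i,b}|^{q/2}]^{2/q}\le p^{2/q}K_0^{2/q}$, for the two maximal terms. This gives $K_2\lesssim K_0^{4/q}$, $K_{U,1}\lesssim p^{2/q}K_0^{2/q}$, and $K_{U,2}\lesssim p^{2/q}K_0^{2/q}$, which are precisely the orders appearing in front of $\sqrt{\log p}/\sqrt{N}$, $p^{2/q}\log p/N^{1-2/q}$, and $p^{2/q}\sqrt{\log p}/\sqrt{N}$ in the statement.

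There is no real obstacle here: the argument is a verbatim repetition of the proof of Lemma \ref{use sample1} with the weight removed, and the only bookkeeping to watch is that vectorizing a $p\times p$ matrix raises the ambient dimension to $p^2$, so one must carry $2\log p$ rather than $\log p$ in the concentration bound. Collecting the constants, I would set $B_4,B_5,B_6\gtrsim K_0$ and $C_5=C_5'$, which delivers the claimed inequality with probability at most $C_5/\log p$.
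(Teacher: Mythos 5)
Your proposal is correct and follows essentially the same route as the paper: apply Theorem \ref{key} to $\operatorname{vec}(\boldsymbol{X}_i\boldsymbol{X}_i^{\top})$ with ambient dimension $p^2$ (hence $2\log p$ and the harmless $\sqrt{2}$ factors), then bound the three Fuk--Nagaev constants by $K_0^{4/q}$ and $p^{2/q}K_0^{2/q}$ via Assumption \ref{asX} and the H\"older bound \eqref{holder2}, exactly as in Lemma \ref{use sample1} with the logistic weight absent. One phrasing quibble only: a bound on the \emph{weighted} moments does not ``control'' the weight-free ones (the weight being at most $1/4$ gives the opposite inequality); the relevant point is that the proof of Lemma \ref{use sample1} already bounded the weight-free moments directly, and your concrete steps use precisely those bounds, so nothing substantive is affected.
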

\begin{proof}
   Applying Theorem \ref{key} with $ \boldsymbol{Z}_i=\operatorname{vec}(\boldsymbol{X}_i\boldsymbol{X}_i^{\top})$, we obtain that
    \begin{equation}
\left|\frac{1}{N} \sum_{i=1}^N   \boldsymbol{X}_{i} \boldsymbol{X}_{i}^{\top}  
 - \mathbbm{E}\left[   \boldsymbol{X}_{i} \boldsymbol{X}_{i}^{\top}  \right]\right|_{\infty} \geq K_{10}\frac{\sqrt{2\log p}}{\sqrt{N}} + K_{U,9}\frac{2\log p}{N^{1-\frac{2}{q}}} + K_{U,10}\frac{\sqrt{2\log p}}{\sqrt{N}} 
\end{equation}
holds with probability at most $\frac{C_5^{\prime}}{2\log p}$, where $C_5^{\prime}, \tilde{K}_8>0$ are universal constants and, using arguments similar to that of the proof of Lemma \ref{lemmakey1}, it holds that
$$K_{10} = 2\tilde{K}_{10} \max\limits_{1 \leq j,l \leq p}\mathbbm{E}\left[\left|X_{i,j}X_{i,l}\right|^2\right]^{\frac{1}{2}}  \lesssim 2\tilde{K}_{10}K_0^{\frac{2}{q}},
$$
$$K_{U,9} = 2\tilde{K}_{10} \mathbbm{E}\left[\max\limits_{1 \leq j,l \leq p}\left|X_{i, j}X_{i,l}\right|^{\frac{q}{2}}\right]^{\frac{2}{q}} \leq 2\tilde{K}_{10}p^{\frac{2}{q}}K_0^{\frac{2}{q}},
$$
and 
$$K_{U,10} = \sqrt{\mathbbm{E}\left[\max\limits_{1 \leq j,l \leq p}\left|X_{i, j}X_{i,l}\right|^2\right]} \leq p^{\frac{2}{q}}K_0^{\frac{2}{q}}.$$
\begin{comment}
The order of $K_{U,9}$ and $K_{U,10}$ is similarly obtained as $K_{U,3}$ in . Then we can say that there exist $C_5$ which is a universal constant and $B_4, B_5, B_6 \sim K_0$, we have the following 
\begin{equation}
\left|\frac{1}{N} \sum_{i=1}^N  \boldsymbol{X}_{i} \boldsymbol{X}_{i}^{\top}  
 - \mathbbm{E}\left[  \boldsymbol{X}_{i} \boldsymbol{X}_{i}^{\top}  \right]\right|_{\infty} \geq B_4\frac{\sqrt{\log p}}{\sqrt{N}} + B_5\frac{p^{\frac{2}{q}}  \log p}{N^{1-\frac{2}{q}}} + B_6\frac{p^{\frac{2}{q}}  \sqrt{\log p}}{\sqrt{N}} 
\end{equation}
holds with probability at most $\frac{C_5}{\log p}$.
\end{comment}
\end{proof}

\subsection{Probability inequalities for empirical process}\label{process}

    \begin{theorem}
    \label{the3}
Under Assumptions \ref{as1}, \ref{as2} and \ref{asX}, there exist constants $c_1,a_1, a_2, a_3>0$ such that, for all $M>0$ and $\epsilon > 0 $, there exists $ U_{\epsilon} > 0$ such that

\begin{equation*}
\begin{aligned}
& P\left(\sup_{\boldsymbol{\beta}\in\mathbb{R}^p: \Omega\left(\boldsymbol{\beta}-\boldsymbol{\beta}_0\right) \leq M}\left|\left[R_N(\boldsymbol{\beta})-R(\boldsymbol{\beta}|\boldsymbol{X})\right]-\left[R_N(\boldsymbol{\beta}_0)-R\left(\boldsymbol{\beta}_0|\boldsymbol{X}\right)\right]\right| \leq \varphi M + \frac{1}{N} \sum_{i=1}^N 2|E_{i}|  \right)\\
& \geq 1 - \epsilon - \frac{c_1}{\log p },
\end{aligned}
\end{equation*}
and 
$$ \varphi = a_1\frac{\sqrt{\log p}}{\sqrt{N}} + a_2\frac{p^{\frac{1}{q}} \log p}{N^{1-\frac{2}{q}}} + a_3\frac{p^{\frac{1}{q}} \sqrt{\log p}}{\sqrt{N}}+ U_{\epsilon}\frac{p^{\frac{1}{q}}(\log p)^{\frac{1}{q}}}{N^{\frac{1}{2} - \frac{1}{q}}}.$$

\end{theorem}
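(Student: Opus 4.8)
The plan is to bound the increment
$\Delta_N(\boldsymbol{\beta}):=[R_N(\boldsymbol{\beta})-R(\boldsymbol{\beta}|\boldsymbol{X})]-[R_N(\boldsymbol{\beta}_0)-R(\boldsymbol{\beta}_0|\boldsymbol{X})]$
by splitting it into a part that is linear in $\boldsymbol{\beta}-\boldsymbol{\beta}_0$ and a purely deterministic nonlinear part. Writing $b(u)=\log(1+e^u)$ and $m_i=\mathbbm{E}[f_i(t)\mid\boldsymbol{X}_i]=b'(\boldsymbol{X}_i^\top\boldsymbol{\beta}_0+E_i)$ (the identity established in the proof of Lemma \ref{one point}), the key observation is that the $\log$-terms of $R_N$ and of $R(\cdot|\boldsymbol{X})$ are \emph{both} sample averages, since $R(\cdot|\boldsymbol{X})$ only conditions on, rather than integrates out, $\boldsymbol{X}$. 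Hence they do not contribute an empirical-process fluctuation and
\begin{align*}
\Delta_N(\boldsymbol{\beta}) &= -\frac1N\sum_{i=1}^N(\widehat f_i(t)-m_i)\boldsymbol{X}_i^\top(\boldsymbol{\beta}-\boldsymbol{\beta}_0)\\
&\quad+\frac1N\sum_{i=1}^N\Big\{\big[b(\boldsymbol{X}_i^\top\boldsymbol{\beta})-b(\boldsymbol{X}_i^\top\boldsymbol{\beta}_0)\big]-\big[b(\boldsymbol{X}_i^\top\boldsymbol{\beta}+E_i)-b(\boldsymbol{X}_i^\top\boldsymbol{\beta}_0+E_i)\big]\Big\}.
\end{align*}
For the second (nonlinear) sum I would use that $b$ is $1$-Lipschitz: $|b(a)-b(a+E_i)|\le|E_i|$ for every $a$, so each summand is at most $2|E_i|$ irrespective of $\boldsymbol{\beta}$. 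This yields the bound $\tfrac1N\sum_i 2|E_i|$ appearing on the right-hand side, uniformly over the ball, and is the step through which the approximation error enters \emph{additively}.

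For the linear part the supremum over $\{\Omega(\boldsymbol{\beta}-\boldsymbol{\beta}_0)\le M\}$ is handled with no chaining at all, directly by the dual-norm inequality of Lemma \ref{cor1}:
\begin{align*}
\Big|\tfrac1N\sum_i(\widehat f_i(t)-m_i)\boldsymbol{X}_i^\top(\boldsymbol{\beta}-\boldsymbol{\beta}_0)\Big| &\le \Omega_*\Big(\tfrac1N\sum_i(\widehat f_i(t)-m_i)\boldsymbol{X}_i\Big)M\\
&\le \sqrt{G^*}\,\Big|\tfrac1N\sum_i(\widehat f_i(t)-m_i)\boldsymbol{X}_i\Big|_\infty M,
\end{align*}
so everything reduces to an $\ell_\infty$ bound on $\tfrac1N\sum_i(\widehat f_i(t)-m_i)\boldsymbol{X}_i$. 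I would split $\widehat f_i(t)-m_i=(f_i(t)-m_i)+(\widehat f_i(t)-f_i(t))$. For the first piece, using $\mathbbm{E}[f_i(t)\boldsymbol{X}_i]=\mathbbm{E}[m_i\boldsymbol{X}_i]$ (iterated expectations) I write $\tfrac1N\sum_i(f_i(t)-m_i)\boldsymbol{X}_i=\big[\tfrac1N\sum_i f_i(t)\boldsymbol{X}_i-\mathbbm{E} f_i(t)\boldsymbol{X}_i\big]-\big[\tfrac1N\sum_i m_i\boldsymbol{X}_i-\mathbbm{E} m_i\boldsymbol{X}_i\big]$ and apply Lemma \ref{lemmakey1} to the first bracket and Lemma \ref{lemmakey2} at $\boldsymbol{\beta}=\boldsymbol{\beta}_0$ (recall $m_i=b'(\boldsymbol{X}_i^\top\boldsymbol{\beta}_0+E_i)$) to the second. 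Each holds with probability at least $1-c/\log p$ and delivers exactly the first three terms $a_1\sqrt{\log p/N}+a_2 p^{1/q}\log p/N^{1-2/q}+a_3 p^{1/q}\sqrt{\log p}/\sqrt N$ of $\varphi$.

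The remaining, and hardest, piece is the Kaplan--Meier plug-in error $\tfrac1N\sum_i(\widehat f_i(t)-f_i(t))\boldsymbol{X}_i$. Here I would factor $\widehat H^{-1}-H^{-1}=(H-\widehat H)/(\widehat H H)$, use the lower bound $H\ge C_r>0$ from Assumption \ref{as2} (and that $\widehat H$ is bounded below with high probability) to obtain, coordinatewise, a bound by $\sup_u|\widehat H(u)^{-1}-H(u)^{-1}|\cdot\tfrac1N\sum_i|X_{i,j}|$. The parametric $\sqrt N$-consistency of the Kaplan--Meier estimator gives $\sqrt N\sup_u|\widehat H-H|=O_P(1)$, so for each $\epsilon>0$ one can choose $U_\epsilon$ with $P(\sqrt N\sup_u|\widehat H-H|>U_\epsilon)<\epsilon$; coupling this $N^{-1/2}$ rate with a high-probability control of $\max_j\tfrac1N\sum_i|X_{i,j}|$ (equivalently of $\max_{i,j}|X_{i,j}|$) through the heavy-tailed moment bounds of Assumption \ref{asX} employed in the proof of Theorem \ref{key} produces the last term $U_\epsilon\,p^{1/q}(\log p)^{1/q}/N^{1/2-1/q}=U_\epsilon(Np\log p)^{1/q}/\sqrt N$. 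A union bound over the three exceptional events then gives the advertised failure probability $\epsilon+c_1/\log p$. The main obstacle is precisely this final coupling: marrying the $\sqrt N$ rate of the censoring weights to the maxima of polynomially-tailed covariates, while preventing the nonlinearity from generating an additional stochastic contribution — the latter being secured entirely by the deterministic Lipschitz bound on the $\log$-increment derived above.
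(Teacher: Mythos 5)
Your proposal is correct and follows essentially the same route as the paper's proof: the identical linear/nonlinear decomposition of the increment, with the Lipschitz bound on $u\mapsto\log(1+e^{u})$ producing the additive $\frac{1}{N}\sum_{i}2|E_i|$ term, the dual-norm reduction via Lemma \ref{cor1} to an $\ell_\infty$ bound, and the same three-way split into the Kaplan--Meier plug-in error, the term handled by Lemma \ref{lemmakey1}, and the centered term $\frac{1}{N}\sum_i \mathbbm{E}[f_i(t)|\boldsymbol{X}_i]\boldsymbol{X}_i - \mathbbm{E}[f_i(t)\boldsymbol{X}_i]$ handled by Lemma \ref{lemmakey2} at $\boldsymbol{\beta}_0$. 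Your treatment of the plug-in error (Gill's $\sqrt{N}$-rate for $\widehat{H}$, the lower bound $H\ge C_r$ from Assumption \ref{as2}, and the Markov/union bound giving $\max_{i,j}|X_{i,j}|\lesssim (Np\log p)^{1/q}$ with probability $1-1/\log p$) is exactly the paper's argument, including the final union bound yielding the failure probability $\epsilon + c_1/\log p$.
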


\begin{proof}
First, remark that, by the Taylor expansion, we have that for all $ x,\varepsilon \in \mathbb{R}$, there exists $ x^{\prime} \in  \mathbb{R}$ which satisfies 
$$|\log(1+\exp(x+\varepsilon)) - \log(1+\exp(x)) | = \left|\frac{\exp(x^{\prime})}{1+\exp(x^{\prime})}\varepsilon\right| \leq |\varepsilon|.$$
This implies that
\begin{equation}
\begin{aligned}
& \sup_{\boldsymbol{\beta}\in\mathbb{R}^p: \Omega\left(\boldsymbol{\beta}-\boldsymbol{\beta}_0\right) \leq M} \left|\left[R_N(\boldsymbol{\beta})-R(\boldsymbol{\beta}|\boldsymbol{X})\right]-\left[R_N(\boldsymbol{\beta}_0)-R(\boldsymbol{\beta}_0|\boldsymbol{X})\right]\right|  \\
& \le \sup_{\boldsymbol{\beta}\in\mathbb{R}^p: \Omega\left(\boldsymbol{\beta}-\boldsymbol{\beta}_0\right) \leq M} \left| \frac{1}{N} \sum_{i=1}^N \left( - \widehat{f}_i(t) +  \mathbbm{E}\left[f_i(t)|\boldsymbol{X}_i \right]\right)\boldsymbol{X}_{i}^{\top} \left(\boldsymbol{\beta}-\boldsymbol{\beta}_0 \right)\right| \\
& +  \sup_{\boldsymbol{\beta}\in\mathbb{R}^p: \Omega\left(\boldsymbol{\beta}-\boldsymbol{\beta}_0\right) \leq M} \left| \frac{1}{N} \sum_{i=1}^N\left(\log\left(1+\exp(\boldsymbol{X}_{i}^{\top}\boldsymbol{\beta})\right) - \log\left(1+\exp(\boldsymbol{X}_{i}^{\top}\boldsymbol{\beta} + E_i)\right)\right)\right| \\
& + \sup_{\boldsymbol{\beta}\in\mathbb{R}^p: \Omega\left(\boldsymbol{\beta}-\boldsymbol{\beta}_0\right) \leq M} \left|\frac{1}{N} \sum_{i=1}^N\left(\log\left(1+\exp(\boldsymbol{X}_{i}^{\top}\boldsymbol{\beta}_0 )\right) - \log\left(1+\exp(\boldsymbol{X}_{i}^{\top}\boldsymbol{\beta}_0 + E_i)\right)\right)\right|\\
&\le  \sup_{\boldsymbol{\beta}\in\mathbb{R}^p: \Omega\left(\boldsymbol{\beta}-\boldsymbol{\beta}_0\right) \leq M} \left| \frac{1}{N} \sum_{i=1}^N \left( - \widehat{f}_i(t) +  \mathbbm{E}\left[f_i(t)|\boldsymbol{X}_i \right]\right)\boldsymbol{X}_{i}^{\top} \left(\boldsymbol{\beta}-\boldsymbol{\beta}_0 \right)\right| + \frac{1}{N} \sum_{i=1}^N 2|E_{i}|.
\end{aligned}
\end{equation}
Next, using the second statement of Lemma \ref{cor1}, this leads to
\begin{equation}\label{912241}
\begin{aligned}
& \sup_{\boldsymbol{\beta}\in\mathbb{R}^p: \Omega\left(\boldsymbol{\beta}-\boldsymbol{\beta}_0\right) \leq M} \left|\left[R_N(\boldsymbol{\beta})-R(\boldsymbol{\beta}|\boldsymbol{X})\right]-\left[R_N(\boldsymbol{\beta}_0)-R(\boldsymbol{\beta}_0|\boldsymbol{X})\right]\right| \\
& \leq \Omega_*\left( \frac{1}{N} \sum_{i=1}^N \left( - \widehat{f}_i(t) +  \mathbbm{E}\left[f_i(t)|\boldsymbol{X}_i \right]\right)\boldsymbol{X}_{i}   \right) M+ \frac{1}{N} \sum_{i=1}^N 2|E_{i}|\\
& \leq \sqrt{G^*} \left| \frac{1}{N} \sum_{i=1}^N \left(\widehat{f}_i(t)- \mathbbm{E}[f_i(t)|\boldsymbol{X}_i]\right) \boldsymbol{X}_{i} \right|_{\infty} M + \frac{1}{N} \sum_{i=1}^N 2|E_{i}|.
\end{aligned}
\end{equation}
Then, remark that
\begin{equation}\label{912242}
\begin{aligned}
& \left| \frac{1}{N} \sum_{i=1}^N \left(\widehat{f}_i(t)- \mathbbm{E}[f_i(t)|\boldsymbol{X}_i]\right) \boldsymbol{X}_{i} \right|_{\infty}\\
&  = \left| \frac{1}{N} \sum_{i=1}^N \left(\widehat{f}_i(t)- f_i(t) + f_i(t) - \mathbbm{E}[f_i(t)|\boldsymbol{X}_i]\right) \boldsymbol{X}_{i} \right|_{\infty} \\
& \leq \left| \frac{1}{N} \sum_{i=1}^N \left(\widehat{f}_i(t)- f_i(t)\right) \boldsymbol{X}_{i} \right|_{\infty} \\
&\quad + \left| \frac{1}{N} \sum_{i=1}^N \left(f_i(t)\boldsymbol{X}_i - \mathbbm{E}\left[f_i(t)\boldsymbol{X}_i\right] + \mathbbm{E}\left[f_i(t)\boldsymbol{X}_i\right] - \mathbbm{E}[f_i(t)|\boldsymbol{X}_i] \boldsymbol{X}_{i}\right) \right|_{\infty}\\
& \le (a1) + (a2) + (a3),
\end{aligned}
\end{equation}
where 
\begin{align*}
  (a1) &=  \left| \frac{1}{N} \sum_{i=1}^N \left(\widehat{f}_i(t)- f_i(t)\right) \boldsymbol{X}_{i} \right|_{\infty}; \\
  (a2) &= \left| \frac{1}{N} \sum_{i=1}^N \left(f_i(t)\boldsymbol{X}_i -\mathbbm{E}\left[f_i(t)\boldsymbol{X}_i\right]\right)\right|_{\infty};\\
  (a3) &=  \left| \frac{1}{N} \sum_{i=1}^N \left(\mathbbm{E}[f_i(t)|\boldsymbol{X}_i] \boldsymbol{X}_{i} - \mathbbm{E}\left[f_i(t)\boldsymbol{X}_i\right]\right)\right|_{\infty}.\\
\end{align*}
First, we bound the term $(a1)$. By definition of $\widehat{f}_i(t)$ and $f_i(t)$, we have 
$$
\begin{aligned}
\left|\widehat{f}_i(t) - f_i(t)\right| & = \left|\frac{\delta_i(t)}{\widehat{H}\left(t \wedge \widetilde{T}_i\right)} \mathbbm{1}{\{\widetilde{T}_i \leq t\}} - \frac{\delta_i(t)}{H\left(t \wedge \widetilde{T}_i\right)} \mathbbm{1}{\{\widetilde{T}_i \leq t\}}\right|\\
& \leq \left|\frac{1}{\widehat{H}\left(t \wedge \widetilde{T}_i\right)}-\frac{1}{H\left(t \wedge \widetilde{T}_i\right)}\right| \\
& = \left|\frac{\widehat{H}\left(t \wedge \widetilde{T}_i\right) - H\left(t \wedge \widetilde{T}_i\right)}{\widehat{H}\left(t \wedge \widetilde{T}_i\right)H\left(t \wedge \widetilde{T}_i\right)}\right|.
\end{aligned}
$$
By standard properties of the Kaplan-Meier estimator $\widehat{H}$, (see Theorem 1.1 in \citet{gill1983large}), it holds
\begin{equation}\label{kmrate}
\sup _{x \in[0, t]}|H(x)-\widehat{H}(x)|=O_P\left(N^{-1 / 2}\right),
\end{equation}
where we used Assumption \ref{as2}. Moreover, by Assumption \ref{as2}, we have $H\left(t \wedge \widetilde{T}_i\right)\ge C_r$, which leads to
\begin{equation}\label{kmrate2}
\max\limits_{1 \leq i \leq N}\left| \frac{1}{\widehat{H}\left(t \wedge \widetilde{T}_i\right)}-\frac{1}{H\left(t \wedge \widetilde{T}_i\right)}\right|= O_P\left(N^{-1 / 2}\right),
\end{equation}
which means
\begin{equation}\label{pa1}
\max\limits_{1 \leq i \leq N}\left|\widehat{f}_i(t) - f_i(t)\right| = O_P\left(N^{-1 / 2}\right).
\end{equation}
Then, for all $\epsilon > 0$, there exists $ U_{\epsilon}^{\prime} > 0$ such that
\begin{equation}
    \label{pa23}
    P\left(\max\limits_{1 \leq i \leq N}\left|\widehat{f}_i(t) - f_i(t)\right| \leq \frac{U_{\epsilon}^{\prime}}{\sqrt{N}}\right) \geq 1 -\epsilon.
\end{equation}
For all $k > 0$, by Markov's inequality, we have
$$
\begin{aligned}
P\left(\max_{1 \leq i \leq N} \max_{1 \leq j \leq p} \left|X_{i,j}\right| > k\right) & \leq \sum_{1 \leq i \leq N}\sum_{1 \leq j \leq p} P\left(\left|X_{i,j}\right| > k\right) \\
& \leq Np \max_{1 \leq i \leq N, 1 \leq j \leq p} P\left(\left|X_{i,j}\right| > k\right) \\
& \leq Np\frac{K_0}{k^q},
\end{aligned}
$$
since for all $ i \in [N], \max\limits_{1 \leq j \leq p} \mathbbm{E}\left(\left|X_{i,j}\right|^q\right) \leq K_0$ from Assumption \ref{asX}. Let $k = (K_0Np\log p)^{\frac{1}{q}}$, then, we have

\begin{equation}\label{pa3}
P\left(\max_{1 \leq i \leq N, 1 \leq j \leq p} \left|X_{i,j}\right| > (K_0Np\log p)^{\frac{1}{q}}\right) \leq Np\frac{\max\limits_{1 \leq i \leq N,1 \leq j \leq p} \mathbbm{E}\left(\left|X_{i,j}\right|^q\right)}{NK_0p \log p} \leq \frac{1}{\log p}.
\end{equation}
Combining \eqref{pa23} and \eqref{pa3}, we have that for all $ \epsilon > 0$, there exists $ U_{\epsilon} = U_{\epsilon}^{\prime}\sqrt{G^*}K_0^{\frac{1}{q}} > 0$
\begin{equation}
    \label{pa4}
\sqrt{G^*} \times (a1) \leq  \sqrt{G^*} \times \max\limits_{1 \leq i \leq N}\left|\widehat{f}_i(t) - f_i(t)\right|  \max\limits_{1 \leq i \leq N} \left|\boldsymbol{X}_i\right|_{\infty} \leq \frac{U_{\epsilon}p^{\frac{1}{q}}(\log p)^{\frac{1}{q}}}{N^{\frac{1}{2} - \frac{1}{q}}}
\end{equation}
holds with probability $1 - \epsilon - \frac{1}{\log p}$.

Second, we consider the term $(a2)$. By Lemma \ref{lemmakey1}, there exists constants $C_2,A_1,A_2,A_3>0$ such that the event
\begin{equation}
\label{pa5}
\left| \frac{1}{N} \sum_{i=1}^N \left(f_i(t)\boldsymbol{X}_i -\mathbbm{E}\left[f_i(t)\boldsymbol{X}_i\right]\right)\right|_{\infty} \geq  A_1\frac{\sqrt{\log p}}{\sqrt{N}} + A_2\frac{p^{\frac{1}{q}}\log p}{N^{1-\frac{2}{q}}} + A_3\frac{p^{\frac{1}{q}}\sqrt{\log p}}{\sqrt{N}}
\end{equation}
holds with probability at most $C_2/\log p$.
Finally, concerning $(a3)$, we know that
$$
\mathbbm{E}[f_i(t)|\boldsymbol{X}_i]  = \frac{\exp \left(\boldsymbol{X}_i^{\top} \boldsymbol{\beta}_0 +E_{i}\right)}{1+\exp \left(\boldsymbol{X}_i^{\top} \boldsymbol{\beta}_0 +E_{i} \right)},
$$
and
$$
\mathbbm{E}[f_i(t)\boldsymbol{X}_i] = \mathbbm{E}\left[\mathbbm{E}[f_i(t)\boldsymbol{X}_i|\boldsymbol{X}_i]\right] = \mathbbm{E}\left[\boldsymbol{X}_i\mathbbm{E}[f_i(t)|\boldsymbol{X}_i]\right] = \mathbbm{E}\left[\frac{\exp \left(\boldsymbol{X}_i^{\top} \boldsymbol{\beta}_0 +E_{i}\right)}{1+\exp \left(\boldsymbol{X}_i^{\top} \boldsymbol{\beta}_0 +E_{i}\right)}\boldsymbol{X}_i\right].
$$
Applying Lemma \ref{lemmakey2}, we obtain that there exist constant $C_3,A_4,A_5,A_6$ such that the event
\begin{equation}
\label{pa6}
\left|\frac{1}{N} \sum_{i=1}^N \mathbbm{E}[f_i(t)|\boldsymbol{X}_i] \boldsymbol{X}_{i} - \mathbbm{E}\left[f_i(t)\boldsymbol{X}_i\right] \right|_{\infty}  \geq  A_4\frac{\sqrt{\log p}}{\sqrt{N}} + A_5\frac{p^{\frac{1}{q}}\log p}{N^{1-\frac{2}{q}}} + A_6\frac{p^{\frac{1}{q}}\sqrt{\log p}}{\sqrt{N}}
\end{equation}
holds with probability at most $C_3/\log p$.

The result follows by combining \eqref{912241}, \eqref{912242}, \eqref{pa4}, \eqref{pa5} and \eqref{pa6}, with $a_1 = \sqrt{G^*}(A_1 + A_4)$, $a_2 = \sqrt{G^*}(A_2 + A_5)$, $a_3 = \sqrt{G^*}(A_3 + A_6)$, $c_1 = 1 + C_2 + C_3$.
\end{proof}

\subsection{Relationship between population and sample effective sparsity}\label{effective}
 
Now, we tie the population effective sparsity $\Gamma^{-2}\left(\tau_{\boldsymbol{\beta}}(\cdot)\right)$ and the sample effective sparsity $\Gamma^{-2}\left(\widehat{\tau}_{\boldsymbol{\beta}}(\cdot)\right)$. 

\begin{lemma}\label{sample1} Under Assumption \ref{asX}, there exist universal constants $C_4,B_1,B_2,B_3>0$ such that
 $$ P\left[\Gamma^{-2}\left(\widehat{\tau}_{\boldsymbol{\beta}_0}(\cdot)\right) \geq  \Gamma^{-2}\left(\tau_{\boldsymbol{\beta}_0}(\cdot)\right) - U_{H_1}\right]\ge 1 - \frac{C_4}{\log p},
    $$
where $U_{H_1} = 9G^*\left(B_1\frac{\sqrt{\log p}}{\sqrt{N}} + B_2\frac{p^{\frac{2}{q}}  \log p}{N^{1-\frac{2}{q}}} + B_3\frac{p^{\frac{2}{q}}  \sqrt{\log p}}{\sqrt{N}}\right). $
\end{lemma}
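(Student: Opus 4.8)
The plan is to recognize that both $\widehat{\tau}^2_{\boldsymbol{\beta}_0}$ and $\tau^2_{\boldsymbol{\beta}_0}$ are quadratic forms in $\Delta$ and to reduce the statement to the entrywise concentration already established in Lemma \ref{use sample1}. Setting
\[
\widehat{M} = \frac{1}{N}\sum_{i=1}^N \frac{\exp(\boldsymbol{X}_i^{\top}\boldsymbol{\beta}_0 + E_{i})}{\big(1+\exp(\boldsymbol{X}_i^{\top}\boldsymbol{\beta}_0+E_{i})\big)^2}\boldsymbol{X}_i\boldsymbol{X}_i^{\top},\qquad M = \mathbbm{E}[\widehat{M}],
\]
we have $\widehat{\tau}^2_{\boldsymbol{\beta}_0}(\Delta) = \Delta^{\top}\widehat{M}\Delta$ and $\tau^2_{\boldsymbol{\beta}_0}(\Delta) = \Delta^{\top} M\Delta$, so that, with $\mathcal{C} = \{\Delta\in\mathbb{R}^p:\ \Omega^{+}(\Delta)=1,\ \Omega^{-}(\Delta)\le 2\}$, the definition \eqref{ccpg} gives $\Gamma^{-2}(\widehat{\tau}_{\boldsymbol{\beta}_0}(\cdot)) = \inf_{\Delta\in\mathcal{C}}\Delta^{\top}\widehat{M}\Delta$ and $\Gamma^{-2}(\tau_{\boldsymbol{\beta}_0}(\cdot)) = \inf_{\Delta\in\mathcal{C}}\Delta^{\top} M\Delta$. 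The whole claim is therefore a statement comparing the infima over $\mathcal{C}$ of two quadratic forms whose matrices differ by $\widehat{M}-M$.

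Next I would control the quadratic form uniformly over $\mathcal{C}$ using the dual-norm inequalities of Lemma \ref{cor1}. For any $\Delta$, its first statement gives $|\Delta^{\top}(\widehat{M}-M)\Delta| \le \Omega(\Delta)\,\Omega_*\big((\widehat{M}-M)\Delta\big)$, and its third statement (applied with the $p\times p$ matrix $\widehat{M}-M$) gives $\Omega_*\big((\widehat{M}-M)\Delta\big) \le G^{*}|\widehat{M}-M|_{\infty}\,\Omega(\Delta)$. Combining these yields $|\Delta^{\top}(\widehat{M}-M)\Delta| \le G^{*}|\widehat{M}-M|_{\infty}\,\Omega(\Delta)^2$. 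Since on $\mathcal{C}$ one has $\Omega(\Delta) = \Omega^{+}(\Delta)+\Omega^{-}(\Delta) \le 3$, this produces the uniform bound $\sup_{\Delta\in\mathcal{C}}|\Delta^{\top}(\widehat{M}-M)\Delta| \le 9G^{*}|\widehat{M}-M|_{\infty}$.

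Finally I would invoke Lemma \ref{use sample1}: with probability at least $1-C_4/\log p$ we have $|\widehat{M}-M|_{\infty} \le B_1\frac{\sqrt{\log p}}{\sqrt{N}} + B_2\frac{p^{2/q}\log p}{N^{1-2/q}} + B_3\frac{p^{2/q}\sqrt{\log p}}{\sqrt{N}}$, so on this event the uniform bound of the previous step reads $\sup_{\Delta\in\mathcal{C}}|\Delta^{\top}(\widehat{M}-M)\Delta| \le U_{H_1}$. Hence $\Delta^{\top}\widehat{M}\Delta \ge \Delta^{\top} M\Delta - U_{H_1}$ for every $\Delta\in\mathcal{C}$, and taking the infimum over $\mathcal{C}$ on both sides gives $\Gamma^{-2}(\widehat{\tau}_{\boldsymbol{\beta}_0}(\cdot)) \ge \Gamma^{-2}(\tau_{\boldsymbol{\beta}_0}(\cdot)) - U_{H_1}$ on that event, which is exactly the asserted bound.

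Given the dual-norm toolkit and the concentration lemma, the argument is largely bookkeeping. The only genuinely substantive step is the conversion of the entrywise matrix deviation $|\widehat{M}-M|_{\infty}$ into a uniform bound on the quadratic form, where one must correctly track the $G^{*}$ factor from Lemma \ref{cor1} and the crude bound $\Omega(\Delta)^2\le 9$ valid on $\mathcal{C}$; these together account for the constant $9G^{*}$ in $U_{H_1}$. A minor care point is that $\mathcal{C}$ is defined through the seminorm $\Omega^{+}$ and need not be bounded, so I would read the $\min$ in \eqref{ccpg} as an infimum throughout; this is harmless, since the one-sided inequality $\Delta^{\top}\widehat{M}\Delta \ge \Delta^{\top} M\Delta - U_{H_1}$ holds pointwise in $\Delta$ before any minimization.
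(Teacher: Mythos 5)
Your proposal is correct and follows essentially the same route as the paper's own proof: the identical bias--variance-type decomposition of the quadratic form, the same combination of the first and third statements of Lemma \ref{cor1} to convert the entrywise deviation $|\widehat{M}-M|_{\infty}$ into a bound on $|\Delta^{\top}(\widehat{M}-M)\Delta|$, the same bound $\Omega(\Delta)\le 3$ on the constraint set yielding the factor $9G^{*}$, and the same appeal to Lemma \ref{use sample1} followed by minimization over $\Delta$. Your closing remark that the inequality holds pointwise on a single high-probability event before taking the infimum is exactly the correct reading of the paper's final step.
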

\begin{proof}
   Let $\widehat{\Sigma} = \frac{1}{N} \sum_{i = 1}^{N} \frac{\exp(\boldsymbol{X}_i^{\top}\boldsymbol{\beta} + E_{i})}{ \big(1+\exp(\boldsymbol{X}_i^{\top}\boldsymbol{\beta} + E_{i})\big)^2 }\boldsymbol{X}_{i} \boldsymbol{X}_{i}^{\top}$and $\Sigma = \mathbbm{E}\left[\widehat{\Sigma}\right]$. We have 
 $$
 \begin{aligned}
 \left|\Delta^{\top} \widehat{\Sigma}\Delta\right| & \geq \left|\Delta^{\top} \mathbbm{E}\left[\widehat{\Sigma}\right]\Delta\right| - \left|\Delta^{\top} \left(\widehat{\Sigma} - \mathbbm{E}\left[\widehat{\Sigma}\right]\right)\Delta\right| \\
 &\geq\left|\Delta^{\top} \mathbbm{E}\left[\widehat{\Sigma}\right] \Delta\right|-G^*\Omega^2\left(\Delta\right)\left|\widehat{\Sigma}-\mathbbm{E}\left[\widehat{\Sigma}\right]\right|_{\infty},
 \end{aligned}
 $$
 where we used that, by the first and third statements of Lemma \ref{cor1}, it holds that
$$
 \begin{aligned}
 \left|\Delta^{\top}\left(\widehat{\Sigma} - \mathbbm{E}\left[\widehat{\Sigma}\right]\right)\Delta\right| & \leq \Omega\left(\Delta\right)\Omega_*\left(\left(\widehat{\Sigma} - \mathbbm{E}\left[\widehat{\Sigma}\right]\right)\Delta\right)\\
 & \leq \Omega\left(\Delta\right) G^* \left|\left(\widehat{\Sigma} - \mathbbm{E}\left[\widehat{\Sigma}\right]\right)\right|_{\infty}\Omega\left(\Delta\right)\\
 & =  G^* \Omega^2\left(\Delta\right)  \left|\widehat{\Sigma} - \mathbbm{E}\left[\widehat{\Sigma}\right]\right|_{\infty}.
  \end{aligned}
$$
Remark that, for all $\Delta\in\mathbb{R}^p$ such that $\Omega^{+}(\Delta)=1$ and $\Omega^{-}(\Delta) \leq 2$, we have $\Omega\left(\Delta\right) = \Omega^{+}(\Delta) + \Omega^{-}(\Delta) \leq 3$ and, therefore, 
$$
\left|\Delta^{\top} \widehat{\Sigma}\Delta\right| \geq \left|\Delta^{\top} \mathbbm{E}\left[\widehat{\Sigma}\right] \Delta\right|-9G^*\left|\widehat{\Sigma}-\mathbbm{E}\left[\widehat{\Sigma}\right]\right|_{\infty}.
$$
By Lemma \ref{use sample1}, we have
$$
P\left(9G^*\left|\widehat{\Sigma}-\mathbbm{E}\left[\widehat{\Sigma}\right]\right|_{\infty} \leq U_{H_1}\right) \geq 1 - \frac{C_4}{\log p}.
$$
Then, by definition $\widehat{\tau}_{\boldsymbol{\beta}_0}^2\left(\Delta\right)$ and $\tau_{\boldsymbol{\beta}_0}^2\left(\Delta\right) = \mathbbm{E}\left(\widehat{\tau}_{\boldsymbol{\beta}_0}^2\left(\Delta\right)\right)$, we obtain,  for all $\Delta\in\mathbb{R}^p$ such that $\Omega^{+}(\Delta)=1$ and $\Omega^{-}(\Delta) \leq 2$, that
\begin{equation}\label{mini}
P\left[\widehat{\tau}_{\boldsymbol{\beta}_0}^2\left(\Delta\right) \geq \tau_{\boldsymbol{\beta}_0}^2\left(\Delta\right) - U_{H_1}\right]\ge 1 - \frac{C_4}{\log p}.
\end{equation}
Recall the definition of \eqref{ccpg} and minimizing both the $\widehat{\tau}_{\boldsymbol{\beta}_0}^2\left(\Delta\right)$ and $\tau_{\boldsymbol{\beta}_0}^2\left(\Delta\right)$ with respect to $\{\Delta\in\mathbb{R}^p:\text{$\Omega^{+}(\Delta)=1$ and $\Omega^{-}(\Delta) \leq 2$}\}$, we obtain
$$
P\left[\Gamma^{-2}\left(\widehat{\tau}_{\boldsymbol{\beta}_0}(\cdot)\right) \geq \Gamma^{-2}(\tau_{\boldsymbol{\beta}_0}(\cdot))- U_{H_1}\right]\ge 1 - \frac{C_4}{\log p}.
$$
\end{proof}

\begin{lemma}\label{sample2}Under Assumption \ref{asX}, there exist universal constants $ C_5, B_4, B_5, B_6>0$ such that, for all $M\ge 0$ and $\boldsymbol{\beta}\in \mathbb{R}^p$, such that $\Omega\left(\boldsymbol{\beta} - \boldsymbol{\beta}_0 \right) \leq M$,
$$
P\left[\Gamma^{-2}( \widehat{\tau}_{\boldsymbol{\beta}}(\cdot)) \geq \Gamma^{-2}\left(\widehat{\tau}_{\boldsymbol{\beta}_0}(\cdot)\right)-U_{H_2}\right] \geq 1 - \frac{C_5}{\log p}- \frac{1}{\log p},
$$
where
% \begin{align*}U_{H_2}&= 18\left(G^* \left(B_4\frac{\sqrt{\log p}}{\sqrt{N}} + B_5\frac{p^{\frac{2}{q}} \log p}{N^{1-\frac{2}{q}}} + B_6\frac{p^{\frac{2}{q}} \sqrt{\log p}}{\sqrt{N}}\right) + G^*K_0^{\frac{2}{q}}\right)\\
% &\quad \times MG^*N^{\frac{1}{q}}K_0^{\frac{1}{q}}p^{\frac{1}{q}}(\log p)^{\frac{1}{q}}.\end{align*}
$$U_{H_2}= 18(G^*)^{\frac{3}{2}}\left(\left(B_4\frac{\sqrt{\log p}}{\sqrt{N}} + B_5\frac{p^{\frac{2}{q}} \log p}{N^{1-\frac{2}{q}}} + B_6\frac{p^{\frac{2}{q}} \sqrt{\log p}}{\sqrt{N}}\right) + K_0^{\frac{2}{q}}\right)M(K_0Np\log p)^{\frac{1}{q}}$$
Furthermore, it holds that
$$
P\left[\Gamma^{-2}\left(\widehat{\tau}_{M}(\cdot)\right) \geq \Gamma^{-2}\left(\widehat{\tau}_{\boldsymbol{\beta}_0}(\cdot)\right)-U_{H_2}\right] \geq 1 - \frac{C_5}{\log p}- \frac{1}{\log p}.
$$
\end{lemma}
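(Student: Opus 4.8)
The plan is to deduce the effective-sparsity inequality from a uniform pointwise comparison of the pseudo-norms $\widehat{\tau}_M^2(\cdot)$ and $\widehat{\tau}_{\boldsymbol{\beta}_0}^2(\cdot)$ over the set $\mathcal{C}:=\{\Delta\in\mathbb{R}^p:\ \Omega^+(\Delta)=1,\ \Omega^-(\Delta)\le 2\}$. By the definition \eqref{ccpg} one has $\Gamma^{-2}(\tau(\cdot))=\min_{\Delta\in\mathcal{C}}\tau^2(\Delta)$, so it suffices to exhibit a high-probability event on which $\widehat{\tau}_M^2(\Delta)\ge \widehat{\tau}_{\boldsymbol{\beta}_0}^2(\Delta)-U_{H_2}$ holds simultaneously for every $\Delta\in\mathcal{C}$. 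Evaluating this at a minimizer of $\widehat{\tau}_M^2$ over $\mathcal{C}$ and using $\min_{\Delta\in\mathcal{C}}\widehat{\tau}_{\boldsymbol{\beta}_0}^2(\Delta)=\Gamma^{-2}(\widehat{\tau}_{\boldsymbol{\beta}_0}(\cdot))$ then yields the statement, exactly along the lines of Lemma \ref{sample1}.

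The core estimate is a uniform, signed bound on the weight gap between the two pseudo-norms. Writing $u_i=\boldsymbol{X}_i^{\top}\boldsymbol{\beta}_0+E_i$, $\eta_i=M\Omega_*(\boldsymbol{X}_i)\ge 0$, and $g(u)=e^{u}/(1+e^{u})^2$, the weight carried by $\widehat{\tau}_{\boldsymbol{\beta}_0}^2$ is $g(u_i)$, while that carried by $\widehat{\tau}_M^2$ is $1/\mathrm{C}_M^2(\boldsymbol{X}_i)=F(\eta_i)$, where $F(\eta)=(1+e^{u_i+\eta})^{-1}\bigl(1-(1+e^{u_i-\eta})^{-1}\bigr)$ satisfies $F(0)=g(u_i)$. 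Since $F$ is nonincreasing in $\eta$ (this is precisely the $\boldsymbol{\beta}=\boldsymbol{\beta}_0$ instance of the inequality $g(\cdot)\ge 1/\mathrm{C}_M^2(\cdot)$ already established in the text), and since $|F'|$ is bounded by a universal constant because the logistic factors and their derivatives are bounded, a one-term Taylor expansion gives $0\le g(u_i)-1/\mathrm{C}_M^2(\boldsymbol{X}_i)\le c_0\,\eta_i\le c_0 M\sqrt{G^*}\,|\boldsymbol{X}_i|_{\infty}$, the last step using $\Omega_*(\boldsymbol{X}_i)\le \sqrt{G^*}|\boldsymbol{X}_i|_{\infty}$ from the second statement of Lemma \ref{cor1}.

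Inserting this into $\widehat{\tau}_{\boldsymbol{\beta}_0}^2(\Delta)-\widehat{\tau}_M^2(\Delta)=\frac1N\sum_i\bigl(g(u_i)-1/\mathrm{C}_M^2(\boldsymbol{X}_i)\bigr)|\boldsymbol{X}_i^{\top}\Delta|_2^2$ and factoring out the weight bound reduces matters to controlling $\frac1N\sum_i|\boldsymbol{X}_i^{\top}\Delta|_2^2=\Delta^{\top}\widehat{\Sigma}_{XX}\Delta$ with $\widehat{\Sigma}_{XX}=\frac1N\sum_i\boldsymbol{X}_i\boldsymbol{X}_i^{\top}$. By the first and third statements of Lemma \ref{cor1}, $\Delta^{\top}\widehat{\Sigma}_{XX}\Delta\le G^*|\widehat{\Sigma}_{XX}|_{\infty}\,\Omega^2(\Delta)\le 9G^*|\widehat{\Sigma}_{XX}|_{\infty}$ on $\mathcal{C}$, because $\Omega(\Delta)=\Omega^+(\Delta)+\Omega^-(\Delta)\le 3$. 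It then remains to control the two random quantities: the same Markov and union-bound argument leading to \eqref{pa3} gives $\max_i|\boldsymbol{X}_i|_{\infty}\le (K_0Np\log p)^{1/q}$ with probability at least $1-1/\log p$, while Lemma \ref{use sample2} together with $|\mathbbm{E}[\boldsymbol{X}_i\boldsymbol{X}_i^{\top}]|_{\infty}\le K_0^{2/q}$ (Cauchy--Schwarz and Assumption \ref{asX}) gives $|\widehat{\Sigma}_{XX}|_{\infty}\le \bigl(B_4\frac{\sqrt{\log p}}{\sqrt N}+B_5\frac{p^{2/q}\log p}{N^{1-2/q}}+B_6\frac{p^{2/q}\sqrt{\log p}}{\sqrt N}\bigr)+K_0^{2/q}$ with probability at least $1-C_5/\log p$. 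On the intersection of these events, of probability at least $1-C_5/\log p-1/\log p$, the bounds combine to give $\widehat{\tau}_{\boldsymbol{\beta}_0}^2(\Delta)-\widehat{\tau}_M^2(\Delta)\le U_{H_2}$ uniformly on $\mathcal{C}$, the numerical factor $9c_0$ being absorbed into the stated constant $18$, and the reduction of the first paragraph concludes the proof.

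The one delicate point, and the main obstacle, is the uniform signed control of the weight gap $g(u_i)-1/\mathrm{C}_M^2(\boldsymbol{X}_i)$: one must simultaneously establish its nonnegativity and its Lipschitz dependence on $\eta_i=M\Omega_*(\boldsymbol{X}_i)$ while the unknown approximation error $E_i$ sits inside the logistic factors. Once this is secured, the remainder is the familiar chain of dual-norm inequalities from Lemma \ref{cor1} combined with the concentration bounds of Lemma \ref{use sample2} and \eqref{pa3} already assembled for Lemma \ref{sample1}.
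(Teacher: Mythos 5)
Your proposal is correct and follows essentially the same route as the paper's proof: bound the weight gap $\bigl|\exp(A_i)/(1+\exp(A_i))^2 - 1/\mathrm{C}_M^2(\boldsymbol{X}_i)\bigr|$ by a constant times $M\Omega_*(\boldsymbol{X}_i)$ via a mean-value/Taylor argument, control $\Delta^{\top}\widehat{\Sigma}_{XX}\Delta$ on the cone using Lemma \ref{cor1}, Lemma \ref{use sample2}, Assumption \ref{asX} and the Markov bound \eqref{pa3}, then minimize over $\{\Omega^+(\Delta)=1,\ \Omega^-(\Delta)\le 2\}$ and invoke $\widehat{\tau}_{\boldsymbol{\beta}}^2 \ge \widehat{\tau}_M^2$. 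The only (cosmetic) difference is that you obtain the weight-gap bound from monotonicity plus a derivative bound on $F(\eta)$, whereas the paper splits the difference into two terms and applies the Taylor bound $|1/(1+e^{x+\varepsilon})-1/(1+e^{x})|\le|\varepsilon|$ to each, yielding the factor $2$ in $U_{H_2}$.
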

\begin{proof}
Let $\widehat{\Sigma}_{M} = \frac{1}{N} \sum_{i = 1}^{N} \frac{\boldsymbol{X}_{i} \boldsymbol{X}_{i}^{\top}}{ \mathrm{C}_M^2(\boldsymbol{X}_i)}$ and $\widehat{\Sigma} = \frac{1}{N} \sum_{i = 1}^{N} \frac{\exp(\boldsymbol{X}_i^{\top}\boldsymbol{\beta}_0+ E_{i})}{ \big(1+\exp(\boldsymbol{X}_i^{\top}\boldsymbol{\beta}_0+ E_{i})\big)^2 }\boldsymbol{X}_{i} \boldsymbol{X}_{i}^{\top}$. For any $\Delta\in\mathbb{R}^p$ such that $\Omega^{+}(\Delta)=1$ and $\Omega^{-}(\Delta) \leq 2$, we have
 $$
 \begin{aligned}
 \left|\Delta^{\top} \widehat{\Sigma}_{M} \Delta\right| & \geq \left|\Delta^{\top} \widehat{\Sigma} \Delta\right| - \left|\Delta^{\top} \left(\widehat{\Sigma}_{M} - \widehat{\Sigma} \right)\Delta\right|.
 \end{aligned}
 $$
Since
$$
\begin{aligned}
\left|\Delta^{\top} \left(\widehat{\Sigma}_{M} - \widehat{\Sigma} \right)\Delta\right| & = \left|\frac{1}{N} \sum_{i = 1}^{N} \left(\frac{1}{ \mathrm{C}_M^2(\boldsymbol{X}_i)} -   \frac{\exp(\boldsymbol{X}_i^{\top}\boldsymbol{\beta}_0+ E_{i})}{ \big(1+\exp(\boldsymbol{X}_i^{\top}\boldsymbol{\beta}_0+ E_{i})\big)^2 }\right) \Delta^{\top}\boldsymbol{X}_{i} \boldsymbol{X}_{i}^{\top}\Delta  \right|\\
& \leq \max_{i \in [N]} \left|\frac{1}{ \mathrm{C}_M^2(\boldsymbol{X}_i)} -   \frac{\exp(\boldsymbol{X}_i^{\top}\boldsymbol{\beta}_0+ E_{i})}{ \big(1+\exp(\boldsymbol{X}_i^{\top}\boldsymbol{\beta}_0+ E_{i})\big)^2 }\right|\left|\frac{1}{N}  \sum_{i = 1}^{N} \Delta^{\top}\boldsymbol{X}_{i} \boldsymbol{X}_{i}^{\top}\Delta\right|.
\end{aligned}
$$
Moreover, letting $A_i=\boldsymbol{X}_{i}^{\top}\boldsymbol{\beta}_0+ E_{i}$,
we have 
\begin{equation*}
\begin{aligned}
& \left|\frac{1}{ \mathrm{C}_M^2(\boldsymbol{X}_i)} - \frac{\exp(A_i)}{ \big(1+\exp(A_i)\big)^2 }\right|\\
& \leq \left|\frac{1}{ \mathrm{C}_M^2(\boldsymbol{X}_i)} - \left(\frac{1}{1+\exp\left(A_i+ M\Omega_{*}(\boldsymbol{X}_i)\right)}\right)\left(1 - \frac{1}{1+\exp\left(A_i\right)}\right) \right|\\
& + \left|\left(\frac{1}{1+\exp\left(A_i+ M\Omega_{*}(\boldsymbol{X}_i)\right)}\right)\left(1 - \frac{1}{1+\exp\left(A_i\right)}\right) - \frac{\exp(A_i)}{ \big(1+\exp(A_i)\big)^2 } \right|\\
& = \left|\left(\frac{1}{1+\exp\left(A_i + M\Omega_{*}(\boldsymbol{X}_i)\right)}\right)\left(\frac{1}{1+\exp\left(A_i\right)} - \frac{1}{1+\exp\left(A_i - M\Omega_{*}(\boldsymbol{X}_i)\right)}\right)\right| \\
& + \left| \left(\frac{1}{1+\exp\left(A_i + M\Omega_{*}(\boldsymbol{X}_i)\right)} - \frac{1}{1+\exp\left(A_i\right)} \right)\left(1 - \frac{1}{1+\exp\left(A_i\right)}\right)\right|\\
& \leq \left|\frac{1}{1+\exp\left(A_i\right)} - \frac{1}{1+\exp\left(A_i - M\Omega_{*}(\boldsymbol{X}_i)\right)}\right|  + \left| \frac{1}{1+\exp\left(A_i + M\Omega_{*}(\boldsymbol{X}_i)\right)} - \frac{1}{1+\exp\left(A_i\right)} \right|\\
& \leq 2\left| M\Omega_{*}(\boldsymbol{X}_i)\right|,
\end{aligned}
\end{equation*}
where the last inequality comes from that, for all $ x, \epsilon \in \mathbb{R}$, there exists $ x^{\prime}$ satisfying
$$
\left|\frac{1}{1+\exp(x+e)} - \frac{1}{1+\exp(x)}\right| = \left|\frac{-\exp(x^{\prime})}{(1+\exp(x^{\prime}))^2}e\right| \leq |e|,
$$
by a Taylor expansion.
For all $ k > 0$, by Markov's inequality, we see that
$$
\begin{aligned}
P\left(\max_{1 \leq i \leq N} \max_{1 \leq j \leq p} \left|X_{i,j}\right| > k\right) & \leq \sum_{1 \leq i \leq N}\sum_{1 \leq j \leq p} P\left(\left|X_{i,j}\right| > k\right) \\
& \leq Np \max_{1 \leq i \leq N, 1 \leq j \leq p} P\left(\left|X_{i,j}\right| > k\right) \\
& \leq Np\frac{\max\limits_{1 \leq i \leq N, 1 \leq j \leq p} \mathbbm{E}\left(\left|X_{i,j}\right|^q\right)}{k^q},
\end{aligned}
$$
since for all $ i \in [N], \max\limits_{1 \leq j \leq p} \mathbbm{E}\left(\left|X_{i,j}\right|^q\right) \leq K_0$ by Assumption \ref{asX}. Let $k = (K_0Np\log p)^{\frac{1}{q}}$. Then, we have
\begin{equation}\label{maxmax}
P\left(\max_{i \in [N], j \in[p]} \left|X_{i,j}\right| > (K_0Np\log p)^{\frac{1}{q}}\right) \leq Np\frac{\max\limits_{i \in [N], j \in[p]} \mathbbm{E}\left(\left|X_{i,j}\right|^q\right)}{K_0Np\log p} \leq \frac{1}{\log p}.
\end{equation}
By the second statement of Lemma \ref{cor1} and \eqref{maxmax}, we obtain that
\begin{equation}\label{pa2}
\max_{i \in [N]} \Omega_*\left(\boldsymbol{X}_i\right) \leq \max_{i \in [N]} \sqrt{G^*} \left|\boldsymbol{X}_i\right|_{\infty} \leq \sqrt{G^*}(K_0Np\log p)^{\frac{1}{q}} 
\end{equation}
holds with probability at least $1 - \frac{1}{\log p}$.
It follows that
\begin{equation}\label{1}
\begin{aligned}
\max_{i \in [N]} \left|\frac{1}{ \mathrm{C}_M^2(\boldsymbol{X}_i)} -   \frac{\exp(\boldsymbol{X}_i^{\top}\boldsymbol{\beta}_0+E_i)}{ \big(1+\exp(\boldsymbol{X}_i^{\top}\boldsymbol{\beta}_0+E_i)\big)^2 }\right| & \leq 2\max_{i \in [N]} M\Omega_*\left( \boldsymbol{X}_i \right)\\
& \leq 2M\sqrt{G^*}(K_0Np\log p)^{\frac{1}{q}} 
\end{aligned}
\end{equation}
holds with probability at least $1 - \frac{1}{\log p}$.

As for $\left|\frac{1}{N}  \sum_{i = 1}^{N} \Delta^{\top}\boldsymbol{X}_{i} \boldsymbol{X}_{i}^{\top}\Delta\right|$, since $\Omega^{+}(\Delta)=1$ and $\Omega^{-}(\Delta) \leq 2$, we can see $\Omega\left(\Delta\right) = \Omega^{+}(\Delta) + \Omega^{-}(\Delta) \leq 3$. Then, we have that
\begin{equation}\label{2}
\begin{aligned}
 &\left|\frac{1}{N}  \sum_{i = 1}^{N} \Delta^{\top}\boldsymbol{X}_{i} \boldsymbol{X}_{i}^{\top}\Delta\right| \\
 &= \left|\frac{1}{N}  \sum_{i = 1}^{N} \Delta^{\top}\left(\boldsymbol{X}_{i} \boldsymbol{X}_{i}^{\top} -\mathbbm{E}\left(\boldsymbol{X}_{i} \boldsymbol{X}_{i}^{\top}\right)\right)\Delta + \Delta^{\top}\mathbbm{E}\left(\boldsymbol{X}_{i} \boldsymbol{X}_{i}^{\top}\right)\Delta\right|\\
& \leq G^*\Omega^2(\Delta)\left| \frac{1}{N}  \sum_{i = 1}^{N} \boldsymbol{X}_{i} \boldsymbol{X}_{i}^{\top} -\mathbbm{E}\left(\boldsymbol{X}_{i} \boldsymbol{X}_{i}^{\top}\right)\right|_{\infty} + \max_{i \in [N]} G^*\Omega^2(\Delta)\left|\mathbbm{E}\left(\boldsymbol{X}_{i} \boldsymbol{X}_{i}^{\top}\right)\right|_{\infty}\\
&\leq 9G^* \left(B_4\frac{\sqrt{\log p}}{\sqrt{N}} + B_5\frac{p^{\frac{2}{q}} \log p}{N^{1-\frac{2}{q}}} + B_6\frac{p^{\frac{2}{q}} \sqrt{\log p}}{\sqrt{N}}\right) + 9G^*K_0^{\frac{2}{q}},
\end{aligned}
\end{equation}
holds with probability at least $1-\frac{C_{5}}{\log p}$ for some constants $B_4,B_5,B_6>0$. The first inequality comes from the third statement of Lemma \ref{cor1}. The first part of the last inequality is obtained from Lemma \ref{use sample2} and the second part is from Assumption \ref{asX}, which implies $\max\limits_{1 \leq j,l \leq p} \mathbbm{E}\left(\left|X_{i,j}X_{i,l}\right|\right) \leq  \max\limits_{1 \leq j,l \leq p} \mathbbm{E}\left(\left|X_{i,j}X_{i,l}\right|^{\frac{q}{2}}\right)^{\frac{2}{q}} \le K_0^{\frac{2}{q}}$.

Combining \eqref{1} and \eqref{2}, we conclude that
$$
 \left|\Delta^{\top} \left(\widehat{\Sigma}_{M} - \widehat{\Sigma} \right)\Delta\right| \leq U_{H_2}
$$
 holds with probability at least $1 -\frac{C_5}{\log p}- \frac{1}{\log p}$.
  By the same argument as in Lemma \ref{sample1}, we have 
 \begin{equation}\label{mini2}
P\left(\widehat{\tau}_{M}^2\left(\Delta\right) \geq \widehat{\tau}_{\boldsymbol{\beta}_0}^2\left(\Delta\right) - U_{H_2}\right) \geq 1 -\frac{C_5}{\log p}- \frac{1}{\log p}.
\end{equation}
By definition of $\widehat{\tau}_{M}^2(\Delta)$, for all $ \boldsymbol{\beta}$ satisfying $\Omega\left(\boldsymbol{\beta}-\boldsymbol{\beta}_0\right) \leq M$, we have $\widehat{\tau}_{\boldsymbol{\beta}}^2(\Delta) \geq \widehat{\tau}_{M}^2(\Delta)$. It follows
 \begin{equation}\label{mini3}
P\left(\widehat{\tau}_{\boldsymbol{\beta}}^2(\Delta) \geq \widehat{\tau}_{\boldsymbol{\beta}_0}^2\left(\Delta\right) - U_{H_2}\right) \geq 1 -\frac{C_5}{\log p}- \frac{1}{\log p}.
\end{equation}
Minimizing both left and right sides of the events in \eqref{mini2} and \eqref{mini3} with respect to $\{\Delta\in\mathbb{R}^p:\text{$\Omega^{+}(\Delta)=1$ and $\Omega^{-}(\Delta) \leq 2$}\}$, we obtain
$$
P\left[\Gamma^{-2}\left(\widehat{\tau}_{M}(\cdot)\right) \geq \Gamma^{-2}\left(\widehat{\tau}_{\boldsymbol{\beta}_0}(\cdot)\right)-U_{H_2}\right] \geq 1 - \frac{C_5}{\log p}- \frac{1}{\log p},
$$
and
$$
P\left[\Gamma^{-2}(\widehat{\tau}_{\boldsymbol{\beta}}(\cdot)) \geq \Gamma^{-2}\left(\widehat{\tau}_{\boldsymbol{\beta}_0}(\cdot)\right)-U_{H_2}\right] \geq 1 - \frac{C_5}{\log p}- \frac{1}{\log p}.
$$
\end{proof}

Combining Lemmas \ref{sample1} and \ref{sample2}, we obtain the following result.
\begin{lemma}\label{constant}Under Assumption \ref{asX}, for all $M\ge 0$, we have
$$
P\left[\Gamma^{-2}\left(\widehat{\tau}_{M}(\cdot)\right) \geq \Gamma^{-2}(\tau_{\boldsymbol{\beta}_0}(\cdot)) - U_{H_1} - U_{H_2}\right] \geq 1 -\frac{C_6}{\log p},
$$
where $U_{H_1},U_{H_2},$ and $C_6 = C_4 + C_5 + 1$ are defined in Lemmas \ref{sample1} and \ref{sample2}.
\end{lemma}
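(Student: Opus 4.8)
The plan is to combine the two inequalities from Lemmas \ref{sample1} and \ref{sample2} by intersecting their respective high-probability events and controlling the total failure probability with a union bound. Since each lemma already delivers a one-sided lower bound on an effective-sparsity quantity, the combination is purely a matter of chaining those bounds on the common event and bookkeeping the exceptional probabilities.

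First I would introduce names for the two events. From Lemma \ref{sample1}, set
\[
\mathcal{A}_1 = \left\{\Gamma^{-2}\left(\widehat{\tau}_{\boldsymbol{\beta}_0}(\cdot)\right) \geq \Gamma^{-2}\left(\tau_{\boldsymbol{\beta}_0}(\cdot)\right) - U_{H_1}\right\},
\]
so that $P(\mathcal{A}_1) \geq 1 - C_4/\log p$, and from Lemma \ref{sample2}, applied with the same $M$, set
\[
\mathcal{A}_2 = \left\{\Gamma^{-2}\left(\widehat{\tau}_{M}(\cdot)\right) \geq \Gamma^{-2}\left(\widehat{\tau}_{\boldsymbol{\beta}_0}(\cdot)\right) - U_{H_2}\right\},
\]
so that $P(\mathcal{A}_2) \geq 1 - C_5/\log p - 1/\log p$. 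It is important that both lemmas are invoked at the same sparsity radius $M$, which is exactly what makes the intermediate quantity $\Gamma^{-2}(\widehat{\tau}_{\boldsymbol{\beta}_0}(\cdot))$ cancel out of the chain.

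Next, on the intersection $\mathcal{A}_1 \cap \mathcal{A}_2$ I would chain the two inequalities: the defining inequality of $\mathcal{A}_2$ lower-bounds $\Gamma^{-2}(\widehat{\tau}_{M}(\cdot))$ by $\Gamma^{-2}(\widehat{\tau}_{\boldsymbol{\beta}_0}(\cdot)) - U_{H_2}$, and the defining inequality of $\mathcal{A}_1$ then lower-bounds $\Gamma^{-2}(\widehat{\tau}_{\boldsymbol{\beta}_0}(\cdot))$ by $\Gamma^{-2}(\tau_{\boldsymbol{\beta}_0}(\cdot)) - U_{H_1}$. Substituting one into the other gives, on $\mathcal{A}_1 \cap \mathcal{A}_2$,
\[
\Gamma^{-2}\left(\widehat{\tau}_{M}(\cdot)\right) \geq \Gamma^{-2}\left(\tau_{\boldsymbol{\beta}_0}(\cdot)\right) - U_{H_1} - U_{H_2}.
\]

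Finally, I would bound the probability of the intersection by a union bound on the complements,
\[
P\left(\mathcal{A}_1 \cap \mathcal{A}_2\right) \geq 1 - P\left(\mathcal{A}_1^c\right) - P\left(\mathcal{A}_2^c\right) \geq 1 - \frac{C_4}{\log p} - \frac{C_5}{\log p} - \frac{1}{\log p} = 1 - \frac{C_6}{\log p},
\]
with $C_6 = C_4 + C_5 + 1$, which is exactly the claimed statement. There is no genuine obstacle here: the entire content has already been carried out in Lemmas \ref{sample1} and \ref{sample2}, and the only care required is to ensure the two events are evaluated at a shared $M$ and to track the failure probabilities correctly through the union bound.
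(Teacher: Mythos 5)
Your proof is correct and is exactly the argument the paper intends: the paper's proof of this lemma is simply the one-line remark ``Combining Lemmas \ref{sample1} and \ref{sample2}, we obtain the following result,'' and your chaining of the two one-sided bounds on the intersection event together with a union bound on the complements is the only sensible way to fill that in. The lone quibble is your remark that ``both lemmas are invoked at the same sparsity radius $M$''—Lemma \ref{sample1} does not involve $M$ at all; what matters is only that Lemma \ref{sample2} is applied at the given $M$ (so that $U_{H_2}$ matches) and that the intermediate quantity $\Gamma^{-2}\bigl(\widehat{\tau}_{\boldsymbol{\beta}_0}(\cdot)\bigr)$ is common to the two events.
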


Finally, we have the following result.
\begin{lemma}\label{constant2}Let Assumptions \ref{asX} and \ref{aseign} hold. For all $M\ge 0$, if 
$(U_{H_1}+U_{H_2})\frac{s_{\boldsymbol{\beta}_0}}{\gamma_{\mathrm{H}}}\le \frac12,$
then
$$ P\left(\Gamma^{2}\left( \widehat{\tau}_{M}(\cdot)\right) \leq \frac{2s_{\boldsymbol{\beta}_0}}{\gamma_{\mathrm{H}}}\right) \geq 1 -\frac{C_6}{\log p},$$
where $U_{H_1},U_{H_2},$ and $C_6 = C_4 + C_5 + 1$ are defined in Lemmas \ref{sample1} and \ref{sample2}.
\end{lemma}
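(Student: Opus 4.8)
The plan is to convert the high-probability lower bound on the inverse effective sparsity $\Gamma^{-2}\left(\widehat{\tau}_{M}(\cdot)\right)$ supplied by Lemma \ref{constant} into the desired upper bound on $\Gamma^{2}\left(\widehat{\tau}_{M}(\cdot)\right)$, using Lemma \ref{eigen} to control the population quantity and the assumed smallness of $U_{H_1}+U_{H_2}$ to absorb the sampling error. First I would recall from Lemma \ref{eigen} that $\Gamma^{2}\left(\tau_{\boldsymbol{\beta}_0}(\cdot)\right) \le s_{\boldsymbol{\beta}_0}/\gamma_{\mathrm{H}}$, which upon inverting gives the population lower bound $\Gamma^{-2}\left(\tau_{\boldsymbol{\beta}_0}(\cdot)\right) \ge \gamma_{\mathrm{H}}/s_{\boldsymbol{\beta}_0}$.

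Next I would invoke Lemma \ref{constant}, which states that with probability at least $1 - C_6/\log p$ the event $\Gamma^{-2}\left(\widehat{\tau}_{M}(\cdot)\right) \ge \Gamma^{-2}\left(\tau_{\boldsymbol{\beta}_0}(\cdot)\right) - U_{H_1} - U_{H_2}$ holds. On this event, combining with the population bound yields $\Gamma^{-2}\left(\widehat{\tau}_{M}(\cdot)\right) \ge \gamma_{\mathrm{H}}/s_{\boldsymbol{\beta}_0} - (U_{H_1}+U_{H_2})$. The hypothesis $(U_{H_1}+U_{H_2})\, s_{\boldsymbol{\beta}_0}/\gamma_{\mathrm{H}} \le 1/2$ is exactly what is needed to bound the perturbation by half the main term: it rearranges to $U_{H_1} + U_{H_2} \le \gamma_{\mathrm{H}}/(2 s_{\boldsymbol{\beta}_0})$, so that $\Gamma^{-2}\left(\widehat{\tau}_{M}(\cdot)\right) \ge \gamma_{\mathrm{H}}/s_{\boldsymbol{\beta}_0} - \gamma_{\mathrm{H}}/(2 s_{\boldsymbol{\beta}_0}) = \gamma_{\mathrm{H}}/(2 s_{\boldsymbol{\beta}_0}) > 0$. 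Taking reciprocals, which is licit since the lower bound is strictly positive, gives $\Gamma^{2}\left(\widehat{\tau}_{M}(\cdot)\right) \le 2 s_{\boldsymbol{\beta}_0}/\gamma_{\mathrm{H}}$ on the same event, which is the claim.

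The argument is essentially a one-line algebraic inversion once Lemmas \ref{eigen} and \ref{constant} are in hand, so I do not expect a substantial obstacle; the probability statement is inherited verbatim from Lemma \ref{constant}. The only points requiring care are, first, checking that the resulting lower bound on $\Gamma^{-2}\left(\widehat{\tau}_{M}(\cdot)\right)$ is strictly positive before inverting, which the hypothesis guarantees, and second, ensuring that the event on which the bound is established is the very event furnished by Lemma \ref{constant}, so that the probability $1 - C_6/\log p$ transfers directly without any union bound or additional loss.
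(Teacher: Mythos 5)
Your proposal is correct and follows essentially the same route as the paper: work on the event from Lemma \ref{constant}, use Lemma \ref{eigen} to bound the population effective sparsity, invoke the hypothesis $(U_{H_1}+U_{H_2})s_{\boldsymbol{\beta}_0}/\gamma_{\mathrm{H}}\le \frac12$ to show the perturbed lower bound on $\Gamma^{-2}\left(\widehat{\tau}_{M}(\cdot)\right)$ is still at least $\gamma_{\mathrm{H}}/(2s_{\boldsymbol{\beta}_0})>0$, and invert. The only (immaterial) difference is that the paper keeps the bound in the factored form $\Gamma^{2}\left(\widehat{\tau}_{M}(\cdot)\right)\le \Gamma^{2}\left(\tau_{\boldsymbol{\beta}_0}(\cdot)\right)/\bigl(1-(U_{H_1}+U_{H_2})\Gamma^{2}\left(\tau_{\boldsymbol{\beta}_0}(\cdot)\right)\bigr)$ before bounding, whereas you substitute the numerical lower bound $\gamma_{\mathrm{H}}/s_{\boldsymbol{\beta}_0}$ first; the two computations are algebraically equivalent.
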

\begin{proof}
We work on the event $$
\mathcal{E}=\left\{\Gamma^{-2}\left(\widehat{\tau}_{M}(\cdot)\right) \geq \Gamma^{-2}\left( \tau_{\boldsymbol{\beta}_0}(\cdot)\right) - U_{H_1} - U_{H_2}\right\},
$$ which has probability at least $1 -\frac{C_6}{\log p}$ by Lemma \ref{constant}. Remark that, by Lemma \ref{eigen}, we know that $\Gamma^{2}\left(\tau_{\boldsymbol{\beta}_0}(\cdot)\right) \leq \frac{s_{\boldsymbol{\beta}_0}}{\gamma_{\mathrm{H}}}$, so that $(U_{H_1} + U_{H_2})\Gamma^{2}\left(\tau_{\boldsymbol{\beta}_0}(\cdot)\right) \leq \frac{1}{2}$ since $(U_{H_1}+U_{H_2})\frac{s_{\boldsymbol{\beta}_0}}{\gamma_{\mathrm{H}}}\le \frac12$ by assumption.
Therefore, on $\mathcal{E}$, we have
 \begin{align*}
     \Gamma^{2}\left( \widehat{\tau}_{M}(\cdot)\right)&\le \frac{\Gamma^{2}\left(\tau_{\boldsymbol{\beta}_0}(\cdot)\right)}{1 - (U_{H_1} + U_{H_2})\Gamma^{2}\left(\tau_{\boldsymbol{\beta}_0}(\cdot)\right)} \\
     &\le  2 \Gamma^{2}\left(\tau_{\boldsymbol{\beta}_0}(\cdot)\right)\le  \frac{2s_{\boldsymbol{\beta}_0}}{\gamma_{\mathrm{H}}}.
     \end{align*}
This yields the result.
\end{proof}

\subsection{Proof of Theorem \ref{the}}\label{subsec.proof}

Let 
$$M_{\boldsymbol{\beta}_0}=18 \frac{\lambda s_{\boldsymbol{\beta}_0}}{\gamma_{\mathrm{H}}} + \lambda^{-1}\frac{24}{N} |\boldsymbol{E}|_1$$
and $$\mathbbm{B}_{local} =\left\{\boldsymbol{\beta}\in\mathbb{R}^p: \Omega\left(\boldsymbol{\beta}-\boldsymbol{\beta}_0\right) \leq M_{\boldsymbol{\beta}_0}\right\}$$ be a convex neighborhood of the target parameter $\boldsymbol{\beta}_0$. 
Define $\tilde{\boldsymbol{\beta}}:=d \widehat{\boldsymbol{\beta}}+(1-d) \boldsymbol{\beta}_0$, where $d:=\frac{M_{\boldsymbol{\beta}_0}}{M_{\boldsymbol{\beta}_0}+\Omega\left(\widehat{\boldsymbol{\beta}}-\boldsymbol{\beta}_0\right)}$.
We have
\begin{equation}\label{leqM}
\begin{aligned}
\Omega\left(\tilde{\boldsymbol{\beta}}-\boldsymbol{\beta}_0\right) & =\Omega\left(d \widehat{\boldsymbol{\beta}}+(1-d) \boldsymbol{\beta}_0-\boldsymbol{\beta}_0\right)=d \Omega\left(\widehat{\boldsymbol{\beta}}-\boldsymbol{\beta}_0\right) \\
& =\frac{M_{\boldsymbol{\beta}_0} \Omega\left(\widehat{\boldsymbol{\beta}}-\boldsymbol{\beta}_0\right)}{M_{\boldsymbol{\beta}_0}+\Omega\left(\widehat{\boldsymbol{\beta}}-\boldsymbol{\beta}_0\right)} \leq M_{\boldsymbol{\beta}_0},
\end{aligned}
\end{equation}
which shows that $\Omega\left(\tilde{\boldsymbol{\beta}}-\boldsymbol{\beta}_0\right) \leq M_{\boldsymbol{\beta}_0} $. 

The rest of the proof shows that $\Omega\left(\widehat{\boldsymbol{\beta}}-\boldsymbol{\beta}_0\right) \leq M_{\boldsymbol{\beta}_0} $ with large probability. Since $R_N(\cdot)+\lambda \Omega(\cdot)$ is convex and $\widehat{\boldsymbol{\beta}}$ is the minimizer of \eqref{midas}, we get
\begin{equation}\label{eq1}
\begin{aligned}
R_N(\tilde{\boldsymbol{\beta}})+\lambda \Omega(\tilde{\boldsymbol{\beta}}) & \leq d R_N(\widehat{\boldsymbol{\beta}})+d \lambda \Omega(\widehat{\boldsymbol{\beta}})+(1-d) R_N(\boldsymbol{\beta}_0)+(1-d) \lambda \Omega\left(\boldsymbol{\beta}_0\right) \\
& \leq R_N(\boldsymbol{\beta}_0)+\lambda \Omega\left(\boldsymbol{\beta}_0\right) .
\end{aligned}
\end{equation}
Adding the term $\left(R(\tilde{\boldsymbol{\beta}}|\boldsymbol{X})-R\left(\boldsymbol{\beta}_0|\boldsymbol{X}\right)\right)$ on both sides of \eqref{eq1}, we obtain
\begin{equation}
\label{eq2}
\begin{aligned}
&R(\tilde{\boldsymbol{\beta}}|\boldsymbol{X})-R\left(\boldsymbol{\beta}_0|\boldsymbol{X}\right)\\
&\leq-\left[\left(R_N(\tilde{\boldsymbol{\beta}})-R(\tilde{\boldsymbol{\beta}}|\boldsymbol{X})\right)-\left(R_N(\boldsymbol{\beta}_0)-R\left(\boldsymbol{\beta}_0|\boldsymbol{X}\right)\right)\right]+\lambda \Omega\left(\boldsymbol{\beta}_0\right)-\lambda \Omega(\tilde{\boldsymbol{\beta}}) .
\end{aligned}
\end{equation}
By definition of $\Omega(\cdot),\Omega^+(\cdot)$ and $\Omega^-(\cdot)$, we have
\begin{equation}\label{eq2u}
\begin{aligned}
\Omega(\boldsymbol{\beta}_0) - \Omega(\tilde{\boldsymbol{\beta}}) & = \Omega^+(\boldsymbol{\beta}_0) + \Omega^-(\boldsymbol{\beta}_0) - \Omega^+(\tilde{\boldsymbol{\beta}}) - \Omega^-(\tilde{\boldsymbol{\beta}})\\
& = \Omega^+(\boldsymbol{\beta}_0) - \Omega^+(\tilde{\boldsymbol{\beta}}) - \Omega^-(\tilde{\boldsymbol{\beta}}) \\
&= \Omega^+(\boldsymbol{\beta}_0) - \Omega^+(\tilde{\boldsymbol{\beta}}) - \Omega^-\left(\tilde{\boldsymbol{\beta}} - \boldsymbol{\beta}_0\right)\\
& \leq \Omega^{+}\left(\tilde{\boldsymbol{\beta}} - \boldsymbol{\beta}_0\right) - \Omega^-\left(\tilde{\boldsymbol{\beta}} - \boldsymbol{\beta}_0\right).
\end{aligned}
\end{equation}
Let us define the event 
$$\mathcal{A}=\left\{R(\tilde{\boldsymbol{\beta}}|\boldsymbol{X})-R\left(\boldsymbol{\beta}_0|\boldsymbol{X}\right)\leq \frac{\lambda}{12}M_{\boldsymbol{\beta}_0}+ \frac{2}{N} |\boldsymbol{E}|_1+  \lambda \left\{\Omega^{+}\left(\tilde{\boldsymbol{\beta}} - \boldsymbol{\beta}_0\right) - \Omega^{-}\left(\tilde{\boldsymbol{\beta}} - \boldsymbol{\beta}_0\right)\right\}\right\}.$$
By Theorem \ref{the3}, together with Assumption \ref{aseff}, \eqref{eq2} and \eqref{eq2u}, since $\varphi=O\left(p^{\frac{1}{q}}\sqrt{\log p}/N^{\frac{1}{2} - \frac{1}{q}}\right) \lesssim \lambda$, we have $P\left(\mathcal{A} \right)\to 1.$
 Let also

$$
\mathcal{B}=\left\{\Gamma^{2}\left( \widehat{\tau}_{M_{\boldsymbol{\beta}_0}}\left(\cdot\right)\right) \leq \frac{2s_{\boldsymbol{\beta}_0}}{\gamma_{\mathrm{H}}}\right\}.
$$
Note that, by Lemma \ref{constant2} and Assumption \ref{aseff} which implies that $s_{\boldsymbol{\beta}_0}U_{H_1}=o(1)$ and $\frac{s_{\boldsymbol{\beta}_0}U_{H_1} + s_{\boldsymbol{\beta}_0}U_{H_2}}{\gamma_{\mathrm{H}}} \leq 1/2$, 
we have
$P\left(\mathcal{B} \right)\to 1$.

Let us introduce the event 
$$\mathcal{C}=\left\{ \lambda \Omega\left(\tilde{\boldsymbol{\beta}} - \boldsymbol{\beta}_0\right)\leq \frac{9\lambda^2  s_{\boldsymbol{\beta}_0}}{2\gamma_{\mathrm{H}}} + \frac{\lambda}{4} M_{\boldsymbol{\beta}_0} + \frac{6}{N} |\boldsymbol{E}|_1\right\}.$$
We will show that $\mathcal{C}$ holds almost surely on the event $\mathcal{A}\cap\mathcal{B}$. To do so, we work on $\mathcal{A}\cap\mathcal{B}$ and consider two cases.

\medskip

\paragraph{Case $1$.} In this case, we assume that $\lambda \Omega^{+}\left(\tilde{\boldsymbol{\beta}} - \boldsymbol{\beta}_0\right) \leq \frac{\lambda}{12} M_{\boldsymbol{\beta}_0} + \frac{2}{N} |\boldsymbol{E}|_1.$ 

\noindent Since we work on the event $\mathcal{A}$, we obtain
 $$\lambda \Omega^{-}\left(\tilde{\boldsymbol{\beta}} - \boldsymbol{\beta}_0\right) + R(\tilde{\boldsymbol{\beta}}|\boldsymbol{X}) - R(\boldsymbol{\beta}_0|\boldsymbol{X}) \leq 2 \frac{\lambda}{12} M_{\boldsymbol{\beta}_0} + \frac{4}{N} |\boldsymbol{E}|_1.$$
Since $R(\tilde{\boldsymbol{\beta}}|\boldsymbol{X})-R\left(\boldsymbol{\beta}_0|\boldsymbol{X}\right) \geq 0$ by Lemma \ref{one point}, this yields $$
 \lambda \Omega^{-}\left(\tilde{\boldsymbol{\beta}} - \boldsymbol{\beta}_0\right) \leq \frac{\lambda}{6} M_{\boldsymbol{\beta}_0} + \frac{4}{N} |\boldsymbol{E}|_1.
 $$
Using $\lambda \Omega^{+}\left(\tilde{\boldsymbol{\beta}} - \boldsymbol{\beta}_0\right) \leq \frac{\lambda}{12} M_{\boldsymbol{\beta}_0} + \frac{2}{N} |\boldsymbol{E}|_1$, we obtain that, 
 \begin{equation}\label{c1}
 \lambda \Omega\left(\tilde{\boldsymbol{\beta}} - \boldsymbol{\beta}_0\right) \leq \frac{\lambda}{4} M_{\boldsymbol{\beta}_0} + \frac{6}{N} |\boldsymbol{E}|_1.
 \end{equation}
 so that $\mathcal{C}$ holds.
\paragraph{Case $2$.} In this case, we assume that we assume that $\lambda \Omega^{+}\left(\tilde{\boldsymbol{\beta}} - \boldsymbol{\beta}_0\right) >\frac{\lambda}{12} M_{\boldsymbol{\beta}_0} + \frac{2}{N} |\boldsymbol{E}|_1.$

\noindent Since we work on $\mathcal{A}$, we obtain
 \begin{equation}
\label{eq4}
 \begin{aligned}
 R(\tilde{\boldsymbol{\beta}}|\boldsymbol{X})-R\left(\boldsymbol{\beta}_0|\boldsymbol{X}\right) + \lambda \Omega^{-}\left(\tilde{\boldsymbol{\beta}} - \boldsymbol{\beta}_0\right) &\leq \frac{\lambda}{12} M_{\boldsymbol{\beta}_0}+ \frac{1}{N} \sum_{i=1}^N 2|E_{i}| + \lambda \Omega^{+}\left(\tilde{\boldsymbol{\beta}} - \boldsymbol{\beta}_0\right) \\
 & \leq 2\lambda \Omega^{+}\left(\tilde{\boldsymbol{\beta}} - \boldsymbol{\beta}_0\right) .
 \end{aligned}
\end{equation}
Since $R(\tilde{\boldsymbol{\beta}}|\boldsymbol{X})-R\left(\boldsymbol{\beta}_0|\boldsymbol{X}\right) \geq 0$ by Lemma \ref{one point}, we get $\Omega^{-}\left(\tilde{\boldsymbol{\beta}} - \boldsymbol{\beta}_0\right) \leq 2\Omega^{+}\left(\tilde{\boldsymbol{\beta}} - \boldsymbol{\beta}_0\right)$. %Then we have $P\left(\mathcal{B} \right)\to 1$. Thus, we obtain $P\left(\mathcal{A}\cap\mathcal{B} \right)\to 1$.
By definition of the effective sparsity, we obtain
$$
\Omega^{+}\left(\tilde{\boldsymbol{\beta}} - \boldsymbol{\beta}_0\right) \leq \widehat{\tau}_{M_{\boldsymbol{\beta}_0}}\left(\tilde{\boldsymbol{\beta}} - \boldsymbol{\beta}_0\right)\Gamma\left(  \widehat{\tau}_{M_{\boldsymbol{\beta}_0}}\left(\cdot\right)\right).
$$
Then adding $\frac{1}{2} \lambda \Omega^{+}\left(\tilde{\boldsymbol{\beta}} - \boldsymbol{\beta}_0\right) $ on both sides  of \eqref{eq4} and using the inequality $|ab|\le (a^2+b^2)/2$, we get
\begin{equation}\label{any}
\begin{aligned}
& R(\tilde{\boldsymbol{\beta}}|\boldsymbol{X})-R\left(\boldsymbol{\beta}_0|\boldsymbol{X}\right) + \lambda \Omega^{-}\left(\tilde{\boldsymbol{\beta}} - \boldsymbol{\beta}_0\right) +\frac{1}{2} \lambda \Omega^{+}\left(\tilde{\boldsymbol{\beta}} - \boldsymbol{\beta}_0\right) \\
& \leq \frac{3}{2} \lambda \Omega^{+}\left(\tilde{\boldsymbol{\beta}} - \boldsymbol{\beta}_0\right) + \frac{\lambda}{12} M_{\boldsymbol{\beta}_0} + \frac{1}{N} \sum_{i=1}^N 2|E_{i}|\\
& \leq \frac{3}{2} \lambda \widehat{\tau}_{M_{\boldsymbol{\beta}_0}}\left(\tilde{\boldsymbol{\beta}} - \boldsymbol{\beta}_0\right)\Gamma\left( \widehat{\tau}_{M_{\boldsymbol{\beta}_0}}\left(\cdot\right)\right) +\frac{\lambda}{12}M_{\boldsymbol{\beta}_0} + \frac{1}{N} \sum_{i=1}^N 2|E_{i}|\\
& \leq \frac{\widehat{\tau}^2_{M_{\boldsymbol{\beta}_0}}\left(\tilde{\boldsymbol{\beta}} - \boldsymbol{\beta}_0\right)}{2} + \frac{9}{8}\lambda^2 \Gamma^2\left(  \widehat{\tau}_{M_{\boldsymbol{\beta}_0}}\left(\cdot\right)\right)  + \frac{\lambda}{12}M_{\boldsymbol{\beta}_0}+ \frac{1}{N} \sum_{i=1}^N 2|E_{i}|.
\end{aligned}
\end{equation}
Since $\tilde{\boldsymbol{\beta}} \in  \mathbbm{B}_{local}$, by Lemma \ref{one point}, we know that 
$$
R(\tilde{\boldsymbol{\beta}}|\boldsymbol{X})-R\left(\boldsymbol{\beta}_0|\boldsymbol{X}\right) \geq \frac{\widehat{\tau}^2_{M_{\boldsymbol{\beta}_0}}\left(\tilde{\boldsymbol{\beta}} - \boldsymbol{\beta}_0\right)}{2}.
$$
Then, we have
\begin{equation}\label{last key}
\begin{aligned}
 & \lambda \Omega^{-}\left(\tilde{\boldsymbol{\beta}} - \boldsymbol{\beta}_0\right) +\frac{1}{2} \lambda \Omega^{+}\left(\tilde{\boldsymbol{\beta}} - \boldsymbol{\beta}_0\right) \\
 & \leq  \frac{9}{8}\lambda^2 \Gamma^2\left(   \widehat{\tau}_{M_{\boldsymbol{\beta}_0}}\left(\cdot\right)\right) + \frac{\lambda}{12}M_{\boldsymbol{\beta}_0} + \frac{2}{N} |\boldsymbol{E}|_1.
 \end{aligned}
\end{equation}
We obtain that $\mathcal{C}$ holds since we work on $\mathcal{B}$. 

\medskip

We have now shown that $\mathcal{C}$ holds on $\mathcal{A}\cap\mathcal{B}$. Thus, we obtain that $P\left( \mathcal{C}\right) \to 1$ since $P(\mathcal{A}\cap\mathcal{B})\to 1.$
Now, let us work on the event $\mathcal{C}$. On $\mathcal{C}$, we have 
\begin{equation}\label{M}
\begin{aligned}
\frac{M_{\boldsymbol{\beta}_0}}{M_{\boldsymbol{\beta}_0} + \Omega\left(\widehat{\boldsymbol{\beta}}-\boldsymbol{\beta}_0\right)} \Omega\left(\widehat{\boldsymbol{\beta}} - \boldsymbol{\beta}_0\right)   & = \Omega\left(\tilde{\boldsymbol{\beta}} - \boldsymbol{\beta}_0\right)  \\
& \leq \frac{9\lambda  s_{\boldsymbol{\beta}_0}}{2\gamma_{\mathrm{H}}} + \frac{\frac{6}{N} |\boldsymbol{E}|_1}{\lambda} + \frac{\lambda}{4\lambda} M_{\boldsymbol{\beta}_0}\\
& \leq  \frac{1}{4}M_{\boldsymbol{\beta}_0} + \frac{1}{4}M_{\boldsymbol{\beta}_0} = \frac{1}{2}M_{\boldsymbol{\beta}_0}.
\end{aligned}
\end{equation}
Since
$$
\frac{M_{\boldsymbol{\beta}_0}}{M_{\boldsymbol{\beta}_0} + \Omega\left(\widehat{\boldsymbol{\beta}}-\boldsymbol{\beta}_0\right)} \Omega\left(\widehat{\boldsymbol{\beta}} - \boldsymbol{\beta}_0\right) \leq \frac{M_{\boldsymbol{\beta}_0}}{2} \\
\Leftrightarrow 2 \Omega\left(\widehat{\boldsymbol{\beta}} - \boldsymbol{\beta}_0\right) \leq M_{\boldsymbol{\beta}_0} + \Omega\left(\widehat{\boldsymbol{\beta}} - \boldsymbol{\beta}_0\right),
$$
it holds that
\begin{equation*}\label{neighbor}
P\left(\Omega\left(\widehat{\boldsymbol{\beta}} - \boldsymbol{\beta}_0\right)\le M_{\boldsymbol{\beta}_0}\right)\to 1.
\end{equation*}
% Furthermore, we have that
% $$
% \begin{aligned}
% R(\widehat{\boldsymbol{\beta}}|\boldsymbol{X})-R(\boldsymbol{\beta}_0|\boldsymbol{X}) & \leq-\left[\left(R_N(\widehat{\boldsymbol{\beta}})-R(\widehat{\boldsymbol{\beta}}|\boldsymbol{X})\right)-\left(R_N(\boldsymbol{\beta}_0)-R(\boldsymbol{\beta}_0|\boldsymbol{X})\right)\right]\\
% &\quad +\lambda \Omega(\boldsymbol{\beta}_0)-\lambda \Omega(\widehat{\boldsymbol{\beta}}) \\
% & \leq \varphi M_{\boldsymbol{\beta}_0}+\frac{2}{N}|\boldsymbol{E}|_1+\lambda \Omega\left(\widehat{\boldsymbol{\beta}}-\boldsymbol{\beta}_0\right) \\
% & \leq\left(\varphi+\lambda\right) M_{\boldsymbol{\beta}_0} +  \frac{2}{N}|\boldsymbol{E}|_1\\\
% & \lesssim \frac{s_{\boldsymbol{\beta}_0}\lambda^2}{\gamma_{\mathrm{H}}} +  \frac{1}{N} |\boldsymbol{E}|_1.
% \end{aligned}
% $$
% holds with probability going to $1$.
Furthermore, using Lemma \ref{cor1}, we conclude that with probability going to $1$, it holds that
\begin{equation*}
\begin{aligned}
\widehat{P}(\boldsymbol{z}) -  P(\boldsymbol{z}) & = \frac{\exp \left(\boldsymbol{x}^{\top}\widehat{\boldsymbol{\beta}} \right)}{1 + \exp \left(\boldsymbol{x}^{\top}\widehat{\boldsymbol{\beta}}\right)} -  \frac{\exp \left(\boldsymbol{x}^{\top}\boldsymbol{\beta}_0 + e \right)}{1 + \exp \left(\boldsymbol{x}^{\top}\boldsymbol{\beta}_0 + e\right)}\\
& \leq \left|\boldsymbol{x}^{\top}\widehat{\boldsymbol{\beta}} -\boldsymbol{x}^{\top}\boldsymbol{\beta}_0 - e\right|\\
& \leq \left|\boldsymbol{x}^{\top}\widehat{\boldsymbol{\beta}} -\boldsymbol{x}^{\top}\boldsymbol{\beta}_0\right| + \left|e\right|\\
& \leq \Omega_{*}\left(\boldsymbol{x}\right) \Omega\left(\widehat{\boldsymbol{\beta}} - \boldsymbol{\beta}_0\right) + \left|e\right| \\
& \lesssim \frac{\lambda s_{\boldsymbol{\beta}_0}\left|\boldsymbol{x}\right|_{\infty}}{\gamma_{\mathrm{H}}} +  \lambda^{-1}\left|\boldsymbol{x}\right|_{\infty}\frac{1}{N} |\boldsymbol{E}|_1 + |e|,
\end{aligned}
\end{equation*}
where the first inequality comes from that, for all $ x, \varepsilon \in \mathbb{R}$, there exists $ x^{\prime}$ satisfying
$$
\left|\frac{\exp(x+\varepsilon)}{1+\exp(x+\varepsilon)} - \frac{\exp(x)}{1+\exp(x)}\right| = \left|\frac{\exp(x^{\prime})}{(1+\exp(x^{\prime}))^2}\varepsilon\right| \leq |\varepsilon|,
$$
by a Taylor expansion. The last inequality comes from the second statement of Lemma \ref{cor1} and the value of $M_{\boldsymbol{\beta}_0}$.

{\color{black}
\section{Inference theory}\label{appinf}
\subsection{Proof of Theorem \ref{inf the}}
We begin with the de-sparsified sparse-group LASSO estimator $\widehat{\boldsymbol{b}}_{(\mathcal{J})}$. Recall the definition of $\dot{R}_N(\widehat{\boldsymbol{\beta}})$ in \eqref{defdotR}, we have
\begin{equation}\label{preexp}
\begin{aligned}
\widehat{\boldsymbol{b}}_{(\mathcal{J})} - \boldsymbol{\beta}_{0,(\mathcal{J})} 
&= \left(\widehat{\boldsymbol{\beta}}_{(\mathcal{J})} - \boldsymbol{\beta}_{0,(\mathcal{J})}\right) 
    - \widehat{\boldsymbol{\Theta}}_\mathcal{J}\left[\dot{R}_N(\widehat{\boldsymbol{\beta}})\right] \\
&= \left(\widehat{\boldsymbol{\beta}}_{(\mathcal{J})} - \boldsymbol{\beta}_{0,(\mathcal{J})}\right) 
   - \widehat{\boldsymbol{\Theta}}_\mathcal{J}\left[\dot{R}_N(\boldsymbol{\beta}_0)
   + \ddot{R}_N(\tilde{\boldsymbol{\beta}})\left(\widehat{\boldsymbol{\beta}} - \boldsymbol{\beta}_0\right)\right] \\
&= - \widehat{\boldsymbol{\Theta}}_\mathcal{J}\left[\dot{R}_N(\boldsymbol{\beta}_0)\right]
   - \left(\widehat{\boldsymbol{\Theta}} \ddot{R}_N(\widehat{\boldsymbol{\beta}}) - \boldsymbol{I}_p\right)_\mathcal{J}
     \left(\widehat{\boldsymbol{\beta}} - \boldsymbol{\beta}_0\right) \\
&\quad - \widehat{\boldsymbol{\Theta}}_\mathcal{J}
   \left[\ddot{R}_N(\tilde{\boldsymbol{\beta}}) - \ddot{R}_N(\widehat{\boldsymbol{\beta}})\right]
   \left(\widehat{\boldsymbol{\beta}} - \boldsymbol{\beta}_0\right),
\end{aligned}
\end{equation}
where we use the mean value theorem and $\tilde{\boldsymbol{\beta}}$ is between $\widehat{\boldsymbol{\beta}}$ and $\boldsymbol{\beta}_0$, and $\boldsymbol{I}_p$ denotes the $p \times p$ identity matrix.

To simplify the notation, let
\begin{equation}\label{rho_hat}
\widehat{\rho}_N(\boldsymbol{\beta}) 
= \frac{1}{N} \sum_{i=1}^{N} 
  \left(- \widehat{f}_i(t) 
  +  \frac{\exp(\boldsymbol{X}_{i}^{\top}\boldsymbol{\beta}+E_i)}
{1+\exp(\boldsymbol{X}_{i}^{\top}\boldsymbol{\beta}+E_i)}\right)\boldsymbol{X}_i,
\end{equation}
and
\begin{equation}
\rho_N(\boldsymbol{\beta}) 
= \frac{1}{N} \sum_{i=1}^{N} 
  \left(- f_i(t) 
  +  \frac{\exp(\boldsymbol{X}_{i}^{\top}\boldsymbol{\beta}+E_i)}
  {1+\exp(\boldsymbol{X}_{i}^{\top}\boldsymbol{\beta}+E_i)}\right)\boldsymbol{X}_i.
\end{equation}
With this notation, we can rewrite the previous expression \eqref{preexp} as
\begin{equation}\label{start eq}
\begin{aligned}
N^{1/2}\left(\widehat{\boldsymbol{b}}_{(\mathcal{J})} - \boldsymbol{\beta}_{0,(\mathcal{J})}\right)
&= - N^{1/2}\widehat{\boldsymbol{\Theta}}_\mathcal{J}
   \left[\dot{R}_N(\boldsymbol{\beta}_0) - \widehat{\rho}_N(\boldsymbol{\beta}_0)\right] \\
&\quad - N^{1/2}\left(\widehat{\boldsymbol{\Theta}}_\mathcal{J} - \boldsymbol{\Theta}_\mathcal{J}\right)
   \widehat{\rho}_N(\boldsymbol{\beta}_0)
   - N^{1/2}\boldsymbol{\Theta}_\mathcal{J} \widehat{\rho}_N(\boldsymbol{\beta}_0)  \\
   & \quad - N^{1/2}\left(\widehat{\boldsymbol{\Theta}}\ddot{R}_N(\widehat{\boldsymbol{\beta}}) - \boldsymbol{I}_p\right)_\mathcal{J}
   \left(\widehat{\boldsymbol{\beta}} - \boldsymbol{\beta}_0\right) \\
&\quad - N^{1/2}\widehat{\boldsymbol{\Theta}}_\mathcal{J}
   \left[\ddot{R}_N(\tilde{\boldsymbol{\beta}}) - \ddot{R}_N(\widehat{\boldsymbol{\beta}})\right]
   \left(\widehat{\boldsymbol{\beta}} - \boldsymbol{\beta}_0\right).
\end{aligned}
\end{equation}

To establish the asymptotic normality of the de-sparsified estimator $\widehat{\boldsymbol{b}}_{(\mathcal{J})}$, we decompose the right-hand side of \eqref{start eq} into five separate components:
\begin{equation}
\begin{aligned}
tt_1 &:= N^{1/2} 
   \left|\widehat{\boldsymbol{\Theta}}_\mathcal{J}
   \left[\dot{R}_N(\boldsymbol{\beta}_0) - \widehat{\rho}_N(\boldsymbol{\beta}_0)\right]\right|_{\infty}, \\
tt_2 &:= N^{1/2}
   \left|\widehat{\boldsymbol{\Theta}}_\mathcal{J}\widehat{\rho}_N(\boldsymbol{\beta}_0)
   - \boldsymbol{\Theta}_\mathcal{J}\widehat{\rho}_N(\boldsymbol{\beta}_0)\right|_{\infty}, \\
tt_3 &:= N^{1/2}
   \boldsymbol{\Theta}_\mathcal{J} \widehat{\rho}_N(\boldsymbol{\beta}_0) \\
tt_4 &:= 
\left|N^{1/2}\left(\widehat{\boldsymbol{\Theta}}\ddot{R}_N(\widehat{\boldsymbol{\beta}}) - \boldsymbol{I}_p\right)_\mathcal{J}
   \left(\widehat{\boldsymbol{\beta}} - \boldsymbol{\beta}_0\right)\right|_{\infty}, \\
tt_5 &:= 
   \left|N^{1/2}\widehat{\boldsymbol{\Theta}}_\mathcal{J}
   \left[\ddot{R}_N(\tilde{\boldsymbol{\beta}}) - \ddot{R}_N(\widehat{\boldsymbol{\beta}})\right]
   \left(\widehat{\boldsymbol{\beta}} - \boldsymbol{\beta}_0\right)\right|_{\infty}.
\end{aligned}
\end{equation}
Theorem \ref{inf the} is a direct result of the following Lemmas \ref{inferencelemma}, \ref{ustat} which show
$$
tt_1 = o_{P}(1), tt_2 = o_{P}(1), tt_3 \xrightarrow{d}\mathcal{N}(\boldsymbol{0},\boldsymbol{V}_\mathcal{J}),  tt_4 = o_{P}(1), tt_5 = o_{P}(1)
,
$$
and Slutsky’s lemma.
\subsection{Auxiliary lemmas}
\begin{lemma} \label{inferencelemma}
Given a fixed group $\mathcal{J} \subseteq[p]$. Under the conditions of Theorem \ref{the} and Assumptions \ref{asapprerror}, \ref{aseign_2}, \ref{asrate}, and \ref{asinfeign}, the following results hold:
$$
tt_1 = o_{P}(1), tt_2 = o_{P}(1),  tt_4 = o_{P}(1), tt_5 = o_{P}(1).
$$
\end{lemma}
\begin{proof}
     We first show that $tt_1 = o_{P}(1)$. By substituting the definitions of 
$\dot{R}_N(\boldsymbol{\beta}_0)$ and $\widehat{\rho}_N(\boldsymbol{\beta}_0)$ from \eqref{defdotR} and \eqref{rho_hat}, we obtain
\[
\begin{aligned}
tt_1 
&= N^{1/2} \Bigg| \widehat{\boldsymbol{\Theta}}_\mathcal{J} \frac{1}{N} \sum_{i=1}^{N} 
   \left( 
       \frac{\exp(\boldsymbol{X}_{i}^{\top}\boldsymbol{\beta}_0)}{1+\exp(\boldsymbol{X}_{i}^{\top}\boldsymbol{\beta}_0)} 
       - \frac{\exp(\boldsymbol{X}_{i}^{\top}\boldsymbol{\beta}_0+E_i)}{1+\exp(\boldsymbol{X}_{i}^{\top}\boldsymbol{\beta}_0+E_i)}  
   \right) \boldsymbol{X}_i \Bigg|_{\infty} \\
&= N^{1/2} \Bigg| \frac{1}{N} \sum_{i=1}^{N} 
   \frac{\exp(\boldsymbol{X}_{i}^{\top}\tilde{\boldsymbol{\beta}}_i)}{\left(1+\exp(\boldsymbol{X}_{i}^{\top}\tilde{\boldsymbol{\beta}}_i)\right)^2} 
   E_i \, \widehat{\boldsymbol{\Theta}}_\mathcal{J} \boldsymbol{X}_i \Bigg|_{\infty} \\
&\leq N^{1/2} \sqrt{\frac{\sum_i \left(\frac{\exp(\boldsymbol{X}_{i}^{\top}\tilde{\boldsymbol{\beta}}_i)}{\left(1+\exp(\boldsymbol{X}_{i}^{\top}\tilde{\boldsymbol{\beta}}_i)\right)^2} E_i \right)^2}{N}} \max_{j \in \mathcal{J}} \sqrt{\frac{\sum_i \left(\widehat{\boldsymbol{\Theta}}_j^{\top}\boldsymbol{X}_i\right)^2}{N}} \\
&\leq N^{1/2} \left\| \boldsymbol{E} \right\|_N 
       \max_{j \in \mathcal{J}} \sqrt{\frac{\sum_i \left(\widehat{\boldsymbol{\Theta}}_j^{\top}\boldsymbol{X}_i\right)^2}{N}} \\
&= o_P(1).
\end{aligned}
\]
Here, the first equality follows from the mean-value theorem and $\boldsymbol{X}_{i}^{\top}\tilde{\boldsymbol{\beta}}_i$ is between  $\boldsymbol{X}_{i}^{\top}\boldsymbol{\beta}_0$ and $\boldsymbol{X}_{i}^{\top}\boldsymbol{\beta}_0 + E_i$, the first inequality is a consequence of the Cauchy--Schwarz inequality, and the second inequality uses the bound 
$\frac{\exp(\cdot)}{(1+\exp(\cdot))^2} \leq 1$. Finally, the conclusion 
$tt_1 = o_P(1)$ follows from Assumption \ref{aseign_2} (i), which implies 
$\|\boldsymbol{E}\|_N = o_P(1/N^{1/2+1.5/q})$, together with Lemma \ref{xtheta}, 
which establishes that $\max_{i \in [N]} \left| \boldsymbol{X}_i^{\top}\widehat{\boldsymbol{\Theta}}_{j}\right| = O_P(N^{1.5/q})$.

Second, we show that $tt_2 = o_{P}(1)$. To see this, we start by writing 
\begin{equation}
    \begin{aligned}
        tt_2 & =  N^{1 / 2}\left| \widehat{\boldsymbol{\Theta}}_\mathcal{J}\widehat{\rho}_N(\boldsymbol{\beta}_0)- \boldsymbol{\Theta}_\mathcal{J} \widehat{\rho}_N(\boldsymbol{\beta}_0)\right|_{\infty} \\
        &  \leq N^{1 / 2}\max_{j \in \mathcal{J}}\left| \widehat{\boldsymbol{\Theta}}_j - \boldsymbol{\Theta}_j \right|_1 \left| \widehat{\rho}_N(\boldsymbol{\beta}_0)\right|_{\infty}\\
        &  \leq N^{1 / 2}\max_{j \in \mathcal{J}}\left| \widehat{\boldsymbol{\Theta}}_j - \boldsymbol{\Theta}_j \right|_1 \left| \frac{1}{N} \sum_{i=1}^{N} \left(- \widehat{f}_i(t) +  \frac{\exp(\boldsymbol{X}_{i}^{\top}\boldsymbol{\beta}_0+E_i)}{1+\exp(\boldsymbol{X}_{i}^{\top}\boldsymbol{\beta}_0+E_i) }\right)\boldsymbol{X}_i \right|_{\infty}.
    \end{aligned}
\end{equation}
Next, we observe that
$$
\begin{aligned}
\left| \frac{1}{N} \sum_{i=1}^{N} \left(- \widehat{f}_i(t) +  \frac{\exp(\boldsymbol{X}_{i}^{\top}\boldsymbol{\beta}_0+E_i)}{1+\exp(\boldsymbol{X}_{i}^{\top}\boldsymbol{\beta}_0+E_i) }\right)\boldsymbol{X}_i \right|_{\infty} = \left| \frac{1}{N} \sum_{i=1}^{N} \left(- \widehat{f}_i(t) +  \mathbbm{E}(f_i(t)|\boldsymbol{X}_i) \right)\boldsymbol{X}_i \right|_{\infty}
\end{aligned}
$$
which corresponds to the term in \eqref{912242}. We then have 
\begin{equation}\label{a1 plus}
\begin{aligned}
\left| \frac{1}{N} \sum_{i=1}^{N} \left(- \widehat{f}_i(t) + \mathbbm{E}(f_i(t)|\boldsymbol{X}_i) \right)\boldsymbol{X}_i \right|_{\infty} = O_P\left( \frac{p^{\frac{1}{q}}(\log p)^{\frac{1}{2}}}{N^{\frac{1}{2} - \frac{1}{q}}} \right),
\end{aligned}
\end{equation}
where the rate comes from the analysis of \eqref{912242} in Theorem \ref{the3}.
Combining \eqref{a1 plus}, Lemma \ref{nodewise}, and Assumption \ref{asrate}, we conclude that
$$
tt_2 = o_P\left(\frac{1}{N^{1/8+1/(2q)}(\log p)^{1/q}}\right) = o_P(1).
$$

Third, we consider $tt_4$. By definition, we have
\begin{equation}\label{tt4}
\begin{aligned}
tt_4 
&= \left| N^{1/2} \left(\widehat{\boldsymbol{\Theta}} \ddot{R}_N(\widehat{\boldsymbol{\beta}}) - \boldsymbol{I}_p \right)_\mathcal{J} 
      \left( \widehat{\boldsymbol{\beta}} - \boldsymbol{\beta}_0 \right) \right|_{\infty} \\
&\leq \max_{j \in \mathcal{J}} \left| N^{1/2} \left( \ddot{R}_N(\widehat{\boldsymbol{\beta}})\widehat{\boldsymbol{\Theta}}_j - e_j \right) \right|_{\infty} 
      \left| \widehat{\boldsymbol{\beta}} - \boldsymbol{\beta}_0 \right|_1 \\
&\sim \max_{j \in \mathcal{J}} \left| N^{1/2} \left( \widehat{\boldsymbol{\Sigma}}_{\widehat{\boldsymbol{\beta}}}\widehat{\boldsymbol{\Theta}}_j - e_j \right) \right|_{\infty} 
      \, \Omega \left( \widehat{\boldsymbol{\beta}} - \boldsymbol{\beta}_0 \right) \\
&= o_P(1),
\end{aligned}
\end{equation}
where $e_j \in \mathbb{R}^{p}$ has $1$ in the $j$-th element and $0$ for others.
Here, the first inequality follows from the first statement of Lemma \ref{cor1}, by taking the norm as $|\cdot|_1$. The second inequality uses the equivalence 
\(\Omega(\widehat{\boldsymbol{\beta}} - \boldsymbol{\beta}_0) \sim |\widehat{\boldsymbol{\beta}} - \boldsymbol{\beta}_0|_1\), which holds since the size of the group $\mathcal{J}$ is fixed and \(\ddot{R}_N(\widehat{\boldsymbol{\beta}}) = \widehat{\boldsymbol{\Sigma}}_{\widehat{\boldsymbol{\beta}}}\) by definition. The last equality in \eqref{tt4} follows from Lemma \ref{thetasigma}, Theorem \ref{the},  Assumption \ref{asrate}, and $s^* \geq 1$, which provide the sufficient conditions to establish the stated order.

Lastly, we analyze $tt_5$. By definition, we have
\begin{equation}
\begin{aligned}
tt_5 
&= \left| N^{1/2} \widehat{\boldsymbol{\Theta}}_\mathcal{J} 
      \left[ \ddot{R}_N(\tilde{\boldsymbol{\beta}}) - \ddot{R}_N(\widehat{\boldsymbol{\beta}}) \right] 
      \left( \widehat{\boldsymbol{\beta}} - \boldsymbol{\beta}_0 \right) \right|_{\infty} \\
& \leq N^{1/2} \max_{j \in \mathcal{J}} \left|  \frac{1}{N} \sum_{i=1}^N \left(
          \frac{\exp(\theta_i)(1-\exp(\theta_i))}{(1+\exp(\theta_i))^3} 
          \left| \boldsymbol{X}_i^{\top} \widehat{\boldsymbol{\beta}} - \boldsymbol{X}_i^{\top} \boldsymbol{\beta}_0 \right| \widehat{\boldsymbol{\Theta}}_j^{\top}\boldsymbol{X}_i\boldsymbol{X}_i^{\top}\left( \widehat{\boldsymbol{\beta}} - \boldsymbol{\beta}_0 \right)\right)\right|\\
&  \leq N^{1/2} \max_{i \in [N], j \in \mathcal{J}} \left|\widehat{\boldsymbol{\Theta}}_j^{\top}\boldsymbol{X}_i\right| \left| \frac{1}{N} \sum_{i=1}^N 
          \left(\frac{\exp(\theta_i)(1-\exp(\theta_i))}{(1+\exp(\theta_i))^3} 
          \left| \boldsymbol{X}_i^{\top} \widehat{\boldsymbol{\beta}} - \boldsymbol{X}_i^{\top} \boldsymbol{\beta}_0 \right|\boldsymbol{X}_i^{\top}\left( \widehat{\boldsymbol{\beta}} - \boldsymbol{\beta}_0 \right)\right)\right|\\
& \leq N^{1/2} \max_{i \in [N], j \in \mathcal{J}} \left|\widehat{\boldsymbol{\Theta}}_j^{\top}\boldsymbol{X}_i\right| \left| \frac{1}{N} \sum_{i=1}^N 
          \left|\frac{\exp(\theta_i)(1-\exp(\theta_i))}{(1+\exp(\theta_i))^3} \right|
          \left| \boldsymbol{X}_i^{\top} \widehat{\boldsymbol{\beta}} - \boldsymbol{X}_i^{\top} \boldsymbol{\beta}_0 \right|^2\right|\\
& \leq N^{1/2}N^{1.5/q}\left\| \boldsymbol{X}_i^{\top} \widehat{\boldsymbol{\beta}} - \boldsymbol{X}_i^{\top} \boldsymbol{\beta}_0 \right\|_N^2\\
& \leq N^{1/2+1.5/q}\lambda^2s^2_{\boldsymbol{\beta}_0},
% &\leq N^{1/2} \max_{j \in \mathcal{J}} \left| \widehat{\boldsymbol{\Theta}}_j \right|_1 
%       \left| \left[ \ddot{R}_N(\tilde{\boldsymbol{\beta}}) - \ddot{R}_N(\widehat{\boldsymbol{\beta}}) \right] 
%       \left( \widehat{\boldsymbol{\beta}} - \boldsymbol{\beta}_0 \right) \right|_{\infty} \\
% &\leq N^{1/2} \max_{j \in \mathcal{J}} \left| \widehat{\boldsymbol{\Theta}}_j \right|_1 
%       \left| \ddot{R}_N(\tilde{\boldsymbol{\beta}}) - \ddot{R}_N(\widehat{\boldsymbol{\beta}}) \right|_{\infty} 
%       \left| \widehat{\boldsymbol{\beta}} - \boldsymbol{\beta}_0 \right|_1 \\
% &\sim N^{1/2} \max_{j \in \mathcal{J}} \left| \widehat{\boldsymbol{\Theta}}_j \right|_1 
%       \left| \ddot{R}_N(\tilde{\boldsymbol{\beta}}) - \ddot{R}_N(\widehat{\boldsymbol{\beta}}) \right|_{\infty} 
%       \Omega\left( \widehat{\boldsymbol{\beta}} - \boldsymbol{\beta}_0 \right) \\
% &\leq N^{1/2} \max_{j \in \mathcal{J}} \left| \widehat{\boldsymbol{\Theta}}_j \right|_1
%       \max_{m,k} \left| \frac{1}{N} \sum_{i=1}^N 
%           \frac{\exp(\theta_i)(1-\exp(\theta_i))}{(1+\exp(\theta_i))^3} 
%           \left( \boldsymbol{X}_i^{\top} \widehat{\boldsymbol{\beta}} - \boldsymbol{X}_i^{\top} \boldsymbol{\beta}_0 \right) 
%           X_{i,m} X_{i,k} \right| \\
%           & \quad \times
%       \Omega\left( \widehat{\boldsymbol{\beta}} - \boldsymbol{\beta}_0 \right),
\end{aligned}
\end{equation}
where the first inequality is justified by the mean-value theorem and $\left| \boldsymbol{X}_i^{\top} \widehat{\boldsymbol{\beta}} - \boldsymbol{X}_i^{\top} \tilde{\boldsymbol{\beta}}_0 \right| \leq \left| \boldsymbol{X}_i^{\top} \widehat{\boldsymbol{\beta}} - \boldsymbol{X}_i^{\top} \boldsymbol{\beta}_0 \right|,$
the fourth inequality comes from Lemma \ref{xtheta} and the bound 
$$\left|\frac{\exp(\theta_i)(1-\exp(\theta_i))}{(1+\exp(\theta_i))^3}\right| \leq 1,$$
and the last inequality is justified by Lemma \ref{lneedextra}. Under the rate conditions in Assumption \ref{asrate}, this is sufficient to conclude that 
\[
tt_5 = o_P(1).
\]

\end{proof}

\begin{lemma}\label{ustat}
Given a fixed group $\mathcal{J} \subseteq [p]$. Under the conditions of Theorem \ref{the} and Assumptions \ref{asapprerror}, \ref{aseign_2}, \ref{asrate}, and \ref{asinfeign}, there exists a covariance matrix $\boldsymbol{V}_\mathcal{J}$, such that 
$$
\sqrt{N}\boldsymbol{\Theta}_\mathcal{J} \widehat{\rho}_N(\boldsymbol{\beta}_0) \xrightarrow{d}\mathcal{N}(\boldsymbol{0},\boldsymbol{V}_\mathcal{J}).
$$
Furthermore, let $\left\{\boldsymbol{X}_{(2)}, \widetilde{T}_{(2)}\right\}$ denote an independent copy drawn from the same distribution as the sample $\left\{\boldsymbol{X}_i,\widetilde T_i, i \in [N]\right\}$ and 
$$
\xi_{KM} = 
\mathbbm{E}\!\left(
\boldsymbol{X}_{(2)} \frac{\delta_{i}(t)\,\mathbbm{1}\{\widetilde{T}_{i} \leq t\}}{H^2(t \wedge \widetilde{T}_{i})}v\left(\widetilde{T}_{i}, \delta_{i}, t \wedge \widetilde{T}_{(2)}\right) \,\big|\, \boldsymbol{X}_{i}, \widetilde{T}_{i}, \delta_{i}
\right),
$$
where $v(\widetilde{T}_i, \delta_i, \cdot)$ is the influence function of the Kaplan-Meier estimator of $H$ for observation $i$ and for any positive value $z$
\begin{equation*}
v\left(\widetilde{T}_i, \delta_i, z\right)=H(z)\left[\frac{ \mathbbm{1}\left\{\widetilde{T}_i \leq z, \delta_i = 0\right\}}{\bar{P}\left(\widetilde{T}_i\right)}-\int_0^{z \wedge \widetilde{T}_i} \frac{P\left(u \leq \widetilde{T} \leq u+du, \delta = 0\right)}{\bar{P}^2(u)}\right],
\end{equation*}
where $\bar{P}(u)=P\left(\widetilde{T}>u\right)$.
We then have 
$$
\boldsymbol{V}_\mathcal{J} = \boldsymbol{\Theta}_\mathcal{J} Var\left(\left(- f_i(t) 
  +  \frac{\exp(\boldsymbol{X}_{i}^{\top}\boldsymbol{\beta}_0+E_i)}
{1+\exp(\boldsymbol{X}_{i}^{\top}\boldsymbol{\beta}_0+E_i)}\right)\boldsymbol{X}_i + \xi_{KM}\right)\boldsymbol{\Theta}_\mathcal{J}^{\top}.
$$
\end{lemma}
\begin{proof}
We decompose $\boldsymbol{\Theta}_\mathcal{J} \widehat{\rho}_N(\boldsymbol{\beta}_0)$ into two components:
\[
A_1 = \boldsymbol{\Theta}_\mathcal{J} \left(\widehat{\rho}_N(\boldsymbol{\beta}_0) - \rho_N(\boldsymbol{\beta}_0)\right),
 \quad
A_2 = \boldsymbol{\Theta}_\mathcal{J} \rho_N(\boldsymbol{\beta}_0).
\]
Our objective is to show that $\boldsymbol{\Theta}_\mathcal{J} \widehat{\rho}_N(\boldsymbol{\beta}_0)$ can be expressed as a U-statistic plus a negligible remainder term. 

To this end, we first consider $A_1$. 
    $$
    \begin{aligned}
        A_1 & = \frac{1}{N}\boldsymbol{\Theta}_\mathcal{J}\sum_{i=1}^{N}\left(\widehat{f}_i(t)-f_i(t)\right)\boldsymbol{X}_i\\
        & = \frac{1}{N}\boldsymbol{\Theta}_\mathcal{J}\sum_{i=1}^{N}\boldsymbol{X}_i\delta_i(t) \mathbbm{1}{\{\widetilde{T}_i \leq t\}}\left(\frac{1}{\widehat{H}\left(t \wedge \widetilde{T}_i\right)} - \frac{1}{H\left(t \wedge \widetilde{T}_i\right)}\right)\\
        & = \frac{1}{N}\boldsymbol{\Theta}_\mathcal{J}\sum_{i=1}^{N}\boldsymbol{X}_i\delta_i(t) \mathbbm{1}{\{\widetilde{T}_i \leq t\}}\\
        & \quad \times \left(\frac{H\left(t \wedge \widetilde{T}_i\right) -\widehat{H}\left(t \wedge \widetilde{T}_i\right)}{H\left(t \wedge \widetilde{T}_i\right)\widehat{H}\left(t \wedge \widetilde{T}_i\right)} - \frac{H\left(t \wedge \widetilde{T}_i\right) -\widehat{H}\left(t \wedge \widetilde{T}_i\right)}{H^2\left(t \wedge \widetilde{T}_i\right)} +\frac{H\left(t \wedge \widetilde{T}_i\right) -\widehat{H}\left(t \wedge \widetilde{T}_i\right)}{H^2\left(t \wedge \widetilde{T}_i\right)}\right)\\
        & = \frac{1}{N}\boldsymbol{\Theta}_\mathcal{J}\sum_{i=1}^{N}\boldsymbol{X}_i\delta_i(t) \mathbbm{1}{\{\widetilde{T}_i \leq t\}}\frac{H\left(t \wedge \widetilde{T}_i\right) -\widehat{H}\left(t \wedge \widetilde{T}_i\right)}{H\left(t \wedge \widetilde{T}_i\right)}\left(\frac{1}{\widehat{H}\left(t \wedge \widetilde{T}_i\right)} - \frac{1}{H\left(t \wedge \widetilde{T}_i\right)}\right)\\
        & \quad + \frac{1}{N}\boldsymbol{\Theta}_\mathcal{J}\sum_{i=1}^{N}\boldsymbol{X}_i\frac{\delta_i(t) \mathbbm{1}{\{\widetilde{T}_i \leq t\}}}{H^2\left(t \wedge \widetilde{T}_i\right)}\left(H\left(t \wedge \widetilde{T}_i\right) -\widehat{H}\left(t \wedge \widetilde{T}_i\right)\right)\\
        & = \frac{1}{N}\boldsymbol{\Theta}_\mathcal{J}\sum_{i=1}^{N}\boldsymbol{X}_i\delta_i(t) \mathbbm{1}{\{\widetilde{T}_i \leq t\}} O_P(N^{-1}) \\
        & \quad + \frac{1}{N}\boldsymbol{\Theta}_\mathcal{J}\sum_{i=1}^{N}\boldsymbol{X}_i\frac{\delta_i(t) \mathbbm{1}{\{\widetilde{T}_i \leq t\}}}{H^2\left(t \wedge \widetilde{T}_i\right)}\left(H\left(t \wedge \widetilde{T}_i\right) -\widehat{H}\left(t \wedge \widetilde{T}_i\right)\right)\\
        & = \frac{1}{N}\boldsymbol{\Theta}_\mathcal{J}\sum_{i=1}^{N}\boldsymbol{X}_i\frac{\delta_i(t) \mathbbm{1}{\{\widetilde{T}_i \leq t\}}}{H^2\left(t \wedge \widetilde{T}_i\right)}\left(H\left(t \wedge \widetilde{T}_i\right) -\widehat{H}\left(t \wedge \widetilde{T}_i\right)\right) +  O_P(N^{-1}),
    \end{aligned}
    $$
    where the second-to-last equality follows from \eqref{kmrate} and \eqref{kmrate2}, and the last equality relies on the facts that $\delta_i(t)\mathbbm{1}\{\widetilde{T}_i \leq t\}$ is bounded, $1/H^2(t \wedge \widetilde{T}_i)$ is bounded by Assumption \ref{as2}, and that, for any $j \in \mathcal{J}$, we have 
\[
\frac{1}{N}\sum_{i=1}^{N}\boldsymbol{\Theta}_j^{\top}\boldsymbol{X}_i = O_P(1),
\]
which follows from Assumption~\ref{aseign_2} (iv) implying $\left(\mathbbm{E}\left(\left|\boldsymbol{\Theta}_{j}^{\top}\boldsymbol{X}_i\right|\right)\right)^{q} \leq \mathbbm{E}\left(\left|\boldsymbol{\Theta}_{j}^{\top}\boldsymbol{X}_i\right|^q\right) = O\left(1\right)$ and Markov’s inequality. Combining these results, it follows that for any $j \in \mathcal{J}$,
\begin{equation}\label{mar1}
\frac{1}{N}\boldsymbol{\Theta}_j^{\top}\sum_{i=1}^{N}\boldsymbol{X}_i
\frac{\delta_i(t)\mathbbm{1}\{\widetilde{T}_i \leq t\}}{H^2(t \wedge \widetilde{T}_i)} 
= O_P(1).
\end{equation}
% Finally, we define $v_i\left(t \vee \widetilde{T}_i\right)$ as 
% $v_i\left(t \vee \widetilde{T}_i\right) = \xi\left(\widetilde{T}_i, \delta_i(t), t \vee \widetilde{T}_i\right)$,
% following the notation on page~456 of \citet{lo1986product}.
Then, by Theorem~1 of \citet{lo1986product}, we obtain
\begin{equation}\label{mar2}
\widehat{H}\left(t \wedge \widetilde{T}_i\right) - H\left(t \wedge \widetilde{T}_i\right)
= \frac{1}{N}\sum_{k=1}^{N}v\left(\widetilde{T}_k, \delta_k, t \wedge \widetilde{T}_i\right) + o_P\left(N^{-1/2}\right),
\end{equation}
where for any positive value $z$
\begin{equation}\label{influence_function}
\begin{aligned}
v\left(\widetilde{T}_k, \delta_k, z\right)=H(z)\left[\frac{ \mathbbm{1}\left\{\widetilde{T}_k \leq z, \delta_k = 0\right\}}{\bar{P}\left(\widetilde{T}_k\right)}-\int_0^{z \wedge \widetilde{T}_k} \frac{P_1(du)}{\bar{P}^2(u)}\right].
\end{aligned}
\end{equation}
Here, $\bar{P}(u)=P\left(\widetilde{T}>u\right)$  and $P_1(du)=P\left(u \leq \widetilde{T} \leq u+ du, \delta = 0\right)$. 
% $v\left(\widetilde{T}_k, \delta_k, t \wedge \widetilde{T}_i\right)$ is:
% \begin{equation}
% \label{influence_function}
% v\left(\widetilde{T}_k, \delta_k, t \wedge \widetilde{T}_i\right) = - H(t \wedge \widetilde{T}_i) \int_0^{t \wedge \widetilde{T}_i} \frac{dN_k(u) - \widetilde{T}_k(u) h_C(u)du}{P(\widetilde{T}\geq u^-)}.
% \end{equation}
% Here \(N_k(j) = \mathbbm{1}\{\widetilde{T}_k \leq j, \delta_k(j) = 0\}\), \(dN_k(j) = N_k(j) - \lim_{j' \to j, j' < j} N_k(j')\), \(\widetilde{T}_k(j) = \mathbbm{1}\{\widetilde{T}_k \geq j\}\), and the hazard function for the censoring time $C$
% \begin{equation}
% h_C(u) = \lim_{\Delta u \to 0} \frac{P(u \le C < u + \Delta u \mid C \ge u)}{\Delta u}.
% \end{equation}
Let 
\[
\rho_{i,k} = 
-\boldsymbol{X}_i \frac{\delta_i(t)\mathbbm{1}\{\widetilde{T}_i \leq t\}}
{H^2\left(t \wedge \widetilde{T}_i\right)}
v\left(\widetilde{T}_k, \delta_k, t \wedge \widetilde{T}_i\right),
\]
so that we can write
\[
A_1 = 
\frac{1}{N^2}\boldsymbol{\Theta}_\mathcal{J}
\sum_{i=1}^{N}\sum_{k=1}^{N}\rho_{i,k}
+ o_P\left(N^{-1/2}\right) + O_P(N^{-1}),
\]
where we use \eqref{mar1} and \eqref{mar2}.

Next, consider the term $A_2$. We define
\[
\begin{aligned}
A_2 
&= \frac{1}{N}\boldsymbol{\Theta}_\mathcal{J} \sum_{i=1}^{N}
\left(
- f_i(t) 
+ \frac{\exp(\boldsymbol{X}_{i}^{\top}\boldsymbol{\beta}_0 + E_i)}
{1 + \exp(\boldsymbol{X}_{i}^{\top}\boldsymbol{\beta}_0 + E_i)}
\right)\boldsymbol{X}_i \\
&= \frac{1}{N}\boldsymbol{\Theta}_\mathcal{J}\sum_{i=1}^{N}\phi_i,
\end{aligned}
\]
where 
\[
\phi_i = 
\left(
- f_i(t) 
+ \frac{\exp(\boldsymbol{X}_{i}^{\top}\boldsymbol{\beta}_0 + E_i)}
{1 + \exp(\boldsymbol{X}_{i}^{\top}\boldsymbol{\beta}_0 + E_i)}
\right)\boldsymbol{X}_i.
\]
Hence, combining $A_1$ and $A_2$, we have
\[
\boldsymbol{\Theta}_\mathcal{J} \widehat{\rho}_N(\boldsymbol{\beta}_0) = A_1 + A_2 =
\frac{1}{N^2}\boldsymbol{\Theta}_\mathcal{J}
\sum_{i=1}^{N}\sum_{k=1}^{N}\left(\rho_{i,k} + \phi_i\right)
+ o_P\left(N^{-1/2}\right) + O_P(N^{-1}).
\]

We now show that $A_1 + A_2$ can be rewritten as a symmetric sum plus a negligible term. To proceed, we define the following quantities:
  $$
  \tilde{\xi}_{i,k} = \frac{1}{2}\left(\rho_{i,k} + \phi_i + \rho_{k,i} + \phi_k \right).
  $$
Note that
  $$
  \begin{aligned}
  & A_1 +A_2 \\
  & = \frac{1}{N^2}\boldsymbol{\Theta}_\mathcal{J}\sum_{i=1}^{N}\sum_{k=1}^{N}\tilde{\xi}_{i,k} + o_P(N^{-1/2})+  O_P(N^{-1})\\
  &  = \frac{2}{N^2}\boldsymbol{\Theta}_\mathcal{J}\sum_{1\leq i<k \leq N}\tilde{\xi}_{i,k} + \frac{1}{N^2}\boldsymbol{\Theta}_\mathcal{J}\sum_{i=1}^{N}\tilde{\xi}_{i,i} + o_P(N^{-1/2}) + O_P(N^{-1})\\
  & = \frac{2}{N^2-N}\left(1-\frac{1}{N}\right)\boldsymbol{\Theta}_\mathcal{J}\sum_{1\leq i<k \leq N}\tilde{\xi}_{i,k}  + \frac{1}{N^2}\boldsymbol{\Theta}_\mathcal{J}\sum_{i=1}^{N}\tilde{\xi}_{i,i} + o_P(N^{-1/2}) + O_P(N^{-1})\\
  & = \frac{2}{N^2-N}\boldsymbol{\Theta}_\mathcal{J}\sum_{1\leq i<k \leq N}\tilde{\xi}_{i,k} -\frac{2}{N(N^2-N)}\boldsymbol{\Theta}_\mathcal{J}\sum_{1\leq i<k \leq N}\tilde{\xi}_{i,k}\\
  & \quad + \frac{1}{N^2}\boldsymbol{\Theta}_\mathcal{J}\sum_{i=1}^{N}\tilde{\xi}_{i,i} + o_P(N^{-1/2}) +O_P(N^{-1}).
  \end{aligned}
  $$
  We first bound the diagonal term
\[
\frac{1}{N^2}\boldsymbol{\Theta}_\mathcal{J}\sum_{i=1}^{N}\tilde{\xi}_{i,i} 
= \frac{1}{N^2}\sum_{i=1}^{N}\boldsymbol{\Theta}_\mathcal{J}\left(\rho_{i,i} + \phi_i\right).
\]
For each $j \in G$, $\mathbbm{E}\left(\left|\boldsymbol{\Theta}_j^{\top}\phi_i\right|\right)$ is bounded since we have 
$$
\mathbbm{E}\left(\left|\boldsymbol{\Theta}_j^{\top}\phi_i\right|\right) \lesssim \left(\mathbbm{E}\left(\left|\boldsymbol{\Theta}_{j}^{\top}\boldsymbol{X}_i\right|^q\right)\right)^{\frac{1}{q}} = O\left(1\right)
$$
by Jensen’s inequality, Assumptions~\ref{as2} and~\ref{aseign_2}~(iv). Obviously, $Var\left(\boldsymbol{\Theta}_j^{\top}\phi_i\right)$ is also bounded.
Then Chebyshev's inequality implies that 
\begin{equation}\label{thetaphi}
\boldsymbol{\Theta}_j^{\top}\frac{1}{N}\sum_{i=1}^{N}\phi_i = O_P(1).
\end{equation}
Recalling the definition \eqref{influence_function}, it is immediate that $v_i\left(t \wedge \widetilde{T}_i\right)$ is uniformly bounded, since $H\left(t \wedge \widetilde{T}_i\right) \leq 1$ by definitions and $\inf_{s\leq u \leq t}P(\widetilde{T}\geq u \mid \widetilde{T} \geq s) \geq C_r$ by Assumption \ref{as2}.
It follows that 
\begin{equation}\label{thetarhoi}
\begin{aligned}
\frac{1}{N^2}\boldsymbol{\Theta}_j^{\top}\sum_{i=1}^{N}\rho_{i,i} 
&= \frac{1}{N^2}\sum_{i=1}^{N}  
-\boldsymbol{\Theta}_j^{\top}\boldsymbol{X}_i
\frac{\delta_i(t)\mathbbm{1}\{\widetilde{T}_i \leq t\}}
{H^2\left(t \wedge \widetilde{T}_i\right)}v\left(\widetilde{T}_i, \delta_i, t \wedge \widetilde{T}_i\right) \\
&= O_P\left(N^{-1}\right),
\end{aligned}
\end{equation}
where we use \eqref{mar1}.
% where the  
% $$
% \begin{aligned}
% & \mathbbm{E}\left(\boldsymbol{\Theta}_j^{\top}\boldsymbol{X}_i
% \frac{\delta_i(t)\mathbbm{1}\{\widetilde{T}_i \leq t\}}
% {H^2\left(t \wedge \widetilde{T}_i\right)}v\left(\widetilde{T}_i, \delta_i, t \wedge \widetilde{T}_i\right)\right) \\
% & = \mathbbm{E}\left(\mathbbm{E}\left(\boldsymbol{\Theta}_j^{\top}\boldsymbol{X}_i\frac{\delta_i(t)\mathbbm{1}\{\widetilde{T}_i \leq t\}}
% {H^2\left(t \wedge \widetilde{T}_i\right)}
% v\left(\widetilde{T}_i, \delta_i, t \wedge \widetilde{T}_i\right) \bigg | \boldsymbol{X}_i, \widetilde{T}_i, \delta_i(t)\right)\right)\\
% & = \mathbbm{E}\left(\boldsymbol{\Theta}_j^{\top}\boldsymbol{X}_i
% \frac{\delta_i(t)\mathbbm{1}\{\widetilde{T}_i \leq t\}}
% {H^2\left(t \wedge \widetilde{T}_i\right)}\mathbbm{E}\left(v\left(\widetilde{T}_i, \delta_i, t \wedge \widetilde{T}_i\right) \bigg | \boldsymbol{X}_i, \widetilde{T}_i, \delta_i(t)\right)\right) = 0,
% \end{aligned}
% $$
% since for every $k \in [N]$
% \begin{equation}\label{expiffunction}
% \mathbbm{E}\!\left( v\left(\widetilde{T}_k, \delta_k, t \wedge \widetilde{T}_i\right) \,\big|\, \boldsymbol{X}_i, \widetilde{T}_i, \delta_i(t) \right) = 0
% \end{equation}
% by the asymptotic property of the Kaplan–Meier estimator; see discussion under Theorem 2 in \citet{lo1986product}
% the last equality follows from~\eqref{mar1}. 
Hence, combining \eqref{thetaphi} and \eqref{thetarhoi}, we have
\begin{equation}\label{diag}
\frac{1}{N^2}\boldsymbol{\Theta}_\mathcal{J}\sum_{i=1}^{N}\tilde{\xi}_{i,i} = O_P\left(N^{-1}\right).
\end{equation}
Next, we bound the off-diagonal term:
\[
\begin{aligned}
\frac{2}{N(N^2 - N)}\boldsymbol{\Theta}_\mathcal{J}\sum_{1 \leq i < k \leq N}\tilde{\xi}_{i,k} 
&= \frac{1}{N(N^2 - N)}\boldsymbol{\Theta}_\mathcal{J}\sum_{i}\sum_{k}\tilde{\xi}_{i,k} 
- \frac{1}{N(N^2 - N)}\boldsymbol{\Theta}_\mathcal{J}\sum_{i}\tilde{\xi}_{i,i} \\
&= \frac{1}{N(N^2 - N)}\boldsymbol{\Theta}_\mathcal{J}\sum_{i}\sum_{k}\left(\rho_{i,k} + \phi_i\right) 
- O_P\left(N^{-2}\right).
\end{aligned}
\]
Recalling~\eqref{kmrate} and \eqref{mar2}, we have $\frac{1}{N}\sum_{k=1}^{N}v\left(\widetilde{T}_k, \delta_k, t \wedge \widetilde{T}_i\right) = O_P(N^{-1/2})$. Therefore,
\begin{equation}\label{thetarhoik}
\begin{aligned}
\frac{1}{N^2}\boldsymbol{\Theta}_j^{\top}\sum_{i=1}^{N}\sum_{k=1}^{N}\rho_{i,k} 
&= \frac{1}{N}\sum_{i=1}^{N}  
\boldsymbol{\Theta}_j^{\top}\boldsymbol{X}_i
\frac{\delta_i(t)\mathbbm{1}\{\widetilde{T}_i \leq t\}}
{H^2\left(t \wedge \widetilde{T}_i\right)} 
\times \frac{1}{N}\sum_{k=1}^{N}-v\left(\widetilde{T}_k, \delta_k, t \wedge \widetilde{T}_i\right) \\
&= O_P\left(N^{-1/2}\right)
\times \frac{1}{N}\sum_{i=1}^{N}  
\boldsymbol{\Theta}_j^{\top}\boldsymbol{X}_i
\frac{\delta_i(t)\mathbbm{1}\{\widetilde{T}_i \leq t\}}
{H^2\left(t \wedge \widetilde{T}_i\right)} \\
&= O_P\left(N^{-1/2}\right),
\end{aligned}
\end{equation}
where the final equality follows from~\eqref{mar1}.

Combining \eqref{thetaphi} and \eqref{thetarhoik}, we have
\begin{equation}\label{offdiag}
\begin{aligned}
\frac{2}{N(N^2 - N)}\boldsymbol{\Theta}_\mathcal{J}\sum_{1 \leq i < k \leq N}\tilde{\xi}_{i,k} & = \frac{1}{N(N^2 - N)}\boldsymbol{\Theta}_j^{\top}
\sum_{i}\sum_{k}\left(\rho_{i,k} + \phi_i\right) - O_P(N^{-2})\\
&= O_P\left(N^{-3/2}\right).
\end{aligned}
\end{equation}
We can now conclude that by \eqref{diag} and \eqref{offdiag}:
\begin{equation}\label{udecom}
\boldsymbol{\Theta}_\mathcal{J} \widehat{\rho}_N(\boldsymbol{\beta}_0)
= A_1 + A_2 = \frac{2}{N^2 - N}\boldsymbol{\Theta}_\mathcal{J}
\sum_{1 \leq i < k \leq N}\tilde{\xi}_{i,k} + o_P\left(N^{-1/2}\right),
\end{equation}
which can be expressed as a U-statistic plus a term of order $o_P(N^{-1/2})$.  

Applying the classical Central Limit Theorem for U-statistics 
(e.g., Theorem~1.1 of \citet{bose2018u}), we obtain
\begin{equation}\label{udis}
\sqrt{N}\left(
\frac{2}{N^2 - N}\boldsymbol{\Theta}_\mathcal{J}
\sum_{1 \leq i < k \leq N}\tilde{\xi}_{i,k}
- \boldsymbol{\theta}
\right)
\xrightarrow{d} \mathcal{N}\left(\boldsymbol{0}, \boldsymbol{V}_\mathcal{J}\right),
\end{equation}
where $\boldsymbol{\theta} = \boldsymbol{\Theta}_\mathcal{J}\mathbbm{E}\left(\tilde{\xi}_{i,k}\right)$.

We next examine the variance term $\boldsymbol{V}_\mathcal{J}$ appearing above. To define the Hoeffding projection of the U-statistic kernel, we follow the
standard convention in the U-statistics literature.  
Let $\left\{\boldsymbol{X}_{(1)},\widetilde T_{(1)},\delta_{(1)}, E_{(1)}\right\}$ and 
$\left\{\boldsymbol{X}_{(2)},\widetilde T_{(2)},\delta_{(2)}, E_{(2)}\right\}$
denote two independent copies drawn from the same distribution as the sample $\left\{\boldsymbol{X}_i,\widetilde T_i,\delta_i,v(\cdot), E_i, i \in [N]\right\}$, and let $\tilde{w}_{(1)} = \frac{\delta_{(1)}(t) \mathbbm{1}{\{\widetilde{T}_{(1)} \leq t\}}}{H^2\left(t \wedge \widetilde{T}_{(1)}\right)}$ and $\tilde{w}_{(2)} = \frac{\delta_{(2)}(t) \mathbbm{1}{\{\widetilde{T}_{(2)} \leq t\}}}{H^2\left(t \wedge \widetilde{T}_{(2)}\right)}$. With the notation, the first-order projection of the kernel
$\tilde{\xi}_{i,k}$ is given by
$$
\begin{aligned}
\mathbbm{E}\left( \tilde{\xi}_{(1),(2)} \mid \boldsymbol{X}_{(1)}, \widetilde{T}_{(1)}, \delta_{(1)} \right) & = \frac{1}{2}\boldsymbol{X}_{(1)}\tilde{w}_{(1)}\mathbbm{E}\left( -v\left(\widetilde{T}_{(2)}, \delta_{(2)}, t \wedge \widetilde{T}_{(1)}\right) \bigg| \boldsymbol{X}_{(1)}, \widetilde{T}_{(1)}, \delta_{(1)} \right) \\
& \quad + \frac{1}{2}\left(- f_{(1)}(t) 
  +  \frac{\exp(\boldsymbol{X}_{(1)}^{\top}\boldsymbol{\beta}_0+E_{(1)})}
{1+\exp(\boldsymbol{X}_{(1)}^{\top}\boldsymbol{\beta}_0+E_{(1)})}\right)\boldsymbol{X}_{(1)}\\
& \quad + \frac{1}{2}\mathbbm{E}\left(-\boldsymbol{X}_{(2)}\tilde{w}_{(2)}v\left(\widetilde{T}_{(1)}, \delta_{(1)}, t \wedge \widetilde{T}_{(2)}\right) \bigg| \boldsymbol{X}_{(1)}, \widetilde{T}_{(1)}, \delta_{(1)}\right) \\
& \quad + \frac{1}{2}\mathbbm{E}\left(\left(- f_{(2)}(t) 
  +  \frac{\exp(\boldsymbol{X}_{(2)}^{\top}\boldsymbol{\beta}_0+E_{(2)})}
{1+\exp(\boldsymbol{X}_{(2)}^{\top}\boldsymbol{\beta}_0+E_{(2)})}\right)\boldsymbol{X}_{(2)}\right)\\
& = \frac{1}{2}\left(- f_{(1)}(t) 
  +  \frac{\exp(\boldsymbol{X}_{(1)}^{\top}\boldsymbol{\beta}_0+E_{(1)})}
{1+\exp(\boldsymbol{X}_{(1)}^{\top}\boldsymbol{\beta}_0+E_{(1)})}\right)\boldsymbol{X}_{(1)} \\
& \quad + \frac{1}{2}\mathbbm{E}\left(-\boldsymbol{X}_{(2)}\tilde{w}_{(2)}v\left(\widetilde{T}_{(1)}, \delta_{(1)}, t \wedge \widetilde{T}_{(2)}\right)\bigg| \boldsymbol{X}_{(1)}, \widetilde{T}_{(1)}, \delta_{(1)}\right),
\end{aligned}
$$
where we use that 
\begin{equation}\label{expiffunction}
\mathbbm{E}\!\left( v\left(\widetilde{T}_{(2)}, \delta_{(2)}, t \wedge \widetilde{T}_{(1)}\right) \,\big|\, \boldsymbol{X}_{(1)}, \widetilde{T}_{(1)}, \delta_{(1)} \right) = 0
\end{equation}
which comes from the discussion under Theorem 2 in \citet{lo1986product} or the asymptotic property of the Kaplan–Meier estimator,
and 
\[
\mathbbm{E}\!\left[\!\left(- f_{(2)}(t) 
  +  \frac{\exp(\boldsymbol{X}_{(2)}^{\top}\boldsymbol{\beta}_0+E_{(2)})}
{1+\exp(\boldsymbol{X}_{(2)}^{\top}\boldsymbol{\beta}_0+E_{(2)})}\right)\boldsymbol{X}_{(2)}\!\right] 
  = \boldsymbol{0},
\]
as established in~\eqref{expf}. 
Furthermore, we have
\[
\begin{aligned}
\mathbbm{E}\left( \tilde{\xi}_{(1),(2)}\right) 
&= \mathbbm{E}\!\left[\frac{1}{2}\left(- f_{(1)}(t) 
  +  \frac{\exp(\boldsymbol{X}_{(1)}^{\top}\boldsymbol{\beta}_0+E_{(1)})}
  {1+\exp(\boldsymbol{X}_{(1)}^{\top}\boldsymbol{\beta}_0+E_{(1)})}\right)\boldsymbol{X}_{(1)}\right] \\
&\quad + \frac{1}{2}\mathbbm{E}\!\left[\mathbbm{E}\!\left(-\boldsymbol{X}_{(2)}\tilde{w}_{(2)} v\left(\widetilde{T}_{(1)}, \delta_{(1)}, t \wedge \widetilde{T}_{(2)}\right)\,\big|\, \boldsymbol{X}_{(1)}, \widetilde{T}_{(1)}, \delta_{(1)}\right)\right] \\
&= \boldsymbol{0}
\end{aligned}
\]
since
\[
\begin{aligned}
& \mathbbm{E}\!\left[\mathbbm{E}\!\left(\boldsymbol{X}_{(2)}\tilde{w}_{(2)} v\left(\widetilde{T}_{(1)}, \delta_{(1)}, t \wedge \widetilde{T}_{(2)}\right)\,\big|\, \boldsymbol{X}_{(1)}, \widetilde{T}_{(1)}, \delta_{(1)}\right)\right]\\
&= \mathbbm{E}\!\left(\boldsymbol{X}_{(2)}\tilde{w}_{(2)} v\left(\widetilde{T}_{(1)}, \delta_{(1)}, t \wedge \widetilde{T}_{(2)}\right)\right) \\
&= \mathbbm{E}\!\left[\mathbbm{E}\!\left(\boldsymbol{X}_{(2)}\tilde{w}_{(2)} v\left(\widetilde{T}_{(1)}, \delta_{(1)}, t \wedge \widetilde{T}_{(2)}\right)\,\big|\, \boldsymbol{X}_{(2)}, \widetilde{T}_{(2)}, \delta_{(2)}\right)\right] \\
&= \mathbbm{E}\!\left(\boldsymbol{X}_{(2)}\tilde{w}_{(2)}\,\mathbbm{E}\!\left[v\left(\widetilde{T}_{(1)}, \delta_{(1)}, t \wedge \widetilde{T}_{(2)}\right)\,\big|\, \boldsymbol{X}_{(2)}, \widetilde{T}_{(2)}, \delta_{(2)}\right]\right) \\
&= \boldsymbol{0},
\end{aligned}
\]
where the last inequality comes from \eqref{expiffunction} by flipping the indices.
It follows that 
\begin{equation}\label{expu}
\boldsymbol{\theta} = \boldsymbol{0}.
\end{equation}
Hence, the centered version of the first projection of the U-statistic is given by
\[
\begin{aligned}
& \mathbbm{E}\left( \tilde{\xi}_{(1),(2)} \mid \boldsymbol{X}_{(1)}, \widetilde{T}_{(1)}, \delta_{(1)} \right) - \mathbbm{E}\left( \tilde{\xi}_{(1),(2)}\right) \\
& = \frac{1}{2}\left(- f_{(1)}(t) 
  +  \frac{\exp(\boldsymbol{X}_{(1)}^{\top}\boldsymbol{\beta}_0+E_{(1)})}
{1+\exp(\boldsymbol{X}_{(1)}^{\top}\boldsymbol{\beta}_0+E_{(1)})}\right)\boldsymbol{X}_{(1)} + \frac{1}{2}\mathbbm{E}\left(-\boldsymbol{X}_{(2)}\tilde{w}_{(2)}v\left(\widetilde{T}_{(1)}, \delta_{(1)}, t \wedge \widetilde{T}_{(2)}\right)\bigg| \boldsymbol{X}_{(1)}, \widetilde{T}_{(1)}, \delta_{(1)}\right).
\end{aligned}
\]
Then we apply Theorem~1.1 of \citet{bose2018u}, the variance of the U-statistic is given by
\[
\begin{aligned}
\boldsymbol{V}_\mathcal{J}
&= 4 \, \mathrm{Var} \Bigg(
\frac{1}{2}\boldsymbol{\Theta}_\mathcal{J} 
\left(- f_{(1)}(t) 
  +  \frac{\exp(\boldsymbol{X}_{(1)}^{\top}\boldsymbol{\beta}_0 + E_{(1)})}
         {1 + \exp(\boldsymbol{X}_{(1)}^{\top}\boldsymbol{\beta}_0 + E_{(1)})}\right)\boldsymbol{X}_{(1)} \\
&\qquad - \frac{1}{2} \boldsymbol{\Theta}_\mathcal{J} \, 
\mathbbm{E}\!\left(\boldsymbol{X}_{(2)} \tilde{w}_{(2)} v\left(\widetilde{T}_{(1)}, \delta_{(1)}, t \wedge \widetilde{T}_{(2)}\right) \,\big|\, \boldsymbol{X}_{(1)}, \widetilde{T}_{(1)}, \delta_{(1)}\right)
\Bigg) \\
&= \boldsymbol{\Theta}_\mathcal{J} \, \mathrm{Var} \left(
\left(- f_{i}(t) 
  +  \frac{\exp(\boldsymbol{X}_{i}^{\top}\boldsymbol{\beta}_0 + E_{i})}
         {1 + \exp(\boldsymbol{X}_{i}^{\top}\boldsymbol{\beta}_0 + E_{i})}\right)\boldsymbol{X}_{i} - \xi_{KM} 
\right) \boldsymbol{\Theta}_\mathcal{J}^{\top},
\end{aligned}
\]
where the last equality comes from the fact that $\left\{\boldsymbol{X}_{(1)},\widetilde T_{(1)},\delta_{(1)}, E_{(1)}\right\}$ has the same distribution as $\left\{\boldsymbol{X}_{i},\widetilde T_{i},\delta_{i}, E_{i}, i \in [N]\right\}$ and
\[
\xi_{KM} = 
\mathbbm{E}\!\left(
\boldsymbol{X}_{(2)} \frac{\delta_{(2)}(t)\,\mathbbm{1}\{\widetilde{T}_{(2)} \leq t\}}{H^2(t \wedge \widetilde{T}_{(2)})}v\left(\widetilde{T}_{i}, \delta_{i}, t \wedge \widetilde{T}_{(2)}\right) \,\big|\, \boldsymbol{X}_{i}, \widetilde{T}_{i}, \delta_{i}
\right).
\]
Overall, by Slutsky’s lemma, \eqref{udecom}, \eqref{udis}, and \eqref{expu}, we conclude that
\[
\sqrt{N}\, \boldsymbol{\Theta}_\mathcal{J} \widehat{\rho}_N(\boldsymbol{\beta}_0) 
\xrightarrow{d} \mathcal{N}(\boldsymbol{0}, \boldsymbol{V}_\mathcal{J}).
\]

Furthermore, we give a plug-in variance estimator here
\[
\widehat{\boldsymbol{V}}_\mathcal{J} = \widehat{\boldsymbol{\Theta}}_\mathcal{J}\left(\frac{1}{N}\sum_{i=1}^{N}\widehat{\boldsymbol{\sigma}}_i\widehat{\boldsymbol{\sigma}}_i^{\top}\right)\widehat{\boldsymbol{\Theta}}_\mathcal{J}^{\top},
\]
where
\[
\widehat{\boldsymbol{\sigma}}_i = \left(-\frac{\delta_i(t) \mathbbm{1}\{\widetilde{T}_i \leq t\}}{\widehat{H}(t \wedge \widetilde{T}_i)}+\frac{\exp(\boldsymbol{X}_{i}^{\top}\widehat{\boldsymbol{\beta}})}{1+\exp(\boldsymbol{X}_{i}^{\top}\widehat{\boldsymbol{\beta}}) }\right)\boldsymbol{X}_{i} - \frac{1}{N}\sum_{k=1}^{N}\boldsymbol{X}_{k}\frac{\delta_k(t) \mathbbm{1}{\{\widetilde{T}_k \leq t\}}}{\widehat{H}^2\left(t \wedge \widetilde{T}_k\right)}\hat{v}\left(\widetilde{T}_i, \delta_i,t \wedge \widetilde{T}_k\right),
\]
and $\hat{v}(\widetilde{T}_i, \delta_i,\cdot)$ is the estimator of $v(\widetilde{T}_i, \delta_i,\cdot)$, which can be obtained by plugging in the data. Specifically, we have for any positive value $z$
\begin{equation}\label{hat_v_k_discrete_final}
\widehat v\left(\widetilde{T}_i, \delta_i, z\right)
=\widehat{H}(z)\left[\frac{ \mathbbm{1}\left\{\widetilde{T}_i \leq z, \delta_i = 0\right\}}{\bar{P}_N\left(\widetilde{T}_i\right)}- \frac{1}{N}\sum_{j=1}^N\frac{\mathbbm{1}\{\widetilde{T}_j \leq z \wedge \widetilde{T}_i, \delta_j = 0\}}{\bar{P}_N^2(\widetilde{T}_j)}\right],
\end{equation}
where $\bar{P}_N(u) = \frac{1}{N}\sum_{l=1}^N\mathbbm{1}\{\widetilde{T}_l > u\}$.
% and $\bar{P}_{1N}(u) = \frac{1}{N}\sum_{i=1}^N\mathbbm{1}\{\widetilde{T}_i \leq u, \delta_i = 0\}$.
\end{proof}

\subsection{Technical lemmas}
\begin{lemma}\label{xtheta}
   Let $\mathcal{J} \subseteq [p]$ be a fixed group. Under the conditions of Theorem \ref{the} and Assumptions \ref{asapprerror}, \ref{aseign_2}, and \ref{asrate}, for any $j \in \mathcal{J}$, we have $$
\left|\boldsymbol{X}\widehat{\boldsymbol{\Theta}}_{j}\right|_{\infty} = O_P\left(N^{1.5/q}\right).$$
\end{lemma}
\begin{proof}
We start by writing
 $$
 \begin{aligned}
 \left| \boldsymbol{X}\boldsymbol{\Theta}_{ j}\right|_{\infty} &= \max_{i \in [N]}\left| \boldsymbol{X}_i^{\top}\boldsymbol{\Theta}_{ j}\right| \\
 &  = O_P\left(N^{1.5/q}\right),
 \end{aligned}
$$  
where we use the Markov inequality to bound $\max_{i \in [N]}\left| \boldsymbol{X}_i^{\top}\boldsymbol{\Theta}_{ j}\right|$
$$
\begin{aligned}
P\left(\max_{i \in [N]}\left| \boldsymbol{X}_i^{\top}\boldsymbol{\Theta}_{j}\right| \geq t\right) & \leq \frac{\mathbbm{E}\left(\max_{i \in [N]}\left|  \boldsymbol{X}_i^{\top}\boldsymbol{\Theta}_{ j}\right|^q\right)}{t^q}\\
& \leq \frac{NK_{\Theta}}{t^q},
\end{aligned}
$$
where $K_{\Theta}$ is constant from Assumption \ref{aseign_2} (iv).
Let $t = N^{1.5/q}$, therefore, we have $\max_{i \in [N]}\left| \boldsymbol{X}_i^{\top}\boldsymbol{\Theta}_{ j}\right| = O_P(N^{1.5/q})$. Hence,
 $$
 \begin{aligned}
\left|\boldsymbol{X}\widehat{\boldsymbol{\Theta}}_{j}\right|_{\infty} & \leq \left|\boldsymbol{X}\widehat{\boldsymbol{\Theta}}_{j} - \boldsymbol{X}\boldsymbol{\Theta}_{ j}\right|_{\infty} + \left| \boldsymbol{X}\boldsymbol{\Theta}_{ j}\right|_{\infty}\\
&\leq \left|\boldsymbol{X}\right|_{\infty}\left| \widehat{\boldsymbol{\Theta}}_{j} - \boldsymbol{\Theta}_{ j}\right|_1 + \left| \boldsymbol{X}\boldsymbol{\Theta}_{ j}\right|_{\infty} \\
& = O_P\left(N^{1.5/q}\right),
 \end{aligned}
 $$
where by Lemma \ref{nodewise} and \eqref{pa3}, one can see the order of $ \left|\boldsymbol{X}\right|_{\infty}\left| \widehat{\boldsymbol{\Theta}}_{j} - \boldsymbol{\Theta}_{ j}\right|_1 $ is $o_P\left(1\right)$.
\end{proof}

\begin{lemma}\label{thetasigma}
 Let $\mathcal{J} \subseteq [p]$ be a fixed group. Suppose Assumptions \ref{asapprerror}, \ref{aseign_2}, \ref{asrate}, and all conditions of Theorem \ref{the} are satisfied. If
$p^{\frac{1}{q}}\sqrt{\log p}/N^{\frac{1}{2} - \frac{1}{q}}=O( \lambda )$ and $(p-1)^{\frac{2}{q}}\sqrt{\log p}/N^{\frac{1}{2} - \frac{2}{q}}=O( \lambda_{j} ), j \in \mathcal{J}$, we have for every $j \in \mathcal{J}$
\[
\left| \widehat{\boldsymbol{\Sigma}}_{\widehat{\boldsymbol{\beta}}}\widehat{\boldsymbol{\Theta}}_j - e_j \right|_{\infty}
\leq  \frac{\lambda_{j}}{\hat{\tau}_{j}^2} 
= O_P\left(\frac{p^{2/q} (\log p)^{1/2}}{N^{1/2 - 2/q}}\right),
\]
where $e_j \in \mathbb{R}^{p}$ has $1$ in the $j$-th element and $0$ for others.
\end{lemma}
\begin{proof}
By \eqref{KKTused}, we have 
$$  \left|\widehat{\boldsymbol{\Sigma}}_{\widehat{\boldsymbol{\beta}}}\widehat{\boldsymbol{\Theta}}_j - e_j\right|_{\infty} \leq \lambda_{j}/\hat{\tau}_j^2. 
$$
Since $1/\hat{\tau}_j^2 = O_P(1)$ from \eqref{hattaubound} and $\lambda_j = O\left(\frac{p^{2/q} (\log p)^{1/2}}{N^{1/2 - 2/q}}\right)$, the proof is finished.
\end{proof}
 
\section{Nodewise regression}\label{appnodewise}

\paragraph{Notation.}
For sets $ \mathcal{S}_1$ and $\mathcal{S}_2$, define $\mathcal{S}_1 - \mathcal{S}_2 = \{s: s \in \mathcal{S}_1, s \not\in \mathcal{S}_2\}$.

We start by introducing definitions and some inequalities that are useful for the proof.
Define the empirical version of $\boldsymbol{\eta}_{\boldsymbol{\beta}_0,j}$ from \eqref{populationnodewise} as 
\begin{equation}\label{empnodewise}
\boldsymbol{\eta}_{\widehat{\boldsymbol{\beta}},j} = \boldsymbol{X}_{\widehat{\boldsymbol{\beta}}, j}-\boldsymbol{X}_{\widehat{\boldsymbol{\beta}}, -j} \widehat{\boldsymbol{\gamma}}_{\widehat{\boldsymbol{\beta}},j},
\end{equation} 
and the empirical version of $\boldsymbol{\Sigma}_{\boldsymbol{\beta}_0}$ from \eqref{populationgram} as
\begin{equation}\label{empgram}
\widehat{\boldsymbol{\Sigma}}_{\widehat{\boldsymbol{\beta}}}  = \frac{1}{N} \sum_{i=1}^N \frac{\exp(\boldsymbol{X}_i^{\top}\widehat{\boldsymbol{\beta}})}{ \big(1+\exp(\boldsymbol{X}_i^{\top}\widehat{\boldsymbol{\beta}})\big)^2 } \boldsymbol{X}_{i} \boldsymbol{X}_{i}^{\top}.
\end{equation}
% We use $\boldsymbol{\Sigma}_{\boldsymbol{\beta}_0, j} \in \mathbb{R}^{p}$ to represent the $j$-th column of $\boldsymbol{\Sigma}_{\boldsymbol{\beta}_0}$, and $\boldsymbol{\Sigma}_{\boldsymbol{\beta}_0,-j,-j} \in \mathbb{R}^{p \times (p-1)}$ to represent the all columns of $\boldsymbol{\Sigma}_{\boldsymbol{\beta}_0}$ except for the $j$-th one.
% Similarly, $\widehat{\boldsymbol{\Sigma}}_{\boldsymbol{\beta}_0, j} \in \mathbb{R}^{p \times 1}$ to represent the $j$-th column of $\widehat{\boldsymbol{\Sigma}}_{\boldsymbol{\beta}_0}$, and $\widehat{\boldsymbol{\Sigma}}_{\boldsymbol{\beta}_0,-j} \in \mathbb{R}^{p \times (p-1)}$ to represent the all columns of $\widehat{\boldsymbol{\Sigma}}_{\boldsymbol{\beta}_0}$ except for the $j$-th one.
By the definition of $\widehat{\boldsymbol{C}}_j$ in \eqref{hatBC}, we have $\boldsymbol{X}_{\widehat{\boldsymbol{\beta}}, j}-\boldsymbol{X}_{\widehat{\boldsymbol{\beta}}, -j} \widehat{\boldsymbol{\gamma}}_{\widehat{\boldsymbol{\beta}}, j} = \boldsymbol{X}_{\widehat{\boldsymbol{\beta}}}\widehat{\boldsymbol{C}}_j$. Hence,
$$
\hat{\tau}_j^2 :=\frac{\boldsymbol{X}_{\widehat{\boldsymbol{\beta}}, j}^{\top}\left(\boldsymbol{X}_{\widehat{\boldsymbol{\beta}}, j}-\boldsymbol{X}_{\widehat{\boldsymbol{\beta}}, -j} \widehat{\boldsymbol{\gamma}}_{\widehat{\boldsymbol{\beta}}, j}\right)}{N} = \frac{\boldsymbol{X}_{\widehat{\boldsymbol{\beta}}, j}^{\top}\boldsymbol{X}_{\widehat{\boldsymbol{\beta}}}\widehat{\boldsymbol{C}}_j}{N},
$$
which shows 
\begin{equation}\label{kktbefore}
    \frac{\boldsymbol{X}_{\widehat{\boldsymbol{\beta}}, j}^{\top}\boldsymbol{X}_{\widehat{\boldsymbol{\beta}}}\widehat{\boldsymbol{\Theta}}_j}{N} = 1,
\end{equation}
since $\widehat{\boldsymbol{\Theta}}_j:=\widehat{\boldsymbol{C}}_j/\hat{\tau}_j^2$.

By the Karush-Kuhn-Tucker conditions of \eqref{nodewiseregression}, we have
$$
\frac{\left|\boldsymbol{X}_{\widehat{\boldsymbol{\beta}}, -j}^{\top}\left(\boldsymbol{X}_{\widehat{\boldsymbol{\beta}}, j}-\boldsymbol{X}_{\widehat{\boldsymbol{\beta}}, -j} \widehat{\boldsymbol{\gamma}}_{\widehat{\boldsymbol{\beta}}, j}\right)\right|_{\infty}}{N} \leq \lambda_{j},
$$
and dividing each side by $ \hat{\tau}_j^2$, using $\boldsymbol{X}_{\widehat{\boldsymbol{\beta}}, j}-\boldsymbol{X}_{\widehat{\boldsymbol{\beta}}, -j} \widehat{\boldsymbol{\gamma}}_{\widehat{\boldsymbol{\beta}}, j} = \boldsymbol{X}_{\widehat{\boldsymbol{\beta}}}\widehat{\boldsymbol{C}}_j$ and $\widehat{\boldsymbol{\Theta}}_j=\widehat{\boldsymbol{C}}_j / \hat{\tau}_j^2$, we have
\begin{equation}\label{kktafter}
\frac{\left|\boldsymbol{X}_{\widehat{\boldsymbol{\beta}}, -j}^{\top}\boldsymbol{X}_{\widehat{\boldsymbol{\beta}}}\widehat{\boldsymbol{\Theta}}_j\right|_{\infty}}{N} \leq \lambda_{j}/\hat{\tau}_j^2.
\end{equation}
Combining \eqref{kktbefore} and \eqref{kktafter}, one can see that
\begin{equation}\label{KKTused}
\begin{aligned}
 \lambda_{j}/\hat{\tau}_j^2 & \geq   \frac{\left|\boldsymbol{X}_{\widehat{\boldsymbol{\beta}}, -j}^{\top}\boldsymbol{X}_{\widehat{\boldsymbol{\beta}}}\widehat{\boldsymbol{\Theta}}_j\right|_{\infty}}{N}\\
 & = \frac{\left|\boldsymbol{X}_{\widehat{\boldsymbol{\beta}}}^{\top}\boldsymbol{X}_{\widehat{\boldsymbol{\beta}}}\widehat{\boldsymbol{\Theta}}_j - e_j\boldsymbol{X}_{\widehat{\boldsymbol{\beta}},j}^{\top}\boldsymbol{X}_{\widehat{\boldsymbol{\beta}}}\widehat{\boldsymbol{\Theta}}_j\right|_{\infty}}{N}\\
 &  = \left|\widehat{\boldsymbol{\Sigma}}_{\widehat{\boldsymbol{\beta}}}\widehat{\boldsymbol{\Theta}}_j - e_j\right|_{\infty},
\end{aligned}
\end{equation}
where $e_j \in \mathbb{R}^{p}$ has $1$ in the $j$-th element and $0$ for others and by definition $\widehat{\boldsymbol{\Sigma}}_{\widehat{\boldsymbol{\beta}}} = \boldsymbol{X}_{\widehat{\boldsymbol{\beta}}}^{\top}\boldsymbol{X}_{\widehat{\boldsymbol{\beta}}}/N$.

\subsection{Proof of Lemma \ref{nodewise}}
\begin{proof}
We first introduce some specific notations. Let $\boldsymbol{\Theta}_{j,-j}$ denote be the $j$-th row of $\boldsymbol{\Theta}$ after removing its $j$-th column. For the positive definite matrix $\boldsymbol{\Sigma}_{\boldsymbol{\beta}_0}$, let $\boldsymbol{\Sigma}_{j ,\boldsymbol{\beta}_0, j}$ be the $(j,j)$ element of $\boldsymbol{\Sigma}_{\boldsymbol{\beta}_0}$, $\boldsymbol{\Sigma}_{j ,\boldsymbol{\beta}_0, -j}$ be the $j$-th row of $\boldsymbol{\Sigma}_{\boldsymbol{\beta}_0}$ with $j$-th column removed, $\boldsymbol{\Sigma}_{-j ,\boldsymbol{\beta}_0, j}$ be the $j$-th column of $\boldsymbol{\Sigma}_{\boldsymbol{\beta}_0}$ with $j$-th row removed, and $\boldsymbol{\Sigma}_{-j ,\boldsymbol{\beta}_0, -j}$ be the submatrix of $\boldsymbol{\Sigma}_{\boldsymbol{\beta}_0}$ with $j$-th row and $j$-th column removed. 

By the partitioned inverse formula of the matrix $\boldsymbol{\Sigma}_{\boldsymbol{\beta}_0}$, we know that
$$
\begin{aligned}
& \boldsymbol{\Theta}_{j, j}=\left(\boldsymbol{\Sigma}_{j, \boldsymbol{\beta}_0, j}-\boldsymbol{\Sigma}_{j, \boldsymbol{\beta}_0, -j} \boldsymbol{\Sigma}_{-j, \boldsymbol{\beta}_0, -j}^{-1} \boldsymbol{\Sigma}_{-j, \boldsymbol{\beta}_0, j}\right)^{-1}, \\
& \boldsymbol{\Theta}_{j,-j}=-\boldsymbol{\Theta}_{j, j} \boldsymbol{\Sigma}_{j, \boldsymbol{\beta}_0, -j} \boldsymbol{\Sigma}_{-j, \boldsymbol{\beta}_0,-j}^{-1}.
\end{aligned}
$$
It is obvious that 
$$
\boldsymbol{\gamma}_{\boldsymbol{\beta}_0, j}^{\top}=\boldsymbol{\Sigma}_{j,\boldsymbol{\beta}_0,-j} \boldsymbol{\Sigma}_{-j\boldsymbol{\beta}_0,-j}^{-1},
$$
where it comes from the solution of the problem \eqref{populationnodewise1}. Hence, we have
\begin{equation}\label{Theta-j}
\boldsymbol{\Theta}_{j,-j}=-\boldsymbol{\Theta}_{j, j} \boldsymbol{\gamma}_{\boldsymbol{\beta_0}, j}^{\top}.
\end{equation}

Define 
\begin{equation}\label{taupo}
\tau_{ j}^2 := \frac{1}{N} \mathbbm{E} \left[ \boldsymbol{X}_{\boldsymbol{\beta}_0, j}^{\top} \left( \boldsymbol{X}_{\boldsymbol{\beta}_0, j} - \boldsymbol{X}_{\boldsymbol{\beta}_0, -j} \boldsymbol{\gamma}_{\boldsymbol{\beta}_0, j} \right) \right].
\end{equation}
By definition of $ \boldsymbol{\gamma}_{\boldsymbol{\beta}_0, j}$, it holds that
\begin{equation}\label{tauupbound}
\begin{aligned}
\tau_{ j}^2 & = \mathbbm{E}\left(\boldsymbol{X}_{\boldsymbol{\beta}_0, j}^{\top}\left( \boldsymbol{X}_{\boldsymbol{\beta}_0, j}-\boldsymbol{X}_{\boldsymbol{\beta}_0,-j} \boldsymbol{\gamma}_{\boldsymbol{\beta}_0, j}\right) / N\right)\\
& = \mathbbm{E}\left( \left(\boldsymbol{X}_{\boldsymbol{\beta}_0, j} - \boldsymbol{X}_{\boldsymbol{\beta}_0,-j} \boldsymbol{\gamma}_{\boldsymbol{\beta}_0, j}\right)^{\top}\left( \boldsymbol{X}_{\boldsymbol{\beta}_0, j}-\boldsymbol{X}_{\boldsymbol{\beta}_0,-j} \boldsymbol{\gamma}_{\boldsymbol{\beta}_0, j}\right) / N\right) \\
&= \mathbbm{E}\left(\boldsymbol{\eta}_{\boldsymbol{\beta}_0,j}^{\top}\boldsymbol{\eta}_{\boldsymbol{\beta}_0,j}/N\right)\\
& \lesssim C_{\eta},
\end{aligned}
\end{equation}
where the last inequality comes from Assumption \ref{aseign_2} (iii).
By the partitioned inverse formula, we have
\begin{equation}\label{taulowbound}
\begin{aligned}
\tau_{ j}^2 
&= \boldsymbol{\Sigma}_{j ,\boldsymbol{\beta}_0, j} - \boldsymbol{\Sigma}_{j ,\boldsymbol{\beta}_0, -j}\boldsymbol{\Sigma}^{-1}_{-j ,\boldsymbol{\beta}_0, -j}\boldsymbol{\Sigma}_{-j ,\boldsymbol{\beta}_0, j} \\
& = \boldsymbol{\Theta}_{j, j}^{-1}= \frac{1}{(\boldsymbol{\Sigma}_{\boldsymbol{\beta}_0}^{-1})_{j,j}} \\
&\geq \frac{1}{\lambda_{\max}(\boldsymbol{\Sigma}_{\boldsymbol{\beta}_0}^{-1})} = \lambda_{\min}(\boldsymbol{\Sigma}_{\boldsymbol{\beta}_0}) \geq \gamma_{H},
\end{aligned}
\end{equation}
where the first inequality follows from the fact that $\boldsymbol{\Sigma}_{\boldsymbol{\beta}_0}$ is symmetric positive definite, and the last inequality comes from Assumption~\ref{aseign}.

Therefore, by the definitions of $\widehat{\boldsymbol{\Theta}}_j$, \eqref{Theta-j}, and the second equality of \eqref{taulowbound}, we have
$$
\begin{aligned}
\max _{j \in \mathcal{J}}\left|\widehat{\boldsymbol{\Theta}}_j-\boldsymbol{\Theta}_j\right|_1 & \leq \max _{j \in \mathcal{J}}\left|\frac{1}{\hat{\tau}_{j}^2}-\frac{1}{\tau_{ j}^2}\right|+\max _{j \in \mathcal{J}}\left|\frac{\widehat{\boldsymbol{\gamma}}_{\widehat{\boldsymbol{\beta}}, j}}{\hat{\tau}_{j}^2}-\frac{\boldsymbol{\gamma}_{\boldsymbol{\beta}_0, j}}{\tau_{ j}^2}\right|_1 \\
& \leq \max _{j \in \mathcal{J}}\left|\frac{1}{\hat{\tau}_{j}^2}-\frac{1}{\tau_{ j}^2}\right|+\max _{j \in \mathcal{J}} \left| \frac{\widehat{\boldsymbol{\gamma}}_{\widehat{\boldsymbol{\beta}}, j}-\boldsymbol{\gamma}_{\boldsymbol{\beta}_0, j}}{\hat{\tau}_{j}^2} \right|_1  +\max _{j \in \mathcal{J}} \left| \frac{\boldsymbol{\gamma}_{\boldsymbol{\beta}_0, j}}{\hat{\tau}_{j}^2}-\frac{\boldsymbol{\gamma}_{\boldsymbol{\beta}_0, j}}{\tau_{ j}^2} \right|_1 \\
& \leq \max _{j \in \mathcal{J}}\left|\frac{1}{\hat{\tau}_{j}^2}-\frac{1}{\tau_{ j}^2}\right|+\max _{j \in \mathcal{J}}\left|\frac{1}{\hat{\tau}_{j}^2}\right| \max _{j \in \mathcal{J}}\left|\widehat{\boldsymbol{\gamma}}_{\widehat{\boldsymbol{\beta}}, j}-\boldsymbol{\gamma}_{\boldsymbol{\beta}_0, j}\right|_1 \\
& \quad +\max _{j \in \mathcal{J}}\left|\frac{1}{\hat{\tau}_{j}^2}-\frac{1}{\tau_{ j}^2}\right| \max _{j \in \mathcal{J}}\left|\boldsymbol{\gamma}_{\boldsymbol{\beta}_0, j}\right|_1 .
\end{aligned}
$$

By \eqref{taulowbound}, we have $\tau_{ j}^2 \geq \gamma_{H} > 0$ and, since $1/\hat{\tau}_{j}^2$ is a consistent estimator of $1/\tau_{ j}^2$ by Lemma \ref{alemma1}, continuous mapping theorem and Assumption~\ref{asrate}, this implies
\begin{equation}\label{hattaubound}
1/\hat{\tau}_{j}^2 = O_P(1).
\end{equation}
Then, we have
\begin{equation}\label{rate1}
\begin{aligned}
\max _{j \in \mathcal{J}} \left| \frac{1}{\hat{\tau}_{j}^2} - \frac{1}{\tau_{ j}^2} \right|
&= \max _{j \in \mathcal{J}} \left| \frac{\hat{\tau}_{j}^2 - \tau_{ j}^2}{\hat{\tau}_{j}^2 \, \tau_{ j}^2} \right| \\
&= O_P \Bigg( \frac{p^{2/q} (\log p)^{1/2 + 2/q}}{N^{1/2 - 3/q}} s^* + \frac{p^{4/q} (\log p)^{1 + 4/q}}{N^{1 - 5/q}} (s^*)^{3} \Bigg),
\end{aligned}
\end{equation}
where we use Lemma \ref{alemma1}. By Lemma \ref{nwe} and \eqref{hattaubound}, we have
\begin{equation}\label{rate2}
\begin{aligned}
& \max _{j \in \mathcal{J}}\left|\frac{1}{\hat{\tau}_{j}^2}\right| \max _{j \in \mathcal{J}}\left|\widehat{\boldsymbol{\gamma}}_{\widehat{\boldsymbol{\beta}}, j}-\boldsymbol{\gamma}_{\boldsymbol{\beta}_0, j}\right|_1\\
& =O_p(1)O_P\left( \frac{p^{2/q}(\log p)^{1/2+2/q}}{N^{1/2-2/q}}(s^*)^2 \right)  \\
& = O_P\left( \frac{p^{2/q}(\log p)^{1/2+2/q}}{N^{1/2-2/q}}(s^*)^2 \right),
\end{aligned}
\end{equation}
and, by Lemmas \ref{alemma1} and \ref{gammanull},
\begin{equation}\label{rate3}
\begin{aligned}
&\max _{j \in \mathcal{J}}\left|\frac{1}{\hat{\tau}_{j}^2}-\frac{1}{\tau_{ j}^2}\right| \max _{j \in \mathcal{J}}\left|\boldsymbol{\gamma}_{\boldsymbol{\beta}_0, j}\right|_1\\
&=O_P \Bigg( \frac{p^{2/q} (\log p)^{1/2 + 2/q}}{N^{1/2 - 3/q}} s^* + \frac{p^{4/q} (\log p)^{1 + 4/q}}{N^{1 - 5/q}} (s^*)^{3} \Bigg) O\left(\bar{s}^{1 / 2}\right)\\
& = O_P \Bigg( \frac{p^{2/q} (\log p)^{1/2 + 2/q}}{N^{1/2 - 3/q}} (s^*)^{1.5} + \frac{p^{4/q} (\log p)^{1 + 4/q}}{N^{1 - 5/q}} (s^*)^{3.5} \Bigg).
\end{aligned}
\end{equation}
Observe that, by Assumption \ref{asrate}, we have 
$$
\begin{aligned}
\frac{\frac{p^{4/q}(\log p)^{1+4/q}}{N^{1-5/q}}(s^*)^{3.5}}{\frac{p^{2/q}(\log p)^{1/2+2/q}}{N^{1/2-3/q}}(s^*)^{1.5}} & = \frac{p^{2/q}(\log p)^{1/2+2/q}}{N^{1/2-2/q}}(s^*)^2\\
& = o\left(\frac{1}{N^{4/q}p^{2/q}(\log p)^{5/6 +2/q}}\right)=o(1).
\end{aligned}
$$
% Hence,
Hence, taking the slowest rate of \eqref{rate1}, \eqref{rate2}, and \eqref{rate3}, we have 
\begin{equation}\label{tt2rate}
\begin{aligned}
\max _{j \in \mathcal{J}}\left|\widehat{\boldsymbol{\Theta}}_j-\boldsymbol{\Theta}_j\right|_1 & =  O_P \Bigg(\frac{p^{2/q} (\log p)^{1/2 + 2/q}}{N^{1/2 - 3/q}} (s^*)^{1.5}  \Bigg) \\
& =o_P\left(\frac{1}{N^{1/8+3/(2q)}p^{1/q}(\log p)^{1/2+1/q}}\right) = o_P(1),
\end{aligned}
\end{equation}
where we use Assumption \ref{asrate}. The proof is complete.
\end{proof}

\subsection{Auxiliary lemmas}
\begin{lemma}\label{alemma1}
   Let $\mathcal{J} \subseteq [p]$ be a fixed group. Suppose Assumptions \ref{asapprerror}, \ref{aseign_2}, \ref{asrate}, and all conditions of Theorem \ref{the} are satisfied. If there exist sufficiently large constants $\mathcal{K}_1$, $\tilde{\mathcal{K}}_1$, $\mathcal{K}_2$, and $\tilde{\mathcal{K}}_2$ such that
\[
\mathcal{K}_1\frac{p^{1/q} \sqrt{\log p}}{N^{1/2 - 1/q}} \leq \lambda \leq \tilde{\mathcal{K}}_1\frac{p^{1/q} \sqrt{\log p}}{N^{1/2 - 1/q}} ,
\]
and
\[
\mathcal{K}_2\frac{(p-1)^{2/q} \sqrt{\log p}}{N^{1/2 - 2/q}} \leq \lambda_j \leq \tilde{\mathcal{K}}_2\frac{(p-1)^{2/q} \sqrt{\log p}}{N^{1/2 - 2/q}} \quad \text{for all } j \in \mathcal{J},
\] 
then, we have
    \begin{equation}
    \begin{aligned}
\max_{j \in \mathcal{J}}\left|\hat{\tau}_{j}^2 - \tau_{ j}^2\right| = O_P \Bigg( \frac{p^{2/q} (\log p)^{1/2 + 2/q}}{N^{1/2 - 3/q}} s^* + \frac{p^{4/q} (\log p)^{1 + 4/q}}{N^{1 - 5/q}} (s^*)^{3} \Bigg).
        \end{aligned}
    \end{equation}
\end{lemma}
\begin{proof}
 We fix $j \in \mathcal{J}$. Recall the definitions
\[
\hat{\tau}_{j}^2 := \frac{1}{N} \boldsymbol{X}_{\widehat{\boldsymbol{\beta}}, j}^{\top} \left( \boldsymbol{X}_{\widehat{\boldsymbol{\beta}}, j} - \boldsymbol{X}_{\widehat{\boldsymbol{\beta}}, -j} \widehat{\boldsymbol{\gamma}}_{\widehat{\boldsymbol{\beta}}, j} \right),
\]
and \eqref{taupo}
\[
\tau_{ j}^2 := \frac{1}{N} \mathbbm{E} \left[ \boldsymbol{X}_{\boldsymbol{\beta}_0, j}^{\top} \left( \boldsymbol{X}_{\boldsymbol{\beta}_0, j} - \boldsymbol{X}_{\boldsymbol{\beta}_0, -j} \boldsymbol{\gamma}_{\boldsymbol{\beta}_0, j} \right) \right].
\]
By inserting
\[
\boldsymbol{X}_{\widehat{\boldsymbol{\beta}}, j} = W_{\widehat{\boldsymbol{\beta}}} W_{\boldsymbol{\beta}_0}^{-1} \boldsymbol{X}_{\boldsymbol{\beta}_0, j}, \quad 
\boldsymbol{X}_{\widehat{\boldsymbol{\beta}}, -j} = W_{\widehat{\boldsymbol{\beta}}} W_{\boldsymbol{\beta}_0}^{-1} \boldsymbol{X}_{\boldsymbol{\beta}_0, -j},
\]
we can decompose the difference as
\[
\begin{aligned}
\left|\hat{\tau}_{j}^2 - \tau_{ j}^2\right| 
&= \left|\frac{1}{N} \boldsymbol{X}_{\boldsymbol{\beta}_0, j}^{\top} \left( \boldsymbol{X}_{\boldsymbol{\beta}_0, j} - \boldsymbol{X}_{\boldsymbol{\beta}_0, -j} \widehat{\boldsymbol{\gamma}}_{\widehat{\boldsymbol{\beta}}, j} \right) - \tau_{ j}^2\right| \\
&\quad + \left|\frac{1}{N} \boldsymbol{X}_{\boldsymbol{\beta}_0, j}^{\top} \left( W_{\widehat{\boldsymbol{\beta}}}^2 W_{\boldsymbol{\beta}_0}^{-2} - I_N \right) \left( \boldsymbol{X}_{\boldsymbol{\beta}_0, j} - \boldsymbol{X}_{\boldsymbol{\beta}_0, -j} \widehat{\boldsymbol{\gamma}}_{\widehat{\boldsymbol{\beta}}, j} \right)\right| \\
& \leq  (i) + (ii),
\end{aligned}
\]
where
\[
(i) := \left|\frac{1}{N} \boldsymbol{X}_{\boldsymbol{\beta}_0, j}^{\top} \left( \boldsymbol{X}_{\boldsymbol{\beta}_0, j} - \boldsymbol{X}_{\boldsymbol{\beta}_0, -j} \widehat{\boldsymbol{\gamma}}_{\widehat{\boldsymbol{\beta}}, j} \right) - \tau_{ j}^2\right|,
\]
and
\[
(ii) := \left|\frac{1}{N} \boldsymbol{X}_{\boldsymbol{\beta}_0, j}^{\top} \left( W_{\widehat{\boldsymbol{\beta}}}^2 W_{\boldsymbol{\beta}_0}^{-2} - I_N \right) \left( \boldsymbol{X}_{\boldsymbol{\beta}_0, j} - \boldsymbol{X}_{\boldsymbol{\beta}_0, -j} \widehat{\boldsymbol{\gamma}}_{\widehat{\boldsymbol{\beta}}, j} \right)\right|.
\]

First, we analyze $(i)$.
$$
\begin{aligned}
(i)&= \left|\frac{\boldsymbol{X}_{\boldsymbol{\beta}_0, j}^{\top}\left(\boldsymbol{X}_{\boldsymbol{\beta}_0, j}-\boldsymbol{X}_{\boldsymbol{\beta}_0,-j} \widehat{\boldsymbol{\gamma}}_{\widehat{\boldsymbol{\beta}}, j}\right)}{N}-\tau_{ j}^2\right| \\
& =  \left|\frac{\boldsymbol{X}_{\boldsymbol{\beta}_0, j}^{\top}\left(\boldsymbol{X}_{\boldsymbol{\beta}_0, -j}\boldsymbol{\gamma}_{\boldsymbol{\beta}_0, j} + \boldsymbol{\eta}_{\boldsymbol{\beta}_0,j}-\boldsymbol{X}_{\boldsymbol{\beta}_0,-j} \widehat{\boldsymbol{\gamma}}_{\widehat{\boldsymbol{\beta}}, j}\right)}{N}-\tau_{ j}^2\right| \\
& = \left|\frac{\boldsymbol{X}_{\boldsymbol{\beta}_0, j}^{\top}\boldsymbol{X}_{\boldsymbol{\beta}_0, -j}\left(\boldsymbol{\gamma}_{\boldsymbol{\beta}_0, j} - \widehat{\boldsymbol{\gamma}}_{\widehat{\boldsymbol{\beta}}, j}\right)}{N} + \frac{\left(\boldsymbol{X}_{\boldsymbol{\beta}_0, -j}\boldsymbol{\gamma}_{\boldsymbol{\beta}_0, j} + \boldsymbol{\eta}_{\boldsymbol{\beta}_0,j}\right)^{\top}\boldsymbol{\eta}_{\boldsymbol{\beta}_0,j}}{N} -\tau_{ j}^2\right|\\
& \leq \left|\frac{\boldsymbol{X}_{\boldsymbol{\beta}_0, j}^{\top}\boldsymbol{X}_{\boldsymbol{\beta}_0, -j}\left(\boldsymbol{\gamma}_{\boldsymbol{\beta}_0, j} - \widehat{\boldsymbol{\gamma}}_{\widehat{\boldsymbol{\beta}}, j}\right)}{N}\right| + \left|\frac{\boldsymbol{\gamma}_{\boldsymbol{\beta}_0, j}^{\top}\boldsymbol{X}_{\boldsymbol{\beta}_0, -j}^{\top}\boldsymbol{\eta}_{\boldsymbol{\beta}_0,j}}{N}\right| + \left|\frac{\boldsymbol{\eta}_{\boldsymbol{\beta}_0,j}^{\top}\boldsymbol{\eta}_{\boldsymbol{\beta}_0,j}}{N} -\tau_{ j}^2\right|,
\end{aligned}
$$
where we insert \eqref{populationnodewise} to get the first and second equalities.
For the first term on the right-hand side of $(i)$, we have
\[
\begin{aligned}
\left|\frac{1}{N} \boldsymbol{X}_{\boldsymbol{\beta}_0, j}^{\top} \boldsymbol{X}_{\boldsymbol{\beta}_0, -j} \left( \boldsymbol{\gamma}_{\boldsymbol{\beta}_0, j} - \widehat{\boldsymbol{\gamma}}_{\widehat{\boldsymbol{\beta}}, j} \right)\right|
&= \left|\frac{1}{N} \boldsymbol{X}_{\boldsymbol{\beta}_0, j}^{\top} W_{\boldsymbol{\beta}_0} W_{\widehat{\boldsymbol{\beta}}}^{-1} \boldsymbol{X}_{\widehat{\boldsymbol{\beta}}, -j} \left( \boldsymbol{\gamma}_{\boldsymbol{\beta}_0, j} - \widehat{\boldsymbol{\gamma}}_{\widehat{\boldsymbol{\beta}}, j} \right)\right| \\
&\leq \frac{1}{N} \left| \boldsymbol{X}_{\boldsymbol{\beta}_0, j} \right|_{\infty} \left| W_{\boldsymbol{\beta}_0} W_{\widehat{\boldsymbol{\beta}}}^{-1} \boldsymbol{X}_{\widehat{\boldsymbol{\beta}}, -j} \left( \boldsymbol{\gamma}_{\boldsymbol{\beta}_0, j} - \widehat{\boldsymbol{\gamma}}_{\widehat{\boldsymbol{\beta}}, j} \right) \right|_1.
\end{aligned}
\]
Before proceeding, we note the following general inequality: for a diagonal matrix $A = \mathrm{diag}(a_1, \ldots, a_N)$ and a vector $x = (x_1, \ldots, x_N)^{\top}$, it holds that
\[
\frac{|Ax|_1}{N} = \frac{\sum_{i=1}^N |a_i x_i|}{N} \leq \sqrt{\frac{\sum_{i=1}^N a_i^2}{N}} \, \sqrt{\frac{\sum_{i=1}^N x_i^2}{N}} = \| \mathrm{vec}(A) \|_N \, \| x \|_N.
\]
Applying this to $W_{\boldsymbol{\beta}_0} W_{\widehat{\boldsymbol{\beta}}}^{-1} = \mathrm{diag}\left( \frac{w_{1, \boldsymbol{\beta}_0}}{w_{1, \widehat{\boldsymbol{\beta}}}}, \ldots, \frac{w_{N, \boldsymbol{\beta}_0}}{w_{N, \widehat{\boldsymbol{\beta}}}} \right)$, and using \ref{lneed2} which ensures that $\max_{i\in[N]}\left|\frac{w^2_{i, \boldsymbol{\beta}_0}}{w^2_{i, \widehat{\boldsymbol{\beta}}}}\right| \leq 4/3$ with probability $1 - o(1)$, we obtain that, with probability $1 - o(1)$,
\[
\begin{aligned}
&\frac{1}{N} \left| \boldsymbol{X}_{\boldsymbol{\beta}_0, j} \right|_{\infty} \left| W_{\boldsymbol{\beta}_0} W_{\widehat{\boldsymbol{\beta}}}^{-1} \boldsymbol{X}_{\widehat{\boldsymbol{\beta}}, -j} \left( \boldsymbol{\gamma}_{\boldsymbol{\beta}_0, j} - \widehat{\boldsymbol{\gamma}}_{\widehat{\boldsymbol{\beta}}, j} \right) \right|_1 \\
&\quad \lesssim (N \log p)^{1/q} \, \left\| \boldsymbol{X}_{\widehat{\boldsymbol{\beta}}, -j} \left( \boldsymbol{\gamma}_{\boldsymbol{\beta}_0, j} - \widehat{\boldsymbol{\gamma}}_{\widehat{\boldsymbol{\beta}}, j} \right) \right\|_N \\
&\quad \lesssim \frac{p^{2/q} (\log p)^{1/2 + 2/q}}{N^{1/2 - 3/q}} s^*,
\end{aligned}
\]
where the first inequality uses a bound on $\left| \boldsymbol{X}_{\boldsymbol{\beta}_0, j} \right|_{\infty}$ via Markov's inequality (since $W_{\boldsymbol{\beta}_0}$ is bounded), following the argument in \eqref{pa3}, and the second inequality is justified by Lemma \ref{nwe}.

For the second term on the right-hand side of $(i)$, it holds with probability $1 - o(1)$,
\[
\begin{aligned}
\left| \frac{1}{N} \boldsymbol{\gamma}_{\boldsymbol{\beta}_0, j}^{\top} \boldsymbol{X}_{\boldsymbol{\beta}_0, -j}^{\top} \boldsymbol{\eta}_{\boldsymbol{\beta}_0, j} \right|
&\leq \left| \boldsymbol{\gamma}_{\boldsymbol{\beta}_0, j} \right|_1 \, \left| \frac{1}{N} \boldsymbol{X}_{\boldsymbol{\beta}_0, -j}^{\top} \boldsymbol{\eta}_{\boldsymbol{\beta}_0, j} \right|_{\infty} \\
&\lesssim \frac{(p-1)^{2/q} \sqrt{\log p}}{N^{1/2 - 2/q}} \sqrt{s^*},
\end{aligned}
\]
where we use Lemmas \ref{lneed3} and \ref{gammanull} to justify the last inequality.
For the last term on the right-hand side of $(i)$, by Chebyshev's inequality and Assumptions \ref{asX}, \ref{aseign_2} (ii) and \ref{aseign_2} (iii), we have
\[
\left|\frac{1}{N} \boldsymbol{\eta}_{\boldsymbol{\beta}_0, j}^{\top} \boldsymbol{\eta}_{\boldsymbol{\beta}_0, j} - \tau_{ j}^2\right| = O_P\left( \frac{1}{\sqrt{N}} \right).
\]
Combining these results, we conclude that
\[
(i) = O_P\left( \frac{p^{2/q} (\log p)^{1/2 + 2/q}}{N^{1/2 - 3/q}} s^* \right).
\]

Next, we analyze term $(ii)$.
For term $(ii)$, we have with probability $1-o(1)$
\[
\begin{aligned}
& \left|\frac{1}{N} \boldsymbol{X}_{\boldsymbol{\beta}_0, j}^{\top} \left( W_{\widehat{\boldsymbol{\beta}}}^2 W_{\boldsymbol{\beta}_0}^{-2} - I_N \right) \left( \boldsymbol{X}_{\boldsymbol{\beta}_0, j} - \boldsymbol{X}_{\boldsymbol{\beta}_0, -j} \widehat{\boldsymbol{\gamma}}_{\widehat{\boldsymbol{\beta}}, j} \right)\right| \\
&= \left|\frac{1}{N} \boldsymbol{X}_{\boldsymbol{\beta}_0, j}^{\top} \left( W_{\widehat{\boldsymbol{\beta}}}^2 W_{\boldsymbol{\beta}_0}^{-2} - I_N \right) \left( \boldsymbol{X}_{\boldsymbol{\beta}_0, -j} \boldsymbol{\gamma}_{\boldsymbol{\beta}_0, j} + \boldsymbol{\eta}_{\boldsymbol{\beta}_0, j} - \boldsymbol{X}_{\boldsymbol{\beta}_0, -j} \widehat{\boldsymbol{\gamma}}_{\widehat{\boldsymbol{\beta}}, j} \right)\right| \\
&\leq \left| \boldsymbol{X}_{\boldsymbol{\beta}_0, j} \right|_{\infty} \left[ \frac{1}{N} \sum_{i=1}^N \left| \frac{w_{i, \widehat{\boldsymbol{\beta}}}^2 - w_{i, \boldsymbol{\beta}_0}^2}{w_{i, \boldsymbol{\beta}_0}^2} \right| \right] 
\Big( \left| \boldsymbol{\eta}_{\boldsymbol{\beta}_0, j} \right|_{\infty} + \left| \boldsymbol{X}_{\boldsymbol{\beta}_0, -j} \boldsymbol{\gamma}_{\boldsymbol{\beta}_0, j} - \boldsymbol{X}_{\boldsymbol{\beta}_0, -j} \widehat{\boldsymbol{\gamma}}_{\widehat{\boldsymbol{\beta}}, j} \right|_{\infty} \Big) \\
&\leq \left| \boldsymbol{X}_{\boldsymbol{\beta}_0, j} \right|_{\infty} \left( \lambda s_{\boldsymbol{\beta}_0} \right) 
\Big( \left| \boldsymbol{\eta}_{\boldsymbol{\beta}_0, j} \right|_{\infty} + \left| \boldsymbol{X}_{\boldsymbol{\beta}_0, -j} \right|_{\infty} \left| \boldsymbol{\gamma}_{\boldsymbol{\beta}_0, j} - \widehat{\boldsymbol{\gamma}}_{\widehat{\boldsymbol{\beta}}, j} \right|_1 \Big) \\
&\lesssim (N \log p)^{1/q} \lambda s_{\boldsymbol{\beta}_0} \Big[ (N \log p)^{1/q} + (p N \log p)^{1/q} \Big( \lambda_{j} |S_j| + \frac{N^{2/q}(\log p)^{2/q} \lambda^2 s^2_{\boldsymbol{\beta}_0}}{\lambda_{j}} \Big) \Big] \\
& \lesssim \frac{p^{1/q} (\log p)^{1/2 + 1/q}}{N^{1/2 - 2/q}}s^*\Big[ (N \log p)^{1/q} + (p N \log p)^{1/q} \Big( \frac{p^{2/q}(\log p)^{1/2+2/q}}{N^{1/2-2/q}}(s^*)^2 \Big) \Big] \\
&\lesssim \frac{p^{1/q} (\log p)^{1/2 + 2/q}}{N^{1/2 - 3/q}} s^* + \frac{p^{4/q} (\log p)^{1 + 4/q}}{N^{1 - 5/q}} (s^*)^{3},
\end{aligned}
\]
where the first inequality follows from
\[
\left| \boldsymbol{a}^{\top} \boldsymbol{M} \boldsymbol{b} \right| \leq \left| \boldsymbol{a} \right|_{\infty} \left| \boldsymbol{M} \boldsymbol{b} \right|_1 \leq \left| \boldsymbol{a} \right|_{\infty} \left[ \sum_{i=1}^N \left|\boldsymbol{M}_{i,i}\right| \right] \left| \boldsymbol{b} \right|_{\infty},
\]
for $\boldsymbol{a} \in \mathbb{R}^N$, $\boldsymbol{M} \in \mathbb{R}^{N \times N}$ diagonal, and $\boldsymbol{b} \in \mathbb{R}^N$, and the remaining bounds follow from Markov's inequality for $\left| \boldsymbol{X}_{\boldsymbol{\beta}_0, j} \right|_{\infty}$, Lemma \ref{weightbound}, \eqref{nodewisepart1}, and Lemma \ref{nwe}.

Finally, combining the bounds for $(i)$ and $(ii)$, we obtain
\[
\max_{j \in \mathcal{J}}\left|\hat{\tau}_{j}^2 - \tau_{ j}^2\right| = O_P \Bigg( \frac{p^{2/q} (\log p)^{1/2 + 2/q}}{N^{1/2 - 3/q}} s^* + \frac{p^{4/q} (\log p)^{1 + 4/q}}{N^{1 - 5/q}} (s^*)^{3} \Bigg),
\]
since group $\mathcal{J}$ has fixed size. The proof is complete.   
\end{proof}

\begin{lemma} \label{nwe}
Let $\mathcal{J} \subseteq [p]$ be a fixed group. Suppose the conditions of Theorem \ref{the} and Assumptions \ref{asapprerror}, \ref{aseign_2}, and \ref{asrate} hold. If there exist sufficiently large constants $\mathcal{K}_1$, $\tilde{\mathcal{K}}_1$, $\mathcal{K}_2$, and $\tilde{\mathcal{K}}_2$ such that
\[
\mathcal{K}_1\frac{p^{1/q} \sqrt{\log p}}{N^{1/2 - 1/q}} \leq \lambda \leq \tilde{\mathcal{K}}_1\frac{p^{1/q} \sqrt{\log p}}{N^{1/2 - 1/q}} ,
\]
and
\[
\mathcal{K}_2\frac{(p-1)^{2/q} \sqrt{\log p}}{N^{1/2 - 2/q}} \leq \lambda_j \leq \tilde{\mathcal{K}}_2\frac{(p-1)^{2/q} \sqrt{\log p}}{N^{1/2 - 2/q}} \quad \text{for all } j \in \mathcal{J},
\] 
then, with probability going to $1$, we have for every $j \in \mathcal{J}$
$$
\left|\widehat{\boldsymbol{\gamma}}_{\widehat{\boldsymbol{\beta}}, j} - \boldsymbol{\gamma}_{\boldsymbol{\beta}_0, j} \right|_1  \lesssim \lambda_{j}|S_j| +  N^{2/q}(\log p)^{2/q} \lambda^2 s^2_{\boldsymbol{\beta}_0}/\lambda_{j}.$$
Furthermore, it also holds with probability $1-o(1)$
$$
\left|\widehat{\boldsymbol{\gamma}}_{\widehat{\boldsymbol{\beta}}, j} - \boldsymbol{\gamma}_{\boldsymbol{\beta}_0, j} \right|_1  \lesssim \frac{p^{2/q}(\log p)^{1/2+2/q}}{N^{1/2-2/q}}(s^*)^2,
$$
and
$$
\| \boldsymbol{X}_{\widehat{\boldsymbol{\beta}},-j}\left(\widehat{\boldsymbol{\gamma}}_{\widehat{\boldsymbol{\beta}},j}-\boldsymbol{\gamma}_{\boldsymbol{\beta}_0, j}\right) \|_N^2 \lesssim \frac{p^{4/q}(\log p)^{1+2/q}}{N^{1-4/q}}(s^*)^2.
$$
\end{lemma}

\begin{proof}
We divide the proof into two parts. First, we define the events in Lemmas \ref{lneedextra}, \ref{lneed1}, \ref{lneed2}, \ref{lneed3}, and \ref{lneed4} as $\mathcal{E}_1, \mathcal{E}_2, \mathcal{E}_3, \mathcal{E}_4,$ and $\mathcal{E}_5$ correspondingly.
The following proof works on these events, where we know $P\left(\mathcal{E}_1 \cap \mathcal{E}_2 \cap \mathcal{E}_3 \cap \mathcal{E}_4 \cap \mathcal{E}_5 \right) \rightarrow 1$.
\paragraph{Part (i)}
In this part, the goal is to get a key inequality that is useful for the proof. 

By the two-point inequality of \citet[p. 9]{van_de_geer_estimation_2016}, which is derived from the Karush-Kuhn-Tucker conditions of the problem \eqref{nodewiseregression}, we have
\begin{equation}\label{node1}
\frac{\left(\boldsymbol{\gamma}_{\boldsymbol{\beta}_0, j}-\widehat{\boldsymbol{\gamma}}_{\widehat{\boldsymbol{\beta}}, j}\right)^{\top} 
\boldsymbol{X}_{\widehat{\boldsymbol{\beta}},-j}^{\top}\left(\boldsymbol{X}_{\widehat{\boldsymbol{\beta}}, j}-\boldsymbol{X}_{\widehat{\boldsymbol{\beta}},-j} \widehat{\boldsymbol{\gamma}}_{\widehat{\boldsymbol{\beta}}, j}\right)}{N} 
\leq \lambda_{j} \left|\boldsymbol{\gamma}_{\boldsymbol{\beta}_0, j}\right|_1 - \lambda_{j} \left|\widehat{\boldsymbol{\gamma}}_{\widehat{\boldsymbol{\beta}}, j}\right|_1.
\end{equation}
From \eqref{populationnodewise}, we can decompose
\[
\boldsymbol{X}_{\boldsymbol{\beta}_0,j} = \boldsymbol{X}_{\boldsymbol{\beta}_0,-j} \boldsymbol{\gamma}_{\boldsymbol{\beta}_0, j} + \boldsymbol{\eta}_{\boldsymbol{\beta}_0, j}.
\]
Hence, after scaling with $W_{\widehat{\boldsymbol{\beta}}} W_{\boldsymbol{\beta}_0}^{-1}$ and $\boldsymbol{X}_{\widehat{\boldsymbol{\beta}}} = W_{\widehat{\boldsymbol{\beta}}} W_{\boldsymbol{\beta}_0}^{-1} \boldsymbol{X}_{\boldsymbol{\beta}_0}$, we have
\[
\boldsymbol{X}_{\widehat{\boldsymbol{\beta}}, j} = \boldsymbol{X}_{\widehat{\boldsymbol{\beta}}, -j} \boldsymbol{\gamma}_{\boldsymbol{\beta}_0, j} + W_{\widehat{\boldsymbol{\beta}}} W_{\boldsymbol{\beta}_0}^{-1} \boldsymbol{\eta}_{\boldsymbol{\beta}_0, j}.
\]
Substituting the above into \eqref{node1}, we obtain
\begin{align*}
& \frac{\left(\boldsymbol{\gamma}_{\boldsymbol{\beta}_0, j}-\widehat{\boldsymbol{\gamma}}_{\widehat{\boldsymbol{\beta}}, j}\right)^{\top} 
\boldsymbol{X}_{\widehat{\boldsymbol{\beta}},-j}^{\top}\left(\boldsymbol{X}_{\widehat{\boldsymbol{\beta}}, j}-\boldsymbol{X}_{\widehat{\boldsymbol{\beta}},-j} \widehat{\boldsymbol{\gamma}}_{\widehat{\boldsymbol{\beta}}, j}\right)}{N} \\
&= \frac{\left(\boldsymbol{\gamma}_{\boldsymbol{\beta}_0, j}-\widehat{\boldsymbol{\gamma}}_{\widehat{\boldsymbol{\beta}}, j}\right)^{\top} 
\boldsymbol{X}_{\widehat{\boldsymbol{\beta}},-j}^{\top} \boldsymbol{X}_{\widehat{\boldsymbol{\beta}},-j} 
\left(\boldsymbol{\gamma}_{\boldsymbol{\beta}_0, j}-\widehat{\boldsymbol{\gamma}}_{\widehat{\boldsymbol{\beta}}, j}\right)}{N}  + \frac{\left(\boldsymbol{\gamma}_{\boldsymbol{\beta}_0, j}-\widehat{\boldsymbol{\gamma}}_{\widehat{\boldsymbol{\beta}}, j}\right)^{\top} 
\boldsymbol{X}_{\widehat{\boldsymbol{\beta}},-j}^{\top} \left(W_{\widehat{\boldsymbol{\beta}}} W_{\boldsymbol{\beta}_0}^{-1} \boldsymbol{\eta}_{\boldsymbol{\beta}_0, j}\right)}{N} \\
&\leq \lambda_{j} \left|\boldsymbol{\gamma}_{\boldsymbol{\beta}_0, j}\right|_1 - \lambda_{j} \left|\widehat{\boldsymbol{\gamma}}_{\widehat{\boldsymbol{\beta}}, j}\right|_1.
\end{align*}
Rewriting, we have
\begin{equation}\label{nodea}
\begin{aligned}
& \frac{\left(\widehat{\boldsymbol{\gamma}}_{\widehat{\boldsymbol{\beta}}, j}-\boldsymbol{\gamma}_{\boldsymbol{\beta}_0, j}\right)^{\top} 
\boldsymbol{X}_{\widehat{\boldsymbol{\beta}},-j}^{\top} \boldsymbol{X}_{\widehat{\boldsymbol{\beta}},-j} 
\left(\widehat{\boldsymbol{\gamma}}_{\widehat{\boldsymbol{\beta}}, j}-\boldsymbol{\gamma}_{\boldsymbol{\beta}_0, j}\right)}{N} \\
&\leq \frac{\left(\widehat{\boldsymbol{\gamma}}_{\widehat{\boldsymbol{\beta}}, j}-\boldsymbol{\gamma}_{\boldsymbol{\beta}_0, j}\right)^{\top} 
\left(\boldsymbol{X}_{\widehat{\boldsymbol{\beta}},-j}^{\top} W_{\widehat{\boldsymbol{\beta}}} W_{\boldsymbol{\beta}_0}^{-1} \boldsymbol{\eta}_{\boldsymbol{\beta}_0, j} - \boldsymbol{X}_{\boldsymbol{\beta}_0,-j}^{\top} \boldsymbol{\eta}_{\boldsymbol{\beta}_0,j}\right)}{N} \\
&\quad + \frac{\left(\widehat{\boldsymbol{\gamma}}_{\widehat{\boldsymbol{\beta}}, j}-\boldsymbol{\gamma}_{\boldsymbol{\beta}_0, j}\right)^{\top} 
\boldsymbol{X}_{\boldsymbol{\beta}_0,-j}^{\top} \boldsymbol{\eta}_{\boldsymbol{\beta}_0,j}}{N} 
+ \lambda_{j} \left|\boldsymbol{\gamma}_{\boldsymbol{\beta}_0, j}\right|_1 - \lambda_{j} \left|\widehat{\boldsymbol{\gamma}}_{\widehat{\boldsymbol{\beta}}, j}\right|_1.
\end{aligned}
\end{equation}
For the first term on the right-hand side of \eqref{nodea}, since $\boldsymbol{X}_{\widehat{\boldsymbol{\beta}},-j} = W_{\widehat{\boldsymbol{\beta}}} W_{\boldsymbol{\beta}_0}^{-1} \boldsymbol{X}_{\boldsymbol{\beta}_0,-j}$ where both $W_{\widehat{\boldsymbol{\beta}}}$ and $W_{\boldsymbol{\beta}_0}$ are diagonal, we have
\begin{equation}\label{tem1}
\begin{aligned}
& \frac{\left(\widehat{\boldsymbol{\gamma}}_{\widehat{\boldsymbol{\beta}}, j}-\boldsymbol{\gamma}_{\boldsymbol{\beta}_0, j}\right)^{\top} 
\left(\boldsymbol{X}_{\widehat{\boldsymbol{\beta}},-j}^{\top} W_{\widehat{\boldsymbol{\beta}}} W_{\boldsymbol{\beta}_0}^{-1} \boldsymbol{\eta}_{\boldsymbol{\beta}_0, j} - \boldsymbol{X}_{\boldsymbol{\beta}_0,-j}^{\top} \boldsymbol{\eta}_{\boldsymbol{\beta}_0,j}\right)}{N} \\
&= \frac{\left(\widehat{\boldsymbol{\gamma}}_{\widehat{\boldsymbol{\beta}}, j}-\boldsymbol{\gamma}_{\boldsymbol{\beta}_0, j}\right)^{\top} 
\boldsymbol{X}_{\boldsymbol{\beta}_0,-j}^{\top} \left(W_{\widehat{\boldsymbol{\beta}}}^2 W_{\boldsymbol{\beta}_0}^{-2} - I_N\right) \boldsymbol{\eta}_{\boldsymbol{\beta}_0,j}}{N} \\
&\leq \left\| \left(W_{\widehat{\boldsymbol{\beta}}}^2 W_{\boldsymbol{\beta}_0}^{-2} - I_N \right) \boldsymbol{\eta}_{\boldsymbol{\beta}_0, j} \right\|_N \, 
\left\| \boldsymbol{X}_{\boldsymbol{\beta}_0,-j} \left(\widehat{\boldsymbol{\gamma}}_{\widehat{\boldsymbol{\beta}}, j}-\boldsymbol{\gamma}_{\boldsymbol{\beta}_0,j}\right) \right\|_N \\
&\leq \frac{1}{2} \left\| \left(W_{\widehat{\boldsymbol{\beta}}}^2 W_{\boldsymbol{\beta}_0}^{-2} - I_N \right) \boldsymbol{\eta}_{\boldsymbol{\beta}_0, j} \right\|_N^2 
+ \frac{1}{2} \left\| \boldsymbol{X}_{\boldsymbol{\beta}_0,-j} \left(\widehat{\boldsymbol{\gamma}}_{\widehat{\boldsymbol{\beta}}, j}-\boldsymbol{\gamma}_{\boldsymbol{\beta}_0,j}\right) \right\|_N^2,
\end{aligned}
\end{equation}
where we use the Cauchy–Schwarz inequality for the first inequality. Furthermore, notice that
\begin{equation}
    \label{tem2}
\begin{aligned}
\left\| \left(W_{\widehat{\boldsymbol{\beta}}}^2 W_{\boldsymbol{\beta}_0}^{-2}-I_N \right) \boldsymbol{\eta}_{\boldsymbol{\beta}_0, j} \right\|_N^2 
& = \frac{1}{N} \sum_{i=1}^N \left(\frac{w_{i,\widehat{\boldsymbol{\beta}}}^2 - w_{i,\boldsymbol{\beta}_0}^2}{w_{i,\boldsymbol{\beta}_0}^2}\right)^2 \eta_{i,\boldsymbol{\beta}_0,j}^2\\
& \leq \frac{1}{N} \sum_{i=1}^N \left(\frac{w_{i,\widehat{\boldsymbol{\beta}}}^2 - w_{i,\boldsymbol{\beta}_0}^2}{w_{i,\boldsymbol{\beta}_0}^2}\right)^2 \, \max_{j \in \mathcal{J}} \left|\boldsymbol{\eta}_{\boldsymbol{\beta}_0, j}\right|_\infty^2\\
& \leq N^{2/q} (\log p)^{2/q}\lambda^2 s^2_{\boldsymbol{\beta}_0}\\
& = \tau,
\end{aligned}
\end{equation}
where we use Lemma \ref{weightbound}, \eqref{nodewisepart1} and let $\tau \sim N^{2/q} (\log p)^{2/q}\lambda^2 s^2_{\boldsymbol{\beta}_0}$.
Then, combining \eqref{tem1} and \eqref{tem2}, \eqref{nodea} can be rewritten as
\begin{equation}
\begin{aligned}
& \frac{\left(\widehat{\boldsymbol{\gamma}}_{\widehat{\boldsymbol{\beta}}, j} - \boldsymbol{\gamma}_{\boldsymbol{\beta}_0, j}\right)^{\top} 
\boldsymbol{X}_{\widehat{\boldsymbol{\beta}}, -j}^{\top} \boldsymbol{X}_{\widehat{\boldsymbol{\beta}}, -j} 
\left(\widehat{\boldsymbol{\gamma}}_{\widehat{\boldsymbol{\beta}}, j} - \boldsymbol{\gamma}_{\boldsymbol{\beta}_0, j}\right)}{N} \\
& \lesssim \frac{\tau}{2} + \frac{\left\| \boldsymbol{X}_{\boldsymbol{\beta}_0,-j} 
\left(\widehat{\boldsymbol{\gamma}}_{\widehat{\boldsymbol{\beta}}, j} - \boldsymbol{\gamma}_{\boldsymbol{\beta}_0, j}\right) \right\|_N^2}{2} \\
& \quad + \frac{\left(\widehat{\boldsymbol{\gamma}}_{\widehat{\boldsymbol{\beta}}, j} - \boldsymbol{\gamma}_{\boldsymbol{\beta}_0, j}\right)^{\top} 
\boldsymbol{X}_{\boldsymbol{\beta}_0,-j}^{\top} \boldsymbol{\eta}_{\boldsymbol{\beta}_0, j}}{N} 
+ \lambda_{j} \left| \boldsymbol{\gamma}_{\boldsymbol{\beta}_0, j} \right|_1 - \lambda_{j} \left| \widehat{\boldsymbol{\gamma}}_{\widehat{\boldsymbol{\beta}}, j} \right|_1 \\
& \lesssim \frac{\tau}{2} + \frac{2 \left\| \boldsymbol{X}_{\widehat{\boldsymbol{\beta}},-j} 
\left(\widehat{\boldsymbol{\gamma}}_{\widehat{\boldsymbol{\beta}}, j} - \boldsymbol{\gamma}_{\boldsymbol{\beta}_0, j}\right) \right\|_N^2}{3} \\
& \quad + \frac{\left(\widehat{\boldsymbol{\gamma}}_{\widehat{\boldsymbol{\beta}}, j} - \boldsymbol{\gamma}_{\boldsymbol{\beta}_0, j}\right)^{\top} 
\boldsymbol{X}_{\boldsymbol{\beta}_0,-j}^{\top} \boldsymbol{\eta}_{\boldsymbol{\beta}_0, j}}{N} 
+ \lambda_{j} \left| \boldsymbol{\gamma}_{\boldsymbol{\beta}_0, j} \right|_1 - \lambda_{j} \left| \widehat{\boldsymbol{\gamma}}_{\widehat{\boldsymbol{\beta}}, j} \right|_1,
\end{aligned}
\end{equation}
where the second inequality uses  
$\boldsymbol{X}_{\widehat{\boldsymbol{\beta}}, -j} = W_{\widehat{\boldsymbol{\beta}}} W_{\boldsymbol{\beta}_0}^{-1} \boldsymbol{X}_{\boldsymbol{\beta}_0, -j}$ and Lemma \ref{lneed2}.

Finally, we obtain the key inequality
\begin{equation}\label{nodeb}
\begin{aligned}
& \frac{\left\| \boldsymbol{X}_{\widehat{\boldsymbol{\beta}},-j} 
\left(\widehat{\boldsymbol{\gamma}}_{\widehat{\boldsymbol{\beta}}, j} - \boldsymbol{\gamma}_{\boldsymbol{\beta}_0, j}\right) \right\|_N^2}{3} 
+ \lambda_{j} \left| \widehat{\boldsymbol{\gamma}}_{\widehat{\boldsymbol{\beta}}, j} \right|_1 \\
& \lesssim \frac{\tau}{2} + \frac{\left(\widehat{\boldsymbol{\gamma}}_{\widehat{\boldsymbol{\beta}}, j} - \boldsymbol{\gamma}_{\boldsymbol{\beta}_0, j}\right)^{\top} 
\boldsymbol{X}_{\boldsymbol{\beta}_0,-j}^{\top} \boldsymbol{\eta}_{\boldsymbol{\beta}_0, j}}{N} 
+ \lambda_{j} \left| \boldsymbol{\gamma}_{\boldsymbol{\beta}_0, j} \right|_1.
\end{aligned}
\end{equation}

\paragraph{Part (ii)}
We consider two cases in the following.
First, we assume 
\[
\frac{\lambda_{j}\left|\widehat{\boldsymbol{\gamma}}_{\widehat{\boldsymbol{\beta}}, j} - \boldsymbol{\gamma}_{\boldsymbol{\beta}_0, j} \right|_1}{8} + \left\| \boldsymbol{X}_{\widehat{\boldsymbol{\beta}},-j} \left( \widehat{\boldsymbol{\gamma}}_{\widehat{\boldsymbol{\beta}},j} - \boldsymbol{\gamma}_{\boldsymbol{\beta}_0, j} \right) \right\|_N^2 \leq \frac{3\tau}{2}.
\]
Since 
\[
\left\| \boldsymbol{X}_{\widehat{\boldsymbol{\beta}},-j} \left( \widehat{\boldsymbol{\gamma}}_{\widehat{\boldsymbol{\beta}},j} - \boldsymbol{\gamma}_{\boldsymbol{\beta}_0, j} \right) \right\|_N^2 \geq 0,
\] 
it immediately follows that 
\[
\left|\widehat{\boldsymbol{\gamma}}_{\widehat{\boldsymbol{\beta}}, j} - \boldsymbol{\gamma}_{\boldsymbol{\beta}_0, j} \right|_1 \leq \frac{12\tau}{\lambda_{j}},
\] 
and simultaneously,
\[
\left\| \boldsymbol{X}_{\widehat{\boldsymbol{\beta}},-j} \left( \widehat{\boldsymbol{\gamma}}_{\widehat{\boldsymbol{\beta}},j} - \boldsymbol{\gamma}_{\boldsymbol{\beta}_0, j} \right) \right\|_N^2 \leq \frac{3\tau}{2}.
\]

Next, consider the opposite case:
\begin{equation}\label{case2}
\frac{\lambda_{j}\left|\widehat{\boldsymbol{\gamma}}_{\widehat{\boldsymbol{\beta}}, j} - \boldsymbol{\gamma}_{\boldsymbol{\beta}_0, j} \right|_1}{8} + \left\| \boldsymbol{X}_{\widehat{\boldsymbol{\beta}},-j} \left( \widehat{\boldsymbol{\gamma}}_{\widehat{\boldsymbol{\beta}},j} - \boldsymbol{\gamma}_{\boldsymbol{\beta}_0, j} \right) \right\|_N^2 \geq \frac{3\tau}{2}.
\end{equation}
To facilitate further analysis, we introduce notations. Let $$
\boldsymbol{X}_{\boldsymbol{\beta}_0,S_j-\{j\}}^{\top} = \left(\boldsymbol{X}_{\boldsymbol{\beta}_0,k}^{\top}\right)_{k \neq j, k \in S_j} \in \mathbb{R}^{(|S_j-\{j\}|) \times N}
$$ where $\boldsymbol{X}_{\boldsymbol{\beta}_0,k}$ is the $k$-th column of $\boldsymbol{X}_{\boldsymbol{\beta}_0}$ and 
$$\boldsymbol{X}_{\boldsymbol{\beta}_0,S_j^c-\{j\}}^{\top} = \left(\boldsymbol{X}_{\boldsymbol{\beta}_0,k}^{\top}\right)_{k \neq j, k \in S^c_j} \in \mathbb{R}^{(|S^c_j-\{j\}|) \times N}.
$$
Let 
$$
\widehat{\boldsymbol{\gamma}}_{\widehat{\boldsymbol{\beta}}, S_j - \{j\}} = \left(\widehat{\gamma}_{\widehat{\boldsymbol{\beta}},j,k}\right)_{k \neq j, k \in S_j} \in \mathbb{R}^{|S_j-\{j\}|},
$$ and 
$$
\widehat{\boldsymbol{\gamma}}_{\widehat{\boldsymbol{\beta}}, S_j^c - \{j\}} = \left(\widehat{\gamma}_{\widehat{\boldsymbol{\beta}},j,k}\right)_{k \neq j, k \in S_j^c} \in \mathbb{R}^{|S_j^c-\{j\}|}.
$$
Consider the following
\[
\begin{aligned}
& \left|\frac{\left(\widehat{\boldsymbol{\gamma}}_{\widehat{\boldsymbol{\beta}}, j} - \boldsymbol{\gamma}_{\boldsymbol{\beta}_0, j}\right)^{\top} \boldsymbol{X}_{\boldsymbol{\beta}_0,-j}^{\top} \boldsymbol{\eta}_{\boldsymbol{\beta}_0,j}}{N}\right|\\
&= \left|\frac{\left(\widehat{\boldsymbol{\gamma}}_{\widehat{\boldsymbol{\beta}}, S_j - \{j\}} - \boldsymbol{\gamma}_{\boldsymbol{\beta}_0, S_j-\{j\}}\right)^{\top} \boldsymbol{X}_{\boldsymbol{\beta}_0,S_j-\{j\}}^{\top} \boldsymbol{\eta}_{\boldsymbol{\beta}_0,j} + \left(\widehat{\boldsymbol{\gamma}}_{\widehat{\boldsymbol{\beta}}, S_j^c-\{j\}} - \boldsymbol{\gamma}_{\boldsymbol{\beta}_0, S_j^c-\{j\}}\right)^{\top} \boldsymbol{X}_{\boldsymbol{\beta}_0,S_j^c-\{j\}}^{\top} \boldsymbol{\eta}_{\boldsymbol{\beta}_0,j}}{N}\right| \\
&\leq \left|\widehat{\boldsymbol{\gamma}}_{\widehat{\boldsymbol{\beta}}, S_j - \{j\}} - \boldsymbol{\gamma}_{\boldsymbol{\beta}_0, S_j-\{j\}} \right|_1 \left| \frac{\boldsymbol{X}_{\boldsymbol{\beta}_0,S_j-\{j\}}^{\top} \boldsymbol{\eta}_{\boldsymbol{\beta}_0,j}}{N} \right|_{\infty} + \left|\widehat{\boldsymbol{\gamma}}_{\widehat{\boldsymbol{\beta}}, S_j^c-\{j\}} - \boldsymbol{\gamma}_{\boldsymbol{\beta}_0, S_j^c-\{j\}} \right|_1 \left| \frac{\boldsymbol{X}_{\boldsymbol{\beta}_0,S_j^c-\{j\}}^{\top} \boldsymbol{\eta}_{\boldsymbol{\beta}_0,j}}{N} \right|_{\infty} \\
&\leq \lambda_c \left|\widehat{\boldsymbol{\gamma}}_{\widehat{\boldsymbol{\beta}}, S_j - \{j\}} - \boldsymbol{\gamma}_{\boldsymbol{\beta}_0, S_j-\{j\}} \right|_1  + \lambda_c \left|\widehat{\boldsymbol{\gamma}}_{\widehat{\boldsymbol{\beta}}, S_j^c-\{j\}} \right|_1,
\end{aligned}
\]
where we have used $\boldsymbol{\gamma}_{\boldsymbol{\beta}_0, S_j^c-\{j\}} = \boldsymbol{0}$ by definition, and $\lambda_c \sim O\left(\frac{(p-1)^{2/q} \sqrt{\log p}}{N^{1/2 - 2/q}}\right)$ as derived in Lemma \ref{lneed3}.  
Hence, from \eqref{nodeb}, it follows that
\begin{equation}\label{nodec}
\begin{aligned}
\frac{\left\| \boldsymbol{X}_{\widehat{\boldsymbol{\beta}},-j} \left(\widehat{\boldsymbol{\gamma}}_{\widehat{\boldsymbol{\beta}},j} - \boldsymbol{\gamma}_{\boldsymbol{\beta}_0, j}\right) \right\|_N^2}{3} 
&\lesssim \lambda_c \left|\widehat{\boldsymbol{\gamma}}_{\widehat{\boldsymbol{\beta}}, S_j - \{j\}} - \boldsymbol{\gamma}_{\boldsymbol{\beta}_0, S_j-\{j\}} \right|_1 
+ \lambda_c \left|\widehat{\boldsymbol{\gamma}}_{\widehat{\boldsymbol{\beta}}, S_j^c-\{j\}} \right|_1
+ \frac{\tau}{2} \\
&\quad + \lambda_{j} \left| \boldsymbol{\gamma}_{\boldsymbol{\beta}_0, j} \right|_1 - \lambda_{j} \left| \widehat{\boldsymbol{\gamma}}_{\widehat{\boldsymbol{\beta}}, j} \right|_1.
\end{aligned}
\end{equation}
By applying the triangle inequality, we further obtain
\[
\begin{aligned}
\lambda_{j} \left| \boldsymbol{\gamma}_{\boldsymbol{\beta}_0, j} \right|_1 - \lambda_{j} \left| \widehat{\boldsymbol{\gamma}}_{\widehat{\boldsymbol{\beta}}, j} \right|_1
&= \lambda_{j} \left| \boldsymbol{\gamma}_{\boldsymbol{\beta}_0, S_j-\{j\}} \right|_1 - \lambda_{j} \left| \widehat{\boldsymbol{\gamma}}_{\widehat{\boldsymbol{\beta}}, S_j - \{j\}} \right|_1 - \lambda_{j} \left| \widehat{\boldsymbol{\gamma}}_{\widehat{\boldsymbol{\beta}}, S_j^c-\{j\}} \right|_1 \\
&\leq \lambda_{j} \left| \widehat{\boldsymbol{\gamma}}_{\widehat{\boldsymbol{\beta}}, S_j - \{j\}} - \boldsymbol{\gamma}_{\boldsymbol{\beta}_0, S_j-\{j\}} \right|_1 - \lambda_{j} \left| \widehat{\boldsymbol{\gamma}}_{\widehat{\boldsymbol{\beta}}, S_j^c-\{j\}} \right|_1.
\end{aligned}
\]
Substituting the above bound into \eqref{nodec} and because of $\lambda_{j} \gtrsim 2\lambda_c$, we obtain
\begin{equation}\label{noded}
\begin{aligned}
\frac{\left\| \boldsymbol{X}_{\widehat{\boldsymbol{\beta}},-j} \left(\widehat{\boldsymbol{\gamma}}_{\widehat{\boldsymbol{\beta}},j} - \boldsymbol{\gamma}_{\boldsymbol{\beta}_0, j}\right) \right\|_N^2}{3} 
&\lesssim \frac{3}{2} \lambda_{j} \left| \widehat{\boldsymbol{\gamma}}_{\widehat{\boldsymbol{\beta}}, S_j - \{j\}} - \boldsymbol{\gamma}_{\boldsymbol{\beta}_0, S_j-\{j\}} \right|_1 
- \frac{1}{2} \lambda_{j} \left| \widehat{\boldsymbol{\gamma}}_{\widehat{\boldsymbol{\beta}}, S_j^c-\{j\}} \right|_1 + \frac{\tau}{2}.
\end{aligned}
\end{equation}
Recalling the assumed condition \eqref{case2} and \eqref{noded}, we obtain
\[
-\frac{\lambda_{j} \left| \widehat{\boldsymbol{\gamma}}_{\widehat{\boldsymbol{\beta}}, j} - \boldsymbol{\gamma}_{\boldsymbol{\beta}_0, j} \right|_1}{24} 
\leq \frac{3}{2} \lambda_{j}  \left| \widehat{\boldsymbol{\gamma}}_{\widehat{\boldsymbol{\beta}}, S_j - \{j\}} - \boldsymbol{\gamma}_{\boldsymbol{\beta}_0, S_j-\{j\}} \right|_1 - \frac{1}{2} \lambda_{j} \left| \widehat{\boldsymbol{\gamma}}_{\widehat{\boldsymbol{\beta}}, S_j^c-\{j\}} \right|_1.
\]
By noting that 
\[
\left| \widehat{\boldsymbol{\gamma}}_{\widehat{\boldsymbol{\beta}}, j} - \boldsymbol{\gamma}_{\boldsymbol{\beta}_0, j} \right|_1 
= \left| \widehat{\boldsymbol{\gamma}}_{\widehat{\boldsymbol{\beta}}, S_j - \{j\}} - \boldsymbol{\gamma}_{\boldsymbol{\beta}_0, S_j-\{j\}} \right|_1 + \left| \widehat{\boldsymbol{\gamma}}_{\widehat{\boldsymbol{\beta}}, S_j^c-\{j\}} \right|_1,
\] 
it follows that
\[
\frac{37 \lambda_{j} \left| \widehat{\boldsymbol{\gamma}}_{\widehat{\boldsymbol{\beta}}, S_j - \{j\}} - \boldsymbol{\gamma}_{\boldsymbol{\beta}_0, S_j-\{j\}} \right|_1}{24} \geq \frac{11}{24} \lambda_{j} \left| \widehat{\boldsymbol{\gamma}}_{\widehat{\boldsymbol{\beta}}, S_j^c-\{j\}} \right|_1.
\]
 Then we set $L = \frac{37}{11}$ in Lemma \ref{lneed4} and applying Lemma \ref{lneed5}, we have
\[
\hat{\phi}^2(L, S_j-\{j\}) \geq \frac{\phi^2(L, S_j-\{j\})}{2} > \frac{\gamma_{\mathrm{H}}}{2}.
\]
Hence, revisiting \eqref{noded}, we deduce
\begin{equation}\label{nodee1}
\begin{aligned}
&\frac{\left\| \boldsymbol{X}_{\widehat{\boldsymbol{\beta}},-j} \left( \widehat{\boldsymbol{\gamma}}_{\widehat{\boldsymbol{\beta}},j} - \boldsymbol{\gamma}_{\boldsymbol{\beta}_0, j} \right) \right\|_N^2}{3} 
+ \frac{1}{2} \lambda_{j} \left| \widehat{\boldsymbol{\gamma}}_{\widehat{\boldsymbol{\beta}}, S_j - \{j\}} - \boldsymbol{\gamma}_{\boldsymbol{\beta}_0, S_j-\{j\}} \right|_1
+ \frac{1}{2} \lambda_{j} \left| \widehat{\boldsymbol{\gamma}}_{\widehat{\boldsymbol{\beta}}, S_j^c-\{j\}} \right|_1 \\
&\lesssim 2 \lambda_{j} \left| \widehat{\boldsymbol{\gamma}}_{\widehat{\boldsymbol{\beta}}, S_j - \{j\}} - \boldsymbol{\gamma}_{\boldsymbol{\beta}_0, S_j-\{j\}} \right|_1  + \frac{\tau}{2} \\
&\leq 2 \lambda_{j} \sqrt{|S_j|} \frac{\left\| \boldsymbol{X}_{\widehat{\boldsymbol{\beta}},-j} \left( \widehat{\boldsymbol{\gamma}}_{\widehat{\boldsymbol{\beta}},j} - \boldsymbol{\gamma}_{\boldsymbol{\beta}_0, j} \right) \right\|_N}{\hat{\phi}(L, S_j-\{j\})} + \frac{\tau}{2} \\
&\leq \frac{3 \lambda_{j}^2 |S_j|}{\hat{\phi}^2(L, S_j-\{j\})} + \frac{\left\| \boldsymbol{X}_{\widehat{\boldsymbol{\beta}},-j} \left( \widehat{\boldsymbol{\gamma}}_{\widehat{\boldsymbol{\beta}},j} - \boldsymbol{\gamma}_{\boldsymbol{\beta}_0, j} \right) \right\|_N^2}{3} + \frac{\tau}{2} \\
&\leq \frac{6 \lambda_{j}^2 |S_j|}{\gamma_{\mathrm{H}}} + \frac{\left\| \boldsymbol{X}_{\widehat{\boldsymbol{\beta}},-j} \left( \widehat{\boldsymbol{\gamma}}_{\widehat{\boldsymbol{\beta}},j} - \boldsymbol{\gamma}_{\boldsymbol{\beta}_0, j} \right) \right\|_N^2}{3} + \frac{\tau}{2},
\end{aligned}
\end{equation}
where the second inequality follows from the definition of $\hat{\phi}(L, S_j-\{j\})$, the definition of $S_j$ and the fact that $|S_j - \{j\}| \leq |S_j|$. Consequently, from \eqref{nodee1}, it follows that
\begin{equation}\label{nodee2}
\left| \widehat{\boldsymbol{\gamma}}_{\widehat{\boldsymbol{\beta}}, j} - \boldsymbol{\gamma}_{\boldsymbol{\beta}_0, j} \right|_1 
\lesssim \frac{ \lambda_{j} |S_j|}{\gamma_{\mathrm{H}}} + \frac{\tau}{\lambda_{j}}.
\end{equation}
Similarly, we also have
\begin{equation}\label{nodeeX}
\begin{aligned}
&\frac{\left\| \boldsymbol{X}_{\widehat{\boldsymbol{\beta}},-j} \left( \widehat{\boldsymbol{\gamma}}_{\widehat{\boldsymbol{\beta}},j} - \boldsymbol{\gamma}_{\boldsymbol{\beta}_0, j} \right) \right\|_N^2}{3} 
+ \frac{1}{2} \left| \widehat{\boldsymbol{\gamma}}_{\widehat{\boldsymbol{\beta}}, S_j - \{j\}} - \boldsymbol{\gamma}_{\boldsymbol{\beta}_0, S_j-\{j\}} \right|_1 \lambda_{j} 
+ \frac{1}{2} \left| \widehat{\boldsymbol{\gamma}}_{\widehat{\boldsymbol{\beta}}, S_j^c-\{j\}} \right|_1 \lambda_{j} \\
&\leq 2 \lambda_{j} \sqrt{|S_j|} \frac{\left\| \boldsymbol{X}_{\widehat{\boldsymbol{\beta}},-j} \left( \widehat{\boldsymbol{\gamma}}_{\widehat{\boldsymbol{\beta}},j} - \boldsymbol{\gamma}_{\boldsymbol{\beta}_0, j} \right) \right\|_N}{\hat{\phi}(L,S_j-\{j\})} + \frac{\tau}{2} \\
&\leq \frac{8 \lambda_{j}^2 |S_j|}{\hat{\phi}^2(L,S_j-\{j\})} 
+ \frac{\left\| \boldsymbol{X}_{\widehat{\boldsymbol{\beta}},-j} \left( \widehat{\boldsymbol{\gamma}}_{\widehat{\boldsymbol{\beta}},j} - \boldsymbol{\gamma}_{\boldsymbol{\beta}_0, j} \right) \right\|_N^2}{8} + \frac{\tau}{2} \\
&\leq \frac{16 \lambda_{j}^2 |S_j|}{\gamma_{\mathrm{H}}} 
+ \frac{\left\| \boldsymbol{X}_{\widehat{\boldsymbol{\beta}},-j} \left( \widehat{\boldsymbol{\gamma}}_{\widehat{\boldsymbol{\beta}},j} - \boldsymbol{\gamma}_{\boldsymbol{\beta}_0, j} \right) \right\|_N^2}{8} + \frac{\tau}{2}.
\end{aligned}
\end{equation}
Then, it follows that
\[
\left\| \boldsymbol{X}_{\widehat{\boldsymbol{\beta}},-j} \left( \widehat{\boldsymbol{\gamma}}_{\widehat{\boldsymbol{\beta}},j} - \boldsymbol{\gamma}_{\boldsymbol{\beta}_0, j} \right) \right\|_N^2 
\lesssim \frac{\lambda_{j}^2 |S_j|}{\gamma_{\mathrm{H}}} + \tau,
\]
where $ \tau \sim N^{2/q} (\log p)^{2/q}\lambda^2 s^2_{\boldsymbol{\beta}_0}$. 
It immediately follows that
% Since $\lambda^{-1}\frac{1}{N} |\boldsymbol{E}|_1 = o_P(\frac{1}{N^{1/2+1/q}p^{\frac{1}{q}}\sqrt{\log p}}) = o_P(1)$ and $\frac{(\log p)^{2/r}\sum_{i}E_{i}^2}{N^{1-2/r}} = o_P(\frac{(\log p)^{2/r}}{N^{1-2/r}})$, we also let $\lambda_{j} = O(\frac{(p-1)^{\frac{2}{q}}\sqrt{\log p}}{N^{\frac{1}{2}-\frac{2}{q}}})$ and $\lambda = O(\frac{p^{\frac{2}{q}}\sqrt{\log p}}{N^{\frac{1}{2}-\frac{2}{q}}})$, it is followed by 
% $$
% \frac{(\log p)^{2/r}\sum_{i}E_{i}^2}{N^{1-2/r}}/\lambda_{j} = o_P(\frac{1}{(p-1)^{\frac{2}{q}}(\log p)^{\frac{1}{2}-\frac{2}{r}}N^{\frac{1}{2}-\frac{2}{r}+\frac{2}{q}}}) = o_{P}(1),
% $$
% and
% $$
% \begin{aligned}
% & N^{2/r}(\log p)^{2/r} \left(\frac{\lambda s_{\boldsymbol{\beta}_0}}{\gamma_{\mathrm{H}}} +  \lambda^{-1}\frac{1}{N} |\boldsymbol{E}|_1 \right)^2  \\
% & = N^{2/r}(\log p)^{2/r} \left(\frac{\lambda^2 s^2_{\boldsymbol{\beta}_0}}{\gamma^2_{\mathrm{H}}} +  \lambda^{-2}\frac{1}{N^2} |\boldsymbol{E}|^2_1 + \frac{2s_{\boldsymbol{\beta}_0}\left|\boldsymbol{E}\right|_1}{N\gamma_{\mathrm{H}}} \right) \\
% & = N^{2/r}(\log p)^{2/r}\frac{\lambda^2 s^2_{\boldsymbol{\beta}_0}}{\gamma^2_{\mathrm{H}}} + o_P\left(\frac{1}{N^{1+2/q}p^{4/q}(\log p)^{1-\frac{2}{r}}}\right) + o_P\left(\frac{2(\log p)^{2/r}s_{\boldsymbol{\beta}_0}}{N^{1-2/r}\gamma_{\mathrm{H}}}\right) \\
% & \lesssim N^{2/r}(\log p)^{2/r}\frac{\lambda^2 s^2_{\boldsymbol{\beta}_0}}{\gamma^2_{\mathrm{H}}},
% \end{aligned}
% $$
% where the last inequality comes from Assumption \ref{aseff}. 
\begin{equation}\label{noderes1}
\left|\widehat{\boldsymbol{\gamma}}_{\widehat{\boldsymbol{\beta}}, j} - \boldsymbol{\gamma}_{\boldsymbol{\beta}_0, j} \right|_1  \lesssim \lambda_{j}|S_j| +  N^{2/q} (\log p)^{2/q}\lambda^2 s^2_{\boldsymbol{\beta}_0}/\lambda_{j},
\end{equation}
and
\begin{equation}\label{noderes2}
\| \boldsymbol{X}_{\widehat{\boldsymbol{\beta}},-j}\left(\widehat{\boldsymbol{\gamma}}_{\widehat{\boldsymbol{\beta}},j}-\boldsymbol{\gamma}_{\boldsymbol{\beta}_0, j}\right) \|_N^2 \lesssim \lambda^2_{j}|S_j| +  N^{2/q}(\log p)^{2/q} \lambda^2 s^2_{\boldsymbol{\beta}_0}.
\end{equation}
Recalling that 
\[
\lambda_{j} \sim O\left( \frac{(p-1)^{2/q} \sqrt{\log p}}{N^{1/2 - 2/q}} \right) \quad \text{and} \quad 
\lambda \sim O\left( \frac{p^{1/q} \sqrt{\log p}}{N^{1/2 - 1/q}} \right),
\] 
and the definition
\[
s^* = \max_j |S_j| \vee s_{\boldsymbol{\beta}_0},
\] 
we immediately deduce from the previous bounds that, for any $j \in \mathcal{J}$,
\begin{equation}\label{noderes3}
\left| \widehat{\boldsymbol{\gamma}}_{\widehat{\boldsymbol{\beta}}, j} - \boldsymbol{\gamma}_{\boldsymbol{\beta}_0, j} \right|_1 
\lesssim \frac{p^{2/q} (\log p)^{1/2 + 2/q}}{N^{1/2 - 2/q}} (s^*)^2,
\end{equation}
and
\begin{equation}\label{noderes4}
\left\| \boldsymbol{X}_{\widehat{\boldsymbol{\beta}},-j} \left( \widehat{\boldsymbol{\gamma}}_{\widehat{\boldsymbol{\beta}}, j} - \boldsymbol{\gamma}_{\boldsymbol{\beta}_0, j} \right) \right\|_N^2 
\lesssim \frac{p^{4/q} (\log p)^{1 + 2/q}}{N^{1 - 4/q}} (s^*)^2.
\end{equation}
Recall that $P\left(\mathcal{E}_1 \cap \mathcal{E}_2 \cap \mathcal{E}_3 \cap \mathcal{E}_4 \cap \mathcal{E}_5 \right) \rightarrow 1$, we then have \eqref{noderes1}, \eqref{noderes2}, \eqref{noderes3}, and \eqref{noderes4} hold with probability $1-o(1)$ .

\end{proof}

\begin{lemma}\label{gammanull}
Let $\mathcal{J} \subseteq [p]$ be a fixed group. Under Assumptions \ref{asX} and \ref{aseign}, for every $j \in \mathcal{J}$, we have $\left|\boldsymbol{\gamma}_{\boldsymbol{\beta}_0, j}\right|_1 = O(\sqrt{\bar{s}})$.
\end{lemma}
\begin{proof}
First, observe that
\[
\frac{\boldsymbol{\gamma}_{\boldsymbol{\beta}_0, j}^{\top} \boldsymbol{\Sigma}_{-j, \boldsymbol{\beta}_0, -j} \boldsymbol{\gamma}_{\boldsymbol{\beta}_0, j}}{\boldsymbol{\gamma}_{\boldsymbol{\beta}_0, j}^{\top} \boldsymbol{\gamma}_{\boldsymbol{\beta}_0, j}} 
\geq \lambda_{\min}(\boldsymbol{\Sigma}_{-j, \boldsymbol{\beta}_0, -j}) 
\geq \lambda_{\min}(\boldsymbol{\Sigma}_{\boldsymbol{\beta}_0}) 
\geq \gamma_{H},
\]
where the last inequality follows from Assumption \ref{aseign}. 
It then immediately follows that
\[
\boldsymbol{\gamma}_{\boldsymbol{\beta}_0, j}^{\top} \boldsymbol{\gamma}_{\boldsymbol{\beta}_0, j} \leq \frac{\boldsymbol{\gamma}_{\boldsymbol{\beta}_0, j}^{\top} \boldsymbol{\Sigma}_{-j, \boldsymbol{\beta}_0, -j} \boldsymbol{\gamma}_{\boldsymbol{\beta}_0, j}}{\gamma_{H}}.
\]

Next, note that for each observation $i\in[N]$,
\[
X_{i, \boldsymbol{\beta}_0, j} = \boldsymbol{X}_{i, \boldsymbol{\beta}_0, -j}^{\top} \boldsymbol{\gamma}_{\boldsymbol{\beta}_0, j} + \eta_{i, \boldsymbol{\beta}_0, j},
\]
where $X_{i, \boldsymbol{\beta}_0, j}$ denotes the $i$-th entry of $\boldsymbol{X}_{\boldsymbol{\beta}_0, j}$, $\boldsymbol{X}_{i, \boldsymbol{\beta}_0, -j}$ is the $i$-th row of $\boldsymbol{X}_{\boldsymbol{\beta}_0, -j}$, and $ \eta_{i, \boldsymbol{\beta}_0, j}$ is the $i$-th entry of $\boldsymbol{\eta}_{ \boldsymbol{\beta}_0, j}$

Since $\boldsymbol{X}_{i, \boldsymbol{\beta}_0, -j}$ is orthogonal to $\eta_{i, \boldsymbol{\beta}_0, j}$ in expectation for all $i \in [N]$, we have
\[
\mathbbm{E}\left( X_{i, \boldsymbol{\beta}_0, j}^2 \right) 
= \boldsymbol{\gamma}_{\boldsymbol{\beta}_0, j}^{\top} \boldsymbol{\Sigma}_{-j, \boldsymbol{\beta}_0, -j} \boldsymbol{\gamma}_{\boldsymbol{\beta}_0, j} + \mathbbm{E}\left( \eta_{i, \boldsymbol{\beta}_0, j}^2 \right),
\]
which implies
\[
\boldsymbol{\gamma}_{\boldsymbol{\beta}_0, j}^{\top} \boldsymbol{\Sigma}_{-j, \boldsymbol{\beta}_0, -j} \boldsymbol{\gamma}_{\boldsymbol{\beta}_0, j} \leq \mathbbm{E}\left( X_{i, \boldsymbol{\beta}_0, j}^2 \right).
\]
Furthermore, since $X_{i, \boldsymbol{\beta}_0, j}^2 = X_{i,j}^2 w_{\boldsymbol{\beta}_0, i}^2 \leq X_{i,j}^2$, and by Assumption \ref{asX}, we have
\[
\mathbbm{E}\left( X_{i, \boldsymbol{\beta}_0, j}^2 \right) \leq \mathbbm{E}\left( X_{i,j}^2 \right) \leq K_0,
\]
it follows that
\[
\boldsymbol{\gamma}_{\boldsymbol{\beta}_0, j}^{\top} \boldsymbol{\gamma}_{\boldsymbol{\beta}_0, j} = O(1).
\]

Finally, by definition, $\boldsymbol{\Theta}_j$ has at most $\bar{s}$ non-zero elements, and hence $\boldsymbol{\gamma}_{\boldsymbol{\beta}_0, j}$ has at most $\bar{s}$ non-zero entries. Therefore, we conclude that
\[
\left| \boldsymbol{\gamma}_{\boldsymbol{\beta}_0, j} \right|_1^2 \leq \bar{s} \, \boldsymbol{\gamma}_{\boldsymbol{\beta}_0, j}^{\top} \boldsymbol{\gamma}_{\boldsymbol{\beta}_0, j} \leq \bar{s} K_0.
\]
The proof is then complete.
\end{proof}

\subsection{Technical lemmas}
\begin{lemma}\label{lneedextra}
    Suppose the conditions of Theorem \ref{the} and Assumption \ref{asapprerror} are satisfied. We have 
    $$
    \left\|\boldsymbol{X}\left(\widehat{\boldsymbol{\beta}}-\boldsymbol{\beta}_0\right) \right\|_N^2 \lesssim \lambda^2 s_{\boldsymbol{\beta}_0}^2
    $$ 
    hold with probability $1-o(1)$.\footnote{\tcr{If we assume either bounded covariates or that $w_{\boldsymbol{\beta}_0,i}^2$, $i \in [N]$, has a uniform positive lower bound, then one can show that $\left\| \boldsymbol{X}\bigl(\widehat{\boldsymbol{\beta}} - \boldsymbol{\beta}_0\bigr) \right\|_N^2
= O_P\!\left( \lambda^2 s_{\boldsymbol{\beta}_0} \right),$
by following a proof strategy similar to that in \citet{van_de_geer_debias}.}}
\end{lemma}
\begin{proof}
Note that
$$
\begin{aligned}
& \left(\widehat{\boldsymbol{\beta}} - \boldsymbol{\beta}_0\right)^{\top}\left(\frac{1}{N}\sum_{i = 1}^{N}\boldsymbol{X}_i\boldsymbol{X}_i^{\top}\right)\left(\widehat{\boldsymbol{\beta}} - \boldsymbol{\beta}_0\right) \\
& \leq \Omega\left(\widehat{\boldsymbol{\beta}} - \boldsymbol{\beta}_0\right)\Omega_{*}\left(\left(\frac{1}{N}\sum_{i = 1}^{N}\boldsymbol{X}_i\boldsymbol{X}_i^{\top}\right)\left(\widehat{\boldsymbol{\beta}} - \boldsymbol{\beta}_0\right)\right)\\
& \leq G^* \left|\frac{1}{N}\sum_{i = 1}^{N}\boldsymbol{X}_i\boldsymbol{X}_i^{\top}\right|_{\infty}\Omega\left(\widehat{\boldsymbol{\beta}} - \boldsymbol{\beta}_0\right)^2\\
& = O_P\left( \lambda^2s_{\boldsymbol{\beta}_0}^2\right),
\end{aligned}
$$
where we use Lemma \ref{cor1} for the first two inequalities,  Lemma \ref{use sample2} and Assumption \ref{asX} to get $\left|\frac{1}{N}\sum_{i = 1}^{N}\boldsymbol{X}_i\boldsymbol{X}_i^{\top}\right|_{\infty} = O_P(1)$ and the rate of $\Omega\left(\widehat{\boldsymbol{\beta}} - \boldsymbol{\beta}_0\right)^2 =O_P\left( \lambda^2s_{\boldsymbol{\beta}_0}^2\right) $ comes from Theorem \ref{the} and Assumption \ref{asapprerror}.
\end{proof}

\begin{lemma}\label{lneed1}
 Let $\mathcal{J} \subseteq [p]$ be a fixed group. Let the conditions of Theorem \ref{the} and Assumptions \ref{asapprerror}, and \ref{aseign_2} hold. With probability approaching $1$, we have
\begin{equation}
\begin{aligned}
& \max_{j \in \mathcal{J}}\left\|\boldsymbol{X}\left(\widehat{\boldsymbol{\beta}}-\boldsymbol{\beta}_0\right) - \boldsymbol{E}\right\|_N^2\left|\boldsymbol{\eta}_{\boldsymbol{\beta}_0, j}\right|_{\infty}^2 \\
& \lesssim N^{2/q}(\log p)^{2/q}\left(\lambda^2 s^2_{\boldsymbol{\beta}_0} \right) + \frac{(\log p)^{2/q}\sum_{i}E_{i}^2}{N^{1-2/q}}\\
& \lesssim N^{2/q}(\log p)^{2/q} \left(\lambda^2 s^2_{\boldsymbol{\beta}_0} \right) + \frac{(\log p)^{2/q}}{N^{1+1/q}},
\end{aligned}
\end{equation}
and
$$
\left\|\boldsymbol{X}\left(\widehat{\boldsymbol{\beta}}-\boldsymbol{\beta}_0\right) - \boldsymbol{E}\right\|_N^2 \lesssim \lambda^2s^2_{\boldsymbol{\beta}_0}.
$$
\end{lemma}
\begin{proof}
Let $\eta_{i, \boldsymbol{\beta}_0,j}$ denote the $i$-th element of $\boldsymbol{\eta}_{\boldsymbol{\beta}_0,j}$. 
Then, we have  
\[
\begin{aligned}
P\left(\max_{i \in [N]}\max_{j \in \mathcal{J}}\left| \eta_{i, \boldsymbol{\beta}_0,j} \right| > t \right) 
& \leq \frac{\mathbbm{E}\left[\max_{i \in [N]}\max_{j \in \mathcal{J}}\left| \eta_{i, \boldsymbol{\beta}_0,j}\right|^q\right]}{t^q} \\
& \leq \frac{N |\mathcal{J}| \max_{i \in [N]}\max_{j \in \mathcal{J}}\mathbbm{E}\left[\left| \eta_{i, \boldsymbol{\beta}_0,j}\right|^q\right]}{t^q} \\
& \leq \frac{N |\mathcal{J}| C_{\eta}}{t^q},
\end{aligned}
\]
where the first inequality follows from Markov’s inequality, and the last inequality follows from Assumption~\ref{aseign_2}~(iii).

Setting $t = N^{1/q}(\log p)^{1/q}$, we obtain that, with probability at least $1 - \frac{C_{\eta} |\mathcal{J}|}{\log p}$,
\begin{equation}\label{nodewisepart1}
\max_{j \in \mathcal{J}}\left| \boldsymbol{\eta}_{\boldsymbol{\beta}_0,j} \right|_{\infty}^{2} 
\leq N^{2/q}(\log p)^{2/q}.
\end{equation}
Obviously, we have
\begin{equation}\label{lemmanodewise1}
\begin{aligned}
\left\|\boldsymbol{X}\left(\widehat{\boldsymbol{\beta}}-\boldsymbol{\beta}_0\right) - \boldsymbol{E}\right\|_N^2 
& \leq 2\left(\widehat{\boldsymbol{\beta}} - \boldsymbol{\beta}_0\right)^{\top} 
\left(\frac{1}{N} \sum_{i=1}^N \boldsymbol{X}_{i}\boldsymbol{X}_{i}^{\top}\right)
\left(\widehat{\boldsymbol{\beta}} - \boldsymbol{\beta}_0\right) 
+ 2\frac{1}{N}\sum_{i=1}^{N}E_{i}^2,
\end{aligned}
\end{equation}
which follows from the fact that, for any two vectors $\boldsymbol{a}$ and $\boldsymbol{b}$ of the same dimension, 
\[
\|\boldsymbol{a}+\boldsymbol{b}\|_N^2 \leq 2\|\boldsymbol{a}\|_N^2 + 2\|\boldsymbol{b}\|_N^2.
\]
For the first term on the right-hand side of~\eqref{lemmanodewise1}, Lemma~\ref{lneedextra} implies that, with probability $1 - o(1)$,
\[
\left\|\boldsymbol{X}\left(\widehat{\boldsymbol{\beta}}-\boldsymbol{\beta}_0\right)\right\|_N^2 
\lesssim \lambda^2 s^2_{\boldsymbol{\beta}_0}.
\]
By Assumption~\ref{aseign_2}~(i), we have  
\begin{equation}\label{lemmanodewise4}
\frac{\sum_{i}E_{i}^2}{N} = o_P\left(\frac{1}{N^{1+3/q}}\right).
\end{equation}
Hence, with probability $1-o(1)$, we have
\[
\begin{aligned}
& \max_{j \in \mathcal{J}}\left\|\boldsymbol{X}\left(\widehat{\boldsymbol{\beta}}-\boldsymbol{\beta}_0\right) - \boldsymbol{E}\right\|_N^2
  \left|\boldsymbol{\eta}_{\boldsymbol{\beta}_0, j}\right|_{\infty}^2 \\
& \quad \lesssim N^{2/q}(\log p)^{2/q} 
\left(\lambda^2 s^2_{\boldsymbol{\beta}_0}\right) 
      +  \frac{(\log p)^{2/q}\sum_{i}E_{i}^2}{N^{1-2/q}}\\
& \quad \lesssim N^{2/q}(\log p)^{2/q}\left(\lambda^2 s^2_{\boldsymbol{\beta}_0}\right) 
      + o_P\left(\frac{(\log p)^{2/q}}{N^{1+1/q}}\right),
\end{aligned}
\]
where the last inequality follows from Assumptions \ref{aseign_2}~(i).
Furthermore, since $\max_i |E_i| \leq \left|\boldsymbol{E}\right|_2 = o_P(1)$ by Assumption \ref{aseign_2} (i), we then define the event $\mathcal{E} = \{\max_i |E_i| \leq 1\}$ where $P\left(\mathcal{E}\right) \rightarrow 1$. Under the event $\mathcal{E}$, we have
\begin{equation}\label{e2e1}
\frac{\sum_{i}E_{i}^2}{N} 
\leq \frac{|\boldsymbol{E}|_1}{N}.
\end{equation}
It follows by  
\begin{equation}
\begin{aligned}
\left\|\boldsymbol{X}\left(\widehat{\boldsymbol{\beta}}-\boldsymbol{\beta}_0\right) - \boldsymbol{E}\right\|_N^2 
& \leq 2\left(\widehat{\boldsymbol{\beta}} - \boldsymbol{\beta}_0\right)^{\top} 
\left(\frac{1}{N} \sum_{i=1}^N \boldsymbol{X}_{i}\boldsymbol{X}_{i}^{\top}\right)
\left(\widehat{\boldsymbol{\beta}} - \boldsymbol{\beta}_0\right) 
+ 2\frac{1}{N}\sum_{i = 1}^{N}E_{i}^2\\
& \leq 2\left\|\boldsymbol{X}\left(\widehat{\boldsymbol{\beta}}-\boldsymbol{\beta}_0\right)\right\|_N^2 
+ 2\frac{|\boldsymbol{E}|_1}{N}\\
& = O_P\left( \lambda^2 s^2_{\boldsymbol{\beta}_0}\right),
\end{aligned}
\end{equation}
where the last equality is justified by Lemma~\ref{lneedextra}, the fact that $\lambda^2 s_{\boldsymbol{\beta}_0} \leq \lambda^2 s^2_{\boldsymbol{\beta}_0}$ since $s_{\boldsymbol{\beta}_0} \geq 1$, and $P\left(\mathcal{E}\right) \rightarrow 1$.
\end{proof}

\begin{lemma}\label{lneed2}
Let the conditions of Theorem \ref{the}, Assumptions \ref{asapprerror}, \ref{aseign_2}, and \ref{asrate} hold. With probability $1-o(1)$, we have
$$
\max_{i \in [N]} \left|\frac{w_{i,\boldsymbol{\beta}_0}^2}{w_{i,\widehat{\boldsymbol{\beta}}}^2}\right| \leq 4/3.
$$
\end{lemma}
\begin{proof}
Note that
\begin{align*}
\max_{i \in [N]} \frac{w_{i,\boldsymbol{\beta}_0}^2}{w_{i,\widehat{\boldsymbol{\beta}}}^2} 
&= \max_{i \in [N]} 
\frac{\exp(\boldsymbol{X}_i^\top \boldsymbol{\beta}_0 + E_i)}{(1+\exp(\boldsymbol{X}_i^\top \boldsymbol{\beta}_0 + E_i))^2} 
\times \frac{(1+\exp(\boldsymbol{X}_i^\top \widehat{\boldsymbol{\beta}}))^2}{\exp(\boldsymbol{X}_i^\top \widehat{\boldsymbol{\beta}})} \\
&= \max_{i \in [N]} 
\exp(\boldsymbol{X}_i^\top \boldsymbol{\beta}_0 + E_i - \boldsymbol{X}_i^\top \widehat{\boldsymbol{\beta}}) 
\frac{(1+\exp(\boldsymbol{X}_i^\top \widehat{\boldsymbol{\beta}}))^2}{(1+\exp(\boldsymbol{X}_i^\top \boldsymbol{\beta}_0 + E_i))^2}.
\end{align*}
Let 
\[
x_0 = \boldsymbol{X}_i^\top \boldsymbol{\beta}_0 + E_i, \quad x_1 = \boldsymbol{X}_i^\top \widehat{\boldsymbol{\beta}}.
\] 
Define
\[
g(x_0,x_1) = \exp(x_0 - x_1) \frac{(1+\exp(x_1))^2}{(1+\exp(x_0))^2}.
\] 
Taking the derivative of the logarithm, we have
\[
\left|\frac{d}{dx_1} \log g(x_0,x_1)\right| = \left|-1 + \frac{2 \exp(x_1)}{1+\exp(x_1)}\right| \leq 1.
\]
 By $\log g(x_0, x_0) = 0$ and the mean-value theorem, it follows that
\[
|\log g(x_0,x_1)| \le |x_1 - x_0|.
\]
Then, we have 
\[
g(x_0,x_1) = \exp(\log g(x_0,x_1)) \le \exp(|x_1 - x_0|),
\] 
and thus
\[
\max_{i \in [N]} \frac{w_{i,\boldsymbol{\beta}_0}^2}{w_{i,\widehat{\boldsymbol{\beta}}}^2} 
\le \exp\left(\max_{i \in [N]} | \boldsymbol{X}_i^\top \boldsymbol{\beta}_0 + E_i - \boldsymbol{X}_i^\top \widehat{\boldsymbol{\beta}} | \right).
\]

Next, we bound the difference:
\begin{align*}
\max_{i\in[N]} |\boldsymbol{X}_i^\top \widehat{\boldsymbol{\beta}} - \boldsymbol{X}_i^\top \boldsymbol{\beta}_0 - E_i|
&\le \max_i |\boldsymbol{X}_i^\top \widehat{\boldsymbol{\beta}} - \boldsymbol{X}_i^\top \boldsymbol{\beta}_0| + \max_i |E_i| \\
&\le G^* \max_i |\boldsymbol{X}_i|_\infty \, \Omega(\widehat{\boldsymbol{\beta}} - \boldsymbol{\beta}_0) + \max_i |E_i| \\
&= O_P\Big( 1/N^{1.5/q} + (Np \log p)^{1/q} \lambda s_{\boldsymbol{\beta}_0} \Big),
\end{align*}
where we use \eqref{maxmax} to get the rate of $\max_i |\boldsymbol{X}_i|_\infty$, 
and apply Theorem \ref{the}, Assumption \ref{asapprerror} and Assumption \ref{aseign_2} (i) which justifies $\max_i |E_i|  = o_P(N^{-1.5/q})$. Then by Assumption~\ref{asrate}, as $N,p \to \infty$, this maximum is $o_P(1)$, and hence smaller than the fixed constant $\log(4/3)$ with probability tending to $1$. Therefore,
\[
\max_{i \in [N]} \frac{w_{i,\boldsymbol{\beta}_0}^2}{w_{i,\widehat{\boldsymbol{\beta}}}^2} \le 4/3.
\] 
Since $w_{i,\boldsymbol{\beta}_0}^2/w_{i,\widehat{\boldsymbol{\beta}}}^2$ is always positive, the proof is complete.

\end{proof}

\begin{lemma}\label{lneed3}
Let $\mathcal{J} \subseteq [p]$ be a fixed group, $\boldsymbol{X}_{\boldsymbol{\beta}_0,S_j-\{j\}}^{\top} = \left(\boldsymbol{X}_{\boldsymbol{\beta}_0,k}^{\top}\right)_{k \neq j, k \in S_j} \in \mathbb{R}^{(|S_j - \{j\}|) \times N}$ where $\boldsymbol{X}_{\boldsymbol{\beta}_0,k}$ is the $k$-th column of $\boldsymbol{X}_{\boldsymbol{\beta}_0}$ and $\boldsymbol{X}_{\boldsymbol{\beta}_0,S_j^c-\{j\}}^{\top} = \left(\boldsymbol{X}_{\boldsymbol{\beta}_0,k}^{\top}\right)_{k \neq j, k \in S^c_j} \in \mathbb{R}^{ (|S^c_j - \{j\}|) \times N}$. Under the conditions of Theorem \ref{the} and Assumption \ref{aseign_2}, it holds with probability $1-o(1)$, that
\begin{equation}
    \begin{aligned}
        & \max_{j \in \mathcal{J}}\left|\frac{\boldsymbol{X}_{\boldsymbol{\beta}_0,S_j-\{j\}}^{\top}\boldsymbol{\eta}_{\boldsymbol{\beta}_0,j}}{N}\right|_{\infty} \leq B_1\frac{\sqrt{\log p}}{\sqrt{N}} + B_2\frac{p^{\frac{2}{q}} \log p}{N^{1-\frac{2}{q}}} + B_3\frac{p^{\frac{2}{q}} \sqrt{\log p}}{\sqrt{N}} \\
        & \max_{j \in \mathcal{J}} \left|\frac{\boldsymbol{X}_{\boldsymbol{\beta}_0,S_j^c-\{j\}}^{\top}\boldsymbol{\eta}_{\boldsymbol{\beta}_0,j}}{N}\right|_{\infty} \leq B_1^{\prime}\frac{\sqrt{\log p}}{\sqrt{N}} + B_2^{\prime}\frac{p^{\frac{2}{q}} \log p}{N^{1-\frac{2}{q}}} + B_3^{\prime}\frac{p^{\frac{2}{q}} \sqrt{\log p}}{\sqrt{N}},
    \end{aligned}
\end{equation}
where
$B_1, B_2, B_3$, $B_1^{\prime}, B_2^{\prime}, B_3^{\prime}$ are strictly positive constants.
\end{lemma}
\begin{proof}
Recall that 
\[
\boldsymbol{\eta}_{\boldsymbol{\beta}_0,j} = \boldsymbol{X}_{\boldsymbol{\beta}_0,j} - \boldsymbol{X}_{\boldsymbol{\beta}_0,-j} \boldsymbol{\gamma}_{\boldsymbol{\beta}_0, j},
\]
and that by the first order moment of \eqref{populationnodewise}, we have
\[
\mathbbm{E}\left[\boldsymbol{X}_{\boldsymbol{\beta}_0,-j}^{\top} \left( \boldsymbol{X}_{\boldsymbol{\beta}_0,j} - \boldsymbol{X}_{\boldsymbol{\beta}_0,-j} \boldsymbol{\gamma}_{\boldsymbol{\beta}_0, j} \right) \right] = \boldsymbol{0}.
\]
This immediately implies that
\[
\mathbbm{E}\left[\boldsymbol{X}_{\boldsymbol{\beta}_0,S_j-\{j\}}^{\top} \boldsymbol{\eta}_{\boldsymbol{\beta}_0,j}\right] = \boldsymbol{0} \quad \text{and} \quad
\mathbbm{E}\left[\boldsymbol{X}_{\boldsymbol{\beta}_0,S_j^c-\{j\}}^{\top} \boldsymbol{\eta}_{\boldsymbol{\beta}_0,j}\right] = \boldsymbol{0}.
\]
% To see this more explicitly, we write
% \[
% \begin{aligned}
% \mathbbm{E}\left[\boldsymbol{X}_{\boldsymbol{\beta}_0,S_j-\{j\}}^{\top}\boldsymbol{\eta}_{\boldsymbol{\beta}_0,j}\right] 
% &= \mathbbm{E}\left[\boldsymbol{X}_{\boldsymbol{\beta}_0,S_j-\{j\}}^{\top}\left(\boldsymbol{X}_{\boldsymbol{\beta}_0,j} - \boldsymbol{X}_{\boldsymbol{\beta}_0,-j} \boldsymbol{\gamma}_{\boldsymbol{\beta}_0, j}\right)\right] \\
% &= \mathbbm{E}\left[\boldsymbol{X}_{\boldsymbol{\beta}_0,S_j-\{j\}}^{\top}\boldsymbol{X}_{\boldsymbol{\beta}_0,j}\right] 
%    - \mathbbm{E}\left[\boldsymbol{X}_{\boldsymbol{\beta}_0,S_j-\{j\}}^{\top} \boldsymbol{X}_{\boldsymbol{\beta}_0,-j} \boldsymbol{\gamma}_{\boldsymbol{\beta}_0, j} \right] \\
% &= \mathbbm{E}\left[\boldsymbol{X}_{\boldsymbol{\beta}_0,S_j-\{j\}}^{\top}\boldsymbol{X}_{\boldsymbol{\beta}_0,j}\right] 
%    - \mathbbm{E}\left[\boldsymbol{X}_{\boldsymbol{\beta}_0,S_j-\{j\}}^{\top}\boldsymbol{X}_{\boldsymbol{\beta}_0,j}\right] \\
% &= \boldsymbol{0}.
% \end{aligned}
% \]
% By a similar argument, we also have
% \[
% \mathbbm{E}\left[\boldsymbol{X}_{\boldsymbol{\beta}_0,S_j^c-\{j\}}^{\top} \boldsymbol{\eta}_{\boldsymbol{\beta}_0,j}\right] = \boldsymbol{0}.
% \]

Moreover, for any $j \in \mathcal{J}$, we can write
\[
\frac{1}{N} \boldsymbol{X}_{\boldsymbol{\beta}_0,S_j-\{j\}}^{\top}\boldsymbol{\eta}_{\boldsymbol{\beta}_0,j} 
= \left( \frac{1}{N} \sum_{i=1}^{N} X_{\boldsymbol{\beta}_0,S_j-\{j\},[i,1]} \eta_{i,\boldsymbol{\beta}_0,j}, \ldots, \frac{1}{N} \sum_{i=1}^{N} X_{\boldsymbol{\beta}_0,S_j-\{j\},[i,|S_j-\{j\}|]} \eta_{i,\boldsymbol{\beta}_0,j} \right)^{\top},
\]
where $X_{\boldsymbol{\beta}_0,S_j-\{j\},[i,l]}$ denotes the $(i,l)$-th entry of $\boldsymbol{X}_{\boldsymbol{\beta}_0,S_j-\{j\}}$, and $\eta_{i,\boldsymbol{\beta}_0,j}$ is the $i$-th element of $\boldsymbol{\eta}_{\boldsymbol{\beta}_0,j}$.

To apply Theorem \ref{key}, we have to bound the following terms. First, we have 
$$
\begin{aligned}
& \max_{l \in [|S_j-\{j\}|]} \mathbbm{E}\left(\left|X_{\boldsymbol{\beta}_0,S_j-\{j\},[i,k_l]}\eta_{i,\boldsymbol{\beta}_0,j}\right|^2\right) \\
& \leq \max_{l \in [|S_j-\{j\}|]} \mathbbm{E}\left(\left|X_{\boldsymbol{\beta}_0,S_j-\{j\},[i,k_l]}\right|^4\right)^{\frac{1}{2}} \max_{l \in [|S_j-\{j\}|]} \mathbbm{E}\left(\left|\eta_{i,\boldsymbol{\beta}_0,j}\right|^4\right)^{\frac{1}{2}} \\
& \lesssim \sqrt{K_0C_{\eta}},
\end{aligned}
$$
where we use the inequality of Cauchy–Schwarz for the first inequality and Assumptions  \ref{aseign_2} (ii) and (iii) for the second. Furthermore, we have 
$$
\begin{aligned}
& \max_{l \in [|S_j-\{j\}|]} \mathbbm{E}\left(\left|X_{\boldsymbol{\beta}_0,S_j-\{j\},[i,k_l]}\eta_{i,\boldsymbol{\beta}_0,j}\right|^{\frac{q}{2}}\right)^{\frac{2}{q}} \\
& \leq \max_{l \in [|S_j-\{j\}|]} \mathbbm{E}\left(\left|\boldsymbol{X}_{\boldsymbol{\beta}_0,S_j-\{j\},[i,k_l]}\right|^q\right)^{\frac{1}{q}} \mathbbm{E}\left(\left|\eta_{i,\boldsymbol{\beta}_0,j}\right|^q\right)^{\frac{1}{q}} \\
& \lesssim (K_0C_{\eta})^{\frac{1}{q}}.
\end{aligned}
$$
Based on the argument in \eqref{holder}, we have
$$
\begin{aligned}
\mathbbm{E}\left(\max_{l \in [|S_j-\{j\}|]}\left|\boldsymbol{X}_{\boldsymbol{\beta}_0,S_j-\{j\},[i,k_l]}\eta_{i,\boldsymbol{\beta}_0,j}\right|^2\right)^{\frac{1}{2}} & \leq 
\mathbbm{E}\left(\max_{l \in [|S_j-\{j\}|]}\left|\boldsymbol{X}_{\boldsymbol{\beta}_0,S_j-\{j\},[i,k_l]}\eta_{i,\boldsymbol{\beta}_0,j}\right|^{\frac{q}{2}}\right)^{\frac{2}{q}}\\
& \leq \left((|S_j-\{j\}|)\max_{l \in [|S_j-\{j\}|]} \mathbbm{E}\left(\left|\boldsymbol{X}_{\boldsymbol{\beta}_0,S_j-\{j\},[i,k_l]}\eta_{i,\boldsymbol{\beta}_0,j}\right|^{\frac{q}{2}}\right)\right)^{\frac{2}{q}} \\
& \leq (|S_j-\{j\}|)^{\frac{2}{q}}(K_0C_{\eta})^{\frac{1}{q}}\\
& \leq p^{\frac{2}{q}}(K_0C_{\eta})^{\frac{1}{q}}.
\end{aligned}
$$
Hence, by Theorem \ref{key}, for every $j \in \mathcal{J}$, we have probability $1-o(1)$ such that
$$
\left|\frac{\boldsymbol{X}_{\boldsymbol{\beta}_0,S_j-\{j\}}^{\top}\boldsymbol{\eta}_{\boldsymbol{\beta}_0,j}}{N}\right|_{\infty} \leq B_1\frac{\sqrt{\log p}}{\sqrt{N}} + B_2\frac{p^{\frac{2}{q}} \log p}{N^{1-\frac{2}{q}}} + B_3\frac{p^{\frac{2}{q}} \sqrt{\log p}}{\sqrt{N}},
$$
where $B_1, B_2, B_3$ are constants. By a similar argument, we also have 
$$
\left|\frac{\boldsymbol{X}_{\boldsymbol{\beta}_0,S_j^c-\{j\}}^{\top}\boldsymbol{\eta}_{\boldsymbol{\beta}_0,j}}{N}\right|_{\infty} \leq B_1^{\prime}\frac{\sqrt{\log p}}{\sqrt{N}} + B_2^{\prime}\frac{p^{\frac{2}{q}} \log p}{N^{1-\frac{2}{q}}} + B_3^{\prime}\frac{p^{\frac{2}{q}} \sqrt{\log p}}{\sqrt{N}},
$$
where $B_1^{\prime}, B_2^{\prime}, B_3^{\prime}$ are constants. Since the size of $\mathcal{J}$ is fixed, the first statement is obtained. The second statement can be obtained by similar arguments.
\end{proof}
For a vector $\boldsymbol{\delta}_{j}=\left(\delta_{j1},\ldots,\delta_{jj}, \ldots, \delta_{jp}\right)^{\top} \in \mathbb{R}^{p}$, let $\boldsymbol{\delta}_{j,-j}\in \mathbb{R}^{p-1}$ be $\boldsymbol{\delta}_{j}$ with $j$-th element removed, let 
$$
\boldsymbol{\delta}_{j,S_j-\{j\}} = \left(\delta_{jk}\right)_{k \neq j, k \in S_j} \in \mathbb{R}^{|S_j-\{j\}|},
$$ and 
$$\boldsymbol{\delta}_{j,S_j^c-\{j\}} = \left(\delta_{jk}\right)_{k \neq j, k \in S^c_j} \in \mathbb{R}^{|S^c_j-\{j\}|}.
$$ 
For a matrix $\boldsymbol{\Sigma} \in \mathbb{R}^{p \times p}$, let $\boldsymbol{\Sigma}_{-j,-j} \in \mathbb{R}^{(p-1) \times (p-1)}$ be $\boldsymbol{\Sigma}$ with $j$-th row and $j$-th column removed.
The population level of compatibility condition is defined as, for each $j$
$$
\phi^2\left(L, S_j-\{j\}\right)=\min _{\boldsymbol{\delta}_{j,-j}}\left\{ \frac{\left|S_j-\{j\}\right|\left|\boldsymbol{\delta}_{j,-j}^{\top} \boldsymbol{\Sigma}_{\boldsymbol{\beta}_0,-j, -j} \boldsymbol{\delta}_{j,-j}\right|}{\left|\boldsymbol{\delta}_{j,S_j-\{j\}}\right|_1^2}: \quad \left|\boldsymbol{\delta}_{j,S_j^c-\{j\}}\right|_1 \leq L \left|\boldsymbol{\delta}_{j,S_j-\{j\}}\right|_1\right\}.
$$
The empirical version of the compatibility condition is, for each $j$,
$$
\hat{\phi}^2\left(L, S_j-\{j\}\right)=\min _{\boldsymbol{\delta}_{j,-j}}\left\{ \frac{\left|S_j-\{j\}\right|\left|\boldsymbol{\delta}_{j,-j}^{\top} \widehat{\boldsymbol{\Sigma}}_{\widehat{\boldsymbol{\beta}},-j, -j} \boldsymbol{\delta}_{j,-j}\right|}{\left|\boldsymbol{\delta}_{j,S_j-\{j\}}\right|_1^2}: \quad \left|\boldsymbol{\delta}_{j,S_j^c-\{j\}}\right|_1 \leq L \left|\boldsymbol{\delta}_{j,S_j-\{j\}}\right|_1\right\}.
$$
\begin{lemma}\label{lneed4}
Under the conditions of Theorem \ref{the} and Assumptions \ref{asapprerror}, \ref{aseign_2}, and \ref{asrate}, it holds with probability $ 1-o(1)$ that, for each $j \in [p]$, 
$$
\hat{\phi}^2\left(L, S_j-\{j\}\right) \geq \phi^2\left(L, S_j-\{j\}\right) / 2.
$$
\end{lemma}
\begin{proof}
First, we define 
$$
\widehat{\boldsymbol{\Sigma}}_{\boldsymbol{\beta}_0}:= \frac{1}{N} \sum_{i=1}^N \frac{\exp(\boldsymbol{X}_i^{\top}\boldsymbol{\beta}_0+E_i)}{ \big(1+\exp(\boldsymbol{X}_i^{\top}\boldsymbol{\beta}_0 +E_i)\big)^2 } \boldsymbol{X}_{i} \boldsymbol{X}_{i}^{\top}.
$$
Under the constraints $\left|\boldsymbol{\delta}_{j,S_j-\{j\}}\right|_1 = 1$ and $\left|\boldsymbol{\delta}_{j,S_j^c-\{j\}}\right|_1 \leq L$, it follows immediately that
\[
\left|\boldsymbol{\delta}_{j,-j}\right|_1 
= \left|\boldsymbol{\delta}_{j,S_j-\{j\}}\right|_1 + \left|\boldsymbol{\delta}_{j,S_j^c-\{j\}}\right|_1 
\leq (L+1)\left|\boldsymbol{\delta}_{j,S_j-\{j\}}\right|_1.
\]
Then, we have
\begin{equation}
\begin{aligned}
\left|\boldsymbol{\delta}_{j,-j}^{\top} \widehat{\boldsymbol{\Sigma}}_{\widehat{\boldsymbol{\beta}},-j,-j} \boldsymbol{\delta}_{j,-j}\right| 
& \geq 
\left| \boldsymbol{\delta}_{j,-j}^{\top} \boldsymbol{\Sigma}_{\boldsymbol{\beta}_0,-j,-j} \boldsymbol{\delta}_{j,-j} \right|
    - \left| \boldsymbol{\delta}_{j,-j}^{\top} 
        \left(\widehat{\boldsymbol{\Sigma}}_{\widehat{\boldsymbol{\beta}},-j,-j} - 
              \boldsymbol{\Sigma}_{\boldsymbol{\beta}_0,-j,-j}\right) 
        \boldsymbol{\delta}_{j,-j} \right| \\
& \geq 
\left| \boldsymbol{\delta}_{j,-j}^{\top} \boldsymbol{\Sigma}_{\boldsymbol{\beta}_0,-j,-j} \boldsymbol{\delta}_{j,-j} \right|
    - \left|\boldsymbol{\delta}_{j,-j}\right|_1^2 
      \left|\widehat{\boldsymbol{\Sigma}}_{\widehat{\boldsymbol{\beta}},-j,-j} - 
      \boldsymbol{\Sigma}_{\boldsymbol{\beta}_0,-j,-j}\right|_{\infty} \\
& \geq 
\left| \boldsymbol{\delta}_{j,-j}^{\top} \boldsymbol{\Sigma}_{\boldsymbol{\beta}_0,-j,-j} \boldsymbol{\delta}_{j,-j} \right|
    - (L+1)^2 \left|\boldsymbol{\delta}_{j,S_j-\{j\}}\right|_1^2
      \left|\widehat{\boldsymbol{\Sigma}}_{\widehat{\boldsymbol{\beta}},-j,-j} - 
      \boldsymbol{\Sigma}_{\boldsymbol{\beta}_0,-j,-j}\right|_{\infty},
\end{aligned}
\end{equation}
Dividing both sides by $\left|\boldsymbol{\delta}_{j,S_j-\{j\}}\right|_1^2$, we obtain
\begin{equation}\label{com1}
\frac{\left|\boldsymbol{\delta}_{j,-j}^{\top} 
\widehat{\boldsymbol{\Sigma}}_{\widehat{\boldsymbol{\beta}},-j,-j}
\boldsymbol{\delta}_{j,-j}\right|}
{\left|\boldsymbol{\delta}_{j,S_j-\{j\}}\right|_1^2} 
\geq 
\frac{\left|\boldsymbol{\delta}_{j,-j}^{\top} 
\boldsymbol{\Sigma}_{\boldsymbol{\beta}_0,-j,-j} 
\boldsymbol{\delta}_{j,-j}\right|}
{\left|\boldsymbol{\delta}_{j,S_j-\{j\}}\right|_1^2}
- (L+1)^2 
\left|\widehat{\boldsymbol{\Sigma}}_{\widehat{\boldsymbol{\beta}},-j,-j} - 
\boldsymbol{\Sigma}_{\boldsymbol{\beta}_0,-j,-j}\right|_{\infty}.
\end{equation}
Next, we bound the second term on the right-hand side of \eqref{com1}.  
Note that
\begin{equation}\label{com2}
\left|\widehat{\boldsymbol{\Sigma}}_{\widehat{\boldsymbol{\beta}},-j,-j} - 
\boldsymbol{\Sigma}_{\boldsymbol{\beta}_0,-j,-j}\right|_{\infty}
\leq 
\left|\widehat{\boldsymbol{\Sigma}}_{\widehat{\boldsymbol{\beta}}} - 
      \widehat{\boldsymbol{\Sigma}}_{\boldsymbol{\beta}_0}\right|_{\infty} 
+ \left|\widehat{\boldsymbol{\Sigma}}_{\boldsymbol{\beta}_0} - 
      \boldsymbol{\Sigma}_{\boldsymbol{\beta}_0}\right|_{\infty}.
\end{equation}
For $\left|\widehat{\boldsymbol{\Sigma}}_{\boldsymbol{\beta}_0} - \boldsymbol{\Sigma}_{\boldsymbol{\beta}_0}\right|_{\infty}$, 
Lemma~\ref{use sample1} implies that, with probability at least $1 - \frac{C_4}{\log p}$,
\begin{equation}\label{com3}
\left|\widehat{\boldsymbol{\Sigma}}_{\boldsymbol{\beta}_0} - \boldsymbol{\Sigma}_{\boldsymbol{\beta}_0}\right|_{\infty}
\leq 
B_1 \frac{\sqrt{\log p}}{\sqrt{N}}
+ B_2 \frac{p^{2/q} \log p}{N^{1 - 2/q}}
+ B_3 \frac{p^{2/q} \sqrt{\log p}}{\sqrt{N}},
\end{equation}
where $B_1$, $B_2$, $B_3$, and $C_4$ are constants. For the remaining term, $\left|\widehat{\boldsymbol{\Sigma}}_{\widehat{\boldsymbol{\beta}}} - \widehat{\boldsymbol{\Sigma}}_{\boldsymbol{\beta}_0}\right|_{\infty}$, 
by definition, we have
\begin{equation}\label{com4}
\begin{aligned}
\left|\widehat{\boldsymbol{\Sigma}}_{\widehat{\boldsymbol{\beta}}} - 
\widehat{\boldsymbol{\Sigma}}_{\boldsymbol{\beta}_0}\right|_{\infty}
& = 
\left|
\frac{1}{N}\sum_{i=1}^N
\frac{\exp(\boldsymbol{X}_i^{\top}\widehat{\boldsymbol{\beta}})}
{\big(1+\exp(\boldsymbol{X}_i^{\top}\widehat{\boldsymbol{\beta}})\big)^2}  
\boldsymbol{X}_{i}\boldsymbol{X}_{i}^{\top}
-\frac{\exp(\boldsymbol{X}_i^{\top}\boldsymbol{\beta}_0 + E_i)}
     {\big(1+\exp(\boldsymbol{X}_i^{\top}\boldsymbol{\beta}_0 + E_i)\big)^2}
\boldsymbol{X}_{i}\boldsymbol{X}_{i}^{\top}
\right|_{\infty} \\[4pt]
& = 
\max_{m,k \in [p]}\left|
\frac{1}{N}\sum_{i=1}^N
\frac{\exp(\theta_i)(1-\exp(\theta_i))}
     {\big(1+\exp(\theta_i)\big)^3}
\left(\boldsymbol{X}_i^{\top}\widehat{\boldsymbol{\beta}} 
- \boldsymbol{X}_i^{\top}\boldsymbol{\beta}_0 - E_i\right)
X_{i,m} X_{i,k}
\right|,
\end{aligned}
\end{equation}
where the first equality follows from the mean-value theorem and $\theta_i$ is some point lying between $\boldsymbol{X}_i^{\top}\widehat{\boldsymbol{\beta}}$ and $\boldsymbol{X}_i^{\top}\boldsymbol{\beta}_0 + E_i$. 
Next, notice that
\[
\begin{aligned}
& \max_{m,k}\left|
\frac{1}{N}\sum_{i=1}^N
\frac{\exp(\theta_i)(1-\exp(\theta_i))}
     {\big(1+\exp(\theta_i)\big)^3}
\left(\boldsymbol{X}_i^{\top}\widehat{\boldsymbol{\beta}} 
- \boldsymbol{X}_i^{\top}\boldsymbol{\beta}_0 - E_i\right)
X_{i,m} X_{i,k}
\right| \\
&\leq \sqrt{ \frac{ \sum_{i=1}^{N} \left( \boldsymbol{X}_i^{\top} \widehat{\boldsymbol{\beta}} - \boldsymbol{X}_i^{\top} \boldsymbol{\beta}_0 -E_i \right)^2 }{N} } 
      \max_{m,k} \sqrt{ \frac{ \sum_{i=1}^{N} X_{i,m}^2 X_{i,k}^2 }{N} } \\
& \leq \left\|\boldsymbol{X}_i^{\top}\widehat{\boldsymbol{\beta}} 
- \boldsymbol{X}_i^{\top}\boldsymbol{\beta}_0 - E_i\right\|_N \, O_P(1)\\
&= O_P\left( \sqrt{\lambda^2 s^2_{\boldsymbol{\beta}_0}}\right).
\end{aligned}
\]
The first inequality follows from the Cauchy--Schwarz inequality and the bound 
$$\left|\frac{\exp(\theta_i)(1-\exp(\theta_i))}{(1+\exp(\theta_i))^3}\right| \leq 1.$$ 
The second inequality uses an argument similar to Lemma \ref{use sample2} to control 
$$\max_{m,k} \sqrt{\frac{\sum_{i=1}^{N} X_{i,m}^2 X_{i,k}^2}{N}}.$$ The last equality follows from Lemma~\ref{lneed1}.
% By Assumption \ref{aseign_2}, we know $\left|\boldsymbol{E}\right|_2 = o_P(1)$, so for $i \in [N]$, $E_i = o_P(1)$. By Assumption \ref{aseff}, Theorem \ref{the} and \eqref{maxmax}, one can see that
% $$
% \left|\boldsymbol{X}_i \right|_{\infty}\Omega\left(\widehat{\boldsymbol{\beta}} - \boldsymbol{\beta}_0\right) = o_P(1).
% $$
% Hence 
% \begin{equation}\label{keyweightdiff}
% \left|\boldsymbol{X}_i^{\top}\widehat{\boldsymbol{\beta}} - \boldsymbol{X}_i^{\top}\boldsymbol{\beta}_0 - \boldsymbol{E}_i\right| = o_P(1).
% \end{equation}
% By Cauchy–Schwarz inequality, we have $\max_{j,k}\mathbbm{E}\left(X_{i,j} X_{i,k}\right) \leq \max_{j,k}\sqrt{\mathbbm{E}\left(X_{i,j}^2\right)}\sqrt{\mathbbm{E}\left(X_{i,k}^2\right)} \leq \max_j \mathbbm{E}\left(X_{i,j}^2\right) $ is bounded by Assumption \ref{asX}. 
It follows that
\begin{equation}\label{com7}
    \left|\widehat{\boldsymbol{\Sigma}}_{\widehat{\boldsymbol{\beta}}} - \widehat{\boldsymbol{\Sigma}}_{\boldsymbol{\beta}_0}\right|_{\infty} = o_P(1).
\end{equation}
Combining \eqref{com3} and \eqref{com7}, we see that $\left|\widehat{\boldsymbol{\Sigma}}_{\widehat{\boldsymbol{\beta}},-j,-j} - \boldsymbol{\Sigma}_{\boldsymbol{\beta}_0,-j,-j}\right|_{\infty} = o_{P}(1)$. Hence, with probability $1-o(1)$, it holds that $(L+1)^2\left|\widehat{\boldsymbol{\Sigma}}_{\widehat{\boldsymbol{\beta}},-j,-j} - \boldsymbol{\Sigma}_{\boldsymbol{\beta}_0,-j,-j}\right|_{\infty} \leq \phi^2\left(L, S_j-\{j\}\right) / 2$ since $\phi^2(L,S_j-\{j\})$ is bounded away from $0$ by Lemma \ref{lneed5}, and it follows that
$$
\hat{\phi}^2\left(L, S_j-\{j\}\right) \geq \phi^2\left(L, S_j-\{j\}\right)-\phi^2\left(L, S_j-\{j\}\right)/2
$$
holds with probability $1-o(1)$.

\end{proof}

\begin{lemma}\label{lneed5}
    Suppose that Assumption \ref{aseign} holds. For every $j \in [p]$, it holds that $\phi^2(L,S_j-\{j\}) \geq \gamma_{\mathrm{H}}$.
\end{lemma}
\begin{proof}
By definition of $\boldsymbol{\Sigma}_{\boldsymbol{\beta}_0,-j,-j}$, we have that
\[
\gamma_{\mathrm{H}} \leq \min_{|u|_2 = 1, u \in \mathbb{R}^{p}} 
u^{\top} \mathbbm{E}\left[ \frac{\exp(\boldsymbol{X}_i^{\top}\boldsymbol{\beta}_0 + E_i)}{ \big(1+\exp(\boldsymbol{X}_i^{\top}\boldsymbol{\beta}_0 + E_i)\big)^2 } \boldsymbol{X}_i \boldsymbol{X}_i^{\top} \right] u 
\leq \min_{|v|_2 = 1, v \in \mathbb{R}^{p-1}} v^{\top} \boldsymbol{\Sigma}_{\boldsymbol{\beta}_0,-j,-j} v.
\]
To see the second inequality, note that we can set the $j$-th element of $u$ to $0$ since $\boldsymbol{\Sigma}_{\boldsymbol{\beta}_0}$ is symmetric. 
Next, by the inequality of Cauchy-Schwarz, we observe that
\[
\left|S_j-\{j\}\right| \, \boldsymbol{\delta}_{j,-j}^{\top} \boldsymbol{\delta}_{j,-j} \geq \left|S_j-\{j\}\right| \, \boldsymbol{\delta}_{j,S_j-\{j\}}^{\top} \boldsymbol{\delta}_{j,S_j-\{j\}} \geq \left|\boldsymbol{\delta}_{j,S_j-\{j\}}\right|_1^2.
\]
Hence, for every $j \in [p]$, we have
\[
\frac{\left|S_j-\{j\}\right| \, \left|\boldsymbol{\delta}_{j,-j}^{\top} \boldsymbol{\Sigma}_{\boldsymbol{\beta}_0,-j,-j} \boldsymbol{\delta}_{j,-j}\right|}{\left|\boldsymbol{\delta}_{j,S_j-\{j\}}\right|_1^2} 
= \left| \frac{\sqrt{|S_j-\{j\}|}}{|\boldsymbol{\delta}_{j,S_j-\{j\}}|_1} \, \boldsymbol{\delta}_{j,-j}^{\top} \boldsymbol{\Sigma}_{\boldsymbol{\beta}_0,-j,-j} \boldsymbol{\delta}_{j,-j} \, \frac{\sqrt{|S_j-\{j\}|}}{|\boldsymbol{\delta}_{j,S_j-\{j\}}|_1} \right|
\geq \gamma_{\mathrm{H}},
\]
where the last inequality follows because 
\[
\frac{|S_j-\{j\}| \, \boldsymbol{\delta}_{j,-j}^{\top} \boldsymbol{\delta}_{j,-j}}{|\boldsymbol{\delta}_{j,S_j-\{j\}}|_1^2} \geq 1.
\]
Finally, by the definition of $\phi^2(L,S_j-\{j\})$, this completes the proof.
\end{proof}

\begin{lemma}\label{weightbound}
     Under the conditions of Theorem \ref{the} and Assumptions \ref{asapprerror}, \ref{aseign_2}, and \ref{asrate}, we have 
$$
\frac{1}{N} \sum_{i=1}^N\left|\frac{w_{i, \widehat{\boldsymbol{\beta}}}^2-w_{i, \boldsymbol{\beta}_0}^2}{w_{i, \boldsymbol{\beta}_0}^2}\right|^2 \lesssim \lambda^2 s^2_{\boldsymbol{\beta}_0},
$$
and 
   $$
\frac{1}{N} \sum_{i=1}^N\left|\frac{w_{i, \widehat{\boldsymbol{\beta}}}^2-w_{i, \boldsymbol{\beta}_0}^2}{w_{i, \widehat{\boldsymbol{\beta}}}^2}\right|^2 \lesssim \lambda^2 s^2_{\boldsymbol{\beta}_0},
$$
hold with probability $1-o(1)$. Furthermore, we have 
\[
\frac{1}{N}\sum_{i=1}^N \left|
\frac{w_{i,\widehat{\boldsymbol{\beta}}}^2 - w_{i,\boldsymbol{\beta}_0}^2}{w_{i,\boldsymbol{\beta}_0}^2}
\right| \leq \sqrt{\frac{1}{N}\sum_{i=1}^N \left|
\frac{w_{i,\widehat{\boldsymbol{\beta}}}^2 - w_{i,\boldsymbol{\beta}_0}^2}{w_{i,\boldsymbol{\beta}_0}^2}
\right|^2} \lesssim \lambda s_{\boldsymbol{\beta}_0},
\]
and
\[
\frac{1}{N} \sum_{i=1}^N\left|\frac{w_{i, \widehat{\boldsymbol{\beta}}}^2-w_{i, \boldsymbol{\beta}_0}^2}{w_{i, \widehat{\boldsymbol{\beta}}}^2}\right| \leq \sqrt{\frac{1}{N} \sum_{i=1}^N\left|\frac{w_{i, \widehat{\boldsymbol{\beta}}}^2-w_{i, \boldsymbol{\beta}_0}^2}{w_{i, \widehat{\boldsymbol{\beta}}}^2}\right|^2} \lesssim \lambda s_{\boldsymbol{\beta}_0}.
\]
also hold with probability $1-o(1)$.
\end{lemma}
\begin{proof}
Note that
\[
\begin{aligned}
\left|
\frac{w_{i,\widehat{\boldsymbol{\beta}}}^2 - w_{i,\boldsymbol{\beta}_0}^2}{w_{i,\boldsymbol{\beta}_0}^2}
\right|
&=
\left|
\exp\!\big(\boldsymbol{X}_i^{\top}\widehat{\boldsymbol{\beta}} - \boldsymbol{X}_i^{\top}\boldsymbol{\beta}_0 - E_i\big)
\frac{\big(1+\exp(\boldsymbol{X}_i^{\top}\boldsymbol{\beta}_0 + E_i)\big)^2}{\big(1+\exp(\boldsymbol{X}_i^{\top}\widehat{\boldsymbol{\beta}})\big)^2} - 1
\right|.
\end{aligned}
\]
To simplify the analysis, define, for $x_0, x_1 \in \mathbb{R}$,
\[
g(x_0,x_1) := \exp(x_1 - x_0)\frac{(1+\exp(x_0))^2}{(1+\exp(x_1))^2}.
\]
Note that $g(x,x) = 1$ for all $x$. Fix $x_0$ and write $x_1 = x_0 + t$. Then
\[
\log g(x_0, x_0+t) = t + 2\log(1+\exp(x_0)) - 2\log(1+\exp(x_0+t)).
\]
Taking the derivative with respect to $t$ gives
\[
\frac{d}{dt}\log g(x_0, x_0+t) = 1 - 2\frac{\exp(x_0+t)}{1+\exp(x_0+t)} = \frac{1 - \exp(x_0+t)}{1 + \exp(x_0+t)},
\]
from which it follows that
\[
\left|\frac{d}{dt}\log g(x_0,x_0+t)\right| \le 1.
\]
By $\log g(x_0, x_0) = 0$ and the mean-value theorem, for any $x_1$,
\[
|\log g(x_0,x_1)| \le |x_1 - x_0|.
\]
Using the elementary inequality $|\exp(u) - 1| \le \exp(|u|) - 1$ for any $u \in \mathbb{R}$ and setting $u = \log g(x_0,x_1)$, we obtain
\[
|g(x_0,x_1) - 1| \le \exp(|x_1 - x_0|) - 1.
\]
Substituting $x_1 = \boldsymbol{X}_i^{\top}\widehat{\boldsymbol{\beta}}$ and $x_0 = \boldsymbol{X}_i^{\top}\boldsymbol{\beta}_0 + E_i$, we have
\[
\left|
\frac{w_{i,\widehat{\boldsymbol{\beta}}}^2 - w_{i,\boldsymbol{\beta}_0}^2}{w_{i,\boldsymbol{\beta}_0}^2}
\right|
\le \exp\Big(\big|\boldsymbol{X}_i^{\top}\widehat{\boldsymbol{\beta}} - \boldsymbol{X}_i^{\top}\boldsymbol{\beta}_0 - E_i\big|\Big) - 1.
\]
Averaging over $i$ and squaring both sides gives
\[
\frac{1}{N}\sum_{i=1}^N \left|
\frac{w_{i,\widehat{\boldsymbol{\beta}}}^2 - w_{i,\boldsymbol{\beta}_0}^2}{w_{i,\boldsymbol{\beta}_0}^2}
\right|^2
\le \frac{1}{N}\sum_{i=1}^N \left(\exp\Big(\big|\boldsymbol{X}_i^{\top}\widehat{\boldsymbol{\beta}} - \boldsymbol{X}_i^{\top}\boldsymbol{\beta}_0 - E_i\big|\Big) - 1\right)^2.
\]

Next, observe that
\[
\begin{aligned}
\max_{i\in[N]} \big|\boldsymbol{X}_i^{\top}\widehat{\boldsymbol{\beta}} - \boldsymbol{X}_i^{\top}\boldsymbol{\beta}_0 - E_i\big|
& \le \max_i \big|\boldsymbol{X}_i^{\top}\widehat{\boldsymbol{\beta}} - \boldsymbol{X}_i^{\top}\boldsymbol{\beta}_0\big| + \max_i |E_i| \\
& \le G^* \max_i |\boldsymbol{X}_i|_\infty \, \Omega(\widehat{\boldsymbol{\beta}} - \boldsymbol{\beta}_0) + \max_i |E_i| \\
& = O_P\Big( 1/N^{1.5/q} + (Np \log p)^{1/q} \lambda s_{\boldsymbol{\beta}_0} \Big),
\end{aligned}
\]
where we use \eqref{maxmax} to get the rate of $\max_i |\boldsymbol{X}_i|_\infty$, 
Theorem \ref{the}, Assumption \ref{asapprerror} and Assumption \ref{aseign_2} (i) which justifies $\max_i E_i  = o_P(N^{-1.5/q})$.
Then by Assumption~\ref{asrate}, as $N, p \to \infty$, this maximum is $o_P(1)$, and hence smaller than the fixed constant $1$ with probability tending to $1$. Let
\[
\mathcal{E} = \Big\{ \max_{i \in [N]} |\boldsymbol{X}_i^{\top}\widehat{\boldsymbol{\beta}} - \boldsymbol{X}_i^{\top}\boldsymbol{\beta}_0 - E_i| \le 1 \Big\}.
\]
On the event $\mathcal{E}$, the function $\exp(x)-1$ can be linearized, giving
\[
\exp\big(|\boldsymbol{X}_i^{\top}\widehat{\boldsymbol{\beta}} - \boldsymbol{X}_i^{\top}\boldsymbol{\beta}_0 - E_i|\big) - 1 \le e \, |\boldsymbol{X}_i^{\top}\widehat{\boldsymbol{\beta}} - \boldsymbol{X}_i^{\top}\boldsymbol{\beta}_0 - E_i|.
\]
Consequently,
\[
\begin{aligned}
\frac{1}{N}\sum_{i=1}^N \left|
\frac{w_{i,\widehat{\boldsymbol{\beta}}}^2 - w_{i,\boldsymbol{\beta}_0}^2}{w_{i,\boldsymbol{\beta}_0}^2}
\right|^2
& \lesssim \frac{1}{N}\sum_{i=1}^N \big|\boldsymbol{X}_i^{\top}\widehat{\boldsymbol{\beta}} - \boldsymbol{X}_i^{\top}\boldsymbol{\beta}_0 - E_i\big|^2 \\
& \lesssim \frac{1}{N}\sum_{i=1}^N \big|\boldsymbol{X}_i^{\top}(\widehat{\boldsymbol{\beta}} - \boldsymbol{\beta}_0)\big|^2 + \frac{1}{N}\sum_{i=1}^N |E_i|^2 \\
& = O_P\left( \lambda^2 s_{\boldsymbol{\beta}_0}^2\right),
\end{aligned}
\]
where the last step follows from Lemma~\ref{lneedextra}, ~\eqref{e2e1}, and given that $s_{\boldsymbol{\beta}_0} \geq 1$.

Since $P(\mathcal{E}) \to 1$, we conclude that
\[
\frac{1}{N}\sum_{i=1}^N \left|
\frac{w_{i,\widehat{\boldsymbol{\beta}}}^2 - w_{i,\boldsymbol{\beta}_0}^2}{w_{i,\boldsymbol{\beta}_0}^2}
\right|^2 \lesssim \lambda^2 s_{\boldsymbol{\beta}_0}^2
\]
holds with probability tending to 1. By a similar argument, we have 
\[
\frac{1}{N} \sum_{i=1}^N\left|\frac{w_{i, \widehat{\boldsymbol{\beta}}}^2-w_{i, \boldsymbol{\beta}_0}^2}{w_{i, \widehat{\boldsymbol{\beta}}}^2}\right|^2  \lesssim \lambda^2 s^2_{\boldsymbol{\beta}_0}
\]
holds with probability going to $1$. By the Cauchy–Schwarz inequality, we also have the following hold with probability going to $1$
\[
\frac{1}{N}\sum_{i=1}^N \left|
\frac{w_{i,\widehat{\boldsymbol{\beta}}}^2 - w_{i,\boldsymbol{\beta}_0}^2}{w_{i,\boldsymbol{\beta}_0}^2}
\right| \leq \sqrt{\frac{1}{N}\sum_{i=1}^N \left|
\frac{w_{i,\widehat{\boldsymbol{\beta}}}^2 - w_{i,\boldsymbol{\beta}_0}^2}{w_{i,\boldsymbol{\beta}_0}^2}
\right|^2} \lesssim \lambda s_{\boldsymbol{\beta}_0},
\]
and
\[
\frac{1}{N} \sum_{i=1}^N\left|\frac{w_{i, \widehat{\boldsymbol{\beta}}}^2-w_{i, \boldsymbol{\beta}_0}^2}{w_{i, \widehat{\boldsymbol{\beta}}}^2}\right| \leq \sqrt{\frac{1}{N} \sum_{i=1}^N\left|\frac{w_{i, \widehat{\boldsymbol{\beta}}}^2-w_{i, \boldsymbol{\beta}_0}^2}{w_{i, \widehat{\boldsymbol{\beta}}}^2}\right|^2} \lesssim \lambda s_{\boldsymbol{\beta}_0}.
\]
The proof is then finished.
\end{proof}
}

\section{Data pre-processing}\label{data process appendix}
Since the raw empirical dataset contains many missing values across different financial variables, we propose Algorithm \ref{algorithm 1} to extract a complete sub-dataset where all firms have survived for at least $s$ years.

\begin{algorithm}\label{algorithm 1}
    \caption{Data Processing Algorithm}
  \begin{algorithmic}[1]
    \REQUIRE Raw dataset, $s$, two initial values $c_1 = 25$ and $c_2 = 25$, step $l = 50$
    \STATE In the raw dataset, we select firms that satisfy $\mathbbm{1}\{\widetilde{T} \geq s\} = 1$ to form a new dataset and define the dimension of the new dataset as $(N,p)$ 
    \STATE For each firm $i$, calculate the number of missing values in variables, say $M_{1,i}$
    \STATE For each variable $k$ and its lags, calculate the minimum number of missing values across firms for each lagged covariate, say $M_{2,k}$
    \FOR{$a \in (c_1, c_1 + 1 \times l, c_1 + 2 \times l, \ldots,  N)$}
     \FOR{$b \in (c_2, c_2 + 1 \times l, c_2 + 2 \times l, \ldots,  p)$}
      \STATE Delete all firms $i$ with $M_{1,i} \geq b$
      \STATE Find all variables $k$ with $M_{2,k} \geq a$, then delete these variables and their lags
      \STATE Delete firms that still have missing values in variables and then calculate the dimensions of this sub-dataset, say $N_{a,b}$ and $p_{a,b}$
      \STATE if $N_{a,b} / N \geq 0.5$ and $p_{a,b} / p \geq 0.5$, calculate the number of uncensored firms $C_{a,b}$ which satisfy $\mathbbm{1}\{T_i \leq C_i\} = 1$.
    \ENDFOR
\ENDFOR
\STATE We finally picked up the sub-dataset, which has the most uncensored firms.
  \end{algorithmic}
\end{algorithm}
Algorithm \ref{algorithm 1} is designed to balance the number of firms, variables, and uncensored firms in the sub-dataset. In the raw dataset, removing firms with more than one missing variable results in no firms remaining. A similar issue arises if we remove all variables with missing values, leaving only a limited number of variables. To maximize the number of uncensored firms without considering the dimensionality of the processed dataset, we still find that only a few variables remain. Steps $6$ and $7$ of Algorithm \ref{algorithm 1} address this by removing firms and variables with excessive missing values. Step $9$ further optimizes the balance between the dimensionality of the dataset and the number of uncensored firms. Table \ref{data process} presents the dimensions of a sub-dataset with $s = 6$ years, selected through the following methods:
\begin{itemize}
    \item Method $1$: Delete all firms that have missing values.
    \item Method $2$: Delete all variables that have missing values.
    \item Method $3$: Using Algorithm \ref{algorithm 1}, without considering the dimension of processed dataset in Step $9$.
    \item Method $4$: Using Algorithm \ref{algorithm 1}.
\end{itemize}
It is evident that Methods $1$, $2$, and $3$ remove an excessive number of observations or covariates from the raw dataset.
\begin{table}[H]
\label{data process}
  \centering
  \caption{Dimensions of sub-dataset with $s = 6$ years extracted by different methods.}
  \medskip
    \begin{tabular}{ccccc}
    \toprule
    Method & \multicolumn{1}{l}{Method 1} & \multicolumn{1}{l}{Method 2} & \multicolumn{1}{l}{Method 3} & \multicolumn{1}{l}{Method 4} \\
\cmidrule{2-5}    Number of firms $N$     & 0     & 1403 & 951 & 901 \\
   Number of variables $K$ (without lags)     & 57   & 0     & 3     & 32 \\
    \bottomrule
    \end{tabular}%
\end{table}%

\section{Additional simulations and empirical results}\label{additional} 

Tables \ref{recovery1}, \ref{recovery2}, \ref{recovery3}, \ref{recovery4}, and \ref{recovery5} present the average mean integrated squared error of the true weight coefficients $\boldsymbol{\theta}_{0,1+j}, j \in [d]$ and $\boldsymbol{\theta}_{0,1+d+j}, j \in [d]$ across different simulation scenarios.

\begin{table}[htbp]
\color{black}
  \centering
  \caption{Shape of weights estimation accuracy of the three different methods: LASSO-U, LASSO-M, sg-LASSO-M. Entries are the average mean squared error. $s = 6$ and $t = \{t_1 = 10\%, t_2 = 30\%, t_3 = 50\%\}$ percentile of the set $\{T_i: T_i \text{ is uncensored }, i \in [N]\}$.}
  \medskip
  \scalebox{0.75}{
    \begin{tabular}{cccccccc}
    \toprule
          & \multicolumn{7}{c}{Scenario \ref{s1}} \\
\cmidrule{2-8}
          & \multicolumn{3}{c}{$N = 800$} &       & \multicolumn{3}{c}{$N = 1200$} \\
\cmidrule{2-8}
          & \multicolumn{7}{c}{$t = t_1$} \\
\cmidrule{2-8}
          & LASSO-U & LASSO-M & sg-LASSO-M &       & LASSO-U & LASSO-M & sg-LASSO-M \\
\cmidrule{2-8}
$(1+log(t-s))\boldsymbol{\operatorname{Beta}}(1,3)$
& 0.964 & 0.762 & 0.738 &       & 0.920 & 0.658 & 0.621 \\
$(-1+log(t-s))\boldsymbol{\operatorname{Beta}}(2,3)$
& 1.972 & 1.503 & 1.533 &       & 1.928 & 1.379 & 1.384 \\
\cmidrule{2-8}
          & \multicolumn{7}{c}{$t = t_2$} \\
\cmidrule{2-8}
          & LASSO-U & LASSO-M & sg-LASSO-M &       & LASSO-U & LASSO-M & sg-LASSO-M \\
\cmidrule{2-8}
$(1+log(t-s))\boldsymbol{\operatorname{Beta}}(1,3)$
& 1.228 & 0.954 & 0.904 &       & 1.155 & 0.848 & 0.823 \\
$(-1+log(t-s))\boldsymbol{\operatorname{Beta}}(2,3)$
& 1.618 & 1.241 & 1.249 &       & 1.561 & 1.170 & 1.176 \\
\cmidrule{2-8}
          & \multicolumn{7}{c}{$t = t_3$} \\
\cmidrule{2-8}
          & LASSO-U & LASSO-M & sg-LASSO-M &       & LASSO-U & LASSO-M & sg-LASSO-M \\
\cmidrule{2-8}
$(1+log(t-s))\boldsymbol{\operatorname{Beta}}(1,3)$
& 1.455 & 1.171 & 1.120 &       & 1.393 & 1.049 & 0.992 \\
$(-1+log(t-s))\boldsymbol{\operatorname{Beta}}(2,3)$
& 1.420 & 1.143 & 1.152 &       & 1.392 & 1.078 & 1.074 \\
    \bottomrule
    \end{tabular}
    }
  \label{recovery1}
\end{table}

\begin{table}[htbp]
\color{black}
  \centering
  \caption{Shape of weights estimation accuracy of the three different methods: LASSO-U, LASSO-M, sg-LASSO-M. Entries are the average mean squared error. $s = 6$ and $t = \{t_1 = 10\%, t_2 = 30\%, t_3 = 50\%\}$ percentile of the set $\{T_i: T_i \text{ is uncensored }, i \in [N]\}$.}
  \medskip
  \scalebox{0.75}{
    \begin{tabular}{cccccccc}
    \toprule
          & \multicolumn{7}{c}{Scenario \ref{s2}} \\
\cmidrule{2-8}          & \multicolumn{3}{c}{$N = 800$} &       & \multicolumn{3}{c}{$N = 1200$} \\
\cmidrule{2-8}          & \multicolumn{7}{c}{$t = t_1$} \\
\cmidrule{2-8}          & LASSO-U & LASSO-M & sg-LASSO-M &       & LASSO-U & LASSO-M & sg-LASSO-M \\
\cmidrule{2-8}
$(1+log(t-s))\boldsymbol{\operatorname{Beta}}(1,3)$ & 0.369 & 0.314 & 0.309 &       & 0.349  & 0.304  & 0.289  \\
$(-1+log(t-s))\boldsymbol{\operatorname{Beta}}(2,3)$ & 2.918& 2.749 & 2.768 &       & 2.819 & 2.674 & 2.654 \\
\cmidrule{2-8}          & \multicolumn{7}{c}{$t = t_2$} \\
\cmidrule{2-8}          & LASSO-U & LASSO-M & sg-LASSO-M &       & LASSO-U & LASSO-M & sg-LASSO-M \\
\cmidrule{2-8}
$(1+log(t-s))\boldsymbol{\operatorname{Beta}}(1,3)$ &0.676 & 0.64 & 0.619 &       & 0.638 & 0.609 & 0.582 \\
$(-1+log(t-s))\boldsymbol{\operatorname{Beta}}(2,3)$ & 2.119 & 2.035 & 2.029 &       & 2.037 & 1.982 & 1.954 \\
\cmidrule{2-8}          & \multicolumn{7}{c}{$t = t_3$} \\
\cmidrule{2-8}          & LASSO-U & LASSO-M & sg-LASSO-M &       & LASSO-U & LASSO-M & sg-LASSO-M \\
\cmidrule{2-8}
$(1+log(t-s))\boldsymbol{\operatorname{Beta}}(1,3)$ & 1.027 & 0.998 & 0.977 &       & 0.989 & 0.957 & 0.941 \\
$(-1+log(t-s))\boldsymbol{\operatorname{Beta}}(2,3)$  & 1.665 & 1.603 & 1.612 &       & 1.607 & 1.555 & 1.565\\
\bottomrule
\end{tabular}%
  }
  \label{recovery2}%
\end{table}%

\begin{table}[htbp]
\color{black}
  \centering
  \caption{Shape of weights estimation accuracy of the three different methods: LASSO-U, LASSO-M, sg-LASSO-M. Entries are the average mean squared error. $s = 6$ and $t = \{t_1 = 10\%, t_2 = 30\%, t_3 = 50\%\}$ percentile of the set $\{T_i: T_i \text{ is uncensored }, i \in [N]\}$.}
  \medskip
  \scalebox{0.75}{
    \begin{tabular}{cccccccc}
    \toprule
          & \multicolumn{7}{c}{Scenario \ref{s3}} \\
\cmidrule{2-8}          & \multicolumn{3}{c}{$N = 800$} &       & \multicolumn{3}{c}{$N = 1200$} \\
\cmidrule{2-8}          & \multicolumn{7}{c}{$t = t_1$} \\
\cmidrule{2-8}          & LASSO-U & LASSO-M & sg-LASSO-M &       & LASSO-U & LASSO-M & sg-LASSO-M \\
\cmidrule{2-8}
$(1+log(t-s))\boldsymbol{\operatorname{Beta}}(1,3)$
& 0.829 & 0.775 & 0.771 &       & 0.815 & 0.740 & 0.730 \\
$(-1+log(t-s))\boldsymbol{\operatorname{Beta}}(2,3)$
& 2.240 & 2.154 & 2.166 &       & 2.242 & 2.094 & 2.104 \\
\cmidrule{2-8}          & \multicolumn{7}{c}{$t = t_2$} \\
\cmidrule{2-8}          & LASSO-U & LASSO-M & sg-LASSO-M &       & LASSO-U & LASSO-M & sg-LASSO-M \\
\cmidrule{2-8}
$(1+log(t-s))\boldsymbol{\operatorname{Beta}}(1,3)$
& 1.192 & 1.080 & 1.073 &       & 1.183 & 1.051 & 1.043 \\
$(-1+log(t-s))\boldsymbol{\operatorname{Beta}}(2,3)$
& 1.752 & 1.605 & 1.618 &       & 1.748 & 1.564 & 1.582 \\
\cmidrule{2-8}          & \multicolumn{7}{c}{$t = t_3$} \\
\cmidrule{2-8}          & LASSO-U & LASSO-M & sg-LASSO-M &       & LASSO-U & LASSO-M & sg-LASSO-M \\
\cmidrule{2-8}
$(1+log(t-s))\boldsymbol{\operatorname{Beta}}(1,3)$
& 1.492 & 1.378 & 1.378 &       & 1.483 & 1.356 & 1.348 \\
$(-1+log(t-s))\boldsymbol{\operatorname{Beta}}(2,3)$
& 1.463 & 1.357 & 1.368 &       & 1.463 & 1.338 & 1.351 \\
    \bottomrule
    \end{tabular}%
    }
  \label{recovery3}%
\end{table}%

\begin{table}[htbp]
\color{black}
  \centering
  \caption{Shape of weights estimation accuracy of the three different methods: LASSO-U, LASSO-M, sg-LASSO-M. Entries are the average mean squared error. $s = 6$ and $t = \{t_1 = 10\%, t_2 = 30\%, t_3 = 50\%\}$ percentile of the set $\{T_i: T_i \text{ is uncensored }, i \in [N]\}$.}
  \medskip
  \scalebox{0.75}{
    \begin{tabular}{cccccccc}
    \toprule
          & \multicolumn{7}{c}{Scenario \ref{s4}} \\
\cmidrule{2-8}          & \multicolumn{3}{c}{$N = 800$} &       & \multicolumn{3}{c}{$N = 1200$} \\
\cmidrule{2-8}          & \multicolumn{7}{c}{$t = t_1$} \\
\cmidrule{2-8}          & LASSO-U & LASSO-M & sg-LASSO-M &       & LASSO-U & LASSO-M & sg-LASSO-M \\
\cmidrule{2-8}
$(1+log(t-s))\boldsymbol{\operatorname{Beta}}(1,3)$
& 0.370 & 0.363 & 0.360 &       & 0.367 & 0.360 & 0.356 \\
$(-1+log(t-s))\boldsymbol{\operatorname{Beta}}(2,3)$
& 3.085 & 3.057 & 3.059 &       & 3.070 & 3.037 & 3.032 \\
\cmidrule{2-8}          & \multicolumn{7}{c}{$t = t_2$} \\
\cmidrule{2-8}          & LASSO-U & LASSO-M & sg-LASSO-M &       & LASSO-U & LASSO-M & sg-LASSO-M \\
\cmidrule{2-8}
$(1+log(t-s))\boldsymbol{\operatorname{Beta}}(1,3)$
& 0.758 & 0.749 & 0.747 &       & 0.755 & 0.746 & 0.740 \\
$(-1+log(t-s))\boldsymbol{\operatorname{Beta}}(2,3)$
& 2.233 & 2.214 & 2.218 &       & 2.220 & 2.202 & 2.201 \\
\cmidrule{2-8}          & \multicolumn{7}{c}{$t = t_3$} \\
\cmidrule{2-8}          & LASSO-U & LASSO-M & sg-LASSO-M &       & LASSO-U & LASSO-M & sg-LASSO-M \\
\cmidrule{2-8}
$(1+log(t-s))\boldsymbol{\operatorname{Beta}}(1,3)$
& 1.179 & 1.170 & 1.166 &       & 1.173 & 1.161 & 1.161 \\
$(-1+log(t-s))\boldsymbol{\operatorname{Beta}}(2,3)$
& 1.706 & 1.691 & 1.697 &       & 1.694 & 1.677 & 1.687 \\
    \bottomrule
    \end{tabular}%
    }
  \label{recovery4}%
\end{table}%

\begin{table}[htbp]
\color{black}
  \centering
  \caption{Shape of weights estimation accuracy of the three different methods: LASSO-U, LASSO-M, sg-LASSO-M. Entries are the average mean squared error. $s = 6$ and $t = \{t_1 = 10\%, t_2 = 30\%, t_3 = 50\%\}$ percentile of the set $\{T_i: T_i \text{ is uncensored }, i \in [N]\}$.}
  \medskip
  \scalebox{0.75}{
    \begin{tabular}{cccccccc}
    \toprule
          & \multicolumn{7}{c}{Scenario \ref{s5}} \\
\cmidrule{2-8}          & \multicolumn{3}{c}{$N = 800$} &       & \multicolumn{3}{c}{$N = 1200$} \\
\cmidrule{2-8}          & \multicolumn{7}{c}{$t = t_1$} \\
\cmidrule{2-8}          & LASSO-U & LASSO-M & sg-LASSO-M &       & LASSO-U & LASSO-M & sg-LASSO-M \\
\cmidrule{2-8}
$(1+log(t-s))\boldsymbol{\operatorname{Beta}}(1,3)$
& 0.766 & 0.679 & 0.646 &       & 0.743 & 0.626 & 0.602 \\
$(-1+log(t-s))\boldsymbol{\operatorname{Beta}}(2,3)$
& 2.111 & 1.905 & 1.872 &       & 2.055 & 1.789 & 1.772 \\
\cmidrule{2-8}          & \multicolumn{7}{c}{$t = t_2$} \\
\cmidrule{2-8}          & LASSO-U & LASSO-M & sg-LASSO-M &       & LASSO-U & LASSO-M & sg-LASSO-M \\
\cmidrule{2-8}
$(1+log(t-s))\boldsymbol{\operatorname{Beta}}(1,3)$
& 1.068 & 0.916 & 0.888 &       & 1.026 & 0.845 & 0.861 \\
$(-1+log(t-s))\boldsymbol{\operatorname{Beta}}(2,3)$
& 1.665 & 1.442 & 1.421 &       & 1.613 & 1.370 & 1.403 \\
\cmidrule{2-8}          & \multicolumn{7}{c}{$t = t_3$} \\
\cmidrule{2-8}          & LASSO-U & LASSO-M & sg-LASSO-M &       & LASSO-U & LASSO-M & sg-LASSO-M \\
\cmidrule{2-8}
$(1+log(t-s))\boldsymbol{\operatorname{Beta}}(1,3)$
& 1.347 & 1.199 & 1.206 &       & 1.298 & 1.076 & 1.100 \\
$(-1+log(t-s))\boldsymbol{\operatorname{Beta}}(2,3)$
& 1.444 & 1.305 & 1.328 &       & 1.406 & 1.191 & 1.226 \\
    \bottomrule
    \end{tabular}%
    }
  \label{recovery5}%
\end{table}%

Figures \ref{sel 6} and \ref{sel 10} present the financial covariates selected by sg-LASSO-MIDAS when $s = 6$ years in the empirical application. Only covariates with at least one nonzero estimated coefficient in the group corresponding to this covariate are selected and marked in red.
Furthermore, Figure \ref{selected6} illustrates the proportion of selected covariates for each financial type identified by the sg-LASSO-MIDAS method when $s = 6$ and $10$ years.

 \begin{figure}[htbp]
    \centering
    \includegraphics[width=1\linewidth]{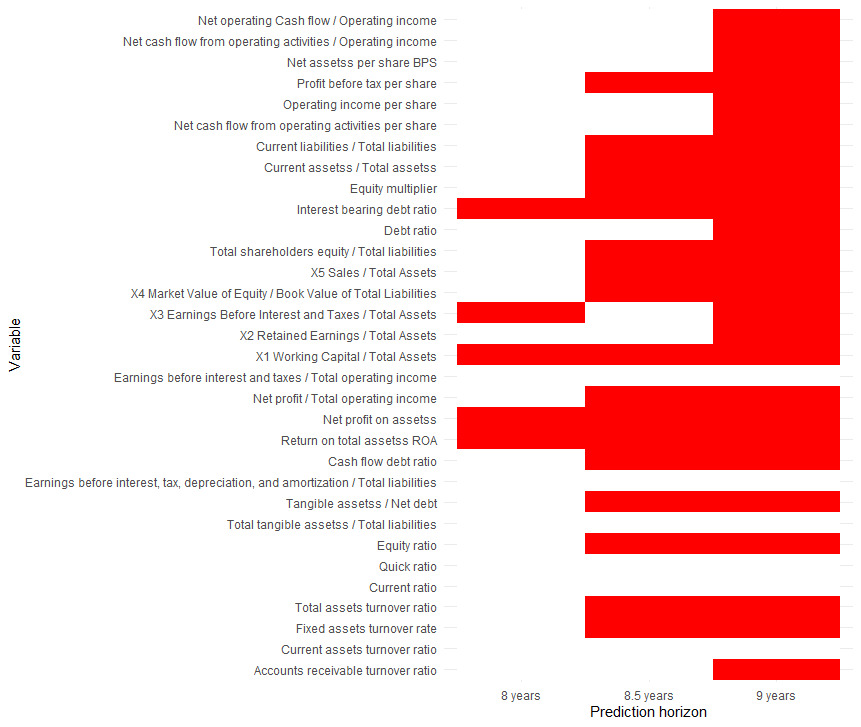}
    \caption{Selected financial variables by sg-LASSO-MIDAS when $s = 6$ years.}
    \label{sel 6}
\end{figure}

\begin{figure}[htbp]
    \centering
    \includegraphics[width=1\linewidth]{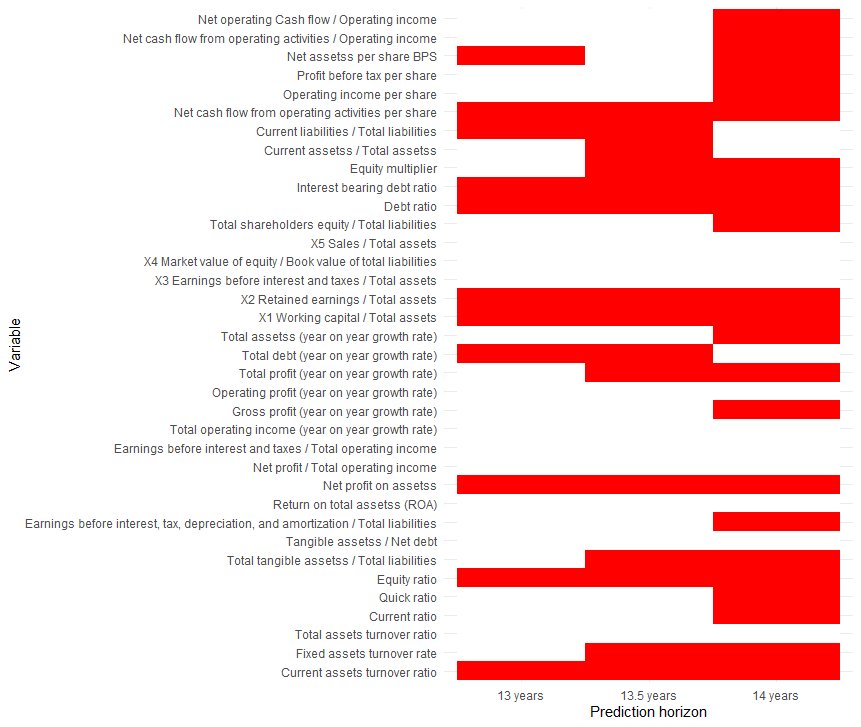}
    \caption{Selected financial variables by sg-LASSO-MIDAS when $s = 10$ years.}
    \label{sel 10}
\end{figure}

\begin{figure}[htbp]
    \centering
    \begin{minipage}{1\textwidth}
        \centering
        \includegraphics[width=\textwidth]{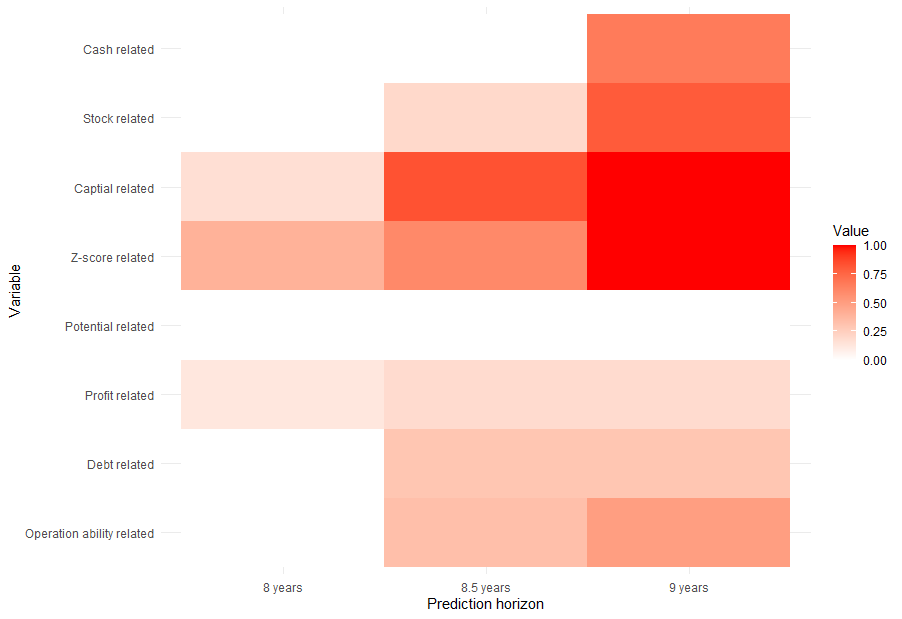}
    \end{minipage}
    \begin{minipage}{1\textwidth}
        \centering
        \includegraphics[width=\textwidth]{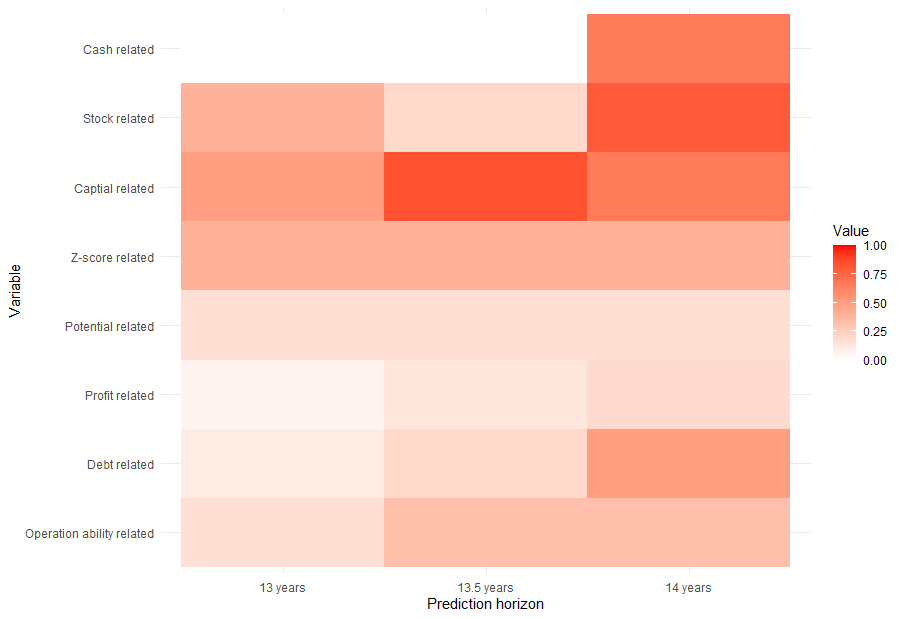}
    \end{minipage}
    \caption{The proportion of selected financial variables in each type when $s = 6$ and $10$ years.}
    \label{selected6}
\end{figure}

 \begin{figure}[htbp]
 \color{black}
    \centering
    \includegraphics[width=1\linewidth]{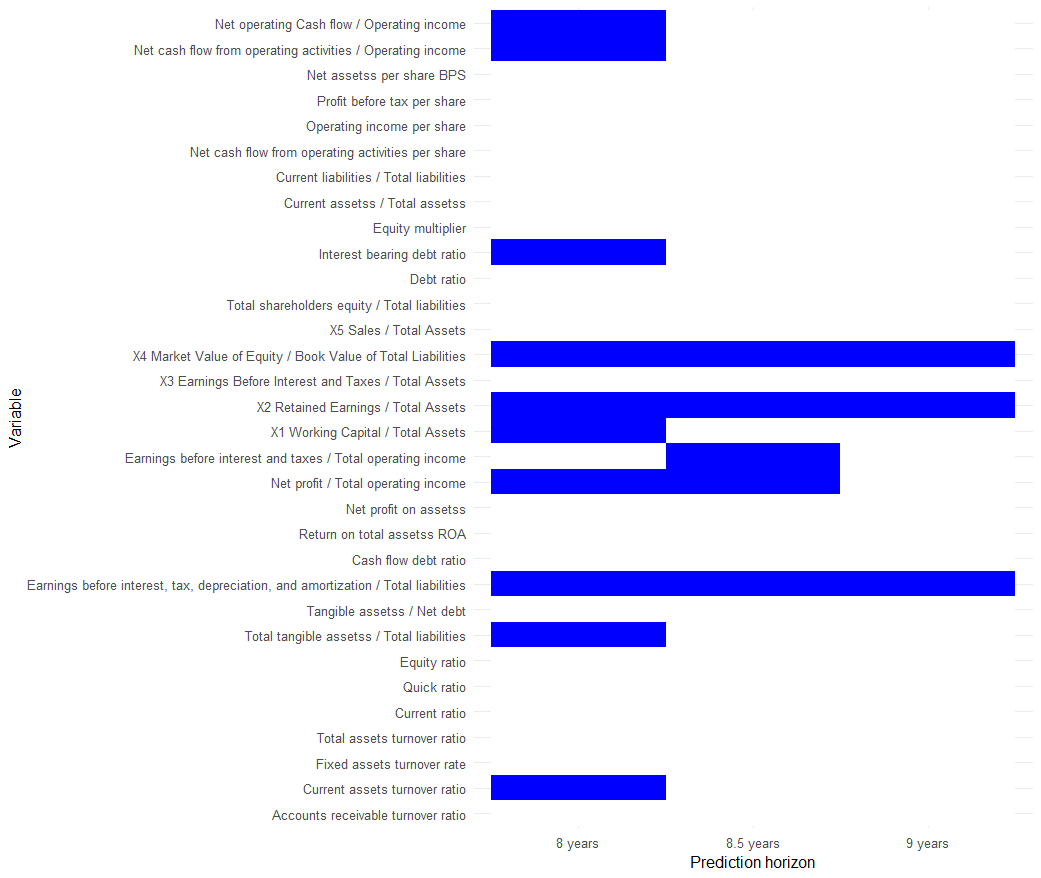}
    \caption{Significant financial variables selected by de-sparsified sg-LASSO-MIDAS when $s = 6$ years.}
    \label{sig 6}
\end{figure}

 \begin{figure}[htbp]
 \color{black}
    \centering
    \includegraphics[width=1\linewidth]{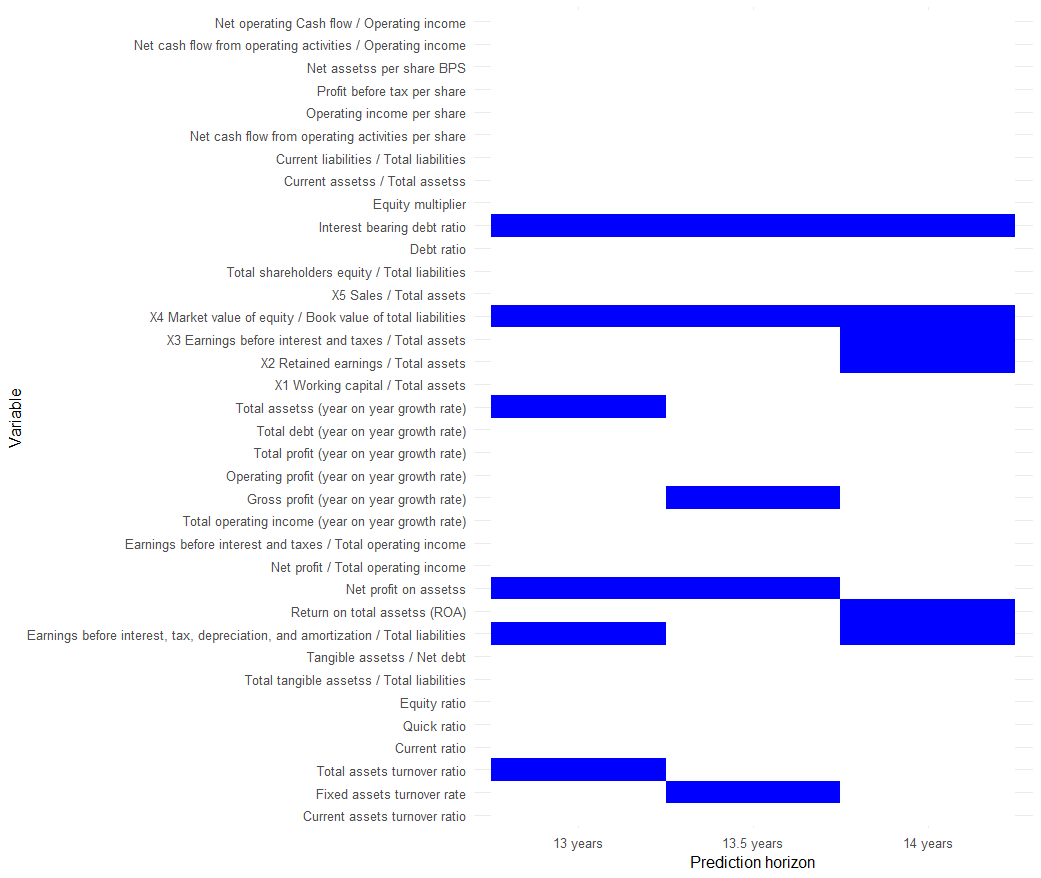}
    \caption{Significant financial variables selected by de-sparsified sg-LASSO-MIDAS when $s = 10$ years.}
    \label{sig 10}
\end{figure}
{\color{black}
Following the results in Tables \ref{empirical AUC Augment6} and \ref{empirical AUC Augment10}, Tables \ref{alt group structures s=6} and \ref{alt group structures s=10} present the performance of sg-LASSO-MIDAS using different group structures:

\medskip

\noindent\textbf{Grouped by covariate types.} We group covariates by financial type, with all MIDAS-weighted covariates of the same type forming a group. The types are described in Section \ref{sec real}.

\medskip

\noindent\textbf{Grouped by correlation matrix.} We compute the correlation matrix of the MIDAS-weighted covariates and apply hierarchical agglomerative clustering using the average linkage method. The number of clusters is treated as a tuning choice. Specifically, we consider $\{1,2,3,4,5,6\}$.

The performance of sg-LASSO-MIDAS appears to be robust across alternative group structures, with sg-LASSO-MIDAS consistently outperforming LASSO-UMIDAS. As discussed in Section \ref{sec real}, when $s = 6$ years, it is difficult to distinguish between the performance of LASSO-MIDAS and sg-LASSO-MIDAS. However, under alternative group specifications, such as grouping the correlation matrix into three groups, sg-LASSO-MIDAS exhibits slightly better numerical performance than LASSO-MIDAS at two prediction horizons. For $s = 10$ years, sg-LASSO-MIDAS still achieves the best predictive performance in most prediction horizons, also demonstrating stable performance across all group structures. Although the true group structure of covariates is unknown in practice, one advantage of sg-LASSO-MIDAS over LASSO-MIDAS is that, with a modest increase in computation, exploring different group structures may lead to improved predictive performance.
}
\begin{table}[htbp]
  \centering
  \color{black}
  \caption{(Alternative group structures for sg-LASSO-MIDAS) Estimated average AUCs ($95\%$ confidence interval) in the out-of-sample set with $s = 6$ years and prediction horizons $t = 8, 8.5, 9$ years.}
  \medskip
    \begin{tabular}{cccc}
    \toprule
          & \multicolumn{3}{c}{$s = 6$ years} \\
\cmidrule{2-4}          & $t = 8$ years & $t = 8.5$ years & $t = 9$ years \\
\cmidrule{2-4}    LASSO-U &  0.797 [0.755, 0.844]   &  0.756 [0.717, 0.801]   & 0.765 [0.734, 0.802]\\
    LASSO-M &  0.838 [0.793, 0.872]   &   0.817 [0.774, 0.864]    & 0.811 [0.778, 0.843] \\
\cmidrule{2-4}          & \multicolumn{3}{c}{Grouped by covariate types} \\
\cmidrule{2-4}    sg-LASSO-M &  0.827
 [0.789, 0.863]   &  0.830 [0.790, 0.860]   &  0.792 [0.753, 0.833]\\
\cmidrule{2-4}          & \multicolumn{3}{c}{Grouped by correlation matrix where number of groups $=1$} \\
\cmidrule{2-4}   sg-LASSO-M &  0.822 [0.782, 0.863]   &  0.843 [0.811, 0.874]   &  0.802 [0.774, 0.838]\\
    \cmidrule{2-4}          & \multicolumn{3}{c}{Grouped by correlation matrix where number of groups $=2$} \\
\cmidrule{2-4}  
    sg-LASSO-M &  0.829 [0.785, 0.864]    &   0.816 [0.780, 0.860]  &  0.814 [0.785, 0.847] \\
    \cmidrule{2-4}          & \multicolumn{3}{c}{Grouped by correlation matrix where number of groups $=3$} \\
\cmidrule{2-4}  
    sg-LASSO-M &  0.826 [0.788, 0.861]    &   0.830 [0.795, 0.873]   &  0.818 [0.788, 0.848] \\
\cmidrule{2-4}          & \multicolumn{3}{c}{Grouped by correlation matrix where number of groups $=4$} \\
\cmidrule{2-4}  
    sg-LASSO-M    & 0.808 [0.763, 0.849]     & 0.834 [0.799, 0.874]  &  0.802 [0.774, 0.838]\\
    \cmidrule{2-4}          & \multicolumn{3}{c}{Grouped by correlation matrix where number of groups $=5$} \\
\cmidrule{2-4}   
    sg-LASSO-M    & 0.807 [0.770, 0.846]     & 0.835 [0.801, 0.867]  &  0.805 [0.770, 0.837]\\
            \cmidrule{2-4}          & \multicolumn{3}{c}{Grouped by correlation matrix where number of groups $=6$} \\
\cmidrule{2-4}    
    sg-LASSO-M &   0.817 [0.776, 0.854]  &  0.824 [0.791, 0.868]   &  0.814 [0.781, 0.847] \\
    \bottomrule
    \end{tabular}%
  \label{alt group structures s=6}%
\end{table}%

\begin{table}[htbp]
  \centering
  \color{black}
  \caption{(Alternative group structures for sg-LASSO-MIDAS) Estimated average AUCs ($95\%$ confidence interval) in the out-of-sample set with $s = 10$ years and prediction horizons $t = 13, 13.5, 14$ years.}
  \medskip
    \begin{tabular}{cccc}
    \toprule
          & \multicolumn{3}{c}{$s = 10$ years} \\
\cmidrule{2-4}          & $t = 13$ years & $t = 13.5$ years & $t = 14$ years \\
\cmidrule{2-4}    LASSO-U &  0.566 [0.514, 0.633]   &  0.628 [0.591, 0.674]   & 0.669 [0.637, 0.709] \\
    LASSO-M &  0.773 [0.738, 0.818]    &   0.653 [0.615, 0.700]    & 0.688 [0.654, 0.726]\\
\cmidrule{2-4}          & \multicolumn{3}{c}{Grouped by covariate types} \\
\cmidrule{2-4}    sg-LASSO-M &  0.795
 [0.758, 0.832]   &  0.644 [0.609, 0.690]   &  0.730 [0.705, 0.768]\\
\cmidrule{2-4}          & \multicolumn{3}{c}{Grouped by correlation matrix where number of groups $=1$} \\
\cmidrule{2-4}   sg-LASSO-M &  0.800 [0.763, 0.839]   &  0.686 [0.649, 0.727]   &  0.753 [0.720, 0.782]\\
    \cmidrule{2-4}          & \multicolumn{3}{c}{Grouped by correlation matrix where number of groups $=2$} \\
\cmidrule{2-4}  
    sg-LASSO-M &  0.790 [0.752, 0.832]    &   0.657 [0.617, 0.704]  &  0.712 [0.676, 0.747] \\
    \cmidrule{2-4}          & \multicolumn{3}{c}{Grouped by correlation matrix where number of groups $=3$} \\
\cmidrule{2-4}  
    sg-LASSO-M &  0.766 [0.733, 0.814]    &   0.649 [0.600, 0.693]   &  0.725 [0.689, 0.752] \\
\cmidrule{2-4}          & \multicolumn{3}{c}{Grouped by correlation matrix where number of groups $=4$} \\
\cmidrule{2-4}  
    sg-LASSO-M    & 0.755 [0.717, 0.803]     & 0.676 [0.642, 0.721]  &  0.709 [0.677, 0.737]\\
    \cmidrule{2-4}          & \multicolumn{3}{c}{Grouped by correlation matrix where number of groups $=5$} \\
\cmidrule{2-4}   
    sg-LASSO-M    & 0.769 [0.730, 0.812]     & 0.677 [0.634, 0.719]  &  0.737 [0.706, 0.766]\\
            \cmidrule{2-4}          & \multicolumn{3}{c}{Grouped by correlation matrix where number of groups $=6$} \\
\cmidrule{2-4}    
    sg-LASSO-M &   0.753 [0.714, 0.803]  &  0.635 [0.598, 0.689]   &  0.720 [0.692, 0.753] \\
    \bottomrule
    \end{tabular}%
  \label{alt group structures s=10}%
\end{table}%
To further show the prediction performance of the proposed method, we present another application of distress prediction using the same dataset, but with a different method for dividing the in-sample and out-of-sample sets compared to the previous subsection, which is closer to real-time prediction. 

Recall that the financial dataset spans from $1985$, January $1^{\text{st}}$, to $2020$, December $31^{\text{st}}$. To better align with practical applications, when $s = 6$ years, we first select the time point $2016/12/31$. Firms that had already survived 6 years prior to this date are used as the in-sample set ($544$ firms), while the remaining firms that had not yet survived 6 years by $2016/12/31$ are placed in the out-of-sample set ($357$ firms). Thus, the actual observation period ranges from $1985/01/01$ to $2016/12/31$. The prediction horizons are set as $t = 8, 8.5, 9$ years, as in previous analyses. The regularization parameters $\lambda$ and $\alpha$ using $5$-fold stratified cross-validation for AUC. Specifically, we use a grid of $\{0.9, 0.91, 0.92, \dots, 1\}$ to search for the optimal regularization parameter $\alpha$ in the sparse-group LASSO penalty. As before, $\lambda$ is chosen in a grid which follows \citet{liang_sparsegl}. All other settings are consistent with those in the previous section, except that we use a dictionary $W$ composed of Gegenbauer polynomials shifted to $[0,1]$ with parameter $\alpha_{\text{poly}} = \beta_{\text{poly}} =  \frac{1}{2}$ and size $L = 3$.
% specified as $W = \{w_1(u),w_2(u),w_3(u) \}, u \in [0,1]$. Specifically, let $u_j  = 2\frac{j-1}{d-1} - 1, j = 1, \dots, d,$
% where we let
% \[
% \begin{aligned}
% w_1(u_j) = 1, \quad w_2(u_j) = \sqrt{3}u_j, \quad w_3(u_j) = \sqrt{5}\left(\frac{5}{3}u_j^2 - \frac{45}{72}\right).
% \end{aligned}
% \]
\begin{table}[H]
\scalebox{1.15}{
\begin{threeparttable}
  \centering
  \caption{(Additional application) Estimated AUCs ($95\%$ confidence interval) in the out-of-sample set with $s = 6$ years and prediction horizons $t = 8, 8.5, 9$ years. }
  \medskip
    \begin{tabular}{cccc}
    \toprule
          & \multicolumn{3}{c}{$s = 6$ years} \\
\cmidrule{2-4}          & $t = 8$ years & $t = 8.5$ years & $t = 9$ years \\
\cmidrule{2-4}          & \multicolumn{3}{c}{\tcr{Benchmark} } \\
\cmidrule{2-4}    Logistic reg. &  \tcr{0.529 [0.443, 0.684]}   &  \tcr{0.595 [0.422, 0.716]}   &  \tcr{0.602 [0.517, 0.693]}\\
%     \cmidrule{2-4}          & \multicolumn{3}{c}{Benchmark} \\
% \cmidrule{2-4}    Logistic reg. &  0.557 [0.462, 0.622]   &   0.580 [0.478, 0.648]  &  0.560 [0.473, 0.628] \\
%     \cmidrule{2-4}          & \multicolumn{3}{c}{Benchmark} \\
% \cmidrule{2-4}    Logistic reg. &  0.730 [0.565, 0.901]   &   0.725 [0.576, 0.860]  &  0.671 [0.543, 0.790] \\
    \cmidrule{2-4}          & \multicolumn{3}{c}{Cross-Validation for the AUC} \\
\cmidrule{2-4}    LASSO-U &  0.734 [0.565, 0.880]  &  0.688 [0.558, 0.824]    & 0.666 [0.539, 0.798]\\
    LASSO-M &  0.898 [0.829, 0.952]     &   0.866 [0.774, 0.940]    & 0.812 [0.728, 0.910] \\
    sg-LASSO-M &  0.898 [0.829, 0.952]     &   0.845 [0.746 ,0.932]   &  0.812 [0.728, 0.910] \\
        \cmidrule{2-4}          & \multicolumn{3}{c}{Cross-Validation for the likelihood score} \\
\cmidrule{2-4}    LASSO-U &  0.736 [0.566, 0.881]  &  0.714 [0.588, 0.838]    & 0.552 [0.445, 0.669]  \\
    LASSO-M &  0.898 [0.828, 0.956]     &   0.841 [0.753, 0.938]     & 0.789 [0.680, 0.908] \\
    sg-LASSO-M &  0.898 [0.828, 0.956]     &   0.835 [0.748, 0.932]    &  0.760 [0.648, 0.898] \\
    \cmidrule{2-4}          & \multicolumn{3}{c}{Macro Data Augmented} \\
\cmidrule{2-4}    LASSO-U    &  0.734 [0.645, 0.812]      & 0.688 [0.616, 0.764] &  0.666 [0.583, 0.751]\\
    LASSO-M     & 0.898 [0.868, 0.933]     & 0.809 [0.745, 0.877]  &  0.769 [0.739, 0.813]\\
    sg-LASSO-M    &    0.899 [0.869, 0.933]   & 0.874 [0.838, 0.915]  &  0.759 [0.731, 0.811]\\
   \cmidrule{2-4}          & \multicolumn{3}{c}{Data without censored firms satisfying $C_i < t$} \\
\cmidrule{2-4}    LASSO-U &  0.680 [0.543, 0.822]   &  0.745 [0.648, 0.829]    & 0.628 [0.512, 0.787] \\
    LASSO-M &  0.807 [0.679, 0.926]  &   0.767 [0.671, 0.884]    & 0.778 [0.682, 0.885] \\
    sg-LASSO-M &  0.807 [0.679, 0.926]   &   0.767 [0.671, 0.884] &  0.773 [0.658, 0.888]\\
          \cmidrule{3-3}  &  \multicolumn{3}{c}{sg-LASSO-M $vs.$ LASSO-U} \\
        p-value   &          $0.001^{**}$ & $0.000^{**}$ & $0.000^{**}$ \\
                  \cmidrule{3-3}  &  \multicolumn{3}{c}{ sg-LASSO-M with $vs.$ without censored firms satisfying $C_i < t$} \\
        p-value   &          $0.009^{**}$ & $0.042^{**}$ & $0.107$  \\
    \bottomrule
    \end{tabular}%
    \label{empirical AUC a2 6}%
    \begin{tablenotes}
      \small
      \item Notes: The second-to-last row reports the p-value from the pairwise difference test across the three prediction horizons, with the null hypothesis that the estimated AUC of sg-LASSO-MIDAS is no larger than LASSO-UMIDAS's. The last row presents the p-value from the pairwise test across the three prediction horizons, comparing sg-LASSO-MIDAS applied to data with and without censored firms satisfying $C_i < t$.  We use $*$ and $**$ to indicate $10\%$ and $5\%$ significance,  respectively.
    \end{tablenotes}
    \end{threeparttable}
    }
\end{table}%

\begin{table}[H]
\scalebox{1.15}{
\begin{threeparttable}
  \centering
  \caption{(Additional application) Estimated AUCs ($95\%$ confidence interval) in the out-of-sample set with $s = 10$ years and prediction horizons $t = 13, 13.5, 14$ years.}
  \medskip
     \begin{tabular}{cccc}
    \toprule
            & \multicolumn{3}{c}{$s = 10$ years} \\
\cmidrule{2-4}          & $t = 13$ years & $t = 13.5$ years & $t = 14$ years \\
\cmidrule{2-4}          & \multicolumn{3}{c}{\tcr{Benchmark} } \\
\cmidrule{2-4}    Logistic reg. &  \tcr{0.467 [0.260, 0.759]}   &  \tcr{0.504 [0.281, 0.759]}   &  \tcr{0.520 [0.320, 0.773]}\\
%     \cmidrule{2-4}          & \multicolumn{3}{c}{Benchmark} \\
% \cmidrule{2-4}  Logistic reg. &  0.404 [0.217, 0.658]    &   0.433 [0.284, 0.618]  &  0.468 [0.331, 0.633]\\
%     \cmidrule{2-4}          & \multicolumn{3}{c}{Benchmark} \\
% \cmidrule{2-4}  Logistic reg. &  0.619 [0.420, 0.777]    &   0.598 [0.431, 0.752]  &  0.647 [0.502, 0.796]\\
    \cmidrule{2-4}          & \multicolumn{3}{c}{Cross-Validation for the AUC} \\
\cmidrule{2-4}    LASSO-U &  0.685 [0.483, 0.879]   &  0.753 [0.529, 0.895]    & 0.681 [0.541, 0.847]  \\
    LASSO-M &  0.775 [0.528, 0.943]    &   0.784 [0.685, 0.878]    & 0.801 [0.704, 0.880]\\
    sg-LASSO-M &  0.758 [0.499, 0.946]    &   0.803 [0.706, 0.879]   &  0.801 [0.704, 0.880]\\
        \cmidrule{2-4}          & \multicolumn{3}{c}{Cross-Validation for the likelihood score} \\
\cmidrule{2-4}    LASSO-U &  0.500 [0.500, 0.500]  &  0.500 [0.500, 0.500]   & 0.690 [0.547, 0.846]\\
    LASSO-M &  0.789 [0.687, 0.902]     &   0.806 [0.694, 0.907]    & 0.696 [0.577, 0.865]\\
    sg-LASSO-M &  0.789 [0.687, 0.902]    &   0.806 [0.694, 0.907]   &  0.696 [0.577, 0.865]\\
        \cmidrule{2-4}          & \multicolumn{3}{c}{Macro Data Augmented} \\
\cmidrule{2-4}    LASSO-U    &  0.685 [0.574, 0.806]      & 0.643 [0.559, 0.756] &  0.738 [0.650, 0.827]\\
    LASSO-M     & 0.775 [0.718, 0.842]      & 0.787 [0.741, 0.845]  &  0.788 [0.715, 0.848]\\
    sg-LASSO-M    &   0.788 [0.724, 0.837]   & 0.797 [0.750, 0.855]  &  0.796 [0.716, 0.853]\\
        \cmidrule{2-4}          & \multicolumn{3}{c}{Data without censored firms satisfying $C_i < t$} \\
\cmidrule{2-4}    LASSO-U &  0.662 [0.491, 0.893]   &  0.507 [0.288, 0.701]    & 0.662 [0.474, 0.806]\\
    LASSO-M &  0.737 [0.609, 0.900]   &   0.653 [0.453, 0.859]    & 0.801 [0.704, 0.880]\\
    sg-LASSO-M &  0.745 [0.605, 0.902]  &   0.731 [0.544, 0.894]  &  0.678 [0.521, 0.802] \\
              \cmidrule{3-3}  &  \multicolumn{3}{c}{sg-LASSO-M $vs.$ LASSO-U} \\
        p-value   &          0.318 & 0.255 & $0.097^{*}$ \\
                  \cmidrule{3-3}  &  \multicolumn{3}{c}{ sg-LASSO-M with $vs.$ without censored firms satisfying $C_i < t$} \\
        p-value   &          $0.459$ & $0.290$ & $0.072^{*}$  \\
    \bottomrule
    \end{tabular}%
  \label{empirical AUC a2 10}%
      \begin{tablenotes}
      \small
      \item Notes: The second-to-last row reports the p-value from the pairwise difference test across the three prediction horizons, with the null hypothesis that the estimated AUC of sg-LASSO-MIDAS is no larger than LASSO-UMIDAS's. The last row presents the p-value from the pairwise test across the three prediction horizons, comparing sg-LASSO-MIDAS applied to data with and without censored firms satisfying $C_i < t$. We use $*$ and $**$ to indicate $10\%$ and $5\%$ significance,  respectively.
    \end{tablenotes}
    \end{threeparttable}
    }
\end{table}%

For $s = 10$ years, which is relatively large, we select a new time point of $2013/12/31$ to allow for more years of prediction after this date. The prediction horizons are set to $t = 13, 13.5, 14$ years. Firms that had survived for $s = 10$ years before $2013/12/31$ are used as the in-sample set ($311$ firms), while those that had not survived $s = 10$ years by this time are treated as the out-of-sample set ($473$ firms).

Tables \ref{empirical AUC a2 6} and \ref{empirical AUC a2 10} report the estimated AUCs in the out-of-sample set. The second-to-last row presents the pairwise test between sg-LASSO-MIDAS and LASSO-UMIDAS, while the last row presents the comparison between sg-LASSO-MIDAS applied to data with and without censored firms satisfying $C_i < t$. 

For $s = 6$ years, sg-LASSO-MIDAS significantly outperforms LASSO-UMIDAS, whereas LASSO-MIDAS performs similarly to sg-LASSO-MIDAS. Furthermore, the macroeconomic data-augmented prediction appears comparable to the purely financial model in most scenarios. However, the prediction performance is observed to be more stable when macroeconomic data is included compared to using only financial data. Additionally, sg-LASSO-MIDAS performs statistically better at the $5\%$ level when the dataset includes censored firms with $C_i < t$, except for the $t = 9$ years prediction horizon, highlighting the advantage of accounting for censoring in the prediction model. For $s = 10$ years, sg-LASSO-MIDAS applied to the full dataset is numerically superior to both sg-LASSO-MIDAS applied to the dataset without censored firms satisfying $C_i < t$ and LASSO-UMIDAS. However, both of the differences are statistically significant at $10\%$ level only for $t = 14$ years.

\section{Description of the empirical dataset}\label{data details}

The raw dataset contains $1614$ Chinese publicly traded firms in the manufacturing industry with observed financial status. The first listing date, the first date at which the firm is under ST, and industry code of each firm are contained in the dataset. All the financial variables are quarterly measured from January 1, 1985, to December 31, 2020. Table \ref{variable} shows the names and units of these financial variables.
% Table generated by Excel2LaTeX from sheet 'Sheet1'
\begin{longtable}{ll}
\cmidrule{1-2}   \textbf{Financial Variable}  & \textbf{Unit} \\
    \midrule
    \textbf{Operation ability Related} &  \\
    \midrule
    Inventory Turnover & times \\
    Accounts Receivable Turnover Ratio & $\%$ \\
    Accounts Payable Turnover Ratio & $\%$ \\
    \multicolumn{1}{p{36.285em}}{Current Assets Turnover Ratio } & $\%$ \\
    Fixed Assets Turnover Ratio  & $\%$ \\
    Total Assets Turnover Ratio  & $\%$ \\
    \midrule
    \textbf{Debt Related} &  \\
    \midrule
    Current Ratio  & $\%$ \\
    Quick Ratio  & $\%$ \\
    Equity Ratio  & $\%$ \\
    Total Tangible Assets / Total Liabilities & $\%$ \\
    Total Tangible Assets / Interest-Bearing Debt & $\%$ \\
    Total Tangible Assets / Net Debt & $\%$ \\
    Earnings Before Interest, Tax, Depreciation, and Amortization / Total Liabilities & $\%$ \\
    Cash Flow Debt Ratio  & $\%$ \\
    Time Interest Earned Ratio  & $\%$ \\
    Long-Term Debt to Capitalization Ratio  & $\%$ \\
    \midrule
    \textbf{Profit Related} &  \\
    \midrule
    Weighted Return on Equity (ROE)  & $\%$ \\
    Deducted Return on Equity (ROE)  & $\%$ \\
    Return on Assets (ROA)  & $\%$ \\
    Net Profit on Assets  & $\%$ \\
    Return on Invested Capital (ROIC) & $\%$ \\
    Sales Margin & $\%$ \\
    Gross Profit Margin & $\%$ \\
    Net Profit / Total Operating Income & $\%$ \\
    Earnings Before Interest and Taxes / Total Operating Income & $\%$ \\
    Basic Earnings Per Share (year-on-year growth rate) & $\%$ \\
    Diluted Earnings Per Share (year-on-year growth rate)  & $\%$ \\
    Total Operating Income (year-on-year growth rate) & $\%$ \\
    Gross Profit (year-on-year growth rate) & $\%$ \\
    Operating Profit (year-on-year growth rate) & $\%$ \\
    Total Profit (year-on-year growth rate) & $\%$ \\
    Net Profit (year-on-year growth rate) & $\%$ \\
    \midrule
    \textbf{Potential Related} &  \\
    \midrule
    Net Cash Flow From Operating Activities  & $\%$ \\
    Cash in Net Profit (year-on-year growth rate) & $\%$ \\
    Net Assets (year-on-year growth rate) & $\%$ \\
    Total Debt (year-on-year growth rate) & $\%$ \\
    Total Assets (year-on-year growth rate) & $\%$ \\
    Net Cash Flow (year-on-year growth rate) & $\%$ \\
    \midrule
    \textbf{Z-score Related}&  \\
    \midrule
    X1 - Working Capital / Total Assets  & $\%$ \\
    X2 - Retained Earnings / Total Assets  & $\%$ \\
    X3 - Earnings Before Interest and Taxes / Total Assets  & $\%$ \\
    X4 - Market Value of Equity / Book Value of Total Liabilities  & $\%$ \\
    X5 - Sales / Total Assets  & $\%$ \\
    \midrule
    \textbf{Capital Related} &  \\
    \midrule
    Total Shareholders’ Equity / Total Liabilities & $\%$ \\
    Debt Ratio  & $\%$ \\
    Interest-Bearing Debt Ratio & $\%$ \\
    Equity Multiplier  & $\%$ \\
    Current Assets / Total Assets & $\%$ \\
    Current Liabilities / Total Liabilities & $\%$ \\
    \midrule
    \textbf{Stock Related} &  \\
    \midrule
    Earnings Per Share EPS - Basic & Yuan \\
    Net Cash Flow from Operating Activities Per Share  & Yuan \\
    Operating Income Per Share & Yuan \\
    Profit before Tax Per Share & Yuan \\
    Net Assets Per Share BPS  & Yuan \\
    \midrule
    \textbf{Cash Related} &  \\
    \midrule
    Net Cash Flow from Operating Activities / Operating Income & $\%$ \\
    Net Cash Flow from Operating Activities / Net Income from Operating Activities & $\%$ \\
    Net Operating Cash Flow / Operating Income & $\%$ \\
    \bottomrule
    \\[-2ex]
    \caption{Quarterly financial data of Chinese publicly listed firms.}
  \label{variable}%
\end{longtable}%

Table \ref{macro variabe} presents the names and descriptions of the macro-economic variables used in the paper. See more data details in the 'readme' file \url{https://www.atlantafed.org/cqer/research/china-macroeconomy#Tab2}.
\begin{longtable}{lp{10.6cm}}
\cmidrule{1-2} \textbf{Macro Variable} & \textbf{Description} \\ 
\midrule
CPI & Consumer price index \\ 
RetailPriceIndex & Retail Price Index \\ 
FAIPriceIndex & Fixed asset investment price index \\ 
GFCFPriceIndex & Price index for gross fixed capital formation \\ 
NominalRetailGoodsC & Retail sales of consumer goods \\ 
NominalFAI & Fixed asset investment by eliminating the 1994Q4 outlier \\ 
NominalFAIGovt & Fixed asset investment: government \\ 
NominalFAIPriv & Fixed asset investment: private sector excluding SOEs and other non-SOEs \\ 
NominalFAISOEexGovt & Fixed asset investment: SOEs excluding government \\ 
NominalFAINonSOE & Fixed asset investment: other non-SOE enterprises \\ 
NominalGDP & GDP by expenditure \\ 
NominalNetExports & Net exports by expenditure \\ 
NominalExportsGoods & Exports of goods reported by the Chinese customs  \\ 
NominalImportsGoods & Imports of goods reported by the Chinese customs  \\ 
NominalHHC & Household consumption by expenditure  \\ 
NominalGovtC & Government consumption by expenditure  \\ 
NominalGCF & Nominal gross capital formation  \\ 
NominalInvty & Changes in inventories  \\ 
NominalGFCF & Gross fixed capital formation with no inventories  \\ 
NominalGovtGFCF & Gross fixed capital formation: government  \\ 
NominalPrivGFCF & Gross fixed capital formation: private sector—excluding government, households, SOEs, and other non-SOEs \\ 
NominalHHGFCF & Gross fixed capital formation: households \\ 
NominalSOEGFCF & Gross fixed capital formation: SOE \\ 
NominalSOEexGovtGFCF & Gross fixed capital formation: SOE excluding government \\ 
NominalNonSOEGFCF & Gross fixed capital formation: other non-SOE enterprises \\ 
NominalBusGFCF & Total business investment \\ 
NominalNarOutput & Narrow definition of output (NominalBusGFCF + NominalHHC) \\ 
RatioGFCFPrice2CPI & Relative prices of investment goods (to CPI) \\ 
LaborIncome & Interpolated labor income with extrapolation in early years \\ 
LaborIncomeShare & Labor income as a share of total value added \\ 
LaborCompSumProvinces & Sum of interpolated labor compensations across provinces according to the provincial GDP by income \\ 
DPI & Interpolated disposable personal income with extrapolation in early years \\ 
AvgNominalWage & Aggregate average nominal wages \\ 
ReserveMoney & Reserve money \\ 
M0 & M0 \\ 
M2 & M2 \\ 
RRR & Required reserve ratio \\ 
ARR & Actual reserve ratio \\ 
ERR & Excess reserve ratio \\ 
R3mDeposit & Time deposits rate: 3 months \\ 
R1dRepo & The spliced series of 1-day Repo rate and 1-day Chibor rate \\ 
BankLoansMLT & Medium and long-term bank loans outstanding \\ 
NewBankLoansNFEST & New bank loans to non-financial enterprises: short term \\ 
NewBankLoansNFESTBF & New bank loans to non-financial enterprises: short term and bill financing \\ 
NewBankLoansNFEMLT & New bank loans to non-financial enterprises: medium and long terms \\ 
logrealHHC & Log of real household consumption expenditure (Nominal HHC divided by CPI) \\ 
logrealBusI & Log of real business gross fixed capital formation (Nominal Bus GFCF divided by GFCF Price Index) \\ 
logrealHHC\_nipa & Log of real household consumption expenditure (Nominal HHC divided by GDP Deflator) \\ 
logrealBusI\_nipa & Log of real business gross fixed capital formation (Nominal Bus GFCF divided by GDP Deflator) \\ 
logrealNarrowY\_nipa & Log of real narrow output (Nominal Narrow Output divided by GDP Deflator) \\ 
logrealGDP\_nipa & Log of real GDP (Nominal GDP divided by GDP Deflator) \\ 
logrealGDP\_va & Log of real GDP value-added (Nominal GDPva divided by GDP Deflator) \\ 
logrealLaborIncome & Log of real labor income (Labor Income divided by GDP Deflator) \\ 
logrealDPI & Log of real disposable personal income (DPI divided by GDP Deflator) \\ 
logM2 & Log of M2 money supply \\ 
ratioNewLoansNFEST2GDP & Ratio of new bank loans to non-financial enterprises (short term) to nominal GDP \\ 
ratioNewLoansNFESTBF2GDP & Ratio of new bank loans to non-financial enterprises (short term and bill financing) to nominal GDP \\ 
ratioNewLoansNFEMLT2GDP & Ratio of new bank loans to non-financial enterprises (medium and long term) to nominal GDP \\ 
LendingRatePBC1year & One-year PBOC benchmark lending rate \\ 
DepositRatePBC1year & One-year PBOC benchmark deposit rate \\ 
Employment & Average of contemporaneous and 1-period lag of employment level \\ 
NVA\_InpOut\_Heavy & Value-added input-output for the heavy sector (seasonally adjusted) \\ 
NVA\_InpOut\_Light & Value-added input-output for the light sector (seasonally adjusted) \\ 
NHeavyFAI & Heavy sector net fixed asset investment \\ 
NLightFAI & Light sector net fixed asset investment \\ 
PPI & Producer price index \\
LandPrice & Nominal land price \\ 
FAInvPrice & Fixed asset investment price \\ 
NominalGDPva & Nominal GDP value-added \\ 
RealGDPva & Real GDP value-added \\ 
GDPDeflator & GDP deflator \\ 
R7dRepo & 7-day repo rate \\ 
BankLoansTotal & Total bank loans outstanding \\ 
BankLoansST & Short-term bank loans outstanding \\ 
NGDPva\_Heavy & Nominal GDP value-added for the heavy industry \\ 
NGDPva\_Light & Nominal GDP value-added for the light industry \\ 
EntrustedLoans & Total entrusted lending  \\ 
TrustedLoans & Total trusted lending  \\ 
BankAccts & Bank acceptance bills  \\ 
ShadowBanking & Total lending in the shadow banking industry  \\ 
AggFinancing & Total aggregate social financing  \\ 
RealEstateDomesticLoanFAI & New loans to real estate sector  \\ 
HeavyIndustryDomesticLoanFAI & New loans to the heavy industry sector  \\ 
LightIndustryDomesticLoanFAI & New loans to the light industry sector\\ 
pop & Total population \\ 
CPriceExHousing & Consumer goods price, excluding housing investment \\ 
NonFinBusinessLoans & Bank loans outstanding to non-financial firms \\ 
ResidentialInvestment & Investment in residential sector \\ 
GFCFPrice & Price index for fixed gross capital formation \\ 
NonConstrEmp & Employment in the non-housing (non-construction) sector \\ 
ConstrEmp & Employment in the housing (construction) sector \\ 
NonConstrWage & Average urban wage in the non-housing (non-construction) sector \\ 
ConstrWage & Average urban wage in the housing (construction) sector \\ 
InvRETotal & Total real estate investment (seasonally adjusted) \\ 
FAIRETotal & Fixed asset investment in real estate (seasonally adjusted) \\ 
FAIRETotalBack & Backcasted fixed asset investment in real estate (using ratio extrapolation with InvRETotal) \\ 
NSTRGFCF & Gross fixed capital formation in structures (seasonally adjusted) \\ 
NRESSTRGFCF & Gross fixed capital formation in residential structures (seasonally adjusted) \\ 
NNONRESSTRGFCF & Gross fixed capital formation in non-residential structures (seasonally adjusted) \\ 
\bottomrule
\\[-2ex]
\caption{Quarterly Macro data of China.}
\label{macro variabe}
\end{longtable}
\end{appendices}

\end{document}